\documentclass[a4paper,12pt]{article}

\usepackage{bm}
\usepackage{enumerate}
\usepackage{caption}

\usepackage{epstopdf}
\usepackage {svg}
\usepackage{here}

\usepackage{amsmath}
\usepackage{amssymb}
\usepackage{amsthm}
\usepackage{mathrsfs}
\usepackage[all]{xy}
\usepackage{float}
\usepackage{graphicx}

\usepackage{hyperref}

\usepackage{tikz}
\usetikzlibrary{intersections,calc,arrows.meta}

\newtheorem{thm}{Theorem}[section]
\newtheorem{prop}[thm]{Proposition}
\newtheorem{lem}[thm]{Lemma}
\newtheorem{dfn}[thm]{Definition}

\newtheorem{Remark}[thm]{Remark}

\numberwithin{equation}{section}

\makeatletter 

\@addtoreset{figure}{section}
\@addtoreset{table}{section}

\makeatother 

\usepackage{lipsum} 

\usepackage{braket}

\newcommand\restr[2]{{
  \left.\kern-\nulldelimiterspace 
  #1 
  \vphantom{\big|} 
  \right|_{#2} 
  }}

\usepackage{comment}

\title{Derivations on the triplet $W$-algebras with $\mathfrak{sl}_2$-symmetry}
\author{Hiromu Nakano}
\date{}

\begin{document}

\maketitle
\begin{abstract}
We construct derivations on the triplet $W$-algebras $\mathcal{W}_{p_+,p_-}$ by refining the Frobenius homomorphisms of Tsuchiya-Wood
and show that the property of the Adamovi\'{c}-Milas derivation for $\mathcal{W}_{2,p}$ extends to our derivations.
As an application, we show that the $\mathfrak{sl}_2$-symmetry of $\mathcal{W}_{p_+,p_-}$ arises naturally from our construction. 
We further show that our method applies to the triplet $W$-superalgebra $\mathcal{SW}(m)$ and that the full automorphism group ${\rm Aut}(\mathcal{SW}(m))$ is $PSL_2(\mathbb{C})\times \mathbb{Z}_2$.
\end{abstract}

\setcounter{tocdepth}{1}
\tableofcontents

\section{Introduction}
The triplet $W$-algebras $\mathcal{W}_{p_+,p_-}$ are a family of VOAs parametrised by two coprime integers $p_-> p_+\geq 1$. The special case $p_+=1$ is usually denoted by $\mathcal{W}_{p_-}$.
They are defined in terms of the lattice vertex operator algebra $\mathcal{V}_{[p_+,p_-]}$ associated with the rank-one lattice $\mathbb{Z}\sqrt{2p_+p_-}$, and the two screening operators $Q_+$ and $Q_-$ \cite{FGST,FHST}.
It is known that the triplet $W$-algebras $\mathcal{W}_{p_+,p_-}$ satisfy Zhu's $c_2$-cofiniteness condition
and that $\mathcal{W}_{1,p}$ is simple, whereas $\mathcal{W}_{p_+,p_-}$ $({\rm for}\ p_+\geq 2)$ is not. Instead, for $p_+\geq 2$, the $W$-algebra $\mathcal{W}_{p_+,p_-}$ has the Virasoro minimal VOA as a simple quotient.
These results were proved in the case $p_+=1$ in \cite{AdamovicD/MilasA:2008}, in the case $p_+=2$ in \cite{AdamovicD/MilasA:2010,AdamovicD/MilasA:2011}, and for $p_+\geq 3$ in \cite{TW}.
The proofs of these facts, particularly in the case $p_+ \geq  2$, rely on derivations for $\mathcal{W}_{p_+,p_-}$ that commutes with the Virasoro action on $\mathcal{W}_{p_+,p_-}$.
For the case $p_+=2$, Adamovi\'{c} and Milas \cite{AdamovicD/MilasA:2010} were the first to construct the derivation 
\begin{align}
\label{eq:int-deri}
{\rm Res}_{x_1}{\rm Res}_{x_2}\log\Bigl(1-\frac{x_2}{x_1}\Bigr)Q_+(x_1)Q_+(x_2),
\end{align}
where $Q_+(x)$ is a primary field whose zero mode is $Q_+$.
This derivation can be regarded as a $\log(1-x_2/x_1)$-twisted version of the relation $Q_+ \circ Q_+ = 0$ that defines the Felder complex~\cite{Felder}.
For the general case of $p_+ \geq 2$, motivated by the results in \cite{AdamovicD/MilasA:2010,AdamovicD/MilasA:2011}, Tsuchiya and Wood \cite{TW} define derivations, called Frobenius homomorphisms, as a certain limit of an $\epsilon$-deformation of the relations $Q^{[p_+-1]}_+ \circ Q_+ = 0$ and $Q^{[p_--1]}_- \circ Q_- = 0$, which define two Felder complexes. Here, $Q^{[p_+-1]}_+$ and $Q^{[p_--1]}_-$ are the (multiple) screening operators constructed in \cite{TK}.
However, the twisted cycles considered therein require further clarification regarding their well-definedness
 (see Remark \ref{non-well-tw}  in Subsection \ref{sec-deri-main}). 
 For this reason, the construction of derivations in \cite{TW} leaves a certain technical point to be investigated.

If we formally extend the derivation in (\ref{eq:int-deri}) to the case of $p_+\geq 3$, 
we obtain the following operators:
\begin{align}
\label{eq:int-deri-m}
\int_{\Gamma^{(\alpha_\pm)}_{p_\pm-1}}{\rm d}\bm{z}'\oint_{z=0}\frac{{\rm d}z}{2\pi i} \log\Bigl(\prod_{i=1}^{p_\pm-1}\bigl(1-\frac{z}{z'_i}\bigr)\Bigr){Q}_\pm(z'_1){Q}_\pm(z'_2)\cdots {Q}_\pm(z'_{p_\pm-1}) {Q}_\pm(z).
\end{align}
Here $\Gamma^{(\alpha_\pm)}_{p_\pm-1}$ denotes a certain twisted cycle constructed in \cite{TK}. 
However, as each expression is expressed as a multivariable integral, the method of \cite{AdamovicD/MilasA:2010} does not seem to be directly applicable to the analysis of their properties.
Therefore, in this paper, we use the $\epsilon$-deformation method of \cite{TW} 
to investigate a more tractable representation of the operators analogous to (\ref{eq:int-deri-m}).
As stated in our main result, Theorem~\ref{G+G-prop}, the approach of \cite{TW} applies to more elementary $\epsilon$-deformed operators
\begin{align}
\label{eq:int-deri-ep}
\frac{1}{\epsilon}\widetilde{Q}^{[p_\pm-1]}_\pm\circ {\rm e}^{\bullet}\circ \widetilde{Q}_\pm\circ {\rm e}^{\bullet}.
\end{align}
Here, $\widetilde{Q}^{[\bullet]}_\pm$ denote $\epsilon$-deformed screening operators and ${\rm e}^{\bullet}$ are certain weight-shifting operators of Fock modules.
As shown in \cite{TW}, the $\epsilon$-deformed screening operators satisfy a certain $\mathbb{C}[[\epsilon]]$-integrality property.
Using the relations $Q^{[p_\pm-1]}_\pm \circ Q_\pm = 0$, it follows that (\ref{eq:int-deri-ep}) also satisfies the $\mathbb{C}[[\epsilon]]$-integrality property.
As discussed in Subsection~\ref{cons-main0}, by using a recent study of the analytic continuation of the Selberg integral by Sussman~\cite{Su}, it follows that the discrete valuation ring $\mathbb{C}[[\epsilon]]$ can be replaced by the ring $\mathcal{O}_{0;\epsilon}$ of functions holomorphic at $\epsilon=0$.
This replacement allows us to formulate the analytic properties of the deformed screening operators more clearly and to study the properties of the operators~(\ref{eq:int-deri-ep}) in detail.
As stated in Theorems~\ref{G+G-prop} and~\ref{G-deri}, we show that, in the limit as $\epsilon$ tends to zero, the operators~(\ref{eq:int-deri-ep}) take a form analogous to (\ref{eq:int-deri-m}) on $\mathcal{W}_{p_+,p_-}$, and that the limiting operators define derivations on $\mathcal{W}_{p_+,p_-}$. 
Moreover, as stated in Theorems~\ref{G-hom} and~\ref{non-Gop}, 
properties that are important for studying the representation theory of $\mathcal{W}_{p_+,p_-}$ in \cite[Theorem~3.1, Lemma~5.1]{AdamovicD/MilasA:2010} also hold in our setting.

In Section~\ref{symWpq}, we further investigate the properties of the operators~(\ref{eq:int-deri-ep}) and show that the $\mathfrak{sl}_2$-symmetry of $\mathcal{W}_{p_+,p_-}$ established by Adamovi\'{c}-Lin-Milas~\cite{AdamovicD/LinX/MilasA:2013} and McRae-Sopinc~\cite{McRaeR/SopinV:2026} can be naturally derived from our derivations. The approach of \cite{AdamovicD/LinX/MilasA:2013} derives this symmetry from a certain hidden structure of the lattice VOA in the case $p_+=1$, whereas that of \cite{McRaeR/SopinV:2026} achieves it in the case $p_+\geq 2$ by applying the theory of commutative algebras in braided tensor categories.
Our method differs from theirs in that it is more closely related to the analytic side of the representation theory of Virasoro algebras.
More precisely, as discussed in Subsections~\ref{hidden-1} and~\ref{hidden-2}, our method depends crucially on the structure of the Felder complexes and the $\mathcal{O}_{0;\epsilon}$-integrality of the screening operators.
In our method, the $\mathfrak{sl}_2$-symmetry of $\mathcal{W}_{p_+,p_-}$ can be derived by analyzing the order in $\epsilon$ at which the commutation relations among the deformed screening operators on the Felder complex vanish.
It is therefore applicable to vertex operator (super)algebras in which similar theories of the Felder complex and the screening operators are developed.
In Section~\ref{tri-swm}, we show that our method can be applied to the triplet $W$-superalgebra \(\mathcal{SW}(m)\) introduced by Adamovi\'{c} and Milas \cite{AdamovicD/MilasA:2009-2}.
This application is possible due to the classification results of the Neveu-Schwarz Fock-modules by Iohara and Koga~\cite{IK2}, and the precise properties of Neveu-Schwarz screening operators established by Blondeau-Fournier et al.~\cite{BMRW}.
As a consequence, it follows that the Lie algebra $\mathfrak{sl}_2(\mathbb{C})$ acts on $\mathcal{SW}(m)$ by derivations. 
By this result and the structure of the Zhu-algebra determined by \cite{AdamovicD/MilasA:2009-2}, 
we determine the full automorphism group ${\rm Aut}(\mathcal{SW}(m))$ and show that it is isomorphic $PSL_2(\mathbb{C})\times \mathbb{Z}_2$.
These results are stated in Subsection~\ref{subs:swm}.

\section{Representation theories of the Virasoro algebra}
\label{Basic}
In this section, we briefly review theories of the Virasoro algebra and the rank 1 Fock modules following \cite{TW}.
We also briefly review some recent results on the Selberg integral \cite{Su}.

\subsection{Free field theory}
The Virasoro algebra $\mathfrak{Vir}$ is the Lie algebra over $\mathbb{C}$ generated by $\mathcal{L}_n(n\in \mathbb{Z})$ and $C$ (the central charge) with the relations
\begin{align*}
&[\mathcal{L}_m,\mathcal{L}_n]=(m-n)\mathcal{L}_{m+n}+\frac{m^3-m}{12}C\delta_{m+n,0},
&[\mathcal{L}_n,C]=0.
\end{align*}
$\mathfrak{Vir}$ has the triangular decomposition $\mathfrak{Vir}=\mathfrak{Vir}_+\oplus \mathfrak{Vir}_0\oplus \mathfrak{Vir}_-$ with 
\begin{align*}
&\mathfrak{Vir}_\pm=\bigoplus_{\pm n\geq 0}\mathbb{C}\mathcal{L}_n,
&\mathfrak{Vir}_0=\mathbb{C}\mathcal{L}_0\oplus \mathbb{C}C.
\end{align*}
For $h,c\in\mathbb{C}$, let $\mathbb{C}{\mid}h,c\rangle$ be the one dimensional $\mathfrak{Vir}^\geq$-module defined by
\begin{align*}
&\mathcal{L}_n{\mid}h,c\rangle=\delta_{n,0}h{\mid}h,c\rangle\qquad n\geq 0,
&C{\mid}h,c\rangle=c{\mid}h,c\rangle. 
\end{align*}
For $h,c\in \mathbb{C}$, the left Verma $\mathfrak{Vir}$-module is defined by the induced module
\begin{align*}
M(h,c)={\rm Ind}_{\mathfrak{Vir}^\geq}^{\mathfrak{Vir}}\mathbb{C}{\mid}h,c\rangle.
\end{align*} 

The Heisenberg Lie algebra 
\begin{align*}
\mathcal{H}=\bigoplus_{n\in\mathbb{Z}}\mathbb{C} a_{n}\oplus \mathbb{C} K_{\mathcal{H}}
\end{align*}
is the Lie algebra whose commutation is given by
\begin{align*} 
&[a_m,a_n]=m\delta_{m+n,0}K_{\mathcal{H}},
&[K_{\mathcal{H}},\mathcal{H}]=0.
\end{align*}
Let
\begin{align*}
\mathcal{H}^\pm&=\bigoplus_{n>0}\mathbb{C} a_{\pm n},&\mathcal{H}^0&=\mathbb{C} a_0\oplus \mathbb{C} K_{\mathcal{H}},\\
\mathcal{H}^{\geq}&=\mathcal{H}^+\oplus \mathcal{H}^0,&\mathcal{H}^{\leq }&=\mathcal{H}^-\oplus \mathcal{H}^0.
\end{align*}
For $\alpha\in\mathbb{C}$, let $\mathbb{C}{\mid}\alpha\rangle$ be the one dimensional $\mathcal{H}^\geq$-module defined by
\begin{align*}
&a_n{\mid}\alpha\rangle=\delta_{n,0}\alpha{\mid}\alpha\rangle\qquad n\geq 0,
&K_{\mathcal{H}}{\mid}\alpha\rangle={\mid}\alpha\rangle. 
\end{align*}
For $\alpha\in \mathbb{C}$, the left Fock module is defined by the induced module
\begin{align*}
F_{\alpha}={\rm Ind}_{\mathcal{H}^\geq}^{\mathcal{H}}\mathbb{C}{\mid}\alpha\rangle.
\end{align*} 
Similarly the right Fock module (or the dual Fock module) $F^\vee_{\alpha}$ is defined by
\begin{align*}
F^\vee_{\alpha}={\rm Ind}_{\mathcal{H}^\leq}^{\mathcal{H}}\mathbb{C}\langle\alpha{\mid},
\end{align*} 
where $\mathbb{C}\langle\alpha{\mid}$ is the one dimensional $\mathcal{H}^\leq$-module defined by
\begin{align*}
&\langle\alpha{\mid}a_n=\delta_{n,0}\alpha\langle\alpha{\mid},\qquad n\leq 0,
&K_{\mathcal{H}}\langle\alpha{\mid}=\langle\alpha{\mid}. 
\end{align*}
We see that the two Fock modules $F_{\alpha},F^\vee_{\alpha}$ are equipped with an inner product
\begin{align}
\label{inn-pro}
(\ \cdot\ ,\ \cdot\ )_{F_{\alpha}}:\ F^\vee_{\alpha}\times F_{\alpha}\rightarrow \mathbb{C}
\end{align}
defined by 
\begin{align*}
(\bra{\alpha},\ket{\alpha})_{F_{\alpha}}=1,\qquad (\bra{\alpha}u_1,u_2\ket{\alpha})_{F_{\alpha}}=(\bra{\alpha}u_1u_2,\ket{\alpha})_{F_{\alpha}}=(\bra{\alpha},u_1u_2\ket{\alpha})_{F_{\alpha}}
\end{align*}
for $u_1,u_2\in U(\mathcal{H})$, where $U(\mathcal{H})$ is the universal enveloping algebra of $\mathcal{H}$.
In some cases, we use the notation $\bra{\alpha}u_1u_2\ket{\alpha}=(\bra{\alpha},u_1u_2\ket{\alpha})_{F_{\alpha}}$.

Define 
\begin{equation*}
a(z):=\sum_{n\in\mathbb{Z}}a_nz^{-n-1}
\end{equation*}
This field satisfies the operator product expansion
\begin{equation*}
 a(z)  a(w)=\frac{1}{(z-w)^2}+\cdots,
\end{equation*}
where $\cdots$ denotes the regular part in $z=w$.
For $\rho\in \mathbb{C}$, we define a field
\begin{equation*}
T^{(\rho)}(z):=\frac{1}{2}:a(z)a(z):+\frac{\rho}{2}\partial a(z),
\end{equation*}
where $:\ :$ is the normal ordered product. 
This field is called the energy–momentum tensor and satisfies the operator product expansion
\begin{align*}
T^{(\rho)}(z)T^{(\rho)}(w)= \frac{c_{\rho}}{2(z-w)^4}+\frac{2T^{(\rho)}(w)}{(z-w)^2}+\frac{\partial T^{(\rho)}(w)}{z-w}+\cdots,
\end{align*}
where ``\ $\cdots$'' denote the holomorphic part about $z=w$, and we set
\begin{align*}
c_{\rho}=1-3\rho^2.
\end{align*}
Expand $T^{(\rho)}(z)$ as
$
T^{(\rho)}(z)=\sum_{n\in\mathbb{Z}}L^{(\rho)}_{n}z^{-n-2}.
$
From the above operator product expansion, the Fourier modes of 
$
\{L^{(\rho)}_{n}\}
$
generate the Virasoro algebra with central charge $C=c_{\rho}$.
Thus, by the energy-momentum tensor $T^{(\rho)}(z)$, the Fock module $F_{\alpha}$ admits the structure of a Virasoro module with the central charge $c_{\rho}$.
More precisely, there exists a $\mathfrak{Vir}$-module map 
\begin{align}
\label{eq:pi-mf}
\pi^{(\rho)}_{\alpha}:M(h^{(\rho)}_{\alpha},c_\rho)\rightarrow F_{\alpha} 
\end{align}
satisfying $\pi^{(\rho)}_{\alpha}(\mathcal{L}_n)=L^{(\rho)}_n$, $\pi^{(\rho)}_{\alpha}(\ket{h^{(\rho)}_\alpha,c_\rho})=\ket{\alpha}$, where
\begin{align}
\label{h_alpha}
h^{(\rho)}_{\alpha}:=\frac{1}{2}\alpha(\alpha-\rho)
\end{align}
which is the $L^{(\rho)}_0$-weight of $\ket{\alpha}$:
\begin{align*}
L^{(\rho)}_0{\mid}\alpha\rangle=\frac{1}{2}\alpha(\alpha-\rho){\mid}\alpha\rangle.
\end{align*}
Then, the Fock module $F_{\alpha}$ has the following $L^{(\rho)}_0$ weight decomposition
\begin{align*}
&F_{\alpha}=\bigoplus_{n\in \mathbb{Z}_{\geq 0}}F_{\alpha}[h^{(\rho)}_{\alpha}+n],
&F_{\alpha}[h^{(\rho)}_{\alpha}+n]:=\{v\in F_{\alpha}\setminus\{0\}\mid L^{(\rho)}_0v=(h^{(\rho)}_{\alpha}+n)v\},
\end{align*}
where each weight space $F_{\alpha}[h^{(\rho)}_{\alpha}+n]$, $n>0$ admits a basis
\begin{align}
\label{a-lambda}
\{a_{-\lambda}{\mid}\alpha\rangle\mid \lambda\vdash n\}
\end{align}
with $a_{-\lambda}=a_{-\lambda_k}\cdots a_{-\lambda_1}$ for a partition $\lambda=(\lambda_1,\dots,\lambda_k)$, $\lambda_i>0$.
\begin{dfn}
Define the conformal vector
\begin{align}
\label{eq:com-vec}
T^{(\rho)}:=\frac{1}{2}(a^2_{-1}+\rho a_{-2})\ket{0}\in F_0[2].
\end{align}
Then, the Fock module ${F}_0$ carries the structure of a $\mathbb{Z}_{\geq 0}$-graded vertex operator algebra, with
\begin{align*}
&Y(\ket{0};z)={\rm id},\ \ \ \ \ \ Y(a_{-1}\ket{0};z)=a(z),\ \ \ \ \ \ Y(T^{(\rho)};z)=T^{(\rho)}(z).
\end{align*}
\end{dfn}
We denote the above vertex operator algebra by $\mathcal{F}^{}_{\rho}$.
For any Fock module $F_{\alpha}$ in $\mathcal{F}_{\rho}$, we denote it by $F^{(\rho)}_{\alpha}$ to emphasize the Virasoro module structure.

To define an appropriate $\mathcal{F}_{\rho}$-module structure for the right Fock modules, we introduce the following definition.
\begin{dfn}
 Let $V$ be a vertex operator algebra. 
 For a $V$-module $M$, we define its contragredient as
\begin{align*}
&M^*=\bigoplus_{h\in \mathbb{C}}{\rm Hom}(M{[h]},\mathbb{C}),
\end{align*}
where
$
M{[h]}=\{m\in M\ |\ (L_0-h)^nm=0,\ n\gg 0\}.
$
\end{dfn}
It is a known fact that the contragredient space $M^*$ admits a module action \cite{FHL} defined by
\begin{align*}
\langle Y(v,z)\phi^*,m\rangle_{M}=\langle \phi^*,Y^{\rm opp}(v,z)m\rangle_{M}
\end{align*}
where $\langle\ \cdot\ ,\ \cdot\ \rangle_M$ is the canonical pairing, and $Y^{\rm opp}(v,z)$ is the opposite action defined by
\begin{align}
\label{eq:opp}
Y^{\rm opp}(v,z)=Y(e^{zL_1}(-z^{-2})^{L_0}v,z^{-1}).
\end{align}
We see that the right Fock module $F^\vee_{\alpha}$ is isomorphic to the contragredient space of $F^{(\rho)}_\alpha$
\begin{align*}
F^*_{\alpha}=\bigoplus_{n\geq 0}{\rm Hom}(F^{(\rho)}_\alpha[h^{(\rho)}_\alpha+n],\mathbb{C}).
\end{align*}
Then by (\ref{eq:opp}), $F^*_{\alpha}$ admits the structure of a $\mathcal{F}_{\rho}$-module. From (\ref{eq:com-vec}), (\ref{eq:opp}), we see that 
\begin{align}
\label{a-opp}
&a^{\rm opp}_n=\rho\delta_{n,0}-a_{-n},
&(L^{(\rho)}_{n})^{\rm opp}=L^{(\rho)}_{-n}
\end{align}
on $F^*_{\alpha}$. In particular, by the simplicity of $F^*_{\alpha}$ and $F_{\alpha}$, the contragredient $F^*_{\alpha}$ is isomorphic to $F^{(\rho)}_{\rho-\alpha}$ as a $\mathcal{F}_{\rho}$-module.
Hereafter we use the normalization $\langle \ket{\rho-\alpha},\ket{\alpha}\rangle_{F_{\alpha}}=1$ which is compatible with $(\bra{\alpha},\ket{\alpha})_{F_{\alpha}}=\braket{\alpha|\alpha}=1$.

\subsection{Selberg integrals}
\label{sub:sus}
 As shown in Figure \ref{one-dimensional-cycle}, we take three paths $\varDelta,\mathcal{C}_1, \mathcal{C}_0$, and define the 1-chain 
 \begin{align}
 \label{twistonedel}
 [\Delta_1(\alpha,\beta)]:=\frac{-1}{1-e^{2\pi i\alpha}}\mathcal{C}_0+\varDelta+\frac{1}{1-e^{2\pi i\beta}}\mathcal{C}_1.
 \end{align}
 It is well known that this 1-chain defines a regularization of the Euler beta integrals (cf.~\cite{AK}):
 \begin{align*}
 \int_{[\Delta_1(\alpha,\beta)]}u^\alpha(1-u)^\beta{\rm d}u=\frac{\Gamma(\alpha+1)\Gamma(\beta+1)}{\Gamma(\alpha+\beta+2)}.
 \end{align*}
 
 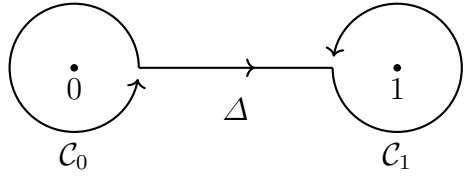
\begin{figure}[htbp]
  \centering
\begin{tikzpicture}[scale=0.85]
\draw[thick, ->] (6.5,0) arc (0:350:1);
\draw [thick](6.5,0)--(9.5,0);
\draw [->,thick](8.25,0)--(8.3,0);
\draw[thick, ->] (9.5,0) arc (-180:170:1);
\filldraw[black](5.5,0)circle (0.5mm) node[below]{$0$};
\filldraw[black](10.5,0)circle (0.5mm) node[below]{$1$};
\filldraw[](5.5,-1) node[below]{$\mathcal{C}_0$};
\filldraw[](10.5,-1) node[below]{$\mathcal{C}_1$};
\filldraw[](8,-0.3) node[below]{$\varDelta$};
\end{tikzpicture}
\caption{One-dimensional twisted cycle associated to $u^\alpha(1-u)^\beta$.
}
\label{one-dimensional-cycle}
\end{figure}

 The following Selberg integral is known as a higher-dimensional generalization of the Euler beta integral (cf.~\cite{FoW,Se}).
\begin{prop}[\cite{Se}]
For $\Re \alpha,\Re \beta>-1$, $\Re \gamma>0$, the integral 
\begin{align*}
S_n(\alpha,\beta,\gamma)=\int_{\Delta_n}\prod_{i=1}^ny^{\alpha}_i(1-y_i)^\beta\prod_{1\leq i\neq j\leq n}(y_i-y_j)^\gamma{\rm d}y_1\cdots{\rm d}y_n
\end{align*}
converges, and its value is given by the following explicit formula
\begin{align*}
S_n(\alpha,\beta,\gamma)=\frac{1}{n!}\prod_{i=1}^n\frac{\Gamma(1+i\gamma)\Gamma(1+\alpha+(i-1)\gamma)\Gamma(1+\beta+(i-1)\gamma)}{\Gamma(1+\gamma)\Gamma(2+\alpha+\beta+(n+i-2)\gamma)},
\end{align*}
where $\Delta_n=\{(y_1,y_2,\dots,y_n)\in \mathbb{R}^n\ |\ 1>y_1>y_1>\cdots>y_n>0\}$.
\end{prop}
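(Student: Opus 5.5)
The plan is to follow Aomoto's argument: derive a first-order recurrence in $\alpha$ by Stokes' theorem, then fix the remaining normalisation by Carlson's theorem and an induction on $n$. Throughout, interpret $\prod_{i\neq j}(y_i-y_j)^\gamma$ on the ordered simplex as $|\Delta(y)|^{2\gamma}$, where $\Delta(y)=\prod_{i<j}(y_i-y_j)$, up to a fixed phase. Since the integrand is symmetric, we can replace $\Delta_n$ by the cube $[0,1]^n$ at the cost of a factor $1/n!$. This accounts for the $1/n!$ in the formula, so it suffices to prove the classical formula for
\begin{align*}
\widetilde S_n=\int_{[0,1]^n}\prod_i y_i^\alpha(1-y_i)^\beta|\Delta(y)|^{2\gamma}{\rm d}y .
\end{align*}
Convergence for $\Re\alpha,\Re\beta>-1$, $\Re\gamma>0$ is routine. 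Near a face $y_i=0$ or $y_i=1$ the integrand is integrable because of the one-dimensional beta exponents, and near the diagonals $|y_i-y_j|^{2\Re\gamma}$ is bounded.

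\textbf{Step 1 (Aomoto recurrence).} For $0\le k\le n$ set $A_k=\widetilde S_n^{-1}\int_{[0,1]^n} y_1\cdots y_k\,\Phi\,{\rm d}y$, where $\Phi$ is the Selberg integrand. Apply $\int\partial_{y_1}\bigl(y_1 y_2\cdots y_k(1-y_1)\Phi\bigr){\rm d}y=0$; the boundary terms vanish for $\Re\alpha,\Re\beta$ large, and we extend later by analytic continuation. The derivative of $\log|y_1-y_j|^{2\gamma}$ produces the terms $2\gamma\, y_1/(y_1-y_j)$. We symmetrize these using the pairing $j\leftrightarrow 1$, separately for $j\le k$ and for $j>k$, so that they become polynomial expectations. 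This yields the identity
\begin{align*}
(\alpha+1+(n-k)\gamma)A_{k-1}=(\alpha+\beta+2+(2n-k-1)\gamma)A_k .
\end{align*}
Iterating gives $\widetilde S_n(\alpha+1,\beta,\gamma)/\widetilde S_n(\alpha,\beta,\gamma)=A_n=\prod_{k=1}^n\frac{\alpha+1+(n-k)\gamma}{\alpha+\beta+2+(2n-k-1)\gamma}$. The same ratio is satisfied by the claimed Gamma product, and by the symmetry $y\mapsto 1-y$ the analogous statement holds in $\beta$.

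\textbf{Step 2 (Carlson).} Let $R(\alpha,\beta)$ be the quotient of $\widetilde S_n$ by the claimed Gamma expression, with $\gamma$ fixed. Then $R$ is $1$-periodic in $\alpha$ and in $\beta$. Stirling's formula on vertical strips shows that $R$ grows subexponentially, with exponential type less than $\pi$, in $\Im\alpha$. Carlson's theorem (or Liouville's theorem applied to $R$ as a function of $e^{2\pi i\alpha}$) then forces $R=c_n(\gamma)$, independent of $\alpha$ and $\beta$.

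\textbf{Step 3 (constant).} To show $c_n(\gamma)=1$, argue by induction on $n$. The case $n=1$ is the Euler beta integral recalled above. Compute the residue at $\alpha=-1$. Near $y_n=0$, on the ordered simplex with $y_n$ the smallest variable, one has $\int_0 y_n^\alpha\,{\rm d}y_n\sim 1/(\alpha+1)$. The remaining integrand becomes the $(n-1)$-variable Selberg integrand with $\alpha$ replaced by $\alpha+2\gamma$, since $|y_i-0|^{2\gamma}=y_i^{2\gamma}$. Hence $\lim_{\alpha\to-1}(\alpha+1)S_n(\alpha,\beta,\gamma)=S_{n-1}(2\gamma-1,\beta,\gamma)$. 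A direct Gamma-function check shows that the right-hand side formula has exactly this residue, with $\Gamma(1+i\gamma)/\Gamma(1+\gamma)$ bookkeeping matching the $i=1$ factor. Since $c_n(\gamma)$ is independent of $\alpha$ and $\beta$, this gives $c_n=c_{n-1}=1$.

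The main obstacle will be Step 1: the symmetrization of the $\gamma$-terms producing the coefficient $(n-k)\gamma$ and $(2n-k-1)\gamma$ must be done carefully. The growth estimate needed for Carlson in Step 2 is the second delicate point. There I would first try bounding $|\widetilde S_n|$ by its value at real parts, uniformly in $\Im\alpha$, and compare with Stirling for the Gamma product.
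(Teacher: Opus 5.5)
The paper does not prove this proposition. It quotes it from Selberg \cite{Se} (cf.\ \cite{FoW}) as background for the twisted-cycle regularization, so there is no in-paper argument to match. Your proposal is a correct, self-contained proof, and it follows Aomoto's route rather than Selberg's.

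I checked the key computations, writing $G_n(\alpha,\beta,\gamma)$ for the Gamma product in the statement without the $1/n!$:
\begin{itemize}
\item \emph{Step 1.} For each $j\le k$, the pair $1\leftrightarrow j$ symmetrizes to $-\tfrac12\,y_1y_2\cdots y_k$, contributing $-(k-1)\gamma A_k$ in total. For each $j>k$ it symmetrizes to $\tfrac12(1-y_1-y_j)\,y_2\cdots y_k$, contributing $(n-k)\gamma(A_{k-1}-2A_k)$ in total. Together with $(1+\alpha)A_{k-1}-(2+\alpha+\beta)A_k$ this gives exactly your recurrence.
\item \emph{Step 2.} Stirling gives polynomial growth of $1/G_n$ on vertical strips, and $|\widetilde S_n(\alpha,\beta,\gamma)|\le\widetilde S_n(\Re\alpha,\Re\beta,\Re\gamma)$. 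So the periodic function $R$ has removable singularities at $0$ and $\infty$ as a function of $e^{2\pi i\alpha}$, and Liouville applies.
\item \emph{Step 3.} The ratio of $\lim_{\alpha\to-1}(\alpha+1)\frac{1}{n!}G_n(\alpha,\beta,\gamma)$ to $\frac{1}{(n-1)!}G_{n-1}(2\gamma-1,\beta,\gamma)$ is
\[
\frac1n\cdot\frac{\Gamma(1+n\gamma)\Gamma(\gamma)}{\Gamma(1+\gamma)\Gamma(n\gamma)}=1,
\]
since the $\beta$-factors cancel identically. So the residue recursion closes the induction.
\end{itemize}

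Three cosmetic corrections:
\begin{itemize}
\item The formula as stated forces the phase to be exactly $1$. Read the integrand as $|\Delta(y)|^{2\gamma}$, not ``up to a fixed phase.''
\item Write the recurrence for the unnormalized moments $I_k=\int y_1\cdots y_k\,\Phi$. Dividing by $\widetilde S_n$ assumes it is nonzero, which is not known for complex parameters.
\item The boundary terms already vanish for $\Re\alpha,\Re\beta>-1$, $\Re\gamma>0$, because the integrand is absolutely continuous in $y_1$. So no analytic continuation is needed in Step 1.
\end{itemize}

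Comparison with the cited proof: Selberg's original argument first treats positive integer $\gamma$, where $|\Delta|^{2\gamma}$ is a polynomial and the integral can be analysed algebraically. It then extends to complex $\gamma$ by Carlson's theorem in the variable $\gamma$, which requires growth control in $\gamma$. Your route works directly at arbitrary complex $\gamma$ with $\Re\gamma>0$. It uses only integration by parts, periodicity in $\alpha$ and $\beta$, and a one-step residue recursion in $n$. As a byproduct it yields Aomoto's moment formula for $\langle y_1\cdots y_k\rangle$.
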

For $f\in \mathbb{C}[y^{\pm 1}_i|\ 1\leq i\leq n]$, let 
\begin{align*}
S_n[f](\alpha,\beta,\gamma)=\int_{\Delta_n}\prod_{i=1}^ny^{\alpha}_i(1-y_i)^\beta\prod_{1\leq i\neq j\leq n}(y_i-y_j)^\gamma f(\bm{y}){\rm d}y_1\cdots{\rm d}y_n.
\end{align*}
Note that $S_n[1](\alpha,\beta,\gamma)=S_n(\alpha,\beta,\gamma)$. The following analytic result is known for this integral.
\begin{prop}[\cite{TK}]
\label{TK-thm}
Define a collection of hyperplanes
\begin{align*}
  \begin{aligned}
     \mathcal{A}_n=&\bigcup_{j=1}^n\{(\alpha,\beta,\gamma)\in\mathbb{C}^3\ |\ j(\alpha+(j-1)\gamma)\in \mathbb{Z}\}\\
     &\cup\bigcup_{j=1}^n\{(\alpha,\beta,\gamma)\in\mathbb{C}^3\ |\ j(\beta+(j-1)\gamma)\in \mathbb{Z}\}\\
     &\cup \bigcup_{j=1}^{n-1}\{(\alpha,\beta,\gamma)\in\mathbb{C}^3\ |\ j(j+1)\gamma\in \mathbb{Z}\}.
    \end{aligned}
  \end{align*}
  Then there exists a twisted cycle $[\Delta_n(\alpha,\beta,\gamma)]$, obtained as a regularization of $\Delta_n$, such that
\begin{equation}
\label{eq:IntDelta}
\int_{[\Delta_n(\alpha,\beta,\gamma)]}\prod_{i=1}^ny^{\alpha}_i(1-y_i)^\beta\prod_{1\leq i\neq j\leq n}(y_i-y_j)^\gamma f(\bm{y}){\rm d}\bm{y}=S_n[f](\alpha,\beta,\gamma)
\end{equation}
for $(\alpha,\beta,\gamma)\in \mathbb{C}^3\setminus\mathcal{A}_n$. 
In particular, as the function of $(\alpha,\beta,\gamma)$, the integral $S_n[f](\alpha,\beta,\gamma)$ admits the analytical continuation to the domain $\mathbb{C}^3\setminus\mathcal{A}_n$.
\end{prop}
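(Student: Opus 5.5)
The approach is the standard iterated regularization of Aomoto--Kita type, building $[\Delta_n(\alpha,\beta,\gamma)]$ one face at a time as in the one-dimensional chain (\ref{twistonedel}). I would first change to simplicial coordinates adapted to the nested boundary strata of $\Delta_n$. Near each face where a block of consecutive variables collides, the integrand behaves like a monomial in a "radial" variable. For $k$ consecutive $y$'s approaching $0$, the exponent is $k\alpha+k(k-1)\gamma$, up to the symmetrized convention $\prod_{i\neq j}$. Similarly, for $k$ consecutive $y$'s approaching $1$ it is $k\beta+k(k-1)\gamma$. For $k$ consecutive $y$'s colliding with one another it is $k(k-1)\gamma$ (with $2\le k\le n$). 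The monodromy of the multivalued integrand around the corresponding divisor is therefore $e^{2\pi i j(\alpha+(j-1)\gamma)}$, $e^{2\pi i j(\beta+(j-1)\gamma)}$, or $e^{2\pi i j(j+1)\gamma}$ after reindexing. The hyperplanes in $\mathcal{A}_n$ are exactly the loci where one of these monodromies equals $1$. Off $\mathcal{A}_n$, each factor $1/(1-e^{2\pi i(\cdot)})$ is finite.

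Next I would construct the regularized cycle explicitly. Take a blow-up, or equivalently a small $\varepsilon$-truncation, of $\Delta_n$ along all these strata. Replace each truncated boundary piece by the product of a small circle around the corresponding divisor with the lower-dimensional face, weighted by $1/(1-\text{monodromy})$. This is the inclusion–exclusion over flags of faces, generalizing $\frac{-1}{1-e^{2\pi i\alpha}}\mathcal{C}_0+\varDelta+\frac{1}{1-e^{2\pi i\beta}}\mathcal{C}_1$. One checks that its boundary vanishes in the local system homology, so it is a twisted cycle. Because it is compact and avoids the singular divisors, the integral with any Laurent polynomial $f$ is entire in the parameters wherever the coefficients are defined. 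This makes the left side of (\ref{eq:IntDelta}) holomorphic on $\mathbb{C}^3\setminus\mathcal{A}_n$.

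Then I would prove (\ref{eq:IntDelta}) by the identity theorem. On the open region where $\Re\alpha,\Re\beta$ are large and $\Re\gamma>0$ (enough to absorb the poles of $f$), the integrand is integrable on $\Delta_n$. There I let the truncation radius tend to $0$. Each circle contribution over a face of codimension $r$ is $O(\varepsilon^{\text{positive exponent}})$ and vanishes, and the truncated simplex tends to $\Delta_n$. So both sides agree there. Since $\mathbb{C}^3\setminus\mathcal{A}_n$ is connected, with $\mathcal{A}_n$ a countable locally finite union of real-codimension-two sets, the cycle integral provides the analytic continuation of $S_n[f]$.

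The main obstacle is the combinatorics of the boundary. I have to verify that the weighted sum over nested strata is genuinely closed, particularly at corners where a collision cluster meets $0$ or $1$. I also have to confirm that no monodromy factors arise other than those listed in $\mathcal{A}_n$. I would handle this by induction on $n$: the link of each stratum is a product of lower-dimensional Selberg-type configurations, whose regularized cycles exist by the inductive hypothesis, and the weights multiply. The alternative is simply to cite the construction of \cite{TK}.
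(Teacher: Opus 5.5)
Your sketch is the standard Aomoto--Kita regularization of $\Delta_n$, carried out on its blow-up along the collision strata. That is the construction the paper imports from \cite{TK} without proof, modelled on (\ref{twistonedel}). Your identification of the monodromies $e^{2\pi i k(\alpha+(k-1)\gamma)}$, $e^{2\pi i k(\beta+(k-1)\gamma)}$, $e^{2\pi i k(k-1)\gamma}$ with $\mathcal{A}_n$ is correct, and so is the identity-theorem step.

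One caution about your fallback induction. The weights must be the monodromies around the exceptional divisors in blown-up, normal-crossing coordinates. They must not be read off from a stratum's restricted integrand treated as a Selberg integral with shifted parameters. For example, when $n=2$ the facet $y_2=0$ carries $y_1^{\alpha+2\gamma}(1-y_1)^{\beta}$. Read naively, this would impose $\alpha+2\gamma\notin\mathbb{Z}$, yet $\alpha=\gamma=1/3$ lies outside $\mathcal{A}_2$. The direct inclusion--exclusion over the faces of the blown-up chamber avoids this and needs no induction.
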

 
We refer to $(\alpha,\beta,\gamma)$ as parameters of the twisted cycle $[\Delta_n(\alpha,\beta,\gamma)]$.
 
\begin{Remark}
\label{del-sus0}
As can be seen from the case of the one-dimensional cycle \eqref{twistonedel}, 
it is known that the twisted cycle $[\Delta_n(\alpha,\beta,\gamma)]$ depends on the monodromy of the multivalued function of the integrand of $S_n[1](\alpha,\beta,\gamma)$ \cite{AK,TK}.
The condition $(\alpha,\beta,\gamma)\notin\mathcal{A}_n$ follows naturally from the construction of $[\Delta_n(\alpha,\beta,\gamma)]$.
\end{Remark} 

The result of the analytic continuation for $S_n[f](\alpha,\beta,\gamma)$ can be refined as in the following theorem.
\begin{thm}[\cite{Su}]
\label{sus-prop}
Define a collection of hyperplanes
\begin{align*}
  \begin{aligned}
     \widehat{\mathcal{A}}_n=&\bigcup_{j=1}^n\{(\alpha,\beta,\gamma)\in\mathbb{C}^3\ |\ \alpha+(j-1)\gamma\in \mathbb{Z}\}\\
     &\cup\bigcup_{j=1}^n\{(\alpha,\beta,\gamma)\in\mathbb{C}^3\ |\ \beta+(j-1)\gamma\in \mathbb{Z}\}\\
     &\cup \bigcup_{j=1}^{n-1}\{(\alpha,\beta,\gamma)\in\mathbb{C}^3\ |\ (j+1)\gamma\in \mathbb{Z}\}.
    \end{aligned}              
  \end{align*}
Suppose that $f\in \mathbb{C}[y^{\pm 1}_i|\ 1\leq i\leq n]^{\mathfrak{S}_n}$, where $\mathbb{C}[y^{\pm 1}_i|\ 1\leq i\leq n]^{\mathfrak{S}_n}$ is the ring of symmetric Laurent polynomials in $y_1,\dots,y_n$.
Then, as the function of $(\alpha,\beta,\gamma)$, the integral $S_n[f](\alpha,\beta,\gamma)$ admits the analytical continuation to the domain 
\begin{align*}
\mathbb{C}^3\setminus\widehat{\mathcal{A}}_n.
\end{align*}
\end{thm}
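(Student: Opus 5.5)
Since $f$ is symmetric and the integrand's multivalued factor $\prod_i y_i^\alpha(1-y_i)^\beta\prod_{i\neq j}(y_i-y_j)^\gamma$ is symmetric, the integral over $\Delta_n$ equals $\frac{1}{n!}$ times the integral over the full cube $[0,1]^n$. The plan is to use this symmetrization together with an explicit finite expansion of $f$, reducing $S_n[f]$ to Selberg-type integrals with known analytic structure. First, write $f$ as $(y_1\cdots y_n)^{-N} g$ with $g$ a symmetric polynomial. Absorbing $(y_1\cdots y_n)^{-N}$ into $\alpha$ shifts $\alpha\mapsto\alpha-N$. The excluded set $\widehat{\mathcal{A}}_n$ is invariant under integer shifts of $\alpha$, so it suffices to treat polynomial symmetric $f$.

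Second, expand $g$ in Jack polynomials $P^{(1/\gamma)}_\lambda$, or more elementarily use Aomoto-type recursions. By Kadell's formula, $S_n[P^{(1/\gamma)}_\lambda]$ equals $S_n(\alpha,\beta,\gamma)$ times a ratio of products. These products are of the form
\[
\prod_i\frac{(\alpha+1+(n-i)\gamma)_{\lambda_i}}{(\alpha+\beta+2+(2n-i-1)\gamma)_{\lambda_i}}
\]
times $P_\lambda(1^n)$. The change of basis from monomial symmetric functions to Jack polynomials has coefficients rational in $\gamma$. Hence $S_n[f]=R(\alpha,\beta,\gamma)\,S_n(\alpha,\beta,\gamma)$ with $R$ rational.

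Third, I would analyze the poles. By Selberg's formula, $S_n(\alpha,\beta,\gamma)$ is meromorphic on $\mathbb{C}^3$, with possible poles only where $1+\alpha+(i-1)\gamma$, $1+\beta+(i-1)\gamma$, or $1+i\gamma$ is a nonpositive integer. All of these lie in $\widehat{\mathcal{A}}_n$; for $1+i\gamma$ with $i=n$ one uses $j+1=n$, i.e.\ $j=n-1$. The denominators of $R$ cause trouble in three places:
\begin{itemize}
\item The Pochhammers in $\alpha+\beta+\cdots$ cancel against the Gamma functions $\Gamma(2+\alpha+\beta+(n+i-2)\gamma)$ in the denominator of Selberg's formula. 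The factor $(x)_{k}$ with $x=\Gamma(x+k)/\Gamma(x)$ telescopes into $\Gamma(\cdot)$ ratios that remain entire after multiplication.
\item The denominators of the Jack transition matrix occur only at $\gamma$ with $k\gamma\in\mathbb{Z}_{<0}$-type rational relations, with $k\le n$. These should also be covered by the $(j+1)\gamma\in\mathbb{Z}$ hyperplanes.
\item The normalizations of $P_\lambda(1^n)$ should likewise be covered by the $(j+1)\gamma\in\mathbb{Z}$ hyperplanes.
\end{itemize}

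The main obstacle is this last point: controlling the Jack transition coefficients and the Jack normalizations. They can have poles at $\gamma=-r/s$ with $s\le|\lambda|$, which could exceed $n$. To avoid this, I would instead use Aomoto's integration-by-parts recursion on the elementary symmetric functions $e_k$. Applying Stokes to $\partial_{y_i}$ of the integrand times $y_1\cdots y_k$ expresses $S_n[e_k\cdot h]$ in terms of lower-degree integrals. The only division is by linear forms $\alpha+\beta+2+(2n-k-1)\gamma$, and those divisions cancel with the Selberg denominator as above. By induction on degree in the ring $\mathbb{C}[e_1,\dots,e_n]$, every $S_n[f]$ is then an entire combination of shifted Selberg integrals. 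This yields the analytic continuation to $\mathbb{C}^3\setminus\widehat{\mathcal{A}}_n$.
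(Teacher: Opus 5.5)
The paper gives no argument of its own for this theorem; it is quoted from Sussman's work. Judged on its own, the first half of your proposal is sound:
\begin{itemize}
\item the symmetrization and the reduction $\alpha\mapsto\alpha-N$ to polynomial $f$ are correct;
\item Kadell's formula is correctly applied;
\item $(\alpha+\beta+2+(2n-i-1)\gamma)_{\lambda_i}$ does cancel against $1/\Gamma(\alpha+\beta+2+(n+i'-2)\gamma)$ with $i'=n+1-i$.
\end{itemize}

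The gap is the one nontrivial point. You need the purely $\gamma$-dependent factors (Jack transition coefficients and $P_\lambda(1^n)$) to be holomorphic off $\widehat{\mathcal{A}}_n$. You leave this as ``should be covered'' and then replace it by an Aomoto recursion that does not do what you claim. Apply Stokes to $\partial_{y_1}\bigl[(1-y_1)y_1\cdots y_k\,h\,\Phi\bigr]$ with $h$ symmetric, and symmetrize the $\gamma$-terms as in Aomoto's proof. With $I$ denoting integration over $[0,1]^n$ against the Selberg weight, you get
\[
(\alpha+\beta+2+(2n-k-1)\gamma)\,I[y_1\cdots y_k h]=(\alpha+1+(n-k)\gamma)\,I[y_1\cdots y_{k-1}h]+I\bigl[(1-y_1)y_1\cdots y_k\,\partial_{y_1}h\bigr].
\]
For non-constant $h$ the last term contains $-I[y_1^2y_2\cdots y_k\,\partial_{y_1}h]$. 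This has the same degree as the left side; for $n=2$, $k=1$, $h=e_1$ it is $-I[y_1^2]$. So the identity does not lower the degree: it couples $S_n[e_kh]$ to other integrals of the same degree. The pivots also get shifted; in that example one actually divides by $\alpha+\beta+3+2\gamma$. To rescue the argument you would need an ordering in which this system is triangular. Its pivots would have to be of the form $\alpha+\beta+2+(2n-i-1)\gamma+m$ with $m\in\mathbb{Z}_{\ge 0}$, each occurring simply, since that is what cancellation against Selberg's $1/\Gamma$ factors requires. Nothing in the proposal provides this.

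The missing input is on the Jack side, and it shows your worry about poles at $\gamma=-r/s$ with $s$ up to $|\lambda|$ is unfounded.
\begin{itemize}
\item By the integrality theorem of Lapointe--Vinet and Knop--Sahi, $J_\lambda=c_\lambda P^{(1/\gamma)}_\lambda$ has monomial coefficients in $\mathbb{Z}[1/\gamma]$, where $c_\lambda=\prod_{s\in\lambda}\gamma^{-1}\bigl(\mathrm{arm}(s)+(\mathrm{leg}(s)+1)\gamma\bigr)$.
\item In $n$ variables only $\ell(\lambda)\le n$ occurs, so $\mathrm{leg}(s)+1\le n$. Hence the coefficients of $P_\lambda$ can only have poles at $\gamma=0$ or at $(\mathrm{leg}(s)+1)\gamma=-\mathrm{arm}(s)$, i.e.\ on $\{k\gamma\in\mathbb{Z}\}$ with $1\le k\le n$.
\item The monomial-to-Jack matrix is unitriangular, so its inverse has entries polynomial in these coefficients and inherits the same pole set.
\item Stanley's evaluation, with $a'(s)$ and $l'(s)$ the co-arm and co-leg,
\[
P_\lambda(1^n)=\prod_{s\in\lambda}\frac{(n-l'(s))\gamma+a'(s)}{(\mathrm{leg}(s)+1)\gamma+\mathrm{arm}(s)},
\]
has the same denominators.
\end{itemize}
For $n\ge 2$ all these hyperplanes lie in $\widehat{\mathcal{A}}_n$ (for $k=1$ use $2\gamma\in\mathbb{Z}$). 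For $n=1$ no change of basis is needed. With this input your Kadell route is a complete proof, and the Aomoto detour can be dropped.
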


\begin{Remark}
From the constructions in \cite{AK,TK}, the cycle $[\Delta_n(\alpha,\beta,\gamma)]$ is not well-defined for all $(\alpha,\beta,\gamma)\in \mathbb{C}^3\setminus\widehat{\mathcal{A}}_n$. 
Nevertheless, by Theorem~\ref{sus-prop}, the value of the integral $S_n[f](\alpha,\beta,\gamma)$ analytically continues to the whole space of $\mathbb{C}^3\setminus\widehat{\mathcal{A}}_n$. 
Therefore, in what follows we regard $[\Delta_n(\alpha,\beta,\gamma)]$ as well-defined for arbitrary $(\alpha,\beta,\gamma)\in \mathbb{C}^3\setminus\widehat{\mathcal{A}}_n$, and use the notation in (\ref{eq:IntDelta}) for all $(\alpha,\beta,\gamma)\in \mathbb{C}^3\setminus\widehat{\mathcal{A}}_n$.
As explained in Subsections~\ref{cons-main0} and \ref{hidden-2}, this property of analytic continuation is very useful for discussing the $\mathcal{O}$-integrality of screening operators and the $\mathfrak{sl}_2$-symmetry of the derivations on $\mathcal{W}_{p_+,p_-}$ $(p_+\geq 2)$.
 \label{del-sus2}
 \end{Remark}

\subsection{Screening operators}
\label{subsc}
We define the conjugate $\hat{a}$ of $a_0$ to be the element satisfying the following relation
\begin{equation}
\label{conjugate}
[a_m,\hat{a}]=\delta_{m,0}{\rm id}.
\end{equation}
For each $\gamma \in \mathbb{C}$, $e^{\gamma\hat{a}}$ defines a Heisenberg weight shifting map $e^{\gamma\hat{a}}: F_{\beta}\rightarrow F_{\beta+\gamma}$. From now on, we identify $e^{\gamma\hat{a}}\cdot F_{\beta}$ with $F_{\beta+\gamma}$.

We introduce a free scalar field $\phi(z)$, which is a formal primitive of $a(z)$:
\begin{equation}
\label{eq:scalar}
\phi(z)=\hat{a}+a_0{\rm log}z-\sum_{n\neq 0}\frac{a_n}{n}z^{-n}.
\end{equation}
The scalar field $\phi(z)$ satisfies the operator product expansion
\begin{equation}
\label{eq:bosonf}
\phi(z)\phi(w)={\rm log}(z-w)+\cdots.
\end{equation} 
For $\alpha\in\mathbb{C}$, consider the field $:e^{\alpha\phi(z)}:$ which equals the vertex operator
\begin{align}
\label{Y-ver}
Y(\ket{\alpha},z)=e^{\alpha\hat{a}}z^{\alpha a_{0}}\overline{Y}(\ket{\alpha},z),
\end{align}
where $z^{\alpha a_{0}}=e^{\alpha a_{0}{\rm log}z}$ and
\begin{align*}
\overline{Y}(\ket{\alpha},z):=\prod_{n\geq 1}{\rm exp}\Bigl({\alpha\frac{a_{-n}}{n}z^{n}}\Bigr){\rm exp}\Bigl({-\alpha\frac{a_n}{n}z^{-n}}\Bigr).
\end{align*}
From (\ref{eq:bosonf}), the composition of $m$ vertex operators is given by
\begin{align}
\label{eq:opeVV}
\begin{aligned}
Y(\ket{\alpha_1},z_1)\cdots Y(\ket{\alpha_m},z_m)
&=e^{(\sum_{i=1}^m\alpha_i)\hat{a}}\prod_{i=1}^mz_i^{\alpha_i a_0}\prod_{1\leq i\neq j\leq m}(z_i-z_j)^{\frac{\alpha_i\alpha_j}{2}}\\
&\qquad\qquad \cdot :\prod_{i=1}^m\overline{Y}(\ket{\alpha_i},z_i):
\end{aligned}
\end{align}

Fix $\rho\in \mathbb{C}$. Let $\rho_\pm$ $(\Re \rho_+\geq \Re \rho_-)$ be the solutions of 
\begin{align}
\label{second-order}
x^2-\rho x-2=0.
\end{align}
For $r,s\in\mathbb{Z}$, we set
\begin{align}
h^{(\rho)}_{r,s}&:=\frac{r^2-1}{8}\rho^2_+-\frac{rs-1}{2}+\frac{s^2-1}{8}\rho^2_-,\nonumber\\
\alpha^{(\rho)}_{r,s}&:=\frac{1-r}{2}\rho_++\frac{1-s}{2}\rho_-.
\label{alpha-rs-rho}
\end{align}
Note that $h^{(\rho)}_{\alpha^{(\rho)}_{r,s}}=h^{(\rho)}_{r,s}$.
We set
\begin{align}
\label{F-rho}
F^{(\rho)}_{r,s}:=F^{(\rho)}_{\alpha^{(\rho)}_{r,s}},\qquad r,s\in \mathbb{Z}.
\end{align}
Define two fields
\begin{align*}
\mathcal{Q}^{(\rho)}_\pm(z)=Y(\ket{\rho_\pm},z).
\end{align*} 
The conformal weights of these fields are $h^{(\rho)}_{\rho_{\pm}}=1$:
\begin{align*}
T^{(\rho)}(z)\mathcal{Q}^{(\rho)}_\pm(w)&=\frac{\mathcal{Q}^{(\rho)}_\pm(w)}{(z-w)^2}+\frac{\partial_{w}\mathcal{Q}^{(\rho)}_\pm(w)}{z-w}+\cdots\\
&=\partial_{w}\Bigl(\frac{\mathcal{Q}^{(\rho)}_\pm(w)}{z-w}\Bigr)+\cdots.
\end{align*}
Therefore the zero modes of the fields $\mathcal{Q}^{(\rho)}_\pm(z)$
\begin{align*}
\begin{aligned}
&\mathcal{Q}^{(\rho)}_{+}:={\rm Res}_{z=0}\mathcal{Q}^{(\rho)}_\pm(z){\rm d}z\ :\ F^{(\rho)}_{1,k}\rightarrow F^{(\rho)}_{-1,k},\ \ k\in\mathbb{Z}\\
&\mathcal{Q}^{(\rho)}_{-}:={\rm Res}_{z=0}\mathcal{Q}^{(\rho)}_\pm(z){\rm d}z\ :\ F^{(\rho)}_{k,1}\rightarrow F^{(\rho)}_{k,-1},\ \ k\in\mathbb{Z}
\end{aligned}
\end{align*}
commute with every Virasoro mode. 
The above zero modes are called screening operators.

Let $r,s\in \mathbb{Z}_{\geq 2}$ and $k\in \mathbb{Z}$. For $u_r\in F^{(\rho)}_{r,k}$, $u_s\in F^{(\rho)}_{k,s}$, $\psi^\vee_r\in F^\vee_{\alpha^{(\rho)}_{-r,k}}$, and $\psi^\vee_s\in F^\vee_{\alpha^{(\rho)}_{k,-s}}$, consider the correlation functions
\begin{align}
\label{Phi-cor}
\begin{aligned}
&\Phi_+(\psi^\vee_r,u_r;\bm{z}):=\bigl( \psi^\vee_r,\mathcal{Q}^{(\rho)}_+(z_1)\mathcal{Q}^{(\rho)}_+(z_2)\cdots \mathcal{Q}^{(\rho)}_+(z_r)u_r\bigr)_{F^{(\rho)}_{-r,k}},\\
&\Phi_-(\psi^\vee_s,u_s;\bm{z}):=\bigl( \psi^\vee_s,\mathcal{Q}^{(\rho)}_-(z_1)\mathcal{Q}^{(\rho)}_-(z_2)\cdots \mathcal{Q}^{(\rho)}_-(z_s)u_s\bigr)_{F^{(\rho)}_{k,-s}}.
\end{aligned}
\end{align}
Then by definition, we have
\begin{align}
\label{Phi-cor-ex}
\begin{aligned}
\Phi_+(\psi^\vee_r,u_r;\bm{z})=&\Phi_+(\bra{\alpha^{(\rho)}_{-r,k}},\ket{\alpha^{(\rho)}_{r,k}};\bm{z})\cdot\bigl( \psi^\vee_r,:\prod_{i=1}^r\overline{Y}(\ket{\rho_+},z_i):u_r\bigr)_{F^{(\rho)}_{-r,k}},\\
\Phi_-(\psi^\vee_s,u_s;\bm{z})=&\Phi_-(\bra{\alpha^{(\rho)}_{k,-s}},\ket{\alpha^{(\rho)}_{k,s}};\bm{z})\cdot\bigl( \psi^\vee_s,:\prod_{i=1}^s\overline{Y}(\ket{\rho_-},z_i):u_s\bigr)_{F^{(\rho)}_{k,-s}}.
\end{aligned}
\end{align}
It is known that the second term in each of (\ref{Phi-cor-ex}) is expressed by means of Jack symmetric polynomials (cf. \cite{TW}).
From (\ref{eq:opeVV}), we see that
\begin{align}
\label{eq:cor-phi}
\begin{aligned}
\Phi_+(\bra{\alpha^{(\rho)}_{-r,k}},\ket{\alpha^{(\rho)}_{r,k}};\bm{z})&=\prod_{i=1}^rz^{(1-r)\frac{\rho^2_+}{2}+k-1}_i\prod_{1\leq i\neq j\leq r}(z_i-z_j)^{\frac{\rho^2_+}{2}}\\
&=\prod_{i=1}^rz^{k-1}_i\prod_{1\leq i\neq j\leq r}\Bigl(1-\frac{z_j}{z_i}\Bigr)^{\frac{\rho^2_+}{2}},\\
\Phi_-(\bra{\alpha^{(\rho)}_{k,-s}},\ket{\alpha^{(\rho)}_{k,s}};\bm{z})&=\prod_{i=1}^sz^{(1-s)\frac{\rho^2_-}{2}+k-1}_i\prod_{1\leq i\neq j\leq s}(z_i-z_j)^{\frac{\rho^2_-}{2}}\\
&=\prod_{i=1}^sz^{k-1}_i\prod_{1\leq i\neq j\leq s}\Bigl(1-\frac{z_j}{z_i}\Bigr)^{\frac{\rho^2_-}{2}}.
\end{aligned}
\end{align}
For these correlation functions, we consider the following change of variables:
\begin{align}
\label{patic-zU}
\begin{aligned}
&\Phi_+(\bra{\alpha^{(\rho)}_{-r,k}},\ket{\alpha^{(\rho)}_{r,k}};z,zy_1,\dots,zy_{r-1})=z^{rk-r}U^{(\rho_+)}_{r-1}(\bm{y}),\\
&\Phi_-(\bra{\alpha^{(\rho)}_{k,-s}},\ket{\alpha^{(\rho)}_{k,s}};z,zy_1,\dots,zy_{s-1})=z^{ks-s}U^{(\rho_-)}_{s-1}(\bm{y}),
\end{aligned}
\end{align}
where 
\begin{align*}
\begin{aligned}
&U^{(\rho_\pm)}_{n-1}(\bm{y})=\prod_{i=1}^{n-1}y^{(1-n)\frac{\rho^2_\pm}{2}}_i(1-y_i)^{\rho^2_\pm}\prod_{1\leq i\neq j\leq n-1}(y_i-y_j)^{\frac{\rho^2_\pm}{2}}.
\end{aligned}
\end{align*}
Thus, using (\ref{Phi-cor-ex}), (\ref{patic-zU}), and Proposition~\ref{TK-thm}, we can apply the twisted cycle $[\Delta_n]$ to (\ref{Phi-cor}).
Then we define the fields
\begin{align*}
&\mathcal{Q}^{(\rho);[r]}_{+}(z)\in {\rm Hom}_{\mathbb{C}}(F^{(\rho)}_{r,k},F^{(\rho)}_{-r,k})[[z,z^{-1}]],\ \ r\in \mathbb{Z}_{\geq 2},\ \ k\in\mathbb{Z},\\
&\mathcal{Q}^{(\rho);[s]}_{-}(z)\in {\rm Hom}_{\mathbb{C}}(F^{(\rho)}_{k,s},F^{(\rho)}_{k,-s})[[z,z^{-1}]],\ \ s\in\mathbb{Z}_{\geq 2},\ \ k\in\mathbb{Z},
\end{align*}
as follows
\begin{equation}
\label{Tsuchiya-Kanie0}
\begin{split}
&\mathcal{Q}^{(\rho);[r]}_{+}(z)=\int_{[\Delta^{(\rho_+)}_{r-1}]}\mathcal{Q}^{(\rho)}_+(z)\mathcal{Q}^{(\rho)}_+(zy_{1})\mathcal{Q}^{(\rho)}_+(zy_2)\cdots \mathcal{Q}^{(\rho)}_+(zy_{r-1})z^{r-1}{\rm d}y_1\cdots{\rm d}y_{r-1},\\
&\mathcal{Q}^{(\rho);[s]}_{-}(z)=\int_{[\Delta^{(\rho_-)}_{s-1}]}\mathcal{Q}^{(\rho)}_-(z)\mathcal{Q}^{(\rho)}_-(zy_{1})\mathcal{Q}^{(\rho)}_-(zy_2)\cdots \mathcal{Q}^{(\rho)}_-(zy_{s-1})z^{s-1}{\rm d}y_1\cdots{\rm d}y_{s-1},
\end{split}
\end{equation}
where we set
\begin{align}
[\Delta^{(\sigma)}_{l-1}]
:=[\Delta_{l-1}\bigl((1-l)\frac{\sigma^2}{2},\sigma^2,\frac{\sigma^2}{2}\bigr)].
\label{eq:twist-rho}
\end{align}
These fields satisfy the following operator product expansions \cite{TW}
\begin{align*}
&T^{(\rho)}(z)\mathcal{Q}^{(\rho);[\bullet]}_{\pm}(w)=\frac{\mathcal{Q}^{(\rho);[\bullet]}_{\pm}(w)}{(z-w)^2}+\frac{\partial_{w}\mathcal{Q}^{(\rho);[\bullet]}_{\pm}(w)}{z-w}+\cdots.
\end{align*}
In particular the following proposition holds.
\begin{prop}[\cite{FF,IK,TK,TW}]
The zero modes
\begin{align*}
&\mathcal{Q}^{(\rho);[r]}_{+}:={\rm Res}_{z=0}\mathcal{Q}^{(\rho);[r]}_{+}(z){\rm d}z \in {\rm Hom}_{\mathbb{C}}(F^{(\rho)}_{r,k},F^{(\rho)}_{-r,k}), \ r\in \mathbb{Z}_{\geq 2},\ \ k\in\mathbb{Z},\\
&\mathcal{Q}^{(\rho);[s]}_{-}:={\rm Res}_{z=0}\mathcal{Q}^{(\rho);[s]}_{-}(z){\rm d}z \in {\rm Hom}_{\mathbb{C}}(F^{(\rho)}_{k,s},F^{(\rho)}_{k,-s}),\ \ s\in\mathbb{Z}_{\geq 2},\ \ k\in\mathbb{Z}
\end{align*}
commute with every Virasoro mode of $\mathcal{F}_{\rho}\mathchar`-{\rm Mod}$. 
\label{com-TK}
\end{prop}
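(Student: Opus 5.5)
The plan is to reduce the statement to a commutator identity between $T^{(\rho)}(z)$ and the integrand of (\ref{Tsuchiya-Kanie0}), and to show that this commutator is a total derivative that integrates to zero over the twisted cycle. I treat $\mathcal{Q}^{(\rho);[r]}_{+}$; the case of $\mathcal{Q}^{(\rho);[s]}_{-}$ is identical after replacing $\rho_+$ by $\rho_-$.

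First, undo the change of variables $z_1=z$, $z_{i+1}=zy_i$. The zero mode then becomes an $r$-fold integral
\begin{align*}
\mathcal{Q}^{(\rho);[r]}_{+}=\int_{\Gamma}\mathcal{Q}^{(\rho)}_+(z_1)\cdots\mathcal{Q}^{(\rho)}_+(z_r)\,{\rm d}z_1\cdots{\rm d}z_r ,
\end{align*}
where $\Gamma$ is the cycle obtained by combining the residue contour $|z_1|=\mathrm{const}$ with $z_1\cdot[\Delta^{(\rho_+)}_{r-1}]$. Matrix coefficients $(\psi^\vee,\,\cdot\,u)$ are given by (\ref{Phi-cor-ex}), (\ref{eq:cor-phi}) and (\ref{patic-zU}). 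Each of them is $z^{rk-r}$ times the Selberg-type integral of $U^{(\rho_+)}_{r-1}$ against a Laurent polynomial, which Proposition~\ref{TK-thm} and Remark~\ref{del-sus2} make well-defined. Taking the residue in $z$ then simply picks out one Laurent coefficient.

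Next, I compute the commutator with a Virasoro mode. Since each $\mathcal{Q}^{(\rho)}_+$ is primary of weight $1$, the OPE stated above gives
\begin{align*}
[L^{(\rho)}_n,\mathcal{Q}^{(\rho)}_+(z_1)\cdots\mathcal{Q}^{(\rho)}_+(z_r)]=\sum_{i=1}^r\partial_{z_i}\Bigl(z_i^{n+1}\mathcal{Q}^{(\rho)}_+(z_1)\cdots\mathcal{Q}^{(\rho)}_+(z_r)\Bigr).
\end{align*}
Pairing with $\psi^\vee$ and $u$, the integrand becomes a sum of total derivatives of multivalued forms of the type $\Phi_+\cdot(\text{Laurent polynomial})\,{\rm d}\bm z$. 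The key point is that the integral of a total derivative over a twisted cycle vanishes, because a twisted cycle is closed in twisted homology (Stokes' theorem for twisted chains). For the $z_1$-direction this is just the vanishing of the residue of a derivative; the monodromy factor $z_1^{rk-r}$ is integral, so this step is fine. In the remaining directions, $\partial_{z_i}$ with $z_i=z_1y_{i-1}$ becomes a combination of the operators $\partial_{y_j}$ and $z_1\partial_{z_1}$, and I handle these terms the same way.

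The main obstacle is justifying this Stokes argument on the whole parameter region where $[\Delta^{(\rho_+)}_{r-1}]$ is only defined by analytic continuation (Remark~\ref{del-sus2}). The forms involved are not symmetric after differentiation, and Theorem~\ref{sus-prop} only covers symmetric $f$. To get around this, I would first prove the identity for generic $\rho$ with $(\alpha,\beta,\gamma)\notin\mathcal{A}_{r-1}$. There the cycle of Proposition~\ref{TK-thm} is an honest twisted cycle and Stokes applies directly. I would then extend by analytic continuation in $\rho$, after symmetrizing the integrand: the product of the commuting-up-to-phase fields is symmetric on the cycle, so the total-derivative sum can be symmetrized. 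The identity $[L^{(\rho)}_n,\mathcal{Q}^{(\rho);[r]}_{+}]=0$ is an equality of functions holomorphic off $\widehat{\mathcal{A}}_{r-1}$. It therefore persists at the required parameters, and commutation with all Virasoro modes follows.
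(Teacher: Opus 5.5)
Your proposal is correct and follows essentially the same route as the paper. The paper obtains the statement from the weight-one primary OPE of the integrated fields $\mathcal{Q}^{(\rho);[\bullet]}_{\pm}(w)$, quoted from \cite{TW}; that OPE is itself proved by the same twisted-Stokes computation in the $\bm{y}$-variables that you perform on the product cycle, followed by the residue in $z$, and your extension from generic $\rho$ via Theorem~\ref{sus-prop} makes explicit the passage to parameters where $[\Delta^{(\rho_\pm)}_{\bullet}]$ exists only through analytic continuation. The symmetrization you invoke there is not needed: moving $L^{(\rho)}_n$ across the pairing writes every matrix coefficient of $[L^{(\rho)}_n,\mathcal{Q}^{(\rho);[r]}_{+}]$ as a difference of matrix coefficients of $\mathcal{Q}^{(\rho);[r]}_{+}$ itself, i.e.\ Selberg-type integrals with symmetric Laurent-polynomial insertions, and these are already holomorphic off $\widehat{\mathcal{A}}_{r-1}$.
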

These zero modes are also called {\rm screening operators}.
Hereafter, we set
\begin{align*}
&\mathcal{Q}^{(\rho);[1]}_{+}=\mathcal{Q}^{(\rho)}_{+},
&\mathcal{Q}^{(\rho);[1]}_{-}=\mathcal{Q}^{(\rho)}_{-}.
\end{align*}
The following proposition follows immediately from (\ref{a-opp}) and the definition (\ref{Y-ver}) of the vertex operators.
\begin{prop}
Let $r,s\in \mathbb{Z}_{\geq 1}$ and $k\in \mathbb{Z}$. For any $u_r\in F^{(\rho)}_{r,k}$, $u_s\in F^{(\rho)}_{k,s}$, $\psi^*_r\in F^*_{\alpha^{(\rho)}_{-r,k}}$, and $\psi^*_s\in F^*_{\alpha^{(\rho)}_{k,-s}}$, we have
\begin{align*}
\langle\psi_r,\mathcal{Q}^{(\rho);[r]}_+u_r\rangle_{{F}^{(\rho)}_{-r,k}}&=\langle\mathcal{Q}^{(\rho);[r]}_+\psi_r,u_r\rangle_{{F}^{(\rho)}_{r,k}},\\
\langle\mathcal{Q}^{(\rho);[s]}_-\psi_s,u_s\rangle_{{F}^{(\rho)}_{k,-s}}&=\langle\psi_s,\mathcal{Q}^{(\rho);[s]}_-u_s\rangle_{{F}^{(\rho)}_{k,s}}.
\end{align*}
\label{dual-scop}
\end{prop}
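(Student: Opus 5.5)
We reduce the statement to a self-adjointness property of a single vertex operator $Y(\ket{\rho_\pm},z)$ under the contragredient pairing, and then integrate over the twisted cycle.

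\textbf{Step 1 (transpose of one vertex operator).} Since $h^{(\rho)}_{\rho_\pm}=1$, the formula \eqref{eq:opp} specialises to
\begin{align*}
Y^{\rm opp}(\ket{\rho_\pm},z)=-z^{-2}\,Y\bigl(e^{zL_1}\ket{\rho_\pm},z^{-1}\bigr).
\end{align*}
The vector $\ket{\rho_\pm}$ is primary, because $L^{(\rho)}_1$ lowers the degree and $\ket{\rho_\pm}$ has degree zero in $F_{\rho_\pm}$. Hence
\begin{align*}
Y^{\rm opp}(\ket{\rho_\pm},z)=-z^{-2}\,\mathcal{Q}^{(\rho)}_\pm(z^{-1}).
\end{align*}

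\textbf{Step 2 (check on zero modes).} The residue at $z=0$ of $-z^{-2}\mathcal{Q}^{(\rho)}_\pm(z^{-1})\,{\rm d}z$ is computed with the substitution $w=z^{-1}$. Then ${\rm d}z=-w^{-2}{\rm d}w$ and the orientation reverses, so this residue equals ${\rm Res}_{w=0}\mathcal{Q}^{(\rho)}_\pm(w)\,{\rm d}w=\mathcal{Q}^{(\rho)}_\pm$.

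One can also verify this directly through \eqref{a-opp}, which gives $a^{\rm opp}_n=\rho\delta_{n,0}-a_{-n}$. Under this rule the factor $\overline{Y}(\ket{\rho_\pm},z)$ is carried to $\overline{Y}(\ket{\rho_\pm},z^{-1})$. The zero-mode factor $z^{\rho_\pm a_0}$ becomes $z^{\rho_\pm(\rho-a_0)}$, and the exponent $\rho_\pm\rho$ is compensated using $\rho_\pm^2-\rho\rho_\pm=2$. The shift operator $e^{\rho_\pm\hat a}$ is matched with the identification $F^*_{\alpha}\cong F_{\rho-\alpha}$. This gives the case $r=1$ (resp.\ $s=1$) of the statement: the transpose of $\mathcal{Q}^{(\rho)}_\pm$ with respect to $\langle\cdot,\cdot\rangle$ is again $\mathcal{Q}^{(\rho)}_\pm$, now acting on the dual modules $F^{(\rho)}_{\rho-\alpha}$.

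\textbf{Step 3 (general $r$).} For $r\geq 2$, the transpose of a product reverses the order of the factors:
\begin{align*}
\bigl(\mathcal{Q}(z_1)\cdots\mathcal{Q}(z_r)\bigr)^{\rm opp}=\prod_{i}\bigl(-z_i^{-2}\bigr)\,\mathcal{Q}(z_r^{-1})\cdots\mathcal{Q}(z_1^{-1}).
\end{align*}
Substitute $w_i=z_i^{-1}$, use the parametrisation $z,zy_i$ of \eqref{Tsuchiya-Kanie0}, and take ${\rm Res}_z$. The new variables are $w=z^{-1}$ and $w y_i^{-1}$. The correlation function changes by the Jacobian factors together with the phases of $(z_i-z_j)^{\rho_\pm^2/2}$ that arise from reordering the factors.

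\textbf{Main obstacle.} The difficulty is to show that the map $y_i\mapsto y_i^{-1}$, combined with the reversal of order, sends the twisted cycle $[\Delta^{(\rho_\pm)}_{r-1}]$ to itself with exactly the right compensating phase, so that the result reproduces $\mathcal{Q}^{(\rho);[r]}_\pm$ rather than a multiple of it. A naive inversion of $y$ maps $(0,1)$ to $(1,\infty)$, so the identification is not immediate.

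I would avoid geometric cycle manipulations. Instead, by \eqref{Phi-cor-ex}, both sides reduce to a scalar prefactor times a matrix coefficient of $:\prod_i\overline{Y}(\ket{\rho_\pm},z_i):$. That matrix coefficient is a symmetric Laurent polynomial, and it is invariant under the opposite action by Step 1. The whole statement therefore reduces to a scalar identity between integrals of the form $S_{r-1}[f]$ for symmetric $f$. I would prove this identity for generic parameters, where $[\Delta_{r-1}]$ is genuinely a twisted cycle and one may use the $\mathfrak{S}_{r-1}$-symmetry of the integrand together with $\sum_i$-homogeneity. It then extends to the specialised parameters $\bigl((1-r)\frac{\rho_\pm^2}{2},\rho_\pm^2,\frac{\rho_\pm^2}{2}\bigr)$ by the analytic continuation of Theorem~\ref{sus-prop} and Remark~\ref{del-sus2}.

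The identity stated for $\mathcal{Q}^{(\rho);[s]}_-$, with the roles of the two sides swapped, follows by the same argument.
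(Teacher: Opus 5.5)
Your Steps 1--2 are correct and are exactly the paper's argument, which consists only of transposing via \eqref{a-opp} and \eqref{Y-ver}. The normal-ordered part becomes $\overline{Y}(\ket{\rho_\pm},z^{-1})$, and $\rho_\pm^2-\rho\rho_\pm=2$ absorbs the zero-mode exponent together with the Jacobian of $w=z^{-1}$. Strictly, \eqref{eq:opp} is stated for elements of $\mathcal{F}_\rho$, so it is your direct check that justifies $r=1$.

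For $r\geq 2$, Step 3 has a genuine gap. You locate the difficulty correctly, and the paper's one-line proof does not spell it out. But reducing to a scalar identity does not bypass the cycle question; it \emph{is} the cycle question, and you leave it unproved. Let $P(\bm z)$ be the degree-$d$ part, selected by the residue, of $\langle\psi,:\prod_i\overline{Y}(\ket{\rho_+},z_i):u\rangle$. Under transposition this becomes $P(\bm z^{-1})$, not $P$. Put $A=\rho_+\alpha^{(\rho)}_{r,k}$, $B=\rho_+^2$ and $A'=-A-2-(r-1)B$. After taking the residues, what you need is
\[
\int_{\Delta_{r-1}}\prod_i y_i^{A}(1-y_i)^{B}\prod_{i<j}(y_i-y_j)^{B}\,P(1,\bm y)\,{\rm d}\bm y=\int_{\Delta_{r-1}}\prod_i y_i^{A'}(1-y_i)^{B}\prod_{i<j}(y_i-y_j)^{B}\,P(1,\bm y^{-1})\,{\rm d}\bm y .
\]
The exponents and the polynomials differ on the two sides. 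Permuting the $y_i$ and using homogeneity cannot relate them; what is needed is a non-linear change of variables of the simplex. Already for $r=2$, where $\mathfrak{S}_{r-1}$ is trivial, the identity comes down to $P(1,y)=P(y,1)$, i.e.\ symmetry in all $r$ points, not in $r-1$.

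The missing idea is to re-base after inverting. Under $z_i\mapsto z_i^{-1}$ the point of largest modulus is $z_r^{-1}$, not $z_1^{-1}$. Measuring the other points relative to it, as the order reversal of the transpose dictates, turns $(w,wy_1,\dots,wy_{r-1})$ into $(Z,Z\tilde y_1,\dots,Z\tilde y_{r-1})$ with
\[
\tilde{\bm y}=\Phi(\bm y):=\Bigl(\frac{y_{r-1}}{y_{r-2}},\frac{y_{r-1}}{y_{r-3}},\dots,\frac{y_{r-1}}{y_{1}},\,y_{r-1}\Bigr).
\]
This $\Phi$ is an involution of $\{y_i\neq 0,1,\ y_i\neq y_j\}$ mapping $\Delta_{r-1}$ onto itself, so the $(0,1)\to(1,\infty)$ problem never arises. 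Homogeneity and the full $\mathfrak{S}_r$-symmetry give $P(1,\Phi(\bm y))=y_{r-1}^{d}P(1,\bm y^{-1})$. Using the residue condition $rA+\binom{r}{2}B+d+r=0$, one checks that the left integrand at $\Phi(\bm y)$, times $|\det D\Phi|$, is exactly the right integrand.

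The sign of $\det D\Phi$ cancels against the orientation in $\Phi_*[\Delta_{r-1}]=\pm[\Delta_{r-1}]$. That equality holds for non-resonant $\rho$ because $\Phi$ preserves $\Delta_{r-1}$ and the positive branch on it. Theorem~\ref{sus-prop} then extends the identity to the parameters at hand. Varying $\rho$ keeps $rA+\binom{r}{2}B=r(k-1)\in\mathbb{Z}$, so the $z$-residue stays meaningful along the family. With this ingredient your outline becomes a proof; without it, Step 3 only restates the claim.
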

 The following proposition is essentially due to \cite{AdamovicD/MilasA:2008,NT,Nak,TW}.
 \begin{prop}
 \label{prop:qr-qs}
 Let $r,s\in \mathbb{Z}_{\geq 1}$. Then, we have
 $
 \lbrack\mathcal{Q}^{(\rho);[r]}_{+},\mathcal{Q}^{(\rho);[s]}_{-}\rbrack=0.
 $
 \end{prop}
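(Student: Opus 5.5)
The plan is to reduce everything to mutual locality of the two screening currents with integer exponent. By (\ref{second-order}) we have $\rho_+\rho_-=-2$ and $\rho_++\rho_-=\rho$. Then (\ref{eq:opeVV}) gives
\begin{align*}
\mathcal{Q}^{(\rho)}_+(z)\mathcal{Q}^{(\rho)}_-(w)=(z-w)^{\rho_+\rho_-}\,e^{\rho\hat{a}}z^{\rho_+a_0}w^{\rho_-a_0}:\overline{Y}(\ket{\rho_+},z)\overline{Y}(\ket{\rho_-},w):,
\end{align*}
where $(z-w)^{\rho_+\rho_-}=(z-w)^{-2}$. This factor is single-valued with trivial monodromy. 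So the two products $\mathcal{Q}^{(\rho)}_+(z)\mathcal{Q}^{(\rho)}_-(w)$ and $\mathcal{Q}^{(\rho)}_-(w)\mathcal{Q}^{(\rho)}_+(z)$ are expansions, in $|z|>|w|$ and $|w|>|z|$ respectively, of the same function, which is meromorphic in $z-w$ with only a double pole.

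\emph{Step 1: the case $r=s=1$.} By the usual contour-deformation argument, $[\mathcal{Q}^{(\rho)}_+,\mathcal{Q}^{(\rho)}_-(w)]=Y(\mathcal{Q}^{(\rho)}_+\ket{\rho_-},w)$. We compute $\mathcal{Q}^{(\rho)}_+\ket{\rho_-}$ as the residue of the double pole, which is $\rho_+a_{-1}\ket{\rho}$. Next we use $L^{(\rho)}_{-1}\ket{\alpha}=(\alpha-\rho)a_{-1}\ket{\alpha}+\dots$. More precisely $L_{-1}\ket{\alpha}=\alpha a_{-1}\ket{\alpha}$ up to the background-charge correction, which we will check directly. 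From this we get $\mathcal{Q}^{(\rho)}_+\ket{\rho_-}=c\,L_{-1}\ket{\rho}$ for a constant $c$; this is where $\rho\neq0$ is used. Hence the commutator field is a total $w$-derivative. On $F^{(\rho)}_{r,s}$ the exponent $\rho\,a_0$ is an integer, because $\rho\,\alpha^{(\rho)}_{r,s}\in\mathbb{Z}$ using $\rho_\pm^2-\rho\rho_\pm=2$. The residue in $w$ of a total derivative therefore vanishes, and $[\mathcal{Q}^{(\rho)}_+,\mathcal{Q}^{(\rho)}_-]=0$.

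\emph{Step 2: general $r,s$.} We use (\ref{Tsuchiya-Kanie0}) and write both compositions as iterated integrals of the correlation function
\begin{align*}
\bigl(\psi^\vee,\textstyle\prod_i\mathcal{Q}^{(\rho)}_+(z_i)\prod_j\mathcal{Q}^{(\rho)}_-(w_j)u\bigr).
\end{align*}
The integrand is the product of a $\mathcal{Q}_+$-Selberg factor, a $\mathcal{Q}_-$-Selberg factor, the single-valued cross factor $\prod_{i,j}(z_i-w_j)^{-2}$, and a polynomial part. The ordering of the $z$'s relative to the $w$'s only changes which expansion region is used. Passing from one ordering to the other amounts to moving the $w$-residue contours across the $z$-cycle, one $z_i$ at a time. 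Since the cross factor has trivial monodromy, the only discrepancy comes from residues at $w_j=z_i$. By the same computation as in Step 1, each such residue is a total derivative of a single-valued expression in the remaining outer variable, so it integrates to zero.

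\emph{Main obstacle.} The obstacle is that the $\mathcal{Q}_+$ variables live on the twisted cycle $[\Delta^{(\rho_+)}_{r-1}]$, which exists only as a regularization and, after Remark~\ref{del-sus2}, only by analytic continuation. Deforming contours across it therefore needs justification. My first attempt would be to establish the identity for generic parameters. I would deform $\rho_\pm$ slightly off the relation while keeping the lattice exponents integral, or work at the level of integrands after expanding in $w_j/z$ and $z/w_j$. There the cycles are honest and the cross terms can be controlled term by term. I would then conclude by analyticity in the parameters, using Theorem~\ref{sus-prop}. If this becomes messy, I would fall back on the known structural route of \cite{TW,NT}: both sides are Virasoro homomorphisms by Proposition~\ref{com-TK}, and one compares them on the generating singular vectors of the Felder complexes.
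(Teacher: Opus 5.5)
Your Step 1 is correct. It is exactly the $r=s=1$ case of the first term in (\ref{vac-01}): $[\mathcal{Q}^{(\rho)}_+,\mathcal{Q}^{(\rho)}_-(w)]=\frac{\rho_+}{\rho}\partial_w Y(\ket{\rho},w)$. Two small corrections: $L^{(\rho)}_{-1}\ket{\alpha}=\alpha a_{-1}\ket{\alpha}$ holds exactly, and $\rho\,\alpha^{(\rho)}_{r,s}\in\mathbb{Z}$ is false unless $r=s=1$ (you only need $\alpha^{(\rho)}_{1,1}=0$).

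The gap is in Step 2. Your claim that each residue at $w_j=z_i$ is ``a total derivative of a single-valued expression'' is wrong. After merging, $Y(\ket{\rho_++\rho_-},\cdot)$ has exponents $(\rho_++\rho_-)\rho_\pm=\rho_\pm^2-2$ with the remaining $\mathcal{Q}^{(\rho)}_\pm$, and these are not integers for generic $\rho$.

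For $r,s\geq 2$ this breaks your pairwise bookkeeping. The merged point is non-integrally coupled to \emph{both} the remaining $z_k$ and the remaining $w_l$. So the residue term is a section of a local system for which neither the $\bm{z}$-cycle $[\Delta^{(\rho_+)}_{r-1}]$ nor the $\bm{w}$-cycle is a twisted cycle, and ``it integrates to zero'' has nothing to apply to. Moreover, in (\ref{Tsuchiya-Kanie0}) every point of a cluster is tied to its anchor ($z_{k+1}=zy_k$). A derivative in the collision coordinate is therefore not a total derivative in the actual integration variables, and ``the same computation as in Step 1'' skips exactly the step that would make it one.

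The paper handles this by never detaching a $\mathcal{Q}_+$-point from its cluster:
\begin{enumerate}
\item It takes the residue of $z^{r-1}\mathcal{Q}^{(\rho)}_+(z)\mathcal{Q}^{(\rho)}_+(zy_1)\cdots\mathcal{Q}^{(\rho)}_+(zy_{r-1})$ at the anchor $z=w_i$, so the whole cluster lands at $w_i$.
\item The derivative then falls on the prefactor $w_i^{r-1}$, on the merged field, and on the cluster. Using $\partial_{w_i}\mathcal{Q}^{(\rho)}_+(w_iy_j)=\frac{y_j}{w_i}\partial_{y_j}\mathcal{Q}^{(\rho)}_+(w_iy_j)$ and Stokes on $[\Delta^{(\rho_+)}_{r-1}]$, the prefactor term cancels the cluster term.
\item The remaining $\frac{\rho_+}{\rho}\partial_{w_i}$ term is rewritten as $d_{\bm{w}}$ of a form, and the same integration by parts in $\bm{y}$ removes the leftovers.
\item That form, $w_i^{r-1}\cdots Y(\ket{\rho_++\rho_-},w_i)\mathcal{Q}^{(\rho)}_+(w_iy_1)\cdots\mathcal{Q}^{(\rho)}_+(w_iy_{r-1})\cdots$, has exponents that differ from those of the original $\bm{w}$-integrand only by integers. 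So it lives in the same local system, and Stokes on the $\bm{w}$-cycle applies.
\end{enumerate}
The cluster-anchored residue and the two integrations by parts in $\bm{y}$ are the missing ideas.

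Your proposed workarounds do not supply them:
\begin{itemize}
\item Passing to generic parameters does not help, because the twisted cycles are already needed at generic $\rho$, where the proposition is stated.
\item Relaxing $\rho_+\rho_-=-2$ would destroy the integrality of the cross factor that your whole argument rests on.
\item Comparing the expansions in $w_j/z$ and $z/w_j$ term by term cannot capture the analytic continuation that produces the commutator.
\item Virasoro-equivariance alone only shows that, for generic $\rho$, the two composites $F^{(\rho)}_{r,s}\to F^{(\rho)}_{-r,-s}$ are proportional. The fallback would still need an explicit comparison of Selberg-type normalizations, which you do not give.
\end{itemize}
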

 \begin{proof}
 In what follows, we use the abbreviations $\mathcal{Q}_{\pm}(x)=\mathcal{Q}^{(\rho)}_{\pm}(x)$ and the notation
 \begin{align*}
 \mathcal{Q}_{0}(x):=Y(\ket{\rho_++\rho_-},x).
 \end{align*}
 To prove the commutativity $\lbrack\mathcal{Q}^{(\rho);[r]}_{+},\mathcal{Q}^{(\rho);[s]}_{-}\rbrack=0$, it suffices to show the commutator
 \begin{align}
 \label{vac-fin-tderi}
 \begin{aligned}
 &{\rm Res}_{z=w_1,\dots,w_s}\mathcal{Q}^{(\rho);[r]}_{+}(z)\mathcal{Q}_{-}(w_1)\cdots \mathcal{Q}_{-}(w_s){\rm d}z{\rm d}\bm{w}\\
 &\qquad =\lbrack\mathcal{Q}^{(\rho);[r]}_{+},\mathcal{Q}_{-}(w_1)\cdots \mathcal{Q}_{-}(w_s)\rbrack{\rm d}\bm{w}
 \end{aligned}
 \end{align}
 can be written as a total derivative form with respect to $\bm{w}$, where the $|w_i|$ are taken to be sufficiently small and we denote ${\rm d}\bm{w}={\rm d}w_1\wedge\cdots \wedge {\rm d}w_{s}$.
 Then we consider the composition
 \begin{align}
 \label{eq:q-qs}
\frac{z^{r-1}}{2\pi i}\mathcal{Q}_{+}(z)\mathcal{Q}_{+}(zy_1)\cdots \mathcal{Q}_{+}(zy_{r-1})\mathcal{Q}_{-}(w_1)\cdots \mathcal{Q}_{-}(w_s).
 \end{align}
 Note the operator product expansion
 \begin{align}
 \label{eq:op-nyt}
 \mathcal{Q}_{+}(z)\mathcal{Q}_{-}(w)=\frac{:Y(\ket{\rho_+},z)Y(\ket{\rho_-},w):}{(z-w)^2}.
 \end{align}
 Using (\ref{eq:op-nyt}), and taking residues at $z = w_1, w_2, \dots, w_s$ in (\ref{eq:q-qs}), we obtain
 \begin{align}
 \begin{aligned}
 &\frac{\rho_+}{\rho}\sum_{i=1}^{s}w^{r-1}_i\mathcal{Q}_{-}(w_1)\cdots \mathcal{Q}_{-}(w_{i-1})\bigl(\partial_{w_i}\mathcal{Q}_{0}(w_i)\bigr)\mathcal{Q}_{+}(w_iy_1)\cdots \mathcal{Q}_{+}(w_iy_{r-1})\\
 &\qquad \qquad \qquad \qquad \cdot\mathcal{Q}_{-}(w_{i+1})\cdots \mathcal{Q}_{-}(w_s)\\
 &+(r-1)\sum_{i=1}^{s}w^{r-2}_i\mathcal{Q}_{-}(w_1)\cdots \mathcal{Q}_{-}(w_{i-1})\mathcal{Q}_{0}(w_i)\mathcal{Q}_{+}(w_iy_1)\cdots \mathcal{Q}_{+}(w_iy_{r-1})\\
 &\qquad \qquad \qquad \qquad \cdot\mathcal{Q}_{-}(w_{i+1})\cdots \mathcal{Q}_{-}(w_s)\\
 &+\sum_{i=1}^{s}w^{r-1}_i\mathcal{Q}_{-}(w_1)\cdots \mathcal{Q}_{-}(w_{i-1})\mathcal{Q}_{0}(w_i)\partial_{w_i}\bigl(\mathcal{Q}_{+}(w_iy_1)\cdots \mathcal{Q}_{+}(w_iy_{r-1})\bigr)\\
 &\qquad \qquad \qquad \qquad \cdot\mathcal{Q}_{-}(w_{i+1})\cdots \mathcal{Q}_{-}(w_s)\\
 \end{aligned}
 \label{vac-01}
 \end{align}
 (see \cite[Proposition 2.1]{NT}). Note that
\begin{align*}
\partial_{w_i}\mathcal{Q}_{+}(w_iy_j)=\frac{y_j}{w_i}\partial_{y_j}\mathcal{Q}_{+}(w_iy_j).
\end{align*}
Then, integrating the third sum in (\ref{vac-01}) with respect to $\bm{y}$ and applying the generalized Stokes' theorem (cf.~\cite{AK}), we obtain
\begin{align*}
\begin{aligned}
&\int_{[\Delta^{(\rho_+)}_{r-1}]}\sum_{i=1}^s\sum_{j=1}^{r-1} w^{r-2}_i(-1)^{j+1}y_j{\rm d}_{\bm{y}}\biggl\{\mathcal{Q}_{-}(w_1)\cdots \mathcal{Q}_{-}(w_{i-1})\mathcal{Q}_{0}(w_i)\\
&\qquad \qquad \qquad \cdot \mathcal{Q}_{+}(w_iy_1)\cdots \partial_{y_j}\mathcal{Q}_{+}(w_iy_j)\cdots\mathcal{Q}_{+}(w_iy_{r-1}) \mathcal{Q}_{-}(w_{i+1})\cdots \mathcal{Q}_{-}(w_s)\\
&\qquad \qquad \qquad \qquad \qquad \qquad \cdot  {\rm d}y_1\wedge\cdots\overset{\vee}{{\rm d}y_j}\cdots \wedge{\rm d}y_{r-1}\biggr\}\\
&=-(r-1)\sum_{i=1}^{s}\int_{[\Delta^{(\rho_+)}_{r-1}]}w^{r-2}_i\mathcal{Q}_{-}(w_1)\cdots \mathcal{Q}_{-}(w_{i-1})\mathcal{Q}_{0}(w_i)\\
&\qquad \qquad \qquad \qquad\qquad\cdot\mathcal{Q}_{+}(w_iy_1)\cdots \mathcal{Q}_{+}(w_iy_{r-1})\mathcal{Q}_{-}(w_{i+1})\cdots \mathcal{Q}_{-}(w_s){\rm d}\bm{y},
\end{aligned}
\end{align*}
where we denote ${\rm d}\bm{y}={\rm d}y_1\wedge\cdots \wedge {\rm d}y_{r-1}$.
This cancels with the integral of the second sum in (\ref{vac-01}).
Then the commutator (\ref{vac-fin-tderi}) can be written as
 \begin{align}
 \begin{aligned}
 &\frac{\rho_+}{\rho}\sum_{i=1}^{s}\int_{[\Delta^{(\rho_+)}_{r-1}]}w^{r-1}_i\mathcal{Q}_{-}(w_1)\cdots \mathcal{Q}_{-}(w_{i-1})\bigl(\partial_{w_i}\mathcal{Q}_{0}(w_i)\bigr)\\
 &\qquad \qquad \qquad \qquad \cdot \mathcal{Q}_{+}(w_iy_1)\cdots \mathcal{Q}_{+}(w_iy_{r-1})\mathcal{Q}_{-}(w_{i+1})\cdots \mathcal{Q}_{-}(w_s){\rm d}\bm{y}{\rm d}\bm{w}\\
 &=\frac{\rho_+}{\rho}\int_{[\Delta^{(\rho_+)}_{r-1}]}\sum_{i=1}^{s}(-1)^{i+1}\\
 &\qquad \qquad \cdot{\rm d}_{\bm{w}}\biggl\{w^{r-1}_i\mathcal{Q}_{-}(w_1)\cdots \mathcal{Q}_{-}(w_{i-1})\mathcal{Q}_{0}(w_i)\mathcal{Q}_{+}(w_iy_1)\cdots \mathcal{Q}_{+}(w_iy_{r-1})\\
 &\qquad\qquad \qquad\qquad \qquad \cdot\mathcal{Q}_{-}(w_{i+1})\cdots \mathcal{Q}_{-}(w_s) {\rm d}w_1\wedge\cdots\overset{\vee}{{\rm d}w_i}\cdots \wedge{\rm d}w_{s}\biggr\}{\rm d}\bm{y}\\
 &-(r-1)\sum_{i=1}^{s}\frac{\rho_+}{\rho}\int_{[\Delta^{(\rho_+)}_{r-1}]}w^{r-2}_i\mathcal{Q}_{-}(w_1)\cdots \mathcal{Q}_{-}(w_{i-1})\mathcal{Q}_{0}(w_i)\\
 &\qquad \qquad \qquad \qquad \cdot \mathcal{Q}_{+}(w_iy_1)\cdots \mathcal{Q}_{+}(w_iy_{r-1})\mathcal{Q}_{-}(w_{i+1})\cdots \mathcal{Q}_{-}(w_s){\rm d}\bm{y}{\rm d}\bm{w}\\
 &-\frac{\rho_+}{\rho}\sum_{i=1}^{s}\int_{[\Delta^{(\rho_+)}_{r-1}]}w^{r-1}_i\mathcal{Q}_{-}(w_1)\cdots \mathcal{Q}_{-}(w_{i-1})\mathcal{Q}_{0}(w_i)\\
 &\qquad \qquad \qquad \qquad \cdot \partial_{w_i}\bigl(\mathcal{Q}_{+}(w_iy_1)\cdots \mathcal{Q}_{+}(w_iy_{r-1})\bigr)\mathcal{Q}_{-}(w_{i+1})\cdots \mathcal{Q}_{-}(w_s){\rm d}\bm{y}{\rm d}\bm{w}.
 \end{aligned}
 \label{vac-001}
 \end{align}
  As in the discussion preceding (\ref{vac-001}), by applying the generalized Stokes' theorem to the integration with respect to $\bm{y}$, we see that the third form on the right-hand side of (\ref{vac-001}) cancels with the second one.
 Therefore, the commutator (\ref{vac-fin-tderi}) can be written as as a total derivative form.
 \end{proof}

\subsection{The structure of Fock modules}
In this subsection, we review the $\mathfrak{Vir}$-structure of Fock modules in the case where the parameter $\rho$ corresponds to the Virasoro minimal series.
The structure of these Fock modules was classified in \cite{FF}, and they are also referred to as Feigin-Fuchs modules.
Let $p_+$, $p_-$ be coprime integers satisfying $p_->p_+\geq 1$. 
Following \cite{TW}, we introduce the notation
\begin{align*}
\alpha_+=\sqrt{\frac{2p_-}{p_+}},\qquad\qquad
\alpha_-=-\sqrt{\frac{2p_+}{p_-}},\qquad\qquad
\alpha_0=\alpha_++\alpha_-.
\end{align*}
Note that $\alpha_+,\alpha_-$ are the solutions of (\ref{second-order}) as $\rho=\alpha_0$, and that $c_{\alpha_0}$ equals the minimal central charge 
\begin{align*}
c_{p_+,p_-}:=1-6\frac{(p_+-p_-)^2}{p_+p_-}.
\end{align*} 
Set
\begin{align}
\label{hrs-notpq}
\begin{aligned}
h_{r,s}:&=h^{(\alpha_0)}_{r,s}=\frac{r^2-1}{4}\frac{p_-}{p_+}-\frac{rs-1}{2}+\frac{s^2-1}{4}\frac{p_+}{p_-},\\
h_{r,s;n}:&=h_{r-np_+,s},
\end{aligned}
\end{align}
for $r,s,n\in\mathbb{Z}$. 
Note that 
$
h_{r-np_+,s}=h_{r,s+np_-}.
$
Also, we introduce the notation
\begin{align*}
\begin{aligned}
&\alpha_{r,s}:=\alpha^{(\alpha_0)}_{r,s},
&\alpha_{r,s;n}:=\alpha_{r,s}+\frac{\sqrt{2p_+p_-}}{2}n, \qquad r,s,n\in\mathbb{Z}
\end{aligned}
\end{align*}
(for the notation $\alpha^{(\rho)}_{r,s}$, see (\ref{alpha-rs-rho})), and use simple notation
\begin{align*}
&F_{r,s;n}=F^{(\alpha_0)}_{\alpha_{r,s;n}},
&F_{r,s}&=F^{(\alpha_0)}_{\alpha_{r,s}}
\end{align*}
In the notation of (\ref{F-rho}), $F_{r,s}$ is written as $F^{(\alpha_0)}_{{r,s}}$.
For each $r,s,n\in\mathbb{Z}$, let $L(h_{r,s;n})$ be the simple Virasoro module whose lowest weight is $h_{r,s;n}$ and the central charge 
$c_{p_+,p_-}$
(for the notation $h_{r,s;n}$, see (\ref{hrs-notpq})). 

Before describing the Virasoro structure of Fock modules $F_{r,s;n}$, let us introduce the notion of socle series. 
\begin{dfn}
Let $V$ be a vertex operator algebra or the Virasoro algebra, and let $M$ be a finite length $V$-module. 
Let ${\rm Soc}(M)$ denote the maximal semisimple submodule of $M$, called the {\rm socle}. 
Furthermore, we call the following sequence the {\rm socle} {\rm series} of $M$:
\begin{align*}
{\rm Soc}_1(M)\subsetneq {\rm Soc}_2(M)\subsetneq \cdots \subsetneq {\rm Soc}_n(M)=M
\end{align*}
with ${\rm Soc}_1(M)={\rm Soc}(M)$ and ${\rm Soc}_{i+1}(M)/{\rm Soc}_{i}(M)={\rm Soc}(M/{\rm Soc}_{i}(M))$.
\end{dfn}

The following proposition is due to \cite{FF}.

\begin{prop}[\cite{FF}]
\label{FockSocle}
As the Virasoro modules, there are four cases of socle series for the Fock modules $F_{r,s;n}\in \mathcal{F}_{\alpha_0}\mathchar`-{\rm mod}$:
\begin{enumerate}
\item For each $1\leq r\leq p_+-1,\ 1\leq s\leq p_--1,\ n\in\mathbb{Z}$, we have
\begin{align*}
{\rm Soc}_1(F_{r,s;n})\subsetneq {\rm Soc}_2(F_{r,s;n})\subsetneq {\rm Soc}_3(F_{r,s;n})=F_{r,s;n}
\end{align*} 
such that
\begin{align*}
&{\rm Soc}_1(F_{r,s;n})={\rm Soc}(F_{r,s;n})=\bigoplus_{k\geq 0}L(h_{r,p_--s;|n|+2k+1}),\\
&{\rm Soc}_2(F_{r,s;n})/{\rm Soc}_1(F_{r,s;n})={\rm Soc}(F_{r,s;n}/{\rm Soc}_1(F_{r,s;n}))\\
&\hspace{125pt}=\bigoplus_{k\geq a}L(h_{r,s;|n|+2k})\oplus \bigoplus_{k\geq 1-a}L(h_{p_+-r,p_--s;|n|+2k}),\\
&{\rm Soc}_3(F_{r,s;n})/{\rm Soc}_2(F_{r,s;n})={\rm Soc}(F_{r,s;n}/{\rm Soc}_2(F_{r,s;n}))=\bigoplus_{k\geq 0}L(h_{p_+-r,s;|n|+2k+1}),
\end{align*}
where $a=0$ if $n\geq 0$ and $a=1$ if $n<0$.

\item For each $1\leq s\leq p_--1,\ n\in\mathbb{Z}$, we have
\begin{align*}
{\rm Soc}_1(F_{p_+,s;n})\subsetneq {\rm Soc}_2(F_{p_+,s;n})=F_{p_+,s;n}
\end{align*}
such that
\begin{align*}
&{\rm Soc}_1(F_{p_+,s;n})={\rm Soc}(F_{p_+,s;n})=\bigoplus_{k\geq 0}L(h_{p_+,p_--s;|n|+2k+1}),\\
&{\rm Soc}_2(F_{p_+,s;n})/{\rm Soc}_1(F_{p_+,s;n})=\bigoplus_{k\geq a}L(h_{p_+,s;|n|+2k})
\end{align*}
where $a=0$ if $n\geq 1$ and $a=1$ if $n<1$.

\item For each $1\leq r \leq p_+-1,\ n\in\mathbb{Z}$, we have
\begin{align*}
{\rm Soc}_1(F_{r,p_-;n})\subsetneq {\rm Soc}_2(F_{r,p_-;n})=F_{r,p_-;n}
\end{align*}
such that
\begin{align*}
&{\rm Soc}_1(F_{r,p_-;n})={\rm Soc}(F_{r,p_-;n})=\bigoplus_{k\geq 0}L(h_{r,p_-;|n|+2k}),\\
&{\rm Soc}_2(F_{r,p_-;n})/{\rm Soc}_1(F_{r,p_-;n})=\bigoplus_{k\geq a}L(h_{p_+-r,p_-;|n|+2k-1}),
\end{align*}
where $a=1$ if $n\geq 0$ and $a=0$ if $n<0$.

\item For each $n \in\mathbb{Z}$, the Fock module $F_{p_+,p_-;n}$  is semi-simple as a Virasoro module:
\begin{align*}
F_{p_+,p_-;n}={\rm Soc}(F_{p_+,p_-;n})=\bigoplus_{k\geq 0}L(h_{p_+,p_-;|n|+2k}).
\end{align*}
\end{enumerate}
\end{prop}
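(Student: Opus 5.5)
This is the Feigin--Fuchs classification, so the plan is to rebuild the socle series from three ingredients: the character of $F_{r,s;n}$, the embedding structure of Virasoro Verma modules at central charge $c_{p_+,p_-}$, and a duality argument. Throughout, $\rho=\alpha_0$.

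\textbf{Step 1: the characters.} Since $F_{r,s;n}$ has character $q^{h_{r,s;n}}/\prod_{m\geq1}(1-q^m)$, its composition factors are exactly those of the Verma module $M(h_{r,s;n},c_{p_+,p_-})$. The Verma composition factors come from Feigin--Fuchs: the singular weights $h_{r,s;m}$, $h_{p_+-r,s;m}$ and their partners with $s\mapsto p_--s$, arranged in a braid when $1\leq r\leq p_+-1$ and $1\leq s\leq p_--1$, and in a chain when $r=p_+$ or $s=p_-$. So I would first record, using $h_{r-np_+,s}=h_{r,s+np_-}$, that the multiset of weights is the union of the four families listed in case (1) (resp.\ two families in cases (2) and (3), one family in case (4)). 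Each factor occurs with multiplicity one.

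\textbf{Step 2: the submodule structure.} Consider the map $\pi^{(\alpha_0)}_{\alpha}$ of \eqref{eq:pi-mf} together with its dual. By Proposition~\ref{dual-scop} and the identification $F^*_\alpha\cong F_{\alpha_0-\alpha}$, the contragredient of $F_{r,s;n}$ is $F_{-r+2,\,-s+2;\,-n}$-type, i.e.\ $F_{r,s;n}^*\cong F_{p_+-r,p_--s;\,\cdot}$ up to relabelling. Hence the socle series of $F$ is the reversed socle series of the dual. For $n\geq0$ I would show that the Virasoro submodule generated by $|\alpha_{r,s;n}\rangle$ is the image of the Verma module. Its kernel is determined by which singular vectors vanish, and this is governed by the Kac determinant evaluated on Fock vectors, equivalently by Jack polynomial singular vectors. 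For $n<0$ I would argue dually: $|\alpha\rangle$ is cogenerating, so $F$ is a quotient of the contragredient Verma module. This asymmetry is what produces the constants $a$ depending on the sign of $n$.

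\textbf{Step 3: the layers via screening operators.} To place factors in the socle layers, I would use the screening operators $\mathcal{Q}_\pm^{[r]}$, $\mathcal{Q}_\pm^{[s]}$. They are Virasoro homomorphisms by Proposition~\ref{com-TK}, and they commute with each other by Proposition~\ref{prop:qr-qs}. They map between $F_{r,s;n}$, $F_{p_+-r,s;n\pm1}$, etc., giving the Felder complexes. Exactness of these complexes gives the following:
\begin{itemize}
\item kernels of the screenings are submodules;
\item images are quotients;
\item the intersection of both kernels is the socle $\bigoplus_k L(h_{r,p_--s;|n|+2k+1})$;
\item the joint cokernel gives the top layer $\bigoplus_k L(h_{p_+-r,s;|n|+2k+1})$;
\item the middle layer is what remains.
\end{itemize}
Semisimplicity of each layer follows because factors in the same layer have weights with no Verma embedding between them. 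Cases (2)--(4) degenerate: one or both screenings become isomorphisms onto semisimple pieces, and in case (4) there are no nontrivial extensions since all weights lie in distinct blocks of the chain type with trivial Ext.

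\textbf{Main obstacle.} The main obstacle is Step 2, proving non-vanishing versus vanishing of specific singular vectors in the Fock module; everything else is bookkeeping. I would handle it through the explicit Selberg-type evaluation of $\langle\psi,\mathcal{Q}^{[r]}_\pm|\alpha\rangle\rangle$ using Proposition~\ref{TK-thm} and Theorem~\ref{sus-prop}, checking that the relevant Gamma-factor products are finite and nonzero. In practice, as the paper does, I would simply cite \cite{FF}.
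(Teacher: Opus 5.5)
The paper gives no argument of its own here: the proposition is quoted from \cite{FF}, so your closing fallback is exactly what the paper does. The reconstruction you outline, however, would not prove it.

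The key claim of Step~2 is false at $c_{p_+,p_-}$. You assert that for one sign of $n$ the module $F_{r,s;n}$ is generated by its lowest weight vector, and for the other sign cogenerated by it, so that it is a Verma or a contragredient Verma module. But by the very statement you are proving, in cases (1)--(3) both the top and the socle of $F_{r,s;n}$ are infinite direct sums. By contrast, $M(h)$ has simple top and $M(h)^*$ has simple socle, and in case (4) $F$ is semisimple of infinite length. For instance, in $F_{p_+,s;0}$ the lowest weight vector generates the socle summand $L(h_{p_+,p_--s;1})=L(h_{p_+,s})$, while for $n\geq 1$ it generates only a proper submodule. The sign of $n$ only decides in which layer, or in which of two families, the lowest-weight factor lands. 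That is what the constants $a$ record, via $h_{r,s}=h_{-r,-s}=h_{r+p_+,s+p_-}$. Two smaller slips: $F^*_{r,s;n}\cong F_{-r,-s;-n}=F_{p_+-r,p_--s;-n}$, not $F_{-r+2,-s+2;-n}$. Also, duality turns the socle series into the radical series, and the two agree only once rigidity has been established.

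Step~3 has further gaps. First, it takes exactness of the Felder complexes as input. But Propositions~\ref{Felder complex} and~\ref{Felder complex2} are obtained in \cite{Felder} from the Feigin--Fuchs description of the Fock modules, so this is circular unless you compute the screening cohomology independently. Second, even granting exactness, $\ker Q^{[r]}_+\cap\ker Q^{[s]}_-$ on $F_{r,s;0}$ is strictly larger than the socle: it contains the lowest weight vector, which carries the cohomology $L(h_{r,s})$ (e.g.\ $|0\rangle\in F_{1,1}$ when $p_+\geq 2$).

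Your case (4) argument also fails. Since $h_{p_+,p_-;m+2}=h_{p_+,p_-;m}+(m+1)p_+p_-$ is the weight of the singular vector of $M(h_{p_+,p_-;m})$, all factors of $F_{p_+,p_-;n}$ lie in a single block. In particular $L(h_{p_+,p_-;m})$ and $L(h_{p_+,p_-;m+2})$ admit a nonsplit extension, namely a quotient of $M(h_{p_+,p_-;m})$. So semisimplicity of $F_{p_+,p_-;n}$ is a genuine property of the Fock module, not a consequence of block separation. More generally, the layers of a socle series are semisimple by definition; the content lies in deciding which factor sits in which layer.

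That is precisely the input of \cite{FF} you have not supplied. It is the level-by-level determinant of $\pi^{(\alpha_0)}_{\alpha}$, or equivalently the Jantzen-type filtration of \cite[Sections~8.2--8.3]{IK} that the paper itself uses in proving Theorem~\ref{non-Gop}. This determines which singular, subsingular and cosingular vectors survive in $F_{r,s;n}$ and in its dual. Your Selberg-integral check on $Q^{[r]}_{\pm}|\alpha\rangle$ only produces some nonzero singular vectors in the target module. It cannot by itself place the remaining factors.
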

In the above proposition, the first family of the Fock modules is referred to as braided type, the second and third as chain type, and the fourth as semisimple type.
Figures~\ref{51fig} and \ref{51fig-2} illustrate the structures of the Fock modules of chain type and braided type.

\begin{figure}[h]
  \centering
\begin{tikzpicture}[scale=1.05]
\node[inner sep=0.7pt] (a1) at (-0.5, -0.75) {$\bullet$};
\node[inner sep=0.7pt] (a2) at (-0.5, -0.75-1) {$\circ$};
\node[inner sep=0.7pt] (a3) at (-0.5, -0.75-2) {$\bullet$};
\node[inner sep=0.7pt] (a4) at (-0.5, -0.75-3) {$\circ$};
\node[inner sep=0.7pt] (a5) at (-0.5, -0.75-3-1) {$\bullet$};
\node[inner sep=0.7pt] (c) at (-0.5, -0.75-3.1-1) {};
\node[inner sep=0.7pt] (c1) at (-0.5, -0.75-3-0.5-1) {};

\draw[arrows = {-Stealth[scale=0.9]}] (a2) to (a1);
\draw[arrows = {-Stealth[scale=0.9]}] (a2) to (a3);
\draw[arrows = {-Stealth[scale=0.9]}] (a4) to (a3);
\draw[arrows = {-Stealth[scale=0.9]}] (a4) to (a5);
\draw[ dotted,thick] (c) to (c1);

\node[inner sep=0.7pt,font=\scriptsize] (t1) at (-0.5+0.1, -0.75+0.5) {${F_{p_+,s;n}}$};
\node[inner sep=0.7pt,font=\scriptsize] (t2) at (-0.95+0.1-0.6-4, -0.75+0.5) {${F_{p_+,s;m}}$};


\node[inner sep=0.7pt,shift={(-5,0)}] (ak1) at (-0.5, -0.75) {$\circ$};
\node[inner sep=0.7pt,shift={(-5,0)}] (ak2) at (-0.5, -0.75-1) {$\bullet$};
\node[inner sep=0.7pt,shift={(-5,0)}] (ak3) at (-0.5, -0.75-2) {$\circ$};
\node[inner sep=0.7pt,shift={(-5,0)}] (ak4) at (-0.5, -0.75-3) {$\bullet$};
\node[inner sep=0.7pt,shift={(-5,0)}] (ak5) at (-0.5, -0.75-3-1) {$\circ$};
\node[inner sep=0.7pt,shift={(-5,0)}] (ck) at (-0.5, -0.75-3.1-1) {};
\node[inner sep=0.7pt,shift={(-5,0)}] (ck1) at (-0.5, -0.75-3-0.5-1) {};

\draw[arrows = {-Stealth[scale=0.9]}] (ak1) to (ak2);
\draw[arrows = {-Stealth[scale=0.9]}] (ak3) to (ak2);
\draw[arrows = {-Stealth[scale=0.9]}] (ak3) to (ak4);
\draw[arrows = {-Stealth[scale=0.9]}] (ak5) to (ak4);
\draw[ dotted,thick] (ck) to (ck1);

\end{tikzpicture}
\caption{\label{51fig}Schematic diagrams of the Fock modules $F_{p_+,s;m}$, $F_{p_+,s;n}$ for $1\leq s<p_-$, $m\geq 1$, and $n\leq 0$. The black circles represent the generating vectors for the socle of the Fock modules, and the white circles correspond the generating vectors of ${\rm Soc}_2/{\rm Soc}_1$. The top circles represent the lowest weight vectors, and the circles are arranged so that their $\mathcal{L}_0$-weights increase from top to bottom.
Each arrow indicates that the target vector is obtained, up to an appropriate quotient, via the action of $\mathfrak{Vir}$ (for further details, see\cite[Section 8.3]{IK}).}
\mbox{}\\
\end{figure}
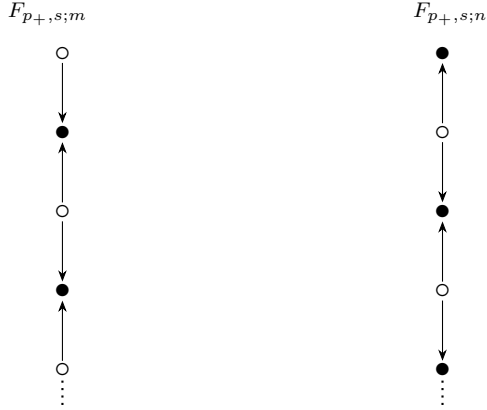

\begin{figure}[h]
  \centering
\begin{tikzpicture}[scale=1.05]
\node[inner sep=0.7pt,font=\scriptsize] (t) at (0, 0.25) {$\square$};
\node[inner sep=0.7pt] (a1) at (-0.5, -0.75) {$\bullet$};
\node[inner sep=0.7pt,font=\scriptsize] (b1) at (0.5, -0.75) {$\triangle$};
\node[inner sep=0.7pt,font=\scriptsize] (a2) at (-0.5, -0.75-1) {$\square$};
\node[inner sep=0.7pt,font=\scriptsize] (b2) at (0.5, -0.75-1) {$\square$};
\node[inner sep=0.7pt] (a3) at (-0.5, -0.75-2) {$\bullet$};
\node[inner sep=0.7pt,font=\scriptsize] (b3) at (0.5, -0.75-2) {$\triangle$};
\node[inner sep=0.7pt,font=\scriptsize] (a4) at (-0.5, -0.75-3) {$\square$};
\node[inner sep=0.7pt,font=\scriptsize] (b4) at (0.5, -0.75-3) {$\square$};
\node[inner sep=0.7pt] (c) at (-0.5, -0.75-3.2) {};
\node[inner sep=0.7pt] (d) at (0.5, -0.75-3.2) {};
\node[inner sep=0.7pt] (c1) at (-0.5, -0.75-3-0.5) {};
\node[inner sep=0.7pt] (d1) at (0.5, -0.75-3-0.5) {};

\draw[arrows = {-Stealth[scale=0.9]}] (t) to (a1);
\draw[arrows = {-Stealth[scale=0.9]}] (b1) to (t);
\draw[arrows = {-Stealth[scale=0.9]}] (b1) to (a2);
\draw[arrows = {-Stealth[scale=0.9]}] (b1) to (b2);
\draw[arrows = {-Stealth[scale=0.9]}] (a2) to (a1);
\draw[arrows = {-Stealth[scale=0.9]}] (b2) to (a1);

\draw[arrows = {-Stealth[scale=0.9]}] (b2) to (a3);
\draw[arrows = {-Stealth[scale=0.9]}] (a2) to (a3);
\draw[arrows = {-Stealth[scale=0.9]}] (b3) to (b2);
\draw[arrows = {-Stealth[scale=0.9]}] (b3) to (a2);

\draw[arrows = {-Stealth[scale=0.9]}] (b4) to (a3);
\draw[arrows = {-Stealth[scale=0.9]}] (a4) to (a3);
\draw[arrows = {-Stealth[scale=0.9]}] (b3) to (a4);
\draw[arrows = {-Stealth[scale=0.9]}] (b3) to (b4);
\draw[arrows = {-Stealth[scale=0.9]}] (b3) to (a4);
\draw[arrows = {-Stealth[scale=0.9]}] (b3) to (b4);
\draw[ dotted,thick] (c) to (c1);
\draw[ dotted,thick] (d) to (d1);

\end{tikzpicture}
\caption{\label{51fig-2}Schematic diagrams of the Fock module $F_{r,s;n}$ $(1\leq r<p_+,\ 1\leq s<p_-,\ n\in \mathbb{Z})$. The black circles represent the generating vectors for the socle of the Fock modules. The white squares and triangles correspond the generating vectors of ${\rm Soc}_2/{\rm Soc}_1$ and ${\rm Soc}_3/{\rm Soc}_2$, respectively. 
The top square represents the lowest weight vector, and the other shapes are arranged so that their $\mathcal{L}_0$-weights increase from top to bottom.
Each arrow indicates that the target vector is obtained, up to an appropriate quotient, via the action of $\mathfrak{Vir}$ (for further details, see\cite[Section 8.3]{IK}).}
\mbox{}\\
\end{figure}

\subsection{Screening currents and Felder complex}
Hereafter we use the following notation
\begin{align}
\label{def-pq-sc}
\begin{aligned}
Q_\pm(z)=\mathcal{Q}^{(\alpha_0)}_\pm(z),\qquad Q^{[r]}_+(z)=\mathcal{Q}^{(\alpha_0);[r]}_+(z),\qquad 
Q^{[s]}_-(z)=\mathcal{Q}^{(\alpha_0);[s]}_-(z).
\end{aligned}
\end{align}
For $1\leq r\leq p_+-1$, $1\leq s\leq p_--1$, we use notation
\begin{align*}
&r^{\vee}(p_+):=p_+-r,
&s^{\vee}(p_-):=p_--s
\end{align*}
and use the abbreviations 
$
r^\vee=r^{\vee}(p_+)
$,
$
s^\vee=s^{\vee}(p_-).
$

For $1\leq r \leq p_+,1\leq s\leq p_-$, and $n\in\mathbb{Z}$, we define the following Virasoro modules in accordance with \cite{TW}:
\begin{enumerate}
\item For $1\leq r<p_+,\ 1\leq s\leq p_-,\ n\in\mathbb{Z}$ 
\begin{align*}
K_{r,s;n;+}&={\rm ker}(Q^{[r]}_+|_{F_{r,s;n}}):F_{r,s;n}\rightarrow F_{r^\vee,s;n+1},\\
X_{r^\vee,s;n+1;+}&={\rm im}(Q^{[r]}_+|_{F_{r,s;n}}):F_{r,s;n}\rightarrow F_{r^\vee,s;n+1}.
\end{align*}
\item For $1\leq r\leq p_+,\ 1\leq s<p_-,\ n\in\mathbb{Z}$ 
\begin{align*}
K_{r,s;n;-}&={\rm ker}(Q^{[s]}_-|_{F_{r,s;n}}):F_{r,s;n}\rightarrow F_{r,s^\vee;n-1},\\
X_{r,s^\vee;n-1;-}&={\rm im}(Q^{[s]}_-|_{F_{r,s;n}}):F_{r,s;n}\rightarrow F_{r,s^\vee;n-1}.
\end{align*}
\end{enumerate}

The following propositions are due to \cite{Felder}.
\begin{prop}[\cite{Felder}]
\label{Felder complex}
The socle series of $K_{r,s;n;\pm}$ and $X_{r,s;n;\pm}$ are given by :
\begin{enumerate}
\item For $1\leq r\leq p_+-1,\ 1\leq s\leq p_--1$, we have
\begin{align*}
&S_1(K_{r,s;n;\pm})={\rm Soc}(K_{r,s;n;\pm})\subsetneq K_{r,s;n;\pm},\\
&S_1(X_{r,s;n;\pm})={\rm Soc}(X_{r,s;n;\pm})\subsetneq X_{r,s;n;\pm}
\end{align*} 
such that
\begin{align*}
n&\geq 0&n&\leq -1\\
S_1(K_{r,s;n;+})&=\bigoplus_{k\geq 1}L(h_{r,s^\vee;n+2k-1}),&S_1(K_{r,s;n;+})&=\bigoplus_{k\geq 1}L(h_{r,s^\vee;-n+2k-1}),\\
K_{r,s;n;+}/S_1&=\bigoplus_{k\geq 1}L(h_{r,s;n+2(k-1)}),&K_{r,s;n;+}/S_1&=\bigoplus_{k\geq 1}L(h_{r,s;-n+2k}),\\
S_1(X_{r,s;n+1;+})&=\bigoplus_{k\geq 1}L(h_{r,s^\vee;n+2k}),&S_1(X_{r,s;n+1;+})&=\bigoplus_{k\geq 1}L(h_{r,s^\vee;-n+2(k-1)}),\\
X_{r,s;n+1;+}/S_1&=\bigoplus_{k\geq 1}L(h_{r,s;n+2k-1}),&X_{r,s;n+1;+}/S_1&=\bigoplus_{k\geq 1}L(h_{r,s;-n+2k-1}).
\end{align*}
\begin{align*}
n&\geq 1&n&\leq 0\\
S_1(K_{r,s;n;-})&=\bigoplus_{k\geq 1}L(h_{r,s^\vee;n+2k-1}),&S_1(K_{r,s;n;-})&=\bigoplus_{k\geq 1}L(h_{r,s^\vee;-n+2k-1}),\\
K_{r,s;n;-}/S_1&=\bigoplus_{k\geq 1}L(h_{r,s;n+2(k-1)}),&K_{r,s;n;-}/S_1&=\bigoplus_{k\geq 1}L(h_{r,s;-n+2k}),\\
S_1(X_{r,s;n+1;-})&=\bigoplus_{k\geq 1}L(h_{r,s^\vee;n+2(k-1)}),&S_1(X_{r,s;n+1;-})&=\bigoplus_{k\geq 1}L(h_{r,s^\vee;-n+2k}),\\
X_{r,s;n+1;-}/S_1&=\bigoplus_{k\geq 1}L(h_{r^\vee,s^\vee;n+2k-1}),&X_{r,s;n+1;-}/S_1&=\bigoplus_{k\geq 1}L(h_{r^\vee,s^\vee;-n+2k-1}).
\end{align*}
\item For $1\leq r\leq p_+-1,\ s=p_-,\ n\in\mathbb{Z}$, we have
\begin{align*}
X_{r,p_-;n}={\rm Soc}(F_{r,p_-;n}).
\end{align*}
\item For $r=p_+,\ 1\leq s\leq p_--1,\ n\in\mathbb{Z}$, we have
\begin{align*}
X_{p_+,s;n}={\rm Soc}(F_{p_+,s;n}).\\
\end{align*}
\end{enumerate}
\end{prop}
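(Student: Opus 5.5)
The plan is to deduce the socle series of $K_{r,s;n;\pm}$ and $X_{r,s;n;\pm}$ from Proposition~\ref{FockSocle} together with two inputs on the screening operators: that they intertwine the Virasoro action (Proposition~\ref{com-TK}), and that they fit into Felder's two-sided complexes
\begin{align*}
\cdots\rightarrow F_{r^\vee,s;n-1}\xrightarrow{Q^{[r^\vee]}_+}F_{r,s;n}\xrightarrow{Q^{[r]}_+}F_{r^\vee,s;n+1}\xrightarrow{Q^{[r^\vee]}_+}F_{r,s;n+2}\rightarrow\cdots,
\end{align*}
together with the analogous complexes for $Q^{[s]}_-$, $Q^{[s^\vee]}_-$. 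These complexes are exact except at one degree, where the cohomology is $L(h_{r,s})$. Since $K$ is a submodule and $X$ is a quotient of a Fock module, both have finite length in each weight space. Each of them is therefore a union of composition factors read off from the socle diagrams of Proposition~\ref{FockSocle}. So the task reduces to bookkeeping which composition factors of $F_{r,s;n}$ are killed by $Q^{[r]}_+$.

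The first step is to show that $Q^{[r]}_+|_{F_{r,s;n}}$ is nonzero on the appropriate generating vectors. The kernel and image of a nonzero Virasoro homomorphism between Fock modules are then determined by the Loewy structure. Concretely, a homomorphism $F_{r,s;n}\to F_{r^\vee,s;n+1}$ must send ${\rm Soc}_3/{\rm Soc}_2$ (the triangles in Figure~\ref{51fig-2}) and part of ${\rm Soc}_2/{\rm Soc}_1$ into the target. The target must receive them in a position with compatible submodule structure, and matching the highest weights $h_{\bullet;\bullet}$ pins this down. Nonvanishing I would obtain from the Selberg evaluation. Pair $Q^{[r]}_+$ applied to a singular vector against a dual vector: by (\ref{Phi-cor-ex}), (\ref{patic-zU}) and Proposition~\ref{TK-thm} this matrix element reduces to $S_{r-1}[f]$ with a Jack-polynomial insertion. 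For the lowest weight vector this is the explicit Gamma-product, and I would check that it is finite and nonzero at $\alpha_+^2=2p_-/p_+$ with $r<p_+$.

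Next I would establish exactness of the complex, i.e.\ ${\rm ker}\,Q^{[r]}_+={\rm im}\,Q^{[r^\vee]}_+$ away from the special degree. The composition $Q^{[r^\vee]}_+Q^{[r]}_+$ is a multiple screening with $p_+$ insertions, which vanishes because the twisted cycle degenerates at $p_+\alpha_+^2/2\in\mathbb{Z}$. Exactness then follows by comparing characters: the character of $F_{r,s;n}$ must equal the sum of the characters of the kernel and the image, and the composition factors listed in Proposition~\ref{FockSocle} force the split. Once ${\rm im}$ and ${\rm ker}$ are identified as sets of composition factors, the socle of $K$ is ${\rm Soc}(F)\cap K$, and the socle of $X$ is the image of ${\rm Soc}_2(F)$ modulo the kernel. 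Reading these off from the three layers gives the tabulated formulas, with the cases $n\ge0$ and $n\le -1$ differing only through the parameter $a$ in Proposition~\ref{FockSocle}. The $Q_-$ cases are symmetric. The chain-type cases (2) and (3) are simpler: there are only two layers, so the image of a nonzero map is the socle.

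The main obstacle I expect is the nonvanishing of $Q^{[r]}_+$ on the correct generating vectors, not only on the lowest weight one. For $n$ large the relevant vectors are singular vectors deep in the module, and the matrix element becomes a Selberg integral with a Jack insertion whose value is not obviously nonzero. I would first try Kadell's formula for Jack-weighted Selberg integrals, which again gives a Gamma-product, and check its zeros directly. If that fails, I would fall back on Felder's original argument: a BRST-type character count showing that the cohomology of the complex is $L(h_{r,s})$ forces nontriviality of each map in the complex.
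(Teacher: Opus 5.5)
The paper gives no proof of this statement; it imports it from \cite{Felder}. Judged on its own, your outline has the right ingredients: Proposition~\ref{FockSocle}, Virasoro-equivariance of the screenings, and $Q^{[r^\vee]}_+Q^{[r]}_+=0$. Weight and Loewy bookkeeping can indeed show that ${\rm Soc}(F_{r,s;n})$ and the $L(h_{r,s;\bullet})$-part of ${\rm Soc}_2/{\rm Soc}_1$ lie in ${\rm ker}\,Q^{[r]}_+$. But the whole content of the proposition is the converse. You must show that $Q^{[r]}_+$ is nonzero on a lift of the generator of every summand $L(h_{r^\vee,s^\vee;|n|+2k})$ of ${\rm Soc}_2(F_{r,s;n})/{\rm Soc}_1$, for every $k$. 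That forces injectivity on the rest, and hence pins down the kernel and image. These are infinitely many independent conditions. Your claim that the kernel and image of a nonzero homomorphism are ``determined by the Loewy structure'' is false: the zero map satisfies every Loewy constraint too, and one nonvanishing statement only rules out that single extreme. The same objection applies to cases (2)--(3). The image of a nonzero map into the semisimple module ${\rm Soc}(F_{r,p_-;n})$ is some nonzero direct summand, not necessarily all of it, so having only two layers does not make those cases easier.

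None of your nonvanishing mechanisms reaches these vectors.

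The bare Selberg value $S_{r-1}$ computes the lowest-to-lowest matrix element of the zero mode. That element is proportional to ${\rm Res}_z\, z^{rk-1}$ with $k=s+np_-\neq 0$, so it is always zero. More directly, for $n\ge 0$ one has $h_{r,s;n}-h_{r^\vee,s;n+1}=-r(np_-+s)<0$, so $Q^{[r]}_+$ kills the lowest weight vector of $F_{r,s;n}$ for weight reasons. For $n\le -1$ the lowest weight vector accounts only for $k=0$, and only via a symmetric insertion of degree $r(|n|p_--s)$.

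The Jack/Kadell route also fails as stated. Apart from the lowest weight vector, the singular vectors of $F_{r,s;n}$ (the ones with a Jack-polynomial description) generate ${\rm Soc}(F_{r,s;n})\subset{\rm ker}\,Q^{[r]}_+$. So those matrix elements vanish, while the vectors you actually need are subsingular. To use Kadell you would have to dualize via Proposition~\ref{dual-scop}, turning cosingular vectors of $F$ into singular vectors of $F^*$. You would then still have to argue down from the top layer to the $(r^\vee,s^\vee)$-layer.

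Finally, your fallback is circular. The Euler characteristic of the complex is the alternating sum of Fock characters and does not depend on the maps at all, so it cannot force any map to be nonzero. The statement ``the cohomology is $L(h_{r,s})$'' is Proposition~\ref{Felder complex2}, which is deduced from precisely the kernel/image description you are trying to prove. Taking exactness as an input in your first paragraph has the same circularity.
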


\begin{prop}[\cite{Felder}]
\label{Felder complex2}
\mbox{}
\begin{enumerate}
\item 
Let $p_+\geq 2$.
For $1\leq r<p_+,\ 1\leq s<p_-$ and $n\in\mathbb{Z}$ the screening operators $Q^{[r]}_+$ and $Q^{[r^{\vee}]}_+$ define the complex
\begin{align*}
\cdots\xrightarrow{Q^{[r]}_+}F_{r^{\vee},s;n-1}\xrightarrow{Q^{[r^{\vee}]}_+}F_{r,s;n}\xrightarrow{Q^{[r]}_+}F_{r^{\vee},s;n+1}\xrightarrow{Q^{[r^{\vee}]}_+}\cdots.
\end{align*}
This complex is exact everywhere except in $F_{r,s}=F_{r,s;0}$ where the cohomology is given by
\begin{align*}
{\rm ker}Q^{[r]}_+/{\rm im}Q^{[r^\vee]}_+\simeq L(h_{r,s;0}).
\end{align*}
\item
Let $p_+\geq 2$.
For $1\leq r<p_+,\ 1\leq s<p_-$ and $n\in\mathbb{Z}$ the screening operators $Q^{[s]}_-$ and $Q^{[s^{\vee}]}_-$ define the complex
\begin{align}
\label{fel-comp:p>1}
\cdots\xrightarrow{Q^{[s]}_-}F_{r,s^{\vee};n+1}\xrightarrow{Q^{[s^{\vee}]}_-}F_{r,s;n}\xrightarrow{Q^{[s]}_-}F_{r,s^{\vee};n-1}\xrightarrow{Q^{[s^{\vee}]}_-}\cdots.
\end{align}
This complex is exact everywhere except in $F_{r,s}=F_{r,s;0}$ where the cohomology is given by
\begin{align*}
{\rm ker}Q^{[s]}_-/{\rm im}Q^{[s^\vee]}_-\simeq L(h_{r,s;0}).
\end{align*}
\item 
Let $p_+\geq 2$.
For $1\leq r<p_+$ and $n\in\mathbb{Z}$ the screening operators $Q^{[r]}_+$ and $Q^{[r^{\vee}]}_+$ define the complex
\begin{align*}
\cdots\xrightarrow{Q^{[r]}_+}F_{r^{\vee},p_-;n-1}\xrightarrow{Q^{[r^{\vee}]}_+}F_{r,p_-;n}\xrightarrow{Q^{[r]}_+}F_{r^{\vee},p_-;n+1}\xrightarrow{Q^{[r^{\vee}]}_+}\cdots
\end{align*}
and this complex is exact.
\item
For $1\leq s<p_-$ and $n\in\mathbb{Z}$ the screening operators $Q^{[s]}_-$ and $Q^{[s^{\vee}]}_-$ define the complex
\begin{align}
\label{eq:fel-comp}
\cdots\xrightarrow{Q^{[s]}_-}F_{p_+,s^{\vee};n+1}\xrightarrow{Q^{[s^{\vee}]}_-}F_{p_+,s;n}\xrightarrow{Q^{[s]}_-}F_{p_+,s^{\vee};n-1}\xrightarrow{Q^{[s^{\vee}]}_-}\cdots
\end{align}
and this complex is exact.

\end{enumerate}
\end{prop}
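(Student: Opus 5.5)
The plan has two parts. First I show that consecutive screening operators compose to zero, so that we really have complexes. Then I read off kernels and images from the socle data in Propositions~\ref{FockSocle} and~\ref{Felder complex} to get exactness and the cohomology. I treat item~1 in detail. Items~2--4 follow by the same argument with $\pm$ interchanged, or by specializing to the chain-type modules, where Proposition~\ref{Felder complex} gives $X={\rm Soc}$ directly.

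\emph{Step 1: the composite vanishes.} I need $Q^{[r^\vee]}_+\circ Q^{[r]}_+=0$ on $F_{r,s;n}$. Using the integral representation (\ref{Tsuchiya-Kanie0}) and the OPE (\ref{eq:opeVV}), I write the composite as a residue in $z$ of an integral of $p_+=r+r^\vee$ screening currents $Q_+(z_i)$. The contour is a product of the cycles $[\Delta^{(\alpha_+)}_{r-1}]$ and $[\Delta^{(\alpha_+)}_{r^\vee-1}]$, arranged on two radii.

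The integrand is symmetric up to the monodromy factor $e^{\pi i\alpha_+^2/2}$ under exchange of neighbouring points. I therefore symmetrize, following the argument of \cite{TK,Felder}. Deforming the product cycle into a single regularized cycle of dimension $p_+-1$ multiplies it by a quantum binomial coefficient
\begin{align*}
\begin{bmatrix} p_+\\ r\end{bmatrix}_{q},\qquad q=e^{\pi i\alpha_+^2}=e^{2\pi i p_-/p_+}.
\end{align*}
Since $(p_+,p_-)=1$, $q$ is a primitive $p_+$-th root of unity. The numerator then contains $[p_+]_q=0$, while the denominator $[r]_q[r^\vee]_q$ is nonzero because $1\le r<p_+$. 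So the composite vanishes.

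This step is the main obstacle. The regularization must be justified at the degenerate parameter value, where the cycles of \cite{TK} need not be well defined. My first attempt would be to carry out the symmetrization for generic $\rho$ near $\alpha_0$ and then pass to the limit. I would use Theorem~\ref{sus-prop} and Remark~\ref{del-sus2} to guarantee that the integrals involved are holomorphic at $\rho=\alpha_0$, so that the identity persists in the limit.

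\emph{Step 2: exactness and cohomology.} By Step~1, $X_{r,s;n;+}\subseteq K_{r,s;n;+}$. Both are graded Virasoro submodules of $F_{r,s;n}$ with finite-dimensional weight spaces, so equality holds if and only if their characters agree.

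I compare the composition factors listed in Proposition~\ref{Felder complex}. For $n\neq0$ the multisets coincide: the lists for $K_{r,s;n;+}$ and for $X_{r,s;n;+}$, the latter being the image coming from $F_{r^\vee,s;n-1}$, match after the index shifts $n\mapsto n\pm1$ and the identification $h_{r,s;m}=h_{r,s;-m}$ appropriate to the sign of $n$. For $n=0$ the kernel has exactly one extra factor, namely $L(h_{r,s;0})$ with $k=1$ in the top row. Hence the cohomology is $L(h_{r,s;0})$ at $F_{r,s;0}$ and zero elsewhere.

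In the chain cases, items~3 and~4, the image equals ${\rm Soc}(F)$ by Proposition~\ref{Felder complex}. The kernel is a proper submodule containing the socle. By Proposition~\ref{FockSocle} the socle quotient generates the whole module, since the top vectors map onto the socle. It follows that ${\rm ker}={\rm Soc}={\rm im}$, and the complex is exact everywhere.
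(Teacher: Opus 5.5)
The paper gives no proof here; the proposition is quoted from \cite{Felder}. So I assess your argument directly. It has a genuine gap in Step~1 and in the chain-type part of Step~2. Both can be repaired using Propositions~\ref{FockSocle} and~\ref{Felder complex}.

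\textbf{Step 1.} Your limiting argument does not go through, for two reasons.

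\emph{The shifts.} For $\rho\neq\alpha_0$ the composite is only defined after inserting shift operators ${\rm e}^{\bullet}$ as in (\ref{Q+Q}). The intermediate module $F^{(\rho)}_{-r,k}$ is not of the form $F^{(\rho)}_{r^\vee,k'}$ with $k'\in\mathbb{Z}$. Those shifts give the outer currents an extra factor $x^{\rho_+\delta}$ that the inner ones lack, which breaks the symmetry your shuffle identity needs.

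\emph{Holomorphy of the merged integral (the more serious problem).} Consider the merged $p_+$-current integral on $F_{r,k}$ in the parametrization (\ref{patic-zU}). Its parameters are $(\alpha,\beta,\gamma)=((1-r)\gamma+k-1,\,2\gamma,\,\gamma)$ with $\gamma=\rho_+^2/2$.
\begin{itemize}
\item Since $\rho_+\rho_-=-2$, you get $\alpha+(r-1)\gamma=k-1\in\mathbb{Z}$ for \emph{every} $\rho$.
\item At $\rho=\alpha_0$ you also get $\beta+(p_+-2)\gamma=p_-\in\mathbb{Z}$.
\end{itemize}
So these parameters lie on $\widehat{\mathcal{A}}_{p_+-1}$, and even on $\mathcal{A}_{p_+-1}$, identically. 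There is no merged TK cycle to deform into, and Theorem~\ref{sus-prop} gives no holomorphy at $\epsilon=0$.

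This is not a formality. The $q$-binomial has only a simple zero in $\epsilon$, so you would need the merged integral to be $o(\epsilon^{-1})$. That order of vanishing is exactly the delicate point: after division by $\epsilon$, the shifted composite tends to the nonzero operator $G^{[r^\vee,r]}_{+,k}$ of Theorem~\ref{non-Gop}. A contour proof has to use Felder's compact cycles, with all points on loops around $0$, whose integrals are holomorphic in the parameters.

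Within the paper, however, Step~1 is unnecessary.
\begin{itemize}
\item By Proposition~\ref{FockSocle}, the composition factors of each Fock module have pairwise distinct lowest weights.
\item In a multiplicity-free module, $M\subseteq M'$ as soon as every composition factor of $M$ occurs in $M'$. Otherwise $(M+M')/M'\cong M/(M\cap M')$ would contribute some factor twice.
\item Applied to the lists in Proposition~\ref{Felder complex}, this gives ${\rm im}\subseteq{\rm ker}$ directly.
\end{itemize}

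\textbf{Step 2, chain cases.} ``The top vectors generate $F$'' does not imply ${\rm ker}\subseteq{\rm Soc}$. Since $F/{\rm Soc}$ is semisimple with infinitely many summands, ${\rm Soc}+U(\mathfrak{Vir})t$ is a proper submodule for any single top vector $t$. So the kernel could a priori contain some top summands.

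Use rank--nullity on weight spaces instead. For item~4 you have $F_{p_+,s;n}/{\rm ker}\,Q^{[s]}_-\cong X_{p_+,s^\vee;n-1}={\rm Soc}(F_{p_+,s^\vee;n-1})$. By Proposition~\ref{FockSocle}, its character equals that of $F_{p_+,s;n}/{\rm Soc}$: both are $\bigoplus_{k\geq a}L(h_{p_+,s;|n|+2k})$, for either sign of $n$. Hence ${\rm ker}$ and ${\rm Soc}={\rm im}$ have equal characters, and they coincide by the multiplicity-one argument above. Item~3 is identical.

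Your braided-case bookkeeping is correct. It does not need the identification $h_{r,s;m}=h_{r,s;-m}$, which is false for $m\neq0$. Note, though, that this bookkeeping only reads off Proposition~\ref{Felder complex}. The substantive input, namely which (co)singular vectors the screening operators do not annihilate, is imported from there rather than proved.
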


\begin{Remark}
\label{rem-subsing}
Figures~\ref{51fig-fel} and \ref{51fig-2-fel} illustrate the structures of the complexes (\ref{eq:fel-comp}) and (\ref{fel-comp:p>1}).
The vectors corresponding to the black and white squares in Figure~\ref{51fig-2-fel} are called {\rm subsingular} {\rm vectors}.
Note that, by Proposition~\ref{Felder complex2}, the subsingular vectors corresponding to the white squares are contained in the image of $Q^{[r^\vee]}_+$.
\end{Remark}

\begin{figure}[h]
  \centering
\begin{tikzpicture}[scale=1.05]
\node[inner sep=0.7pt] (a1) at (-0.5, -0.75) {$\bullet$};
\node[inner sep=0.7pt] (a2) at (-0.5, -0.75-1) {$\circ$};
\node[inner sep=0.7pt] (a3) at (-0.5, -0.75-2) {$\bullet$};
\node[inner sep=0.7pt] (a4) at (-0.5, -0.75-3) {$\circ$};
\node[inner sep=0.7pt] (a5) at (-0.5, -0.75-3-1) {$\bullet$};
\node[inner sep=0.7pt] (c) at (-0.5, -0.75-3.1-1) {};
\node[inner sep=0.7pt] (c1) at (-0.5, -0.75-3-0.5-1) {};

\draw[arrows = {-Stealth[scale=0.9]}] (a2) to (a1);
\draw[arrows = {-Stealth[scale=0.9]}] (a2) to (a3);
\draw[arrows = {-Stealth[scale=0.9]}] (a4) to (a3);
\draw[arrows = {-Stealth[scale=0.9]}] (a4) to (a5);
\draw[ dotted,thick] (c) to (c1);

\node[inner sep=0.7pt,shift={(1,-1.055)}] (aa1) at (-0.5, -0.75) {$\bullet$};
\node[inner sep=0.7pt,shift={(1,-1.055)}] (aa2) at (-0.5, -0.75-1) {$\circ$};
\node[inner sep=0.7pt,shift={(1,-1.055)}] (aa3) at (-0.5, -0.75-2) {$\bullet$};
\node[inner sep=0.7pt,shift={(1,-1.055)}] (aa4) at (-0.5, -0.75-3) {$\circ$};
\node[inner sep=0.7pt,shift={(1,-1.055)}] (cc) at (-0.5, -0.75-3.1) {};
\node[inner sep=0.7pt,shift={(1,-1.055)}] (cc1) at (-0.5, -0.75-3-0.5) {};

\draw[arrows = {-Stealth[scale=0.9]}] (aa2) to (aa1);
\draw[arrows = {-Stealth[scale=0.9]}] (aa2) to (aa3);
\draw[arrows = {-Stealth[scale=0.9]}] (aa4) to (aa3);
\draw[ dotted,thick] (cc) to (cc1);

\draw[->,>=stealth] (-1.25,-0.75)--(-0.75,-0.75);
\draw[->,>=stealth,dashed] (-1.25,-0.75-1)--(-0.75,-0.75-1);
\draw[->,>=stealth] (-1.25,-0.75-2)--(-0.75,-0.75-2);
\draw[->,>=stealth,dashed] (-1.25,-0.75-3)--(-0.75,-0.75-3);
\draw[->,>=stealth] (-1.25,-0.75-4)--(-0.75,-0.75-4);

\draw[->,>=stealth,shift={(-0.95,-1)}] (-1.25,-0.75)--(-0.75,-0.75);
\draw[->,>=stealth,dashed,shift={(-0.95,-1)}] (-1.25,-0.75-1)--(-0.75,-0.75-1);
\draw[->,>=stealth,shift={(-0.95,-1)}] (-1.25,-0.75-2)--(-0.75,-0.75-2);
\draw[->,>=stealth,dashed,shift={(-0.95,-1)}] (-1.25,-0.75-3)--(-0.75,-0.75-3);

\draw[->,>=stealth,shift={(-1.875,-2)}] (-1.25,-0.75)--(-0.75,-0.75);
\draw[->,>=stealth,dashed,shift={(-1.875,-2)}] (-1.25,-0.75-1)--(-0.75,-0.75-1);
\draw[->,>=stealth,shift={(-1.875,-2)}] (-1.25,-0.75-2)--(-0.75,-0.75-2);

\draw[->,>=stealth,shift={(-1.875-0.95,-2-1)}] (-1.25,-0.75)--(-0.75,-0.75);
\draw[->,>=stealth,dashed,shift={(-1.875-0.95,-2-1)}] (-1.25,-0.75-1)--(-0.75,-0.75-1);

\draw[->,>=stealth,shift={(-1.875-0.95-0.95,-2-1-1)}] (-1.25,-0.75)--(-0.75,-0.75);

\draw[->,>=stealth,shift={(-0.95+2-0.05,-1)}] (-1.25,-0.75)--(-0.75,-0.75);
\draw[->,>=stealth,dashed,shift={(-0.95+2-0.05,-1)}] (-1.25,-0.75-1)--(-0.75,-0.75-1);
\draw[->,>=stealth,shift={(-0.95+2-0.05,-1)}] (-1.25,-0.75-2)--(-0.75,-0.75-2);
\draw[->,>=stealth,dashed,shift={(-0.95+2-0.05,-1)}] (-1.25,-0.75-3)--(-0.75,-0.75-3);

\draw[->,>=stealth,shift={(-1.875+4-0.2,-2)}] (-1.25,-0.75)--(-0.75,-0.75);
\draw[->,>=stealth,dashed,shift={(-1.875+4-0.2,-2)}] (-1.25,-0.75-1)--(-0.75,-0.75-1);
\draw[->,>=stealth,shift={(-1.875+4-0.2,-2)}] (-1.25,-0.75-2)--(-0.75,-0.75-2);

\draw[->,>=stealth,shift={(-1.875-0.95+6-0.35,-2-1)}] (-1.25,-0.75)--(-0.75,-0.75);
\draw[->,>=stealth,dashed,shift={(-1.875-0.95+6-0.35,-2-1)}] (-1.25,-0.75-1)--(-0.75,-0.75-1);

\draw[->,>=stealth,shift={(-1.875-0.95-0.95+8-0.35,-2-1-1)}] (-1.25,-0.75)--(-0.75,-0.75);

\node[inner sep=0.7pt,font=\scriptsize] (t1) at (-0.5+0.1, -0.75+0.5) {${F_{p_+,s}}$};
\node[inner sep=0.7pt,font=\scriptsize] (t2) at (-0.95+0.1-0.6, -0.75+0.5) {${F_{p_+,s^\vee;1}}$};
\node[inner sep=0.7pt,font=\scriptsize] (t1) at (-0.5+0.1-2, -0.75+0.5-1) {${F_{p_+,s;2}}$};
\node[inner sep=0.7pt,font=\scriptsize] (t2) at (-0.95+0.1-0.6+2, -0.75+0.5-1) {${F_{p_+,s^\vee;-1}}$};
\node[inner sep=0.7pt,font=\scriptsize] (t1) at (-0.5+0.1-2-2, -0.75+0.5-1-2) {${F_{p_+,s;4}}$};
\node[inner sep=0.7pt,font=\scriptsize] (t2) at (-0.95+0.1-0.6+2+2, -0.75+0.5-1-2) {${F_{p_+,s^\vee;-3}}$};
\node[inner sep=0.7pt,font=\scriptsize] (t1) at (-0.5+0.1-2-1, -0.75+0.5-1-1) {${F_{p_+,s^\vee;3}}$};
\node[inner sep=0.7pt,font=\scriptsize] (t2) at (-0.95+0.1-0.6+2+1, -0.75+0.5-1-1) {${F_{p_+,s;-2}}$};
\node[inner sep=0.7pt,font=\scriptsize] (t1) at (-0.5+0.1-2-1-2, -0.75+0.5-1-1-2) {${F_{p_+,s^\vee;5}}$};
\node[inner sep=0.7pt,font=\scriptsize] (t2) at (-0.95+0.1-0.6+2+1+2, -0.75+0.5-1-1-2) {${F_{p_+,s;-4}}$};

\node[inner sep=0.7pt,shift={(1+1,-1.085-1)}] (aar1) at (-0.5, -0.75) {$\bullet$};
\node[inner sep=0.7pt,shift={(1+1,-1.085-1)}] (aar2) at (-0.5, -0.75-1) {$\circ$};
\node[inner sep=0.7pt,shift={(1+1,-1.085-1)}] (aar3) at (-0.5, -0.75-2) {$\bullet$};
\node[inner sep=0.7pt,shift={(1+1,-1.055)}] (ccr) at (-0.5, -0.75-3.1) {};
\node[inner sep=0.7pt,shift={(1+1,-1.055)}] (ccr1) at (-0.5, -0.75-3-0.5) {};

\draw[arrows = {-Stealth[scale=0.9]}] (aar2) to (aar1);
\draw[arrows = {-Stealth[scale=0.9]}] (aar2) to (aar3);
\draw[ dotted,thick] (ccr) to (ccr1);

\node[inner sep=0.7pt,shift={(1+1+1,-1.085-1-1.05)}] (aarr1) at (-0.5, -0.75) {$\bullet$};
\node[inner sep=0.7pt,shift={(1+1+1,-1.085-1-1.05)}] (aarr2) at (-0.5, -0.75-1) {$\circ$};
\node[inner sep=0.7pt,shift={(1+1+1,-1.055)}] (ccrr) at (-0.5, -0.75-3.1) {};
\node[inner sep=0.7pt,shift={(1+1+1,-1.055)}] (ccrr1) at (-0.5, -0.75-3-0.5) {};

\draw[arrows = {-Stealth[scale=0.9]}] (aarr2) to (aarr1);
\draw[ dotted,thick] (ccrr) to (ccrr1);

\node[inner sep=0.7pt,shift={(1+1+1+1,-1.085-1-1-1.05)}] (aarrr1) at (-0.5, -0.8) {$\bullet$};
\node[inner sep=0.7pt,shift={(1+1+1+1,-1.055)}] (ccrrr) at (-0.5, -0.75-3.1) {};
\node[inner sep=0.7pt,shift={(1+1+1+1,-1.055)}] (ccrrr1) at (-0.5, -0.75-3-0.5) {};
\draw[ dotted,thick] (ccrrr) to (ccrrr1);


\node[inner sep=0.7pt,shift={(-1,0)}] (ak1) at (-0.5, -0.75) {$\circ$};
\node[inner sep=0.7pt,shift={(-1,0)}] (ak2) at (-0.5, -0.75-1) {$\bullet$};
\node[inner sep=0.7pt,shift={(-1,0)}] (ak3) at (-0.5, -0.75-2) {$\circ$};
\node[inner sep=0.7pt,shift={(-1,0)}] (ak4) at (-0.5, -0.75-3) {$\bullet$};
\node[inner sep=0.7pt,shift={(-1,0)}] (ak5) at (-0.5, -0.75-3-1) {$\circ$};
\node[inner sep=0.7pt,shift={(-1,0)}] (ck) at (-0.5, -0.75-3.1-1) {};
\node[inner sep=0.7pt,shift={(-1,0)}] (ck1) at (-0.5, -0.75-3-0.5-1) {};

\draw[arrows = {-Stealth[scale=0.9]}] (ak1) to (ak2);
\draw[arrows = {-Stealth[scale=0.9]}] (ak3) to (ak2);
\draw[arrows = {-Stealth[scale=0.9]}] (ak3) to (ak4);
\draw[arrows = {-Stealth[scale=0.9]}] (ak5) to (ak4);
\draw[ dotted,thick] (ck) to (ck1);

\node[inner sep=0.7pt,shift={(-2,-1.055)}] (aak1) at (-0.5, -0.75) {$\circ$};
\node[inner sep=0.7pt,shift={(-2,-1.055)}] (aak2) at (-0.5, -0.75-1) {$\bullet$};
\node[inner sep=0.7pt,shift={(-2,-1.055)}] (aak3) at (-0.5, -0.75-2) {$\circ$};
\node[inner sep=0.7pt,shift={(-2,-1.055)}] (aak4) at (-0.5, -0.75-3) {$\bullet$};
\node[inner sep=0.7pt,shift={(-2,-1.055)}] (cck) at (-0.5, -0.75-3.1) {};
\node[inner sep=0.7pt,shift={(-2,-1.055)}] (cck1) at (-0.5, -0.75-3-0.5) {};

\draw[arrows = {-Stealth[scale=0.9]}] (aak1) to (aak2);
\draw[arrows = {-Stealth[scale=0.9]}] (aak3) to (aak2);
\draw[arrows = {-Stealth[scale=0.9]}] (aak3) to (aak4);
\draw[ dotted,thick] (cck) to (cck1);

\node[inner sep=0.7pt,shift={(-2-1,-1.085-1)}] (aakk1) at (-0.5, -0.75) {$\circ$};
\node[inner sep=0.7pt,shift={(-2-1,-1.085-1)}] (aakk2) at (-0.5, -0.75-1) {$\bullet$};
\node[inner sep=0.7pt,shift={(-2-1,-1.085-1)}] (aakk3) at (-0.5, -0.75-2) {$\circ$};
\node[inner sep=0.7pt,shift={(-2-1,-1.085)}] (cckk) at (-0.5, -0.75-3.1) {};
\node[inner sep=0.7pt,shift={(-2-1,-1.085)}] (cckk1) at (-0.5, -0.75-3-0.5) {};

\draw[arrows = {-Stealth[scale=0.9]}] (aakk1) to (aakk2);
\draw[arrows = {-Stealth[scale=0.9]}] (aakk3) to (aakk2);
\draw[ dotted,thick] (cckk) to (cckk1);

\node[inner sep=0.7pt,shift={(-2-2,-1.095-2.05)}] (aakkk1) at (-0.5, -0.75) {$\circ$};
\node[inner sep=0.7pt,shift={(-2-2,-1.095-2.05)}] (aakkk2) at (-0.5, -0.75-1) {$\bullet$};
\node[inner sep=0.7pt,shift={(-2-2,-1.095)}] (cckkk) at (-0.5, -0.75-3.1) {};
\node[inner sep=0.7pt,shift={(-2-2,-1.095)}] (cckkk1) at (-0.5, -0.75-3-0.5) {};

\draw[arrows = {-Stealth[scale=0.9]}] (aakkk1) to (aakkk2);
\draw[ dotted,thick] (cckkk) to (cckkk1);

\node[inner sep=0.7pt,shift={(-2-3,-1.135-3.05)}] (aakkkk1) at (-0.5, -0.75) {$\circ$};
\node[inner sep=0.7pt,shift={(-2-3,-1.135)}] (cckkkk) at (-0.5, -0.75-3.1) {};
\node[inner sep=0.7pt,shift={(-2-3,-1.135)}] (cckkkk1) at (-0.5, -0.75-3-0.5) {};
\draw[ dotted,thick] (cckkkk) to (cckkkk1);

\end{tikzpicture}
\caption{\label{51fig-fel}The Felder complex (\ref{eq:fel-comp}). Horizontal arrows represent the action of the screening operators, whereas dashed arrows indicate that the socles are annihilated by the screening operators. The black circles represent the generating vectors of ${\rm ker}Q^{[\bullet]}_-$. The white circles represent the generating vectors of ${\rm coker}Q^{[\bullet]}_-$. }
\mbox{}\\
\end{figure}
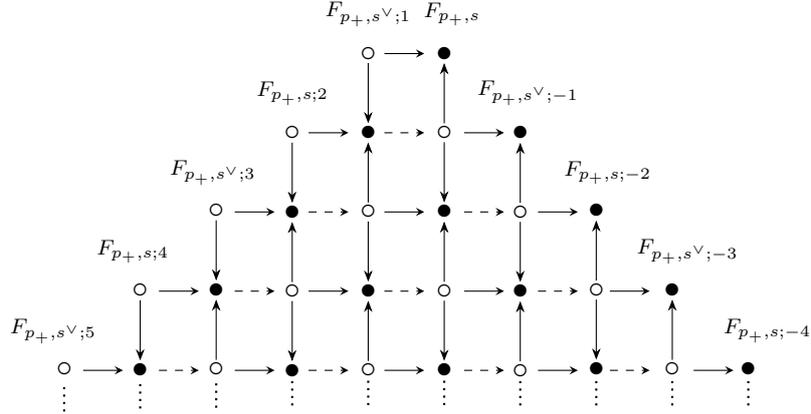

\begin{figure}[h]
  \centering
\begin{tikzpicture}[scale=1.05]
\node[inner sep=0.7pt,font=\scriptsize] (t1) at (0, 0.9) {${F_{r,s}}$};
\node[inner sep=0.7pt,font=\scriptsize] (t) at (0, 0.25) {$\blacksquare$};
\node[inner sep=0.7pt] (a1) at (-0.5, -0.75) {$\bullet$};
\node[inner sep=0.7pt,font=\scriptsize] (b1) at (0.5, -0.75) {$\triangle$};
\node[inner sep=0.7pt,font=\scriptsize] (a2) at (-0.5, -0.75-1) {$\blacksquare$};
\node[inner sep=0.7pt,font=\scriptsize] (b2) at (0.5, -0.75-1) {$\square$};
\node[inner sep=0.7pt] (a3) at (-0.5, -0.75-2) {$\bullet$};
\node[inner sep=0.7pt,font=\scriptsize] (b3) at (0.5, -0.75-2) {$\triangle$};
\node[inner sep=0.7pt,font=\scriptsize] (a4) at (-0.5, -0.75-3) {$\blacksquare$};
\node[inner sep=0.7pt,font=\scriptsize] (b4) at (0.5, -0.75-3) {$\square$};
\node[inner sep=0.7pt] (c) at (-0.5, -0.75-3.2) {};
\node[inner sep=0.7pt] (d) at (0.5, -0.75-3.2) {};
\node[inner sep=0.7pt] (c1) at (-0.5, -0.75-3-0.5) {};
\node[inner sep=0.7pt] (d1) at (0.5, -0.75-3-0.5) {};

\draw[arrows = {-Stealth[scale=0.9]}] (t) to (a1);
\draw[arrows = {-Stealth[scale=0.9]}] (b1) to (t);
\draw[arrows = {-Stealth[scale=0.9]}] (b1) to (a2);
\draw[arrows = {-Stealth[scale=0.9]}] (b1) to (b2);
\draw[arrows = {-Stealth[scale=0.9]}] (a2) to (a1);
\draw[arrows = {-Stealth[scale=0.9]}] (b2) to (a1);

\draw[arrows = {-Stealth[scale=0.9]}] (b2) to (a3);
\draw[arrows = {-Stealth[scale=0.9]}] (a2) to (a3);
\draw[arrows = {-Stealth[scale=0.9]}] (b3) to (b2);
\draw[arrows = {-Stealth[scale=0.9]}] (b3) to (a2);

\draw[arrows = {-Stealth[scale=0.9]}] (b4) to (a3);
\draw[arrows = {-Stealth[scale=0.9]}] (a4) to (a3);
\draw[arrows = {-Stealth[scale=0.9]}] (b3) to (a4);
\draw[arrows = {-Stealth[scale=0.9]}] (b3) to (b4);
\draw[arrows = {-Stealth[scale=0.9]}] (b3) to (a4);
\draw[arrows = {-Stealth[scale=0.9]}] (b3) to (b4);
\draw[ dotted,thick] (c) to (c1);
\draw[ dotted,thick] (d) to (d1);

\node[inner sep=0.7pt,font=\scriptsize] (t1) at (0.5-2.3, 0.9-1) {${F_{r,s^\vee;1}}$};
\node[inner sep=0.7pt,font=\scriptsize] (tt) at (0.5-2, 0.25-1) {$\square$};
\node[inner sep=0.7pt] (aa1) at (-0.5-2, -0.75-1) {$\bullet$};
\node[inner sep=0.7pt,font=\scriptsize] (bb1) at (0.5-2, -0.75-1) {$\triangle$};
\node[inner sep=0.7pt,font=\scriptsize] (aa2) at (-0.5-2, -0.75-1-1) {$\blacksquare$};
\node[inner sep=0.7pt,font=\scriptsize] (bb2) at (0.5-2, -0.75-1-1) {$\square$};
\node[inner sep=0.7pt] (aa3) at (-0.5-2, -0.75-2-1) {$\bullet$};
\node[inner sep=0.7pt,font=\scriptsize] (bb3) at (0.5-2, -0.75-2-1) {$\triangle$};
\node[inner sep=0.7pt] (cc) at (-0.5-2, -0.75-3.2) {};
\node[inner sep=0.7pt] (dd) at (0.5-2, -0.75-3.2) {};
\node[inner sep=0.7pt] (cc1) at (-0.5-2, -0.75-3-0.5) {};
\node[inner sep=0.7pt] (dd1) at (0.5-2, -0.75-3-0.5) {};

\draw[arrows = {-Stealth[scale=0.9]}] (tt) to (aa1);
\draw[arrows = {-Stealth[scale=0.9]}] (bb1) to (tt);
\draw[arrows = {-Stealth[scale=0.9]}] (bb1) to (aa2);
\draw[arrows = {-Stealth[scale=0.9]}] (bb1) to (bb2);
\draw[arrows = {-Stealth[scale=0.9]}] (aa2) to (aa1);
\draw[arrows = {-Stealth[scale=0.9]}] (bb2) to (aa1);

\draw[arrows = {-Stealth[scale=0.9]}] (bb2) to (aa3);
\draw[arrows = {-Stealth[scale=0.9]}] (aa2) to (aa3);
\draw[arrows = {-Stealth[scale=0.9]}] (bb3) to (bb2);
\draw[arrows = {-Stealth[scale=0.9]}] (bb3) to (aa2);

\draw[ dotted,thick] (cc) to (cc1);
\draw[ dotted,thick] (dd) to (dd1);

\node[inner sep=0.7pt,font=\scriptsize] (t1) at (0.5-2-2-0.3, 0.25-1-1+0.7) {${F_{r,s;2}}$};
\node[inner sep=0.7pt,font=\scriptsize] (ttt) at (0.5-2-2, 0.25-1-1) {$\square$};
\node[inner sep=0.7pt] (aaa1) at (-0.5-2-2, -0.75-1-1) {$\bullet$};
\node[inner sep=0.7pt,font=\scriptsize] (bbb1) at (0.5-2-2, -0.75-1-1) {$\triangle$};
\node[inner sep=0.7pt,font=\scriptsize] (aaa2) at (-0.5-2-2, -0.75-1-1-1) {$\blacksquare$};
\node[inner sep=0.7pt,font=\scriptsize] (bbb2) at (0.5-2-2, -0.75-1-1-1) {$\square$};
\node[inner sep=0.7pt] (ccc) at (-0.5-2-2, -0.75-3.2) {};
\node[inner sep=0.7pt] (ddd) at (0.5-2-2, -0.75-3.2) {};
\node[inner sep=0.7pt] (ccc1) at (-0.5-2-2, -0.75-3-0.5) {};
\node[inner sep=0.7pt] (ddd1) at (0.5-2-2, -0.75-3-0.5) {};

\draw[arrows = {-Stealth[scale=0.9]}] (ttt) to (aaa1);
\draw[arrows = {-Stealth[scale=0.9]}] (bbb1) to (ttt);
\draw[arrows = {-Stealth[scale=0.9]}] (bbb1) to (aaa2);
\draw[arrows = {-Stealth[scale=0.9]}] (bbb1) to (bbb2);
\draw[arrows = {-Stealth[scale=0.9]}] (aaa2) to (aaa1);
\draw[arrows = {-Stealth[scale=0.9]}] (bbb2) to (aaa1);

\draw[ dotted,thick] (ccc) to (ccc1);
\draw[ dotted,thick] (ddd) to (ddd1);

\node[inner sep=0.7pt,font=\scriptsize] (t1) at (0.5-2-2-2-0.3, 0.25-1-1-1+0.7) {${F_{r,s^\vee;3}}$};
\node[inner sep=0.7pt,font=\scriptsize] (tttt) at (0.5-2-2-2, 0.25-1-1-1) {$\square$};
\node[inner sep=0.7pt] (aaaa1) at (-0.5-2-2-2, -0.75-1-1-1) {$\bullet$};
\node[inner sep=0.7pt,font=\scriptsize] (bbbb1) at (0.5-2-2-2, -0.75-1-1-1) {$\triangle$};
\node[inner sep=0.7pt] (cccc) at (-0.5-2-2-2, -0.75-3.2) {};
\node[inner sep=0.7pt] (dddd) at (0.5-2-2-2, -0.75-3.2) {};
\node[inner sep=0.7pt] (cccc1) at (-0.5-2-2-2, -0.75-3-0.5) {};
\node[inner sep=0.7pt] (dddd1) at (0.5-2-2-2, -0.75-3-0.5) {};

\draw[arrows = {-Stealth[scale=0.9]}] (tttt) to (aaaa1);
\draw[arrows = {-Stealth[scale=0.9]}] (bbbb1) to (tttt);

\draw[ dotted,thick] (cccc) to (cccc1);
\draw[ dotted,thick] (dddd) to (dddd1);

\node[inner sep=0.7pt,font=\scriptsize] (t1) at (-0.5-2+4+0.3, 0.25-1+0.7) {${F_{r,s^\vee;-1}}$};
\node[inner sep=0.7pt,font=\scriptsize] (ttr) at (-0.5-2+4, 0.25-1) {$\blacksquare$};
\node[inner sep=0.7pt] (aar1) at (-0.5-2+4, -0.75-1) {$\bullet$};
\node[inner sep=0.7pt,font=\scriptsize] (bbr1) at (0.5-2+4, -0.75-1) {$\triangle$};
\node[inner sep=0.7pt,font=\scriptsize] (aar2) at (-0.5-2+4, -0.75-1-1) {$\blacksquare$};
\node[inner sep=0.7pt,font=\scriptsize] (bbr2) at (0.5-2+4, -0.75-1-1) {$\square$};
\node[inner sep=0.7pt] (aar3) at (-0.5-2+4, -0.75-2-1) {$\bullet$};
\node[inner sep=0.7pt,font=\scriptsize] (bbr3) at (0.5-2+4, -0.75-2-1) {$\triangle$};
\node[inner sep=0.7pt] (ccr) at (-0.5-2+4, -0.75-3.2) {};
\node[inner sep=0.7pt] (ddr) at (0.5-2+4, -0.75-3.2) {};
\node[inner sep=0.7pt] (ccr1) at (-0.5-2+4, -0.75-3-0.5) {};
\node[inner sep=0.7pt] (ddr1) at (0.5-2+4, -0.75-3-0.5) {};

\draw[arrows = {-Stealth[scale=0.9]}] (ttr) to (aar1);
\draw[arrows = {-Stealth[scale=0.9]}] (bbr1) to (ttr);
\draw[arrows = {-Stealth[scale=0.9]}] (bbr1) to (aar2);
\draw[arrows = {-Stealth[scale=0.9]}] (bbr1) to (bbr2);
\draw[arrows = {-Stealth[scale=0.9]}] (aar2) to (aar1);
\draw[arrows = {-Stealth[scale=0.9]}] (bbr2) to (aar1);

\draw[arrows = {-Stealth[scale=0.9]}] (bbr2) to (aar3);
\draw[arrows = {-Stealth[scale=0.9]}] (aar2) to (aar3);
\draw[arrows = {-Stealth[scale=0.9]}] (bbr3) to (bbr2);
\draw[arrows = {-Stealth[scale=0.9]}] (bbr3) to (aar2);

\draw[ dotted,thick] (ccr) to (ccr1);
\draw[ dotted,thick] (ddr) to (ddr1);

\node[inner sep=0.7pt,font=\scriptsize] (t1) at (-0.5+4+0.3, 0.25-1-1+0.7) {${F_{r,s;-2}}$};
\node[inner sep=0.7pt,font=\scriptsize] (tttr) at (-0.5-2-2+8, 0.25-1-1) {$\blacksquare$};
\node[inner sep=0.7pt] (aaar1) at (-0.5-2-2+8, -0.75-1-1) {$\bullet$};
\node[inner sep=0.7pt,font=\scriptsize] (bbbr1) at (0.5-2-2+8, -0.75-1-1) {$\triangle$};
\node[inner sep=0.7pt,font=\scriptsize] (aaar2) at (-0.5-2-2+8, -0.75-1-1-1) {$\blacksquare$};
\node[inner sep=0.7pt,font=\scriptsize] (bbbr2) at (0.5-2-2+8, -0.75-1-1-1) {$\square$};
\node[inner sep=0.7pt] (cccr) at (-0.5-2-2+8, -0.75-3.2) {};
\node[inner sep=0.7pt] (dddr) at (0.5-2-2+8, -0.75-3.2) {};
\node[inner sep=0.7pt] (cccr1) at (-0.5-2-2+8, -0.75-3-0.5) {};
\node[inner sep=0.7pt] (dddr1) at (0.5-2-2+8, -0.75-3-0.5) {};

\draw[arrows = {-Stealth[scale=0.9]}] (tttr) to (aaar1);
\draw[arrows = {-Stealth[scale=0.9]}] (bbbr1) to (tttr);
\draw[arrows = {-Stealth[scale=0.9]}] (bbbr1) to (aaar2);
\draw[arrows = {-Stealth[scale=0.9]}] (bbbr1) to (bbbr2);
\draw[arrows = {-Stealth[scale=0.9]}] (aaar2) to (aaar1);
\draw[arrows = {-Stealth[scale=0.9]}] (bbbr2) to (aaar1);

\draw[ dotted,thick] (cccr) to (cccr1);
\draw[ dotted,thick] (dddr) to (dddr1);

\node[inner sep=0.7pt,font=\scriptsize] (t1) at (-0.5+2+4+0.3, 0.25-1-1-1+0.7) {${F_{r,s^\vee;-3}}$};
\node[inner sep=0.7pt,font=\scriptsize] (ttttr) at (-0.5-2-2-2+12, 0.25-1-1-1) {$\blacksquare$};
\node[inner sep=0.7pt] (aaaar1) at (-0.5-2-2-2+12, -0.75-1-1-1) {$\bullet$};
\node[inner sep=0.7pt,font=\scriptsize] (bbbbr1) at (0.5-2-2-2+12, -0.75-1-1-1) {$\triangle$};
\node[inner sep=0.7pt] (ccccr) at (-0.5-2-2+10, -0.75-3.2) {};
\node[inner sep=0.7pt] (ddddr) at (0.5-2-2+10, -0.75-3.2) {};
\node[inner sep=0.7pt] (ccccr1) at (-0.5-2-2+10, -0.75-3-0.5) {};
\node[inner sep=0.7pt] (ddddr1) at (0.5-2-2+10, -0.75-3-0.5) {};

\draw[arrows = {-Stealth[scale=0.9]}] (ttttr) to (aaaar1);
\draw[arrows = {-Stealth[scale=0.9]}] (bbbbr1) to (ttttr);

\draw[ dotted,thick] (ccccr) to (ccccr1);
\draw[ dotted,thick] (ddddr) to (ddddr1);

\draw[->,>=stealth] (0.7, -0.75)--(1.3,-0.75);
\draw[->,>=stealth] (0.7, -0.75-1)--(1.3,-0.75-1);
\draw[->,>=stealth] (0.7, -0.75-1-1)--(1.3,-0.75-1-1);
\draw[->,>=stealth] (0.7, -0.75-1-1-1)--(1.3,-0.75-1-1-1);

\draw[->,>=stealth] (0.7+2, -0.75-1)--(1.3+2,-0.75-1);
\draw[->,>=stealth] (0.7+2, -0.75-1-1)--(1.3+2,-0.75-1-1);
\draw[->,>=stealth] (0.7+2, -0.75-1-1-1)--(1.3+2,-0.75-1-1-1);

\draw[->,>=stealth] (0.7+2+2, -0.75-1-1)--(1.3+2+2,-0.75-1-1);
\draw[->,>=stealth] (0.7+2+2, -0.75-1-1-1)--(1.3+2+2,-0.75-1-1-1);

\draw[->,>=stealth] (0.7-2, -0.75)--(1.3-2,-0.75);
\draw[->,>=stealth] (0.7-2, -0.75-1)--(1.3-2,-0.75-1);
\draw[->,>=stealth] (0.7-2, -0.75-1-1)--(1.3-2,-0.75-1-1);
\draw[->,>=stealth] (0.7-2, -0.75-1-1-1)--(1.3-2,-0.75-1-1-1);

\draw[->,>=stealth] (0.7-2-2, -0.75-1)--(1.3-2-2,-0.75-1);
\draw[->,>=stealth] (0.7-2-2, -0.75-1-1)--(1.3-2-2,-0.75-1-1);
\draw[->,>=stealth] (0.7-2-2, -0.75-1-1-1)--(1.3-2-2,-0.75-1-1-1);

\draw[->,>=stealth] (0.7-2-2-2, -0.75-1-1)--(1.3-2-2-2,-0.75-1-1);
\draw[->,>=stealth] (0.7-2-2-2, -0.75-1-1-1)--(1.3-2-2-2,-0.75-1-1-1);

\end{tikzpicture}
\caption{\label{51fig-2-fel}Schematic diagram of the complex (\ref{fel-comp:p>1}). Horizontal arrows represent the action of the screening operators $Q^{[\bullet]}_-$. The black circles and the black squares represent the generating vectors of ${\rm im}Q^{[\bullet]}_-\cap {\rm im}Q^{[r^\vee]}_+$ and ${\rm ker}Q^{[\bullet]}_-$, respectively. The white squares and triangles represent the generating vectors of ${\rm im}Q^{[r^\vee]}_+$ and ${\rm coker}Q^{[\bullet]}_-$, respectively.
}
\mbox{}\\
\end{figure}
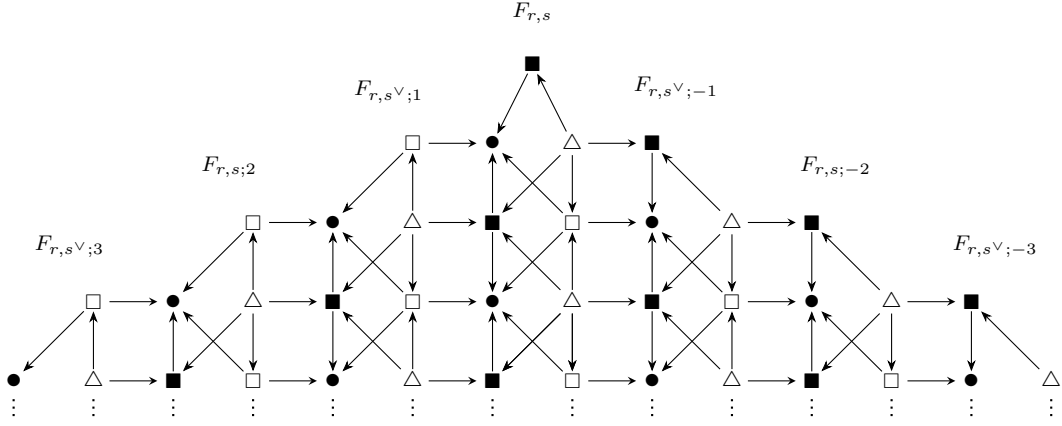

\section{Derivations on $\mathcal{W}_{p_+,p_-}$} 
\label{sec-deri-main}
In this section, we first review the structure of $\mathcal{W}_{p_+,p_-}$ following \cite{AdamovicD/MilasA:2008,AdamovicD/MilasA:2010,AdamovicD/MilasA:2011,TW}, and then introduce the derivations on $\mathcal{W}_{p_+,p_-}$.
\subsection{The triplet $W$-algebras $\mathcal{W}_{p_+,p_-}$}
\label{tripletW}
Let $p_+$, $p_-$ be coprime integers satisfying $p_->p_+\geq 1$. 
\begin{dfn}
The {\rm lattice vertex operator algebra} $\mathcal{V}_{[p_+,p_-]}$ associated with the rank-one lattice $\mathbb{Z}\sqrt{2p_+p_-}$ is given by the tuple 
\begin{align*}
(\mathcal{V}^{+}_{1,1}, \ket{0}, \frac{1}{2}(a^2_{-1}-\alpha_0a_{-2})\ket{0}, Y),
\end{align*}
where the underlying vector space $\mathcal{V}^{+}_{1,1}$ is given by
\begin{align*}
\mathcal{V}^{+}_{1,1}=\bigoplus_{n\in\mathbb{Z}}F_{1,1;2n}=\bigoplus_{n\in\mathbb{Z}}F_{n\sqrt{2p_+p_-}}.
\end{align*}
\end{dfn}
It is known that simple $\mathcal{V}_{[p_+,p_-]}$-modules are given by the $2p_+p_-$ direct sum of Fock modules
\begin{align*}
&\mathcal{V}^{+}_{r,s}:=\bigoplus_{n\in\mathbb{Z}}F_{r,s;2n},
&\mathcal{V}^{-}_{r,s}:=\bigoplus_{n\in\mathbb{Z}}F_{r,s;2n+1},
\end{align*}
for $1\leq r\leq p_+,1\leq s\leq p_-$.

Note that the two screening operators $Q_+$ and $Q_-$ act on $\mathcal{V}^+_{1,1}$. We define the following vector subspace of $\mathcal{V}^+_{1,1}$:
\begin{equation*}
\mathcal{K}_{1,1}
=
\begin{cases}
{\rm ker}Q_-& p_+=1\\
{\rm ker}Q_+\cap{\rm ker}Q_-& p_+\geq 2
\end{cases}
.
\end{equation*}
\begin{dfn}\cite{FGST,FHST}
The {\rm triplet $W$-algebra} is defined by the following vertex operator algebra
\begin{align*}
\mathcal{W}_{p_+,p_-}:=(\mathcal{K}_{1,1},\ket{0},T,Y),
\end{align*}
where the vacuum vector, conformal vector and vertex operator map are those of $\mathcal{V}_{[p_+,p_-]}$.
\end{dfn}
We define the triplet vectors
\begin{align*}
&W^+:=Q^{[p_--1]}_-\ket{\alpha_{1,p_--1;3}},
&W^-&:=Q^{[p_+-1]}_+\ket{\alpha_{p_+-1,1;-3}},\\ 
&W^0:=Q^{[2p_+-1]}_+\ket{\alpha_{p_+-1,1;-3}},
\end{align*}
where in the case $p_+=1$, we set $W^-:=\ket{\alpha_{3,1}}$.
These vectors are non-trivial Virasoro singular vectors with $\mathcal{L}_0$-weight $h_{4p_+-1,1}$.
\begin{prop}[\cite{AdamovicD/MilasA:2008,AdamovicD/MilasA:2010,AdamovicD/MilasA:2011,TW}]
\label{genW}
$\mathcal{W}_{p_+,p_-}$ is strongly  generated by the fields $T(z),Y(W^{\pm},z),Y(W^0,z)$.
\end{prop}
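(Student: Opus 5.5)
The plan follows the strategy of \cite{AdamovicD/MilasA:2008,AdamovicD/MilasA:2010,AdamovicD/MilasA:2011,TW}, since Proposition~\ref{genW} is quoted from there. It has two parts. First, determine $\mathcal{K}_{1,1}$ as a Virasoro module and show that the vertex subalgebra $\langle T,W^\pm,W^0\rangle$ contains a generating vector of every Virasoro constituent. Second, upgrade ``generated'' to ``strongly generated'' by a filtration (PBW/$C_2$-type) argument. Throughout, let $\beta=\sqrt{2p_+p_-}$, so $\mathcal{V}^+_{1,1}=\bigoplus_n F_{n\beta}$ and $F_{1,1;2n}=F_{n\beta}$.

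\textbf{Step 1 (Virasoro structure).} For each $n$, apply Propositions~\ref{FockSocle}, \ref{Felder complex} and~\ref{Felder complex2} to $F_{1,1;2n}$. In the $Q_-$ complex~(\ref{fel-comp:p>1}) (or~(\ref{eq:fel-comp}) when $p_+=1$), and for $p_+\ge 2$ also in the $Q_+$ complex, this identifies $\ker Q_+\cap\ker Q_-$ on each summand. The outcome I aim for is an explicit description of $\mathcal{K}_{1,1}$ as a union of Virasoro submodules generated by singular and subsingular vectors of weights $h_{1,1;2k}$, together with their multiplicities. In particular, the weight-$h_{4p_+-1,1}$ layer should be exactly $\mathrm{span}\{W^+,W^0,W^-\}$ modulo descendants of the vacuum. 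Comparing characters then reduces the first part to producing each such generating vector inside $\langle T,W^\pm,W^0\rangle$.

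\textbf{Step 2 (producing the generating vectors).} I would proceed inductively in $k$ and compute normally ordered products in the lattice VOA using~(\ref{eq:opeVV}).

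The starting point is the extremal vectors. Since $W^-$ is built from $\ket{\alpha_{p_+-1,1;-3}}$ by a screening integral, the leading term of $W^-_{(m)}W^-$ for the appropriate $m$ is proportional to the extremal vector of weight $h_{1,1;-4}$ (respectively of $F_{-2\beta}$). The coefficient of that leading term is a Selberg-type integral, which is evaluated by Proposition~\ref{TK-thm} and Theorem~\ref{sus-prop}; iterating this gives the lowest vectors in every $F_{-k\beta}$.

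Next, the remaining vectors are obtained by transport. The zero mode $Q_+$ (resp.\ $Q_-$) is a derivation of $\mathcal{V}_{[p_+,p_-]}$, and by Proposition~\ref{prop:qr-qs} the screening operators commute with one another. I would use these facts to move the extremal vectors to the other vectors in each multiplet. Where no single-screening derivation is available, I would instead compute products $W^a_{(m)}W^b$ directly and identify their components via the Felder exact sequences. For subsingular vectors this identification uses the fact that they lie in $\mathrm{im}\,Q^{[r^\vee]}_+$ (Remark~\ref{rem-subsing}).

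\textbf{Step 3 (strong generation).} Filter $\mathcal{K}_{1,1}$ by conformal weight and by length of monomials in the modes of $T,W^a$. From Step 2 every vector is a linear combination of iterated products. To reduce these to normally ordered monomials $T_{(-n_1)}\cdots W^{a}_{(-m_j)}\cdots\ket{0}$, I would use the OPE relations among $W^a$ at the weights where Step 1 shows there is no new Virasoro constituent. This gives the $C_2$-type relations, for instance expressing $W^a_{(-1)}W^b$ in terms of lower terms, as in Adamovi\'c--Milas. The relations allow the standard straightening argument, which expresses each product of modes via normally ordered monomials in the generating fields.

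\textbf{Main obstacle.} I expect the hardest step to be the nonvanishing in Step 2 for $p_+\ge 2$. The relevant coefficients are analytically continued Selberg integrals evaluated at the degenerate parameters of~(\ref{eq:twist-rho}), where zeros and poles could cancel. I would first try to control these by writing them as limits $\epsilon\to 0$ of the explicit Selberg product formula. Failing that, I would use the Felder exactness to argue that any vanishing would contradict the cohomology computation $L(h_{r,s;0})$ in Proposition~\ref{Felder complex2}.
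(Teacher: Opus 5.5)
The paper gives no proof of Proposition~\ref{genW}; it quotes the result from the literature, so your plan has to stand on its own. For $p_+=1$ it does. It is essentially the Adamovi\'c--Milas argument, and it is simpler than you describe. Here $W^-=\ket{\alpha_{3,1}}=\ket{\alpha_{1,1;-2}}$ is a pure exponential, so the leading coefficient of $W^-_{(m)}\ket{\alpha_{1,1;-2n}}$ is exactly $1$ and no Selberg integral appears. Moreover $\alpha_+=\sqrt{2p}$ lies in the lattice, so $Q_+$ is a genuine derivation of $\mathcal{V}_{[1,p]}$. It preserves $\mathrm{span}\{T,W^\pm,W^0\}$ and carries $w^{(n)}_{-n}$ to $w^{(n)}_i=Q_+^{n+i}w^{(n)}_{-n}$. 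For $p_+\geq 2$, however, your Step~2 fails at both of its stages, and this is the case that needs real work.

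\textbf{The leading-term computation produces zero.} The Fock top vectors you aim to produce are not in $\mathcal{W}_{p_+,p_-}$. Since $\alpha_+\alpha_{1,1;-2k}=-2kp_-<0$, we have $Q_+\ket{\alpha_{1,1;-2k}}\neq 0$. The top weight space of $F_{1,1;-2k}$ is one-dimensional, so every element of $\ker Q_+\cap F_{1,1;-2k}$ has zero component there. Hence the component of $W^-_{(m)}W^-\in\mathcal{W}_{p_+,p_-}$ along $\ket{\alpha_{1,1;-4}}$ vanishes identically. The Selberg coefficient in your ``main obstacle'' is zero for structural reasons, not because of a delicate cancellation of zeros and poles. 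The vectors you actually need are $w^{(k)}_{-k}=Q^{[p_+-1]}_+\ket{\alpha_{p_+-1,1;-2k-1}}$. Their weight is $\Lambda_k=((k+1)p_+-1)((k+1)p_--1)$, which exceeds $h_{1,1;-2k}$ by $(p_+-1)((2k+1)p_--1)>0$. So they sit deep inside the Fock module, where no leading-term argument reaches them. This also means the weights in your Step~1 are not $h_{1,1;2k}$.

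\textbf{The transport step has nothing to act with.} Since $\alpha_\pm\notin\mathbb{Z}\sqrt{2p_+p_-}$, the operators $Q_\pm$ are not derivations of $\mathcal{V}_{[p_+,p_-]}$. For example, $Q_+$ maps $\mathcal{V}^+_{1,1}$ into the module $\mathcal{V}^-_{p_+-1,1}$. In any case both vanish on $\mathcal{W}_{p_+,p_-}=\ker Q_+\cap\ker Q_-$. The members $w^{(n)}_i$ of one multiplet come from a single Fock vector via different multiple screenings $Q^{[(n+i+1)p_+-1]}_+$. What is missing is an operator on $\mathcal{W}_{p_+,p_-}$ that commutes with $\mathfrak{Vir}$ and shifts $i$. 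That role is played by the Adamovi\'c--Milas derivation (\ref{deri-AM}) for $p_+=2$, by the Frobenius homomorphisms of \cite{TW}, or here by $\mathbb{G}^{[p_\pm]}_\pm$ (Theorems~\ref{G-hom}, \ref{non-Gop}, \ref{G-deri}). It must be combined with the nonvanishing statement of Proposition~\ref{sl2action2}.

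Your fallback does not supply such a mechanism. Computing $W^a_{(m)}W^b$ and ``identifying components via the Felder sequences'', or arguing that vanishing would contradict the cohomology $L(h_{r,s;0})$, cannot work as stated. The Felder complexes describe the Fock modules, not the nonvanishing of particular products of vertex operators.

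Once Proposition~\ref{sl2action2} is available, your Step~3 straightening can be replaced by a direct induction on $n$. The mode $W^\pm[\Lambda_n-\Lambda_{n+1}]=W^\pm_{(\Lambda_1+\Lambda_n-\Lambda_{n+1}-1)}$ is a creation mode for $n\geq 1$, because $\Lambda_{n+1}-\Lambda_n-\Lambda_1=(2n-1)p_+p_-+p_++p_--1>0$.
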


For the case $p_+\geq 2$, we define a subspace $\mathcal{I}_{p_+,p_-}\subset \mathcal{W}_{p_+,p_-}$ as follows
\begin{align*}
\mathcal{I}_{p_+,p_-}=Q^{[p_+-1]}_+(\mathcal{V}^-_{p_+-1,1})\cap Q^{[p_--1]}_-(\mathcal{V}^-_{1,p_--1}).
\end{align*}
For $n\in \mathbb{Z}_{\geq 0}$, we set
\begin{equation*}
\Lambda_{n}(p_+,p_-):=
\begin{cases}
h_{p_+-1,1;-2n-1}&p_+\geq 2\\
h_{p_+,1;-2n}&p_+=1
\end{cases}
\end{equation*}
and for $-n\leq i\leq n$,
\begin{equation*}
w^{(n)}_{i}(p_+,p_-):=
\begin{cases}
Q^{[(n+i+1)p_+-1]}_+\ket{\alpha_{p_+-1,1;-2n-1}} & p_+\geq 2\\
Q^{(n+i)}_+\ket{\alpha_{p_+,1;-2n}} & p_+=1
\end{cases}
.
\end{equation*}
Note that $w^{(1)}_{-1}$, $w^{(1)}_0$, and $w^{(1)}_1$ agree with $W^-$, $W^0$, and $W^+$, respectively, up to scalar multiples.
By Proposition \ref{Felder complex}, $\mathcal{W}_{1,p_-}$ and $\mathcal{I}_{p_+,p_-}$ are isomorphic, as Virasoro modules, to
\begin{align*}
\bigoplus_{n\in \mathbb{Z}_{\geq 0}}(2n+1)L(\Lambda_{n}(p_+,p_-)),
\end{align*}
and $\{w^{(n)}_{i}(p_+,p_-)\}_{i=-n}^n$ correspond to the lowest weight vectors of $(2n+1)L(\Lambda_{n}(p_+,p_-))$.
We set
\begin{align*}
W^{\delta}[n]=\oint_{z=0}Y(W^\delta,z)z^{h_{4p_+-1,1}+n-1}{\rm d}z,\qquad \delta=\pm,0,\ \ n\in \mathbb{Z},
\end{align*}
which is the $n$-th mode of the field $Y(W^\delta,z)$. Hereafter, we simply write $w^{(n)}_{i}=w^{(n)}_{i}(p_+,p_-)$ and $\Lambda_{n}=\Lambda_{n}(p_+,p_-)$.
It is known that $\mathcal{W}_{1,p_-}$ is simple \cite{AdamovicD/MilasA:2008}, and for $p_+\geq 2$, $\mathcal{I}_{p_+,p_-}$ is a simple ideal of $\mathcal{W}_{p_+,p_-}$ \cite{AdamovicD/MilasA:2010,AdamovicD/MilasA:2011,TW}; namely, there exists a short exact sequence
\begin{align*}
0\rightarrow \mathcal{I}_{p_+,p_-}\rightarrow \mathcal{W}_{p_+,p_-}\rightarrow L(0)\rightarrow 0,\qquad p_+\geq 2
\end{align*}
as $\mathcal{W}_{p_+,p_-}$-modules.
The following proposition plays an important role in proving the simplicity of $\mathcal{W}_{1,p}$, $\mathcal{I}_{p_+,p_-}$ and determining the structure of the Zhu algebra $A(\mathcal{W}_{p_+,p_-})$ (for the definition of the Zhu algebra, see \cite{Zh}).
\begin{prop}[\cite{AdamovicD/MilasA:2008,AdamovicD/MilasA:2010,AdamovicD/MilasA:2011,TW}]
The Virasoro lowest weight vectors $w^{(n)}_{i}$ $(n\geq 0,-n\leq i\leq n)$ satisfy
\begin{align}
w^{(n+1)}_{i\pm 1}&\in \mathbb{C}^\times W^\pm[\Lambda_n-\Lambda_{n+1}]w^{(n)}_{i} +\sum_{k=0}^{n}U(\mathfrak{Vir})w^{(k)}_{i\pm 1},\label{wpq-act}
\end{align}
where we set $w^{(n)}_{n+1}=w^{(n)}_{-n-1}=0$ and $U(\mathfrak{Vir})$ is the universal enveloping algebra of $\mathfrak{Vir}$.
\label{sl2action2}
\end{prop}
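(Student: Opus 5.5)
The plan is to compute $W^\pm[\Lambda_n-\Lambda_{n+1}]w^{(n)}_i$ by realizing it as a composition of screening operators acting on a vertex-operator product, and then to reduce modulo the Virasoro submodule generated by lower $w^{(k)}$. We treat $W^-$ and $W^+$ separately. For $p_+\geq 2$ the raising/lowering index $i$ is governed by the number of $Q_+$'s. Here $W^-=Q^{[p_+-1]}_+\ket{\alpha_{p_+-1,1;-3}}$, and $w^{(n)}_i=Q^{[(n+i+1)p_+-1]}_+\ket{\alpha_{p_+-1,1;-2n-1}}$. The $W^+=Q^{[p_--1]}_-\ket{\cdot}$ case is handled by the second screening, using Proposition~\ref{prop:qr-qs} to commute $Q^{[\bullet]}_-$ past $Q^{[\bullet]}_+$.

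\textbf{Step 1 (mode as a screened vertex operator).} Since the screening operators commute with Virasoro (Proposition~\ref{com-TK}) and $Y(Q^{[m]}_+u,z)=[Q^{[m]}_+,Y(u,z)]$ on the lattice modules, we write
\[
Y(W^-,z)w^{(n)}_i = Q^{[p_+-1]}_+\bigl(Y(\ket{\alpha_{p_+-1,1;-3}},z)Q^{[(n+i+1)p_+-1]}_+\ket{\alpha_{p_+-1,1;-2n-1}}\bigr)+(\text{terms killed by }Q^{[\bullet]}_+).
\]
We then pick out the mode $\Lambda_n-\Lambda_{n+1}$. This is the lowest mode that can land in weight $\Lambda_{n+1}$, so only the leading OPE coefficient contributes. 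Concretely, $Y(\ket{\alpha},z)\ket{\beta}$ gives $z^{\alpha\beta}e^{(\alpha+\beta)\hat a}(1+O(z))$, and the selected mode produces $\ket{\alpha_{p_+-1,1;-2n-3}}$ up to Fock descendants.

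\textbf{Step 2 (combine the screening integrals).} Inside the correlation function, the product of $(n+i+1)p_+-1$ and $p_+-1$ screening currents integrated over the two twisted cycles should equal a nonzero multiple of the single cycle for $(n+i+2)p_+-1$ currents. Equivalently, the identity $Q^{[a]}_+Q^{[b]}_+\propto Q^{[a+b]}_+$ should hold, modulo images that lie in $\sum_k U(\mathfrak{Vir})w^{(k)}_{i-1}$. The scalar is a ratio of Selberg integrals, which Theorem~\ref{sus-prop} and the explicit formula of Selberg allow us to evaluate. The main obstacle is checking that this scalar is nonzero, i.e.\ that the $\Gamma$-factors at $\gamma=\alpha_+^2/2=p_-/p_+$ have no uncancelled zero. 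Since $\gamma$ is non-integral with denominator $p_+$, we would first use the analytic continuation (and an $\epsilon$-deformation $\gamma\to\gamma+\epsilon$ if needed) to take the limit and count poles against zeros. We also expect $(n+i+1)p_+-1+p_+-1$ to differ from the target index $(n+i+2)p_+-1$ by one, which is absorbed by the extra vacuum shift from Step 1; we would verify this bookkeeping carefully via the Fock indices $\alpha_{r,s;n}$.

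\textbf{Step 3 (controlling the error terms).} The remaining contributions lie in weight $\Lambda_{n+1}$ and are images of $Q^{[\bullet]}_+$ applied to Fock descendants. By Proposition~\ref{Felder complex} and the decomposition $\mathcal{I}_{p_+,p_-}\simeq\bigoplus_n(2n+1)L(\Lambda_n)$, the $\mathfrak{Vir}$-submodule of the relevant image with fixed $i\pm1$ charge is $\bigoplus_k U(\mathfrak{Vir})w^{(k)}_{i\pm1}$. Since the charge (the $\hat a$-eigenvalue) is preserved, the discrepancy therefore lies in $\sum_{k\le n+1}U(\mathfrak{Vir})w^{(k)}_{i\pm1}$. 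The $k=n+1$ component is one-dimensional in weight $\Lambda_{n+1}$, so it merges with the leading term, and the lower $k$ give the stated sum.

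For $p_+=1$ the same argument runs with $Q_+^{(m)}$ the ordinary powers (where the Felder complex is replaced by the $\ker Q_-$ description), following \cite{AdamovicD/MilasA:2008}. The endpoint cases $w^{(n+1)}_{\pm(n+1)}$ with $w^{(n)}_{\pm(n+1)}=0$ follow directly from Step 2.
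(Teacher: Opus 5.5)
There is a genuine gap, and it sits exactly where the content of the proposition lies. The vector $W^\pm[\Lambda_n-\Lambda_{n+1}]w^{(n)}_i$ lies in $\mathcal{W}_{1,p}$, resp.\ in the ideal $\mathcal{I}_{p_+,p_-}$. By $a_0$-charge and $L_0$-weight alone, together with the decomposition $\bigoplus_n(2n+1)L(\Lambda_n)$, it therefore automatically lies in $\mathbb{C}w^{(n+1)}_{i\pm1}+\sum_{k\leq n}U(\mathfrak{Vir})w^{(k)}_{i\pm1}$. The whole point of the proposition is that the coefficient of $w^{(n+1)}_{i\pm1}$ is nonzero. Your Step~3 reproduces only the automatic part and lets the $k=n+1$ component of the error terms ``merge with the leading term''. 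That is precisely where a cancellation could occur, so nonvanishing must come from Steps~1--2, and they do not provide it.

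Step~1 fails for $p_+\geq3$. There $Q^{[p_+-1]}_+$ is an integral over a twisted cycle and obeys no Leibniz rule. Moreover $Y(\ket{\alpha_{p_+-1,1;-3}},z)$ is an intertwining operator whose monodromy against $Q_+(x)$ is $e^{2\pi i(1-2p_+)\alpha_+^2}\neq1$. So $Y(Q^{[p_+-1]}_+u,z)=[Q^{[p_+-1]}_+,Y(u,z)]$ is not available; the paper only uses that $Q^{[p_+-1]}_+$ commutes with $Y(v,z)$ for $v\in\mathcal{W}_{p_+,p_-}$.

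More seriously, the bookkeeping of Step~2 cannot be repaired. $Y(\ket{\alpha},z)\ket{\beta}$ has charge exactly $\alpha+\beta$, so there is no vacuum shift available. Also $w^{(n+1)}_{i-1}=Q^{[(n+i+1)p_+-1]}_+\ket{\alpha_{p_+-1,1;-2n-3}}$ carries as many screenings as $w^{(n)}_i$, not $(n+i+2)p_+-1$. Write $\alpha_{p_+-1,1;-3}+\alpha_{p_+-1,1;-2n-1}=x_1\alpha_+$ and $\alpha_{p_+-1,1;-2n-3}=x_2\alpha_+$, with $x_1=2-(n+3)p_+$ and $x_2=1-(n+2)p_+$. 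From $h^{(\alpha_0)}_{x\alpha_+}=\tfrac{p_-}{p_+}x(x-1)+x$ one gets $h^{(\alpha_0)}_{x_1\alpha_+}-\Lambda_{n+1}=(1-p_+)\bigl(\tfrac{2p_-}{p_+}-(2n+5)p_-+1\bigr)>0$ for $p_+\geq2$. Screening operators preserve $L_0$-weight. So if the two screening integrals merged into a single $Q^{[N]}_+$ acting on $Y(u,z)\ket{\beta}$, as your Step~2 proposes, the weight-$\Lambda_{n+1}$ mode would vanish identically. The true contribution comes from configurations in which screening contours wind around $z$, and your argument never controls these.

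Even for $p_+=1$, pulling $Q_+^{n+i}$ past $Y(\ket{\alpha_{3,1}},z)$ by the Leibniz rule produces $W^0$- and $W^+$-modes. Their $w^{(n+1)}_{i-1}$-components are unknown and could cancel the leading term. Your $W^+$ case never converts the $Q_-$-screened $W^+$ into the purely $Q_+$-screened $w^{(n+1)}_{i+1}$.

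The paper does not reprove this proposition; it quotes it from Adamovi\'{c}--Milas and Tsuchiya--Wood. It does stress that derivations commuting with $\mathfrak{Vir}$ are the crucial input, and that is the missing idea. Let $X_i$ be the $w^{(n+1)}_{i-1}$-coefficient of $W^-[\Lambda_n-\Lambda_{n+1}]w^{(n)}_i$. Take a derivation $D$ commuting with $\mathfrak{Vir}$ and lowering the charge by one, with $DW^-=0$ and $Dw^{(k)}_j=d_{k,j}w^{(k)}_{j-1}$, $d_{k,j}\neq0$ for $j>-k$ (compare Theorems~\ref{G-hom}, \ref{non-Gop} and \ref{G-deri}). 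Then $D\bigl(W^-[\Lambda_n-\Lambda_{n+1}]w^{(n)}_i\bigr)=W^-[\Lambda_n-\Lambda_{n+1}]Dw^{(n)}_i$. Since $D$ maps $U(\mathfrak{Vir})w^{(k)}_j$ into $U(\mathfrak{Vir})w^{(k)}_{j-1}$, comparing $w^{(n+1)}_{i-2}$-components gives $d_{n+1,i-1}X_i=d_{n,i}X_{i-1}$, with both factors nonzero for $-n<i\leq n$.

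Everything thus reduces to the extremal coefficient $X_{-n}$. For $p_+=1$ this is exactly your leading-OPE computation: $Y(\ket{\alpha_{3,1}},z)\ket{\alpha_{1,1;-2n}}$ has leading term $\ket{\alpha_{1,1;-2n-2}}=w^{(n+1)}_{-n-1}$. This is the one place where Step~1 is sound.

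The $W^+$ case is symmetric, using a raising derivation that kills $W^+$. For $p_+=1$ this is simply $Q_+$. The top case $i=n$ then follows from the bottom one by applying $Q_+^{2n+2}$, since $Q_+^3W^-=0$ and $Q_+^{2n+1}w^{(n)}_{-n}=0$. For $p_+\geq2$ the extremal coefficients are not leading OPE terms and need a separate computation.
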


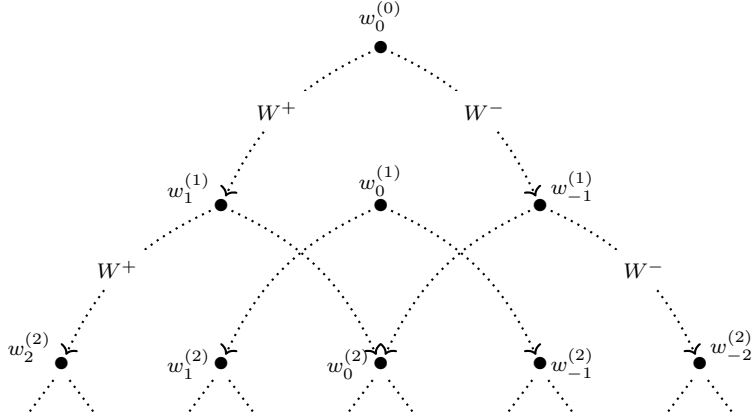
\begin{figure}[h]
  \centering
\begin{tikzpicture}[scale=1.05]

\node[inner sep=0.7pt,font=\scriptsize] (u0) at (0, 0+0.4) {$w^{(0)}_{0}$};
\node[inner sep=0.7pt] (u) at (0, 0) {$\bullet$};

\node[inner sep=0.7pt,font=\scriptsize,above left] (u0) at (0-2-0.1, 0-2) {$w^{(1)}_{1}$};
\node[inner sep=0.7pt,font=\scriptsize,above right] (u0) at (0+2+0.1, 0-2) {$w^{(1)}_{-1}$};
\node[inner sep=0.7pt,font=\scriptsize,above] (u0) at (0, 0-2+0.1) {$w^{(1)}_{0}$};
\node[inner sep=0.7pt] (w1) at (-2, -2) {$\bullet$};
\node[inner sep=0.7pt] (w0) at (0, -2) {$\bullet$};
\node[inner sep=0.7pt] (w-1) at (2, -2) {$\bullet$};

\node[inner sep=0.7pt,font=\scriptsize,above left] (u0) at (0-2-0.1-2, 0-2-2) {$w^{(2)}_{2}$};
\node[inner sep=0.7pt,font=\scriptsize,above right] (u0) at (0+2+0.1+2, 0-2-2) {$w^{(2)}_{-2}$};
\node[inner sep=0.7pt] (y2) at (-4, -4) {$\bullet$};
\node[inner sep=0.7pt] (y1) at (-2, -4) {$\bullet$};
\node[inner sep=0.7pt,left,font=\scriptsize] (t1) at (-2.1, -4) {$w^{(2)}_1$};
\node[inner sep=0.7pt] (y0) at (0, -4) {$\bullet$};
\node[inner sep=0.7pt,left,font=\scriptsize] (t1) at (-0.1, -4) {$w^{(2)}_0$};
\node[inner sep=0.7pt] (y-1) at (2, -4) {$\bullet$};
\node[inner sep=0.7pt,right,font=\scriptsize] (t1) at (2.1, -4) {$w^{(2)}_{-1}$};
\node[inner sep=0.7pt] (y-2) at (4, -4) {$\bullet$};

\draw[->,dotted,thick] (w-1) to[bend right=20]  (y0);
\draw[->,dotted,thick] (w1) to[bend right=-20]  (y0);

\draw[->,dotted,thick] (w0) to[bend right=20]  (y1);
\draw[->,dotted,thick] (w0) to[bend right=-20]  (y-1);

\draw[ dotted,thick] (4.1, -4.2) to (4.4,-4.6);
\draw[ dotted,thick] (3.9, -4.2) to (3.6,-4.6);
\draw[ dotted,thick] (-4.1, -4.2) to (-4.4,-4.6);
\draw[ dotted,thick] (-3.9, -4.2) to (-3.6,-4.6);
\draw[ dotted,thick] (-2.1, -4.2) to (-2.4,-4.6);
\draw[ dotted,thick] (-1.9, -4.2) to (-1.6,-4.6);
\draw[ dotted,thick] (-0.1, -4.2) to (-0.4,-4.6);
\draw[ dotted,thick] (0.1, -4.2) to (0.4,-4.6);
\draw[ dotted,thick] (2.1, -4.2) to (2.4,-4.6);
\draw[ dotted,thick] (1.9, -4.2) to (1.6,-4.6);

\draw[->,dotted,thick] (u) to[bend right=20]  (w1);
\draw (-1-0.3,-1+0.2) node[fill=white,font=\scriptsize]{$W^+$};
\draw[->,dotted,thick] (u) to[bend right=-20]  (w-1);
\draw (1+0.3,-1+0.2) node[fill=white,font=\scriptsize]{$W^-$};

\draw[->,dotted,thick] (w1) to[bend right=20]  (y2);
\draw (-1-0.3-2,-1+0.2-2) node[fill=white,font=\scriptsize]{$W^+$};
\draw[->,dotted,thick] (w-1) to[bend right=-20]  (y-2);
\draw (1+0.3+2,-1+0.2-2) node[fill=white,font=\scriptsize]{$W^-$};
\end{tikzpicture}
\caption{\label{51fig-wpq}A schematic diagram of each action in Proposition~\ref{sl2action2}. The black dots represent the vectors $w^{(n)}_i$, and the arrows indicate a part of (\ref{wpq-act}).
}
\mbox{}\\
\end{figure}

See Figure~\ref{51fig-wpq} for a part of the action described in this proposition.
The derivations acting on $\mathcal{W}_{p_+,p_-}$ play a crucial role in the proof of this proposition. In the following two subsections, we give a detailed construction of these derivations and discuss their properties.

\subsection{Linear operators $\mathbb{G}^{[p_\pm]}_\pm$}
\label{cons-main0}
Recall the screening operators $Q^{[\bullet]}_\pm$ defined by (\ref{def-pq-sc}).
It is known that, when $\rho = \alpha_0$ in $\mathcal{Q}^{[\bullet];(\rho)}_\pm$, the analytic behavior prevents a direct definition of twisted cycles \cite{AK,TK,TW} (see Remark~\ref{rem:aktk}).
To resolve this difficulty, \cite{TW} introduced an $\epsilon$-deformation method, by which the problem of constructing twisted cycles is reduced to analyzing the holomorphicity of the image of the screening operator at $\epsilon=0$. 

To describe this $\epsilon$-deformation method, we introduce some notation.
Following \cite[Section 3]{TW}, we set
\begin{align}
\label{notep1}
\widetilde{\alpha}_+(\epsilon)=\alpha_++\epsilon,\qquad\qquad
\widetilde{\alpha}_-(\epsilon)=-\frac{2}{\widetilde{\alpha}_+(\epsilon)},\qquad\qquad
\widetilde{\alpha}_0(\epsilon)=\widetilde{\alpha}_+(\epsilon)+\widetilde{\alpha}_-(\epsilon),
\end{align}
for $\epsilon \in \mathbb{C}$.
Note that $\widetilde{\alpha}_\pm(0)=\alpha_\pm$ and $\widetilde{\alpha}_0(0)=\alpha_0$. 
The elements $\widetilde{\alpha}_\pm(\epsilon)$ can be regarded as $\epsilon$-deformations of $\alpha_\pm$ that preserve the relation $\alpha_+\alpha_-=-2$.
We define
\begin{align*}
\begin{aligned}
\mathcal{O}_{0;\epsilon}&:=\{\ g(\epsilon)\ |\ g(\epsilon)\ {\rm is}\ {\rm holomorphic}\ {\rm on}\ {\rm some}\ {\rm neighborhood}\ U\ {\rm of}\ 0\}.
\end{aligned}
\end{align*}
For $\gamma(\epsilon)\in \mathcal{O}_{0;\epsilon}$, we denote by $\widetilde{F}^{(\rho)}_{\gamma(\epsilon)}$ the Fock module over $\mathcal{O}_{0;\epsilon}$ 
generated by a vector $\ket{\gamma(\epsilon)}$ satisfying
\begin{align*}
a_0\ket{\gamma(\epsilon)}=\gamma(\epsilon)\ket{\gamma(\epsilon)}, 
\qquad 
a_n\ket{\gamma(\epsilon)}=0 \quad (n>0).
\end{align*}
Here, we regard $\widetilde{F}^{(\rho)}_{\gamma(\epsilon)}$ as a $\pi^{(\rho)}_{\gamma(\epsilon)}(U(\mathfrak{Vir}))$-module in the natural way, for compatibility with the notation introduced in Subsection~\ref{subsc}.
We introduce the notation
\begin{align}
\label{notep2}
\begin{aligned}
\widetilde{\alpha}_{r,s}(\epsilon)&:=\alpha^{(\widetilde{\alpha}_{0}(\epsilon))}_{r,s},&\widetilde{F}_{r,s}&:=\widetilde{F}^{(\widetilde{\alpha}_{0}(\epsilon))}_{\widetilde{\alpha}_{r,s}(\epsilon)},\\
\widetilde{Q}^{[r]}_+(z)&:=\mathcal{Q}^{(\widetilde{\alpha}_0(\epsilon));[r]}_+(z),&\widetilde{Q}^{[s]}_-(z)&:=\mathcal{Q}^{(\widetilde{\alpha}_0(\epsilon));[s]}_-(z),\\
\widetilde{Q}^{[r]}_+&:=\mathcal{Q}^{(\widetilde{\alpha}_0(\epsilon));[r]}_+,&\widetilde{Q}^{[s]}_-&:=\mathcal{Q}^{(\widetilde{\alpha}_0(\epsilon));[s]}_-,
\end{aligned}
\end{align}
and we set $\widetilde{Q}_\pm(z)=\widetilde{Q}^{[1]}_\pm(z)$, $\widetilde{Q}_\pm=\widetilde{Q}^{[1]}_\pm$. 
Note that $\widetilde{\alpha}_{r,s}(0)=\alpha_{r,s}$.
Then we shall call $\widetilde{F}_{r,s}$ an $\epsilon$-deformation or an $\epsilon$-lifting of $F_{r,s}$.
To simplify the notation, we will often omit $(\epsilon)$ from each notation of (\ref{notep1}) and (\ref{notep2}).

Let $p_+-1\geq r\geq 2$, $p_--1\geq s\geq 2$. Since 
\begin{align}
\label{cond-Ahat}
\begin{aligned}
&\Bigl((1-r)\frac{\widetilde{\alpha}_+(\epsilon)^2}{2},\widetilde{\alpha}_+(\epsilon)^2,\frac{\widetilde{\alpha}_+(\epsilon)^2}{2}\Bigr)\notin \widehat{\mathcal{A}}_{r-1},\qquad p_+-1\geq r\geq 2,\\
&\Bigl((1-s)\frac{\widetilde{\alpha}_-(\epsilon)^2}{2},\widetilde{\alpha}_-(\epsilon)^2,\frac{\widetilde{\alpha}_-(\epsilon)^2}{2}\Bigr)\notin \widehat{\mathcal{A}}_{s-1},\qquad p_--1\geq s\geq 2
\end{aligned}
\end{align}
hold for sufficiently small values of $|\epsilon|\geq 0$, we see that the screening operators $\widetilde{Q}^{[r]}_+$ and $\widetilde{Q}^{[s]}_-$
are well-defined on $\widetilde{F}_{r,k}$ and $\widetilde{F}_{k,s}$ $(k\in \mathbb{Z})$, respectively (see Subsection \ref{subsc}). 
Then, from Proposition~\ref{TK-thm} and Theorem~\ref{sus-prop}, we have the following proposition.
\begin{prop}[\cite{TW}]
Let $p_+-1\geq r\geq 1$, $p_--1\geq s\geq 1$, and $k\in \mathbb{Z}$.
Then we have
\begin{align*}
&\widetilde{Q}^{[r]}_+({\widetilde{F}_{r,k}})\subset \widetilde{F}_{-r,k},
&\widetilde{Q}^{[s]}_-({\widetilde{F}_{k,s}})\subset \widetilde{F}_{k,-s}.
\end{align*}
In particular, we can set $\epsilon=0$ on the images of $\widetilde{Q}^{[r]}_+\circ e^{(\widetilde{\alpha}_{r,k}-\alpha_{r,k})\hat{a}}|_{F_{r,k}}$ and $\widetilde{Q}^{[s]}_-\circ e^{(\widetilde{\alpha}_{k,s}-\alpha_{k,s})\hat{a}}|_{F_{k,s}}$. Then, by the definition of $\widetilde{Q}^{[\bullet]}_\pm$ (see (\ref{notep2}) and Proposition~\ref{com-TK}), we have
\begin{align}
\label{eq:def-qq}
\begin{aligned}
&\widetilde{Q}^{[r]}_+\circ e^{(\widetilde{\alpha}_{r,k}-\alpha_{r,k})\hat{a}}|_{F_{r,k}}|_{\epsilon=0}={Q}^{[r]}_+|_{F_{r,k}},
&\widetilde{Q}^{[s]}_-\circ e^{(\widetilde{\alpha}_{k,s}-\alpha_{k,s})\hat{a}}|_{F_{k,s}}|_{\epsilon=0}={Q}^{[s]}_-|_{F_{k,s}}.
\end{aligned}
\end{align}
\label{prop:tw}
\end{prop}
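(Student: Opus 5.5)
The plan is to reduce integrality to the analytic continuation of Selberg-type integrals. Fix $u\in F_{r,k}$ and transport it to $\widetilde{F}_{r,k}$ via $e^{(\widetilde{\alpha}_{r,k}-\alpha_{r,k})\hat{a}}$; since the Fock space is freely generated by the $a_{-n}$, $u$ is a polynomial in the $a_{-n}$ with constant coefficients, hence an $\mathcal{O}_{0;\epsilon}$-vector. The Fock modules are graded with finite-dimensional weight spaces, so it suffices to show that for each dual basis vector $\psi^\vee=\bra{\widetilde{\alpha}_{-r,k}}a_{\lambda}$ the matrix coefficient $(\psi^\vee,\widetilde{Q}^{[r]}_+u)$ lies in $\mathcal{O}_{0;\epsilon}$. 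The $\widetilde{Q}^{[s]}_-$ case is identical with $\widetilde{\alpha}_-$ in place of $\widetilde{\alpha}_+$.

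First I take the residue in $z$ and use (\ref{Phi-cor-ex}) and (\ref{patic-zU}). The matrix coefficient becomes a twisted-cycle integral over $[\Delta^{(\widetilde{\alpha}_+)}_{r-1}]$ of $U^{(\widetilde{\alpha}_+)}_{r-1}(\bm{y})$ times the residue in $z$ of $z^{\bullet}\bigl(\psi^\vee,:\prod\overline{Y}(\ket{\widetilde{\alpha}_+},z_i):u\bigr)$ evaluated at $z_1=z$, $z_{i+1}=zy_i$. This last factor is a Laurent polynomial in $z,y_1,\dots,y_{r-1}$. Its coefficients are polynomial in $\widetilde{\alpha}_+(\epsilon)$, hence holomorphic in $\epsilon$; this comes from the exponential formula for $\overline{Y}$, with only finitely many terms contributing by weight. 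After taking the residue we get a finite sum $\sum_j c_j(\epsilon)S_{r-1}[f_j]\bigl((1-r)\widetilde{\alpha}_+^2/2,\widetilde{\alpha}_+^2,\widetilde{\alpha}_+^2/2\bigr)$ with $c_j\in\mathcal{O}_{0;\epsilon}$ and $f_j$ Laurent polynomials in $\bm{y}$.

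Next I symmetrize. The integrand $U^{(\widetilde{\alpha}_+)}_{r-1}$ is symmetric in $\bm{y}$ for the symmetric cycle, and the normal-ordered product is symmetric in the $z_{i+1}=zy_i$. Together these show the polynomial part is a symmetric Laurent polynomial in $\bm{y}$, so Theorem~\ref{sus-prop} applies. By (\ref{cond-Ahat}) the parameter point stays off $\widehat{\mathcal{A}}_{r-1}$ for $|\epsilon|$ small, including $\epsilon=0$, and it depends holomorphically on $\epsilon$. Hence each $S_{r-1}[f_j]$ is holomorphic near $\epsilon=0$, and the image lies in $\widetilde{F}_{-r,k}$ over $\mathcal{O}_{0;\epsilon}$. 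The case $r=1$ is trivial: it is just a residue with polynomial coefficients.

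For (\ref{eq:def-qq}), the value at $\epsilon=0$ of the analytically continued integral agrees with the integral defining $Q^{[r]}_+$. This uses the convention of Remark~\ref{del-sus2}, under which $[\Delta^{(\alpha_+)}_{r-1}]$ and $Q^{[r]}_+$ are defined precisely by this continuation. Continuity in $\epsilon$ of the coefficients then gives the limit. Proposition~\ref{com-TK} ensures the limit is the Virasoro-commuting screening operator.

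The main obstacle is the symmetry step. I need to check that the cycle and the multivalued factor are compatible with permutations, so that the non-symmetric components integrate to combinations of symmetric ones. Otherwise Theorem~\ref{sus-prop} does not literally apply, and I would fall back on Proposition~\ref{TK-thm} with the coarser arrangement $\mathcal{A}_{r-1}$, which $\epsilon=0$ may hit. My first attempt is to note that the integrand is symmetric under relabeling the $y_i$ and that the regularized $\Delta_{r-1}$ equals $\frac{1}{(r-1)!}$ times the symmetrized cycle. Then $\int f=\int \mathrm{Sym}(f)$.
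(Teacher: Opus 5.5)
Your proposal is correct and follows essentially the paper's route. The paper deduces the proposition directly from Proposition~\ref{TK-thm} and Theorem~\ref{sus-prop} via (\ref{cond-Ahat}), which holds at $\epsilon=0$ as well (the original argument in \cite{TW} used Jack polynomials), and your expansion of each matrix coefficient as $\sum_j c_j(\epsilon)S_{r-1}[f_j]$ with holomorphic parameters is exactly the unpacking of that step.

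The symmetry issue you flag as the main obstacle is automatic. The normal-ordered product $:\prod_{i=1}^r\overline{Y}(\ket{\widetilde{\alpha}_+},z_i):$ is symmetric in all the $z_i$, and the monomial prefactor $\prod_i y_i^{k-1}$ coming from (\ref{eq:cor-phi}) is also symmetric. So each $f_j$ is already a symmetric Laurent polynomial, and no symmetrized-cycle fallback is needed.
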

By this proposition, we can regard $\widetilde{Q}^{[n]}_\pm$ as an $\epsilon$-deformation of ${Q}^{[n]}_\pm$.
\begin{Remark}
\begin{enumerate}
\item Proposition \ref{prop:tw} was proved in \cite{TW} using the theory of Jack symmetric polynomials.
It is known that this proposition also holds when $r$ and $s$ are greater than $p_+$ and $p_-$, respectively, and are not multiples of $p_+$ and $p_-$ \cite{TW}.
This fact can also be shown by using \cite[Theorem 1.2]{Su}.
\item 
The equation (\ref{eq:def-qq}) may be viewed as a defining relation for ${Q}^{[n]}_\pm$ in terms of the deformed screening operators $\widetilde{Q}^{[n]}_\pm$ (see also Remarks \ref{del-sus0} and \ref{del-sus2}).
\end{enumerate}
\label{rem:aktk}
\end{Remark}

We introduce the notation
\begin{align}
{\rm e}^{\alpha}=e^{\alpha \hat{a}}
\label{conjugate2}
\end{align}
(see (\ref{conjugate}), for the definition of $\hat{a}$).
Motivated by the idea of~\cite{TW}, we consider the following two operators, which can be regarded as $\epsilon$-deformations of the compositions 
\begin{align}
\label{fel-dd}
&Q^{[r^\vee]}_+\circ Q^{[r]}_+=0,
&Q^{[s^\vee]}_+\circ Q^{[s]}_-=0
\end{align}
(see Proposition \ref{Felder complex2}).
\begin{dfn}
For $p_+>r\geq 1$, $p_->s\geq 1$, and $k\in \mathbb{Z}$, we define linear operators
\begin{align}
\label{Q+Q}
\begin{aligned}
&N^{[r^\vee,r]}_{+,k}(\epsilon):=\widetilde{Q}^{[p_+-r]}_+\circ{\rm e}^{\widetilde{\alpha}_{p_+-r,k+p_-}-\widetilde{\alpha}_{-r,k}}\circ\widetilde{Q}^{[r]}_+\circ{\rm e}^{\widetilde{\alpha}_{r,k}-\alpha_{r,k}},\\
&N^{[s^\vee,s]}_{-,k}(\epsilon):=\widetilde{Q}^{[p_--s]}_-\circ{\rm e}^{\widetilde{\alpha}_{k+p_+,p_--s}-\widetilde{\alpha}_{k,-s}}\circ\widetilde{Q}^{[s]}_-\circ{\rm e}^{\widetilde{\alpha}_{k,s}-\alpha_{k,s}}
\end{aligned}
\end{align}
whose domains are $F_{r,k}$ and $F_{k,s}$, respectively.
\end{dfn}
Note that the presence of the shift elements ${\rm e}^\bullet$ in (\ref{Q+Q}) ensures that the actions of the deformed screening operators $\widetilde{Q}^{[\bullet]}_\pm$ are well-defined:
\begin{align*}
\begin{aligned}
&{\rm e}^{\widetilde{\alpha}_{r,k}-\alpha_{r,k}}F_{r,k}\subset \widetilde{F}_{r,k},\qquad \qquad {\rm e}^{\widetilde{\alpha}_{k,s}-\alpha_{k,s}}F_{k,s}\subset \widetilde{F}_{k,s},\\
&{\rm e}^{\widetilde{\alpha}_{p_+-r,k+p_-}-\widetilde{\alpha}_{-r,k}}\circ\widetilde{Q}^{[r]}_+\circ{\rm e}^{\widetilde{\alpha}_{r,k}-\alpha_{r,k}}F_{r,k}\subset \widetilde{F}_{p_+-r,k+p_-},\\
&{\rm e}^{\widetilde{\alpha}_{k+p_+,p_--s}-\widetilde{\alpha}_{k,-s}}\circ\widetilde{Q}^{[s]}_-\circ{\rm e}^{\widetilde{\alpha}_{k,s}-\alpha_{k,s}}F_{k,s}\subset \widetilde{F}_{k+p_-,p_--s}.
\end{aligned}
\end{align*}
We also note that the functions in the superscript of the shift elements ${\rm e}^\bullet$ are contained in $\epsilon\cdot\mathcal{O}_{0;\epsilon}$.
By Proposition \ref{prop:tw} and (\ref{fel-dd}), we have
\begin{align*}
\begin{aligned}
&\widetilde{Q}^{[p_+-r]}_+\circ{\rm e}^{\widetilde{\alpha}_{p_+-r,k+p_-}-\widetilde{\alpha}_{-r,k}}\circ\widetilde{Q}^{[r]}_+(\widetilde{F}_{r,k})\subset \epsilon\cdot \widetilde{F}_{r-p_+,k+p_-},\\
&\widetilde{Q}^{[p_--s]}_-\circ{\rm e}^{\widetilde{\alpha}_{k+p_+,p_--s}-\widetilde{\alpha}_{k,-s}}\circ\widetilde{Q}^{[s]}_-(\widetilde{F}_{k,s})
\subset \epsilon\cdot \widetilde{F}_{k+p_+,s-p_-}.
\end{aligned}
\end{align*}
In particular, we get
\begin{align}
\label{N+N-}
\begin{aligned}
&{\rm e}^{\alpha_{r,k+2p_-}-\widetilde{\alpha}_{r-p_+,k+p_-}}\circ N^{[r^\vee,r]}_{+,k}(\epsilon)\in {\rm Hom}_{\mathbb{C}}( F_{r,k},\epsilon\cdot\mathcal{O}_{0;\epsilon}\otimes F_{r,k;2}),\\
&{\rm e}^{\alpha_{k+2p_+,s}-\widetilde{\alpha}_{k+p_+,s-p_-}}\circ N^{[s^\vee,s]}_{-,k}(\epsilon)\in {\rm Hom}_{\mathbb{C}}( F_{k,s},\epsilon\cdot\mathcal{O}_{0;\epsilon}\otimes F_{k,s;-2}).
\end{aligned}
\end{align}
\begin{Remark}
\begin{enumerate}
\item 
By definition, the domain of the holomorphic functions in the image of (\ref{N+N-}) can be uniquely restricted.
More precisely, by fixing a sufficiently small $\delta > 0$ so that condition (\ref{cond-Ahat}) holds when $\epsilon=\delta$ for all $2 \leq r \leq p_+ - 1$ and $2 \leq s \leq p_- - 1$, we have
\begin{align*}
\begin{aligned}
&{\rm e}^{\alpha_{r,k+2p_-}-\widetilde{\alpha}_{r-p_+,k+p_-}}\circ N^{[r^\vee,r]}_{+,k}(\epsilon)\in {\rm Hom}_{\mathbb{C}}( F_{r,k},\epsilon\cdot\mathcal{O}_{\epsilon}(\mathbb{D}_\delta)\otimes F_{r,k;2}),\\
&{\rm e}^{\alpha_{k+2p_+,s}-\widetilde{\alpha}_{k+p_+,s-p_-}}\circ N^{[s^\vee,s]}_{-,k}(\epsilon)\in {\rm Hom}_{\mathbb{C}}( F_{k,s},\epsilon\cdot\mathcal{O}_{\epsilon}(\mathbb{D}_\delta)\otimes F_{k,s;-2}),
\end{aligned}
\end{align*}
where $\mathbb{D}_\delta:=\{\epsilon \in \mathbb{C}\ |\ |\epsilon|<\delta\}$ and
\begin{align*}
\begin{aligned}
\mathcal{O}_{\epsilon}(\mathbb{D}_\delta)&:=\{\ g: \mathbb{D}_\delta\rightarrow \mathbb{C}\ |\ g(\epsilon)\ {\rm is}\ {\rm holomorphic}\ {\rm on}\ \mathbb{D}_\delta\}.
\end{aligned}
\end{align*}

\item By dividing 
\begin{align*}
&{\rm e}^{\alpha_{r,k+2p_-}-\widetilde{\alpha}_{r-p_+,k+p_-}}\circ N^{[r^\vee,r]}_{+,k}(\epsilon),
&{\rm e}^{\alpha_{k+2p_+,s}-\widetilde{\alpha}_{k+p_+,s-p_-}}\circ N^{[s^\vee,s]}_{-,k}(\epsilon)
\end{align*}
by the products
\begin{align*}
\begin{aligned}
&S_{r-1}\bigl((1-r)\frac{\widetilde{\alpha}_+(\epsilon)^2}{2},\widetilde{\alpha}_+(\epsilon)^2,\frac{\widetilde{\alpha}_+(\epsilon)^2}{2}\bigr)S_{r^\vee-1}\bigl((1-r^\vee)\frac{\widetilde{\alpha}_+(\epsilon)^2}{2},\widetilde{\alpha}_+(\epsilon)^2,\frac{\widetilde{\alpha}_+(\epsilon)^2}{2}\bigr),\\
&S_{s-1}\bigl((1-s)\frac{\widetilde{\alpha}_-(\epsilon)^2}{2},\widetilde{\alpha}_-(\epsilon)^2,\frac{\widetilde{\alpha}_-(\epsilon)^2}{2}\bigr)S_{s^\vee-1}\bigl((1-s^\vee)\frac{\widetilde{\alpha}_-(\epsilon)^2}{2},\widetilde{\alpha}_-(\epsilon)^2,\frac{\widetilde{\alpha}_-(\epsilon)^2}{2}\bigr),
\end{aligned}
\end{align*}
respectively, the $\mathcal{O}_{0;\epsilon}$ in (\ref{N+N-}) can be replaced by $\mathbb{C}(\epsilon)$.
This is because the image of $\widetilde{Q}^{[\bullet]}_\pm$ can be described in terms of Jack symmetric plynomials \cite{TW}.
\end{enumerate}
\end{Remark}
Note that any $\mathcal{L}_0$-homogeneous vector of $\widetilde{F}^{(\rho)}_{\gamma(\epsilon)}$ can be written uniquely in one of the following forms
\begin{align*}
\begin{aligned}
 \sum_{\lambda\vdash n}h_\lambda(\epsilon)a_{-\lambda}\ket{\gamma(\epsilon)},\ \ \  n>0,\qquad \qquad
h_0(\epsilon)\ket{\gamma(\epsilon)},
\end{aligned}
\end{align*}
where $h_\lambda(\epsilon), h_0(\epsilon)\in \mathcal{O}_{0;\epsilon}$ (for the notaion $a_{-\lambda}$, see (\ref{a-lambda})).
Using this notation, we introduce the following definition.
\begin{dfn}
Let $\gamma(\epsilon)\in \mathcal{O}_{0;\epsilon}$. Let $\{h_\lambda(\epsilon)\}$ be functions in $\mathcal{O}_{0;\epsilon}$ and indexed by partitions $\lambda$ of $n$. Define
\begin{align}
\begin{aligned}
 \lim_{\epsilon\rightarrow 0}\hspace{-1mm}{}^F\Bigl(\sum_{\lambda\vdash n}h_\lambda(\epsilon)a_{-\lambda}\ket{\gamma(\epsilon)}\Bigr)&:=
\sum_{\lambda\vdash n}\lim_{\substack{\epsilon\rightarrow 0\\ \epsilon\neq0}}h_\lambda(\epsilon)a_{-\lambda}\ket{\gamma(0)},\qquad n>0,\\
\lim_{\epsilon\rightarrow 0}\hspace{-1mm}{}^F\Bigl(h_0(\epsilon)\ket{\gamma(\epsilon)}\Bigr)&:=
\lim_{\substack{\epsilon\rightarrow 0\\ \epsilon\neq0}}h_0(\epsilon)\ket{\gamma(0)}.
\end{aligned}
\end{align}
We extend the above definition linearly to arbitrary elements of $\widetilde{F}^{(\rho)}_{\gamma(\epsilon)}$.
\label{notlim}
\end{dfn}
Note that, since $\gamma(\epsilon)\in \mathcal{O}_{0;\epsilon}$, the action of the zero-mode $a_0$ commutes with the above limit.
\begin{Remark}
From Definition~\ref{notlim} and (\ref{conjugate}), the following equality holds for any $\tilde{u}\in \widetilde{F}^{(\rho)}_{\gamma(\epsilon)}$ and $g(\epsilon)\in \mathcal{O}_{0;\epsilon}$:
\begin{align*}
\lim_{\epsilon\rightarrow 0}\hspace{-1mm}{}^F\tilde{u}=\lim_{\epsilon\rightarrow 0}\hspace{-1mm}{}^F{\rm e}^{\epsilon g(\epsilon)}\tilde{u}.
\end{align*}
\label{Rem:lim}
\end{Remark}

Taking (\ref{N+N-}) into account,
we define the following operators.
\begin{dfn}
For $p_+>r\geq 1$, $p_->s\geq 1$, and $k\in \mathbb{Z}$, we define
\begin{align}
\label{G+G-}
\begin{aligned}
G^{[r^\vee,r]}_{+,k}:=\lim_{\epsilon\rightarrow 0}\hspace{-1mm}{}^F \frac{1}{\epsilon}N^{[r^\vee,r]}_{+,k}(\epsilon),\qquad\qquad
G^{[s^\vee,s]}_{-,k}:=\lim_{\epsilon\rightarrow 0}\hspace{-1mm}{}^F \frac{1}{\epsilon}N^{[s^\vee,s]}_{-,k}(\epsilon).
\end{aligned}
\end{align}
\end{dfn}
We see that $G^{[r^\vee,r]}_{+,k}$ and $G^{[s^\vee,s]}_{-,k}$ are uniquely determined, regardless of how the limit is taken,
and that 
$G^{[r^\vee,r]}_{+,k}\in {\rm Hom}_{\mathbb{C}}( F_{r,k}, F_{r,k;2})$
and
$G^{[s^\vee,s]}_{-,k}\in {\rm Hom}_{\mathbb{C}}( F_{k,s}, F_{k,s;-2})$.
In what follows, we show that in the case $r=s=1$, these operators (\ref{G+G-}), when restricted to ${\rm ker}Q_\pm$, are independent of $k$. 
The following lemma is immediate from Proposition~\ref{prop:tw} and Definition~\ref{notlim}.
\begin{lem}
Let $p_+>r\geq 1$, $p_->s\geq 1$, and $k\in \mathbb{Z}$. Then, for any $\tilde{u}\in \widetilde{F}_{r,k}$ and $\tilde{v}\in \widetilde{F}_{k,s}$, we have
\begin{align*}
&{Q}^{[r]}_+\lim_{\epsilon\rightarrow 0}\hspace{-1mm}{}^F\tilde{u}=\lim_{\epsilon\rightarrow 0}\hspace{-1mm}{}^F\widetilde{Q}^{[r]}_+\tilde{u},
&{Q}^{[s]}_-\lim_{\epsilon\rightarrow 0}\hspace{-1mm}{}^F\tilde{v}=\lim_{\epsilon\rightarrow 0}\hspace{-1mm}{}^F\widetilde{Q}^{[s]}_-\tilde{v}.
\end{align*}
\label{lem:tw0}
\end{lem}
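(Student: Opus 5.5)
By linearity of $\lim^F_{\epsilon\to0}$ and of the screening operators, it suffices to treat an $\mathcal{L}_0$-homogeneous element $\tilde{u}=\sum_{\lambda\vdash n}h_\lambda(\epsilon)a_{-\lambda}\ket{\widetilde{\alpha}_{r,k}(\epsilon)}$ with $h_\lambda\in\mathcal{O}_{0;\epsilon}$. I argue only the $+$ case; the $-$ case is identical with $(r,\widetilde{\alpha}_+)$ replaced by $(s,\widetilde{\alpha}_-)$. The right-hand side makes sense because, by Proposition~\ref{prop:tw}, $\widetilde{Q}^{[r]}_+$ maps $\widetilde{F}_{r,k}$ into $\widetilde{F}_{-r,k}$, i.e.\ its output has $\mathcal{O}_{0;\epsilon}$-coefficients. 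So $\lim^F$ of the output is defined.

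First I write $\tilde{u}={\rm e}^{\widetilde{\alpha}_{r,k}-\alpha_{r,k}}u(\epsilon)$, where
\begin{align*}
u(\epsilon):=\sum_{\lambda\vdash n}h_\lambda(\epsilon)a_{-\lambda}\ket{\alpha_{r,k}}\in\mathcal{O}_{0;\epsilon}\otimes F_{r,k}.
\end{align*}
This is legitimate since ${\rm e}^{\bullet}$ commutes with the $a_{-m}$, $m>0$. Next I expand $h_\lambda(\epsilon)=h_\lambda(0)+\epsilon g_\lambda(\epsilon)$ with $g_\lambda\in\mathcal{O}_{0;\epsilon}$. This gives
\begin{align*}
\tilde{u}={\rm e}^{\widetilde{\alpha}_{r,k}-\alpha_{r,k}}u_0+\epsilon\,{\rm e}^{\widetilde{\alpha}_{r,k}-\alpha_{r,k}}u_1(\epsilon),
\end{align*}
where $u_0=\lim^F\tilde{u}\in F_{r,k}$ is independent of $\epsilon$ and $u_1(\epsilon)$ again has holomorphic coefficients.

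I then apply $\widetilde{Q}^{[r]}_+$ to each term. For the first term, (\ref{eq:def-qq}) says that $\widetilde{Q}^{[r]}_+\circ{\rm e}^{\widetilde{\alpha}_{r,k}-\alpha_{r,k}}u_0$ is holomorphic in $\epsilon$ and its value at $\epsilon=0$ is $Q^{[r]}_+u_0$. By Definition~\ref{notlim}, its $\lim^F$ is therefore $Q^{[r]}_+\lim^F\tilde{u}$. For the second term, Proposition~\ref{prop:tw} applied to the holomorphic family shows that $\widetilde{Q}^{[r]}_+{\rm e}^{\bullet}u_1(\epsilon)$ lies in $\widetilde{F}_{-r,k}$. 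Multiplied by $\epsilon$, its coefficients vanish at $0$, so its $\lim^F$ is zero.

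The main obstacle is the $\mathcal{O}_{0;\epsilon}$-linearity of $\widetilde{Q}^{[r]}_+$ on $\widetilde{F}_{r,k}$, which lets the factor $\epsilon$ pass through the operator. I would check this directly from the definition (\ref{Tsuchiya-Kanie0}). Each graded piece of the output is a finite sum of Selberg-type integrals $S_{r-1}[f]$ with symmetric Laurent polynomial $f$, multiplied by the coefficients $h_\lambda(\epsilon)$. Theorem~\ref{sus-prop} together with (\ref{cond-Ahat}) shows these integrals are holomorphic near $\epsilon=0$, so the operator is indeed linear over holomorphic coefficients. One must also use that $\lim^F$ of a vector in $\widetilde{F}_{-r,k}$ is computed coefficientwise in the basis $a_{-\lambda}\ket{\widetilde{\alpha}_{-r,k}(\epsilon)}$, which matches the grading.
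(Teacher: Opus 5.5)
Your argument is correct and is essentially the paper's. The paper simply records the lemma as immediate from Proposition~\ref{prop:tw} and Definition~\ref{notlim}, and your splitting $\tilde u={\rm e}^{\widetilde{\alpha}_{r,k}-\alpha_{r,k}}u_0+\epsilon\,{\rm e}^{\widetilde{\alpha}_{r,k}-\alpha_{r,k}}u_1(\epsilon)$ just spells that out. The $\mathcal{O}_{0;\epsilon}$-linearity you flag as the main obstacle is in fact automatic, since $\widetilde{Q}^{[r]}_+$ acts pointwise in $\epsilon$; the only real input is the holomorphy at $\epsilon=0$ of its matrix elements, which is exactly Proposition~\ref{prop:tw}.
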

Note that by Proposition~\ref{prop:tw},
\begin{align*}
\left.{\rm im}\Bigl(\frac{1}{\epsilon}{\rm e}^{\widetilde{\alpha}_{r^\vee,k+p_-}-\widetilde{\alpha}_{-r,k}}\widetilde{Q}^{[r]}_+{\rm e}^{\widetilde{\alpha}_{r,k}-\alpha_{r,k}}\right|_{F_{r,k}\cap {\rm ker}Q^{[r]}_+}\Bigr)\subset\widetilde{F}_{r^\vee,k+p_-}.
\end{align*}
Then, from Remark~\ref{Rem:lim} and Lemma~\ref{lem:tw0}, we obtain following lemma.
\begin{lem}
\label{lem:tw}
For $p_+>r\geq 1$, $p_->s\geq 1$, and $k\in \mathbb{Z}$, we have
\begin{align*}
{G}^{[r^\vee,r]}_{+,k}&=Q^{[r^\vee]}_+\lim_{\epsilon\rightarrow 0}\hspace{-1mm}{}^F\frac{1}{\epsilon}\widetilde{Q}^{[r]}_+{\rm e}^{\widetilde{\alpha}_{r,k}-\alpha_{r,k}},\qquad {\rm on}\ F_{r,k}\cap {\rm ker}Q^{[r]}_+,\\
{G}^{[s^\vee,s]}_{-,k}&=Q^{[s^\vee]}_-\lim_{\epsilon\rightarrow 0}\hspace{-1mm}{}^F\frac{1}{\epsilon}\widetilde{Q}^{[s]}_-{\rm e}^{\widetilde{\alpha}_{k,s}-\alpha_{k,s}},\qquad {\rm on}\ F_{k,s}\cap {\rm ker}Q^{[s]}_-.
\end{align*}
\end{lem}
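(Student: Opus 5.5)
We prove the statement for $G^{[r^\vee,r]}_{+,k}$; the case of $G^{[s^\vee,s]}_{-,k}$ is identical after exchanging the roles of $\pm$ and of the indices. The idea is to split $\frac{1}{\epsilon}N^{[r^\vee,r]}_{+,k}(\epsilon)$ into an outer deformed screening operator applied to an $\mathcal{O}_{0;\epsilon}$-integral vector. We then push $\lim^F$ through the outer operator using Lemma~\ref{lem:tw0}, and remove the intermediate shift by Remark~\ref{Rem:lim}.

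First, fix $u\in F_{r,k}\cap{\rm ker}Q^{[r]}_+$ and set
\begin{align*}
\tilde{w}(\epsilon):=\widetilde{Q}^{[r]}_+{\rm e}^{\widetilde{\alpha}_{r,k}-\alpha_{r,k}}u\in\widetilde{F}_{-r,k}.
\end{align*}
By Proposition~\ref{prop:tw}, (\ref{eq:def-qq}) and Definition~\ref{notlim}, we have $\lim^F\tilde{w}=Q^{[r]}_+u=0$. Since each coefficient $h_\lambda(\epsilon)$ of $\tilde w$ lies in $\mathcal{O}_{0;\epsilon}$ and vanishes at $\epsilon=0$, each is divisible by $\epsilon$ in $\mathcal{O}_{0;\epsilon}$. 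Hence $\frac{1}{\epsilon}\tilde{w}\in\widetilde{F}_{-r,k}$, and so
\begin{align*}
\tilde{v}(\epsilon):={\rm e}^{\widetilde{\alpha}_{r^\vee,k+p_-}-\widetilde{\alpha}_{-r,k}}\tfrac{1}{\epsilon}\tilde{w}(\epsilon)\in\widetilde{F}_{r^\vee,k+p_-}.
\end{align*}
This is exactly the inclusion noted just before the statement.

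Next, by definition (\ref{Q+Q}) and linearity, $\frac{1}{\epsilon}N^{[r^\vee,r]}_{+,k}(\epsilon)u=\widetilde{Q}^{[r^\vee]}_+\tilde{v}(\epsilon)$. We then apply Lemma~\ref{lem:tw0} with $r^\vee$ in place of $r$. This requires $\tilde v$ to lie in the correct $\epsilon$-deformed module $\widetilde F_{r^\vee,k+p_-}$, which holds by construction since $1\le r^\vee<p_+$. It gives
\begin{align*}
G^{[r^\vee,r]}_{+,k}u=\lim_{\epsilon\rightarrow0}\hspace{-1mm}{}^F\widetilde{Q}^{[r^\vee]}_+\tilde v=Q^{[r^\vee]}_+\lim_{\epsilon\rightarrow0}\hspace{-1mm}{}^F\tilde v.
\end{align*}
Finally, the exponent $\widetilde{\alpha}_{r^\vee,k+p_-}-\widetilde{\alpha}_{-r,k}$ lies in $\epsilon\,\mathcal{O}_{0;\epsilon}$; at $\epsilon=0$ it equals $\alpha_{r^\vee,k+p_-}-\alpha_{-r,k}=0$, using $p_+\alpha_+/2+p_-\alpha_-/2=0$. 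So Remark~\ref{Rem:lim} gives $\lim^F\tilde v=\lim^F\frac{1}{\epsilon}\tilde w$, which is the claimed formula.

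The main subtle point is the identification of the Fock modules in the middle step: $\widetilde{F}_{-r,k}$ and $\widetilde{F}_{r^\vee,k+p_-}$ differ for $\epsilon\neq0$, and the $\lim^F$ on the right-hand side of the statement must be read in $\widetilde F_{-r,k}$ with limit in $F_{-r,k}=F_{r^\vee,k+p_-}$. I would verify explicitly that Remark~\ref{Rem:lim} is compatible with this change of highest weight, i.e.\ that the shift ${\rm e}^{\epsilon g}$ acts trivially on the coefficients $h_\lambda$ in the basis $a_{-\lambda}\ket{\gamma(\epsilon)}$. This follows from (\ref{conjugate}), since $\hat a$ commutes with the $a_{-n}$ for $n>0$.
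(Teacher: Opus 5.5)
Your proposal is correct and follows the paper's own argument. The paper likewise observes, via Proposition~\ref{prop:tw}, that $\frac{1}{\epsilon}{\rm e}^{\widetilde{\alpha}_{r^\vee,k+p_-}-\widetilde{\alpha}_{-r,k}}\widetilde{Q}^{[r]}_+{\rm e}^{\widetilde{\alpha}_{r,k}-\alpha_{r,k}}$ maps $F_{r,k}\cap\ker Q^{[r]}_+$ into $\widetilde{F}_{r^\vee,k+p_-}$, and then concludes directly from Lemma~\ref{lem:tw0} (applied with $r^\vee$) and Remark~\ref{Rem:lim}. You have made explicit two points the paper leaves implicit: the divisibility by $\epsilon$ in $\mathcal{O}_{0;\epsilon}$, and the check that the intermediate shift exponent $-\frac{p_+}{2}\widetilde{\alpha}_+-\frac{p_-}{2}\widetilde{\alpha}_-$ vanishes at $\epsilon=0$.
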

Consider the limit
\begin{align*}
\lim_{\epsilon\rightarrow 0}\hspace{-1mm}{}^F\frac{1}{\epsilon}\widetilde{Q}_+{\rm e}^{\widetilde{\alpha}_{1,k}-\alpha_{1,k}}=\lim_{\epsilon\rightarrow 0}\hspace{-1mm}{}^F\frac{1}{\epsilon}\oint_{z=0}\widetilde{Q}_+(z)\frac{{\rm d}z}{2\pi i}{\rm e}^{\widetilde{\alpha}_{1,k}-\alpha_{1,k}},
\end{align*}
which is well-defined on $ F_{1,k}\cap {\rm ker}Q_+$.
Recall that the field $\widetilde{Q}_+(z)$ is defined by
\begin{align*}
\widetilde{Q}_+(z)={\rm e}^{\widetilde{\alpha}_+}z^{\widetilde{\alpha}_+a_0}:\overline{Y}(\ket{\alpha_+},z)\prod_{n\geq 1}{\rm exp}\Bigl(\epsilon\frac{a_{-n}}{n}z^{n}\Bigr)\prod_{n\geq 1}{\rm exp}\Bigl(-\epsilon\frac{a_n}{n}z^{-n}\Bigr):
\end{align*}
(see (\ref{Y-ver})) and the factor $z^{\widetilde{\alpha}_+a_0}$ acts on $\widetilde{F}_{1,k}$ as $z^{k-1}$.
Applying this expression of $\widetilde{Q}_+(z)$ to the above limit, we obtain the following:
\begin{align}
\begin{aligned}
\left.\lim_{\epsilon\rightarrow 0}\hspace{-1mm}{}^F\frac{1}{\epsilon}\widetilde{Q}_+{\rm e}^{\widetilde{\alpha}_{1,k}-\alpha_{1,k}}\right|_{{\rm ker}Q_+}=\oint_{z=0}:{Q}_+(z)\phi_0(z):\frac{{\rm d}z}{2\pi i},
\end{aligned}
\label{wi-1}
\end{align}
where we denote
\begin{align*}
\phi_0(z):=\sum_{n>0}\frac{a_{-n}}{n}z^{n}-\sum_{n>0}\frac{a_n}{n}z^{-n}.
\end{align*}
Note that the field $\phi_0(z)$ agrees with the sum part of the scalar field (\ref{eq:scalar}).
By Lemma~\ref{lem:tw} and (\ref{wi-1}), we obtain
\begin{align}
\label{wi-G+}
\left.{G}^{[p_+-1,1]}_{+,k}\right|_{{\rm ker}Q_+}=Q^{[p_+-1]}_+\oint_{z=0}:{Q}_+(z)\phi_0(z):\frac{{\rm d}z}{2\pi i}.
\end{align}
By a similar argument, we have
\begin{align}
\label{wi-G-}
\left.{G}^{[p_--1,1]}_{-,k}\right|_{{\rm ker}Q_-}=\frac{p_+}{p_-}Q^{[p_--1]}_-\oint_{z=0}:{Q}_-(z)\phi_0(z):\frac{{\rm d}z}{2\pi i},
\end{align}
where the fraction $p_+/p_-$ comes from the expansion
\begin{align*}
\widetilde{\alpha}_-(\epsilon)=\alpha_-+\frac{p_+}{p_-}\epsilon+o(\epsilon).
\end{align*}
Note that the right-hand sides of (\ref{wi-G+}) and (\ref{wi-G-}) are independent of $k$ and well defined on $F_{1,k}$ and $F_{k,1}$ for all $k \in \mathbb{Z}$, respectively.
Thus, we obtain the following theorem.
\begin{thm}
\label{G+G-prop}
Define 
\begin{align*}
&\mathbb{G}^{[p_+]}_+\in {\rm End}_{\mathbb{C}}\Bigl(\bigoplus_{\theta=\pm}\bigoplus_{s=1}^{p_-}\mathcal{V}^{\theta}_{1,s}\Bigr),
&\mathbb{G}^{[p_-]}_-\in {\rm End}_{\mathbb{C}}\Bigl(\bigoplus_{\theta=\pm}\bigoplus_{r=1}^{p_+}\mathcal{V}^{\theta}_{r,1}\Bigr)
\end{align*}
as the right-hand side of (\ref{wi-G+}) and (\ref{wi-G-}), respectively. 
Then we have
\begin{align*}
&\left.\mathbb{G}^{[p_+]}_+\right|_{F_{1,s;n}\cap {\rm ker}Q_+}=\left.G^{[p_+-1,1]}_{+,s+np_-}\right|_{{\rm ker}Q_+},
&\left.\mathbb{G}^{[p_-]}_-\right|_{F_{r,1;n}\cap {\rm ker}Q_-}=\left.G^{[p_--1,1]}_{-,r-np_-}\right|_{{\rm ker}Q_-}.
\end{align*}
\end{thm}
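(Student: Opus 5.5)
The plan is to reduce the statement to the two identities (\ref{wi-G+}) and (\ref{wi-G-}), after checking two bookkeeping points. First, that $F_{1,s;n}$ really is the module $F_{1,k}$ for $k=s+np_-$. Second, that the right-hand sides of (\ref{wi-G+}) and (\ref{wi-G-}) define honest endomorphisms of the indicated direct sums. I treat the $+$ case; the $-$ case is identical, with $\alpha_+\leftrightarrow\alpha_-$ and the extra factor $p_+/p_-$.

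\textbf{Step 1 (identification of modules).} From (\ref{alpha-rs-rho}) I have $\alpha_{1,s;n}=\frac{1-s}{2}\alpha_-+\frac{\sqrt{2p_+p_-}}{2}n$. Since $-\alpha_-\cdot\frac{p_-}{2}=\sqrt{2p_+p_-}/2$, this gives $\alpha_{1,s;n}=\alpha_{1,s+np_-}$. Hence $F_{1,s;n}=F_{1,k}$ with $k=s+np_-$, and the $\epsilon$-lifting $\widetilde F_{1,k}$ together with the operator $G^{[p_+-1,1]}_{+,k}$ of (\ref{G+G-}) is defined on it. The analogous computation gives $\alpha_{r,1;n}=\alpha_{r-np_+,1}$ for the $-$ case. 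I will match the index appearing in the statement against this, keeping the sign conventions of $\alpha_{r,s;n}$ in mind.

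\textbf{Step 2 (well-definedness of $\mathbb{G}^{[p_+]}_+$).} The operator $\oint:Q_+(z)\phi_0(z):\frac{{\rm d}z}{2\pi i}$ maps $F_{1,k}$ into $F_{-1,k}=F_{1,k;\,\cdot}$ shifted by $\alpha_+$. On $F_{1,k}$ the factor $z^{\alpha_+a_0}$ acts as $z^{k-1}$, so the integrand is single-valued and the residue makes sense. Composing with $Q^{[p_+-1]}_+$, which is defined on $F_{p_+-1,k'}$ by Proposition~\ref{prop:tw} and Remark~\ref{rem:aktk}, lands in $F_{1,k;2}$. Summing over $n$ and over $s=1,\dots,p_-$ and $\theta=\pm$, this gives an endomorphism of $\bigoplus_{\theta,s}\mathcal{V}^\theta_{1,s}$.

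\textbf{Step 3 (the main identity).} This is the only genuine content. On $F_{1,k}\cap\ker Q_+$, Lemma~\ref{lem:tw} gives $G^{[p_+-1,1]}_{+,k}=Q^{[p_+-1]}_+\lim^F\frac1\epsilon\widetilde Q_+{\rm e}^{\widetilde\alpha_{1,k}-\alpha_{1,k}}$. To justify (\ref{wi-1}), I expand $\widetilde Q_+(z)$ as displayed, writing $z^{\widetilde\alpha_+ a_0}$ as $z^{k-1}$ on $\widetilde F_{1,k}$ and the Heisenberg exponentials as $:\overline Y(\ket{\alpha_+},z)\exp(\epsilon\phi_0(z)):$. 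Expanding to first order gives
\[
\widetilde Q_+(z){\rm e}^{\bullet}u={\rm e}^{\bullet}\bigl(Q_+(z)u+\epsilon:Q_+(z)\phi_0(z):u+O(\epsilon^2)\bigr),
\]
with ${\rm e}^{\bullet}={\rm e}^{\epsilon g(\epsilon)}$. The $O(1)$ residue is $Q_+u=0$. By Remark~\ref{Rem:lim} the shift ${\rm e}^{\epsilon g}$ is invisible to $\lim^F$, so the limit equals $\oint:Q_+(z)\phi_0(z):\frac{{\rm d}z}{2\pi i}\,u$.

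The main obstacle is making this expansion rigorous inside $\lim^F$. I must check that the coefficients $h_\lambda(\epsilon)$ are holomorphic at $0$; this holds because each weight space is finite dimensional and the exponentials are polynomial in $\epsilon$ on each component. I must also check that $\lim^F$ commutes with $Q^{[p_+-1]}_+$, which is Lemma~\ref{lem:tw0} applied to the already regular vector $\frac1\epsilon\widetilde Q_+{\rm e}^\bullet u$. For the $-$ case, the same expansion uses $\widetilde\alpha_-=\alpha_-+\frac{p_+}{p_-}\epsilon+o(\epsilon)$, and this produces the factor $p_+/p_-$.

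Since the resulting expression does not depend on $k$, the restrictions agree on each $F_{1,s;n}\cap\ker Q_+$, which proves the theorem.
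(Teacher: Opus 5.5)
Your proposal is correct and follows the paper's own argument. Lemma~\ref{lem:tw} reduces $G^{[p_+-1,1]}_{+,k}$ on $\ker Q_+$ to $Q^{[p_+-1]}_+$ applied to the limit in (\ref{wi-1}). You evaluate that limit exactly as the paper does, by expanding $\exp(\epsilon\phi_0(z))$ to first order, using that $z^{\widetilde{\alpha}_+a_0}$ acts as $z^{k-1}$ (and, in the $-$ case, that $\widetilde{\alpha}_-=\alpha_-+\frac{p_+}{p_-}\epsilon+o(\epsilon)$), and $k$-independence of the result then gives the theorem.

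Your Step~1 computation $\alpha_{r,1;n}=\alpha_{r-np_+,1}$ is right, so the index $r-np_-$ in the second identity of the statement is a typo for $r-np_+$; the paper itself uses $r-np_+$ in the proof of Theorem~\ref{non-Gop}.
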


\begin{Remark}
\label{rem:Lim}
In the case of $\mathcal{W}_{2,p_-}$, a derivation of ${\rm End}_{\mathfrak{Vir}}(\mathcal{W}_{2,p_-})$ with the explicit form 
\begin{align}
\label{deri-AM}
G=\sum_{i=1}^{\infty}\frac{1}{i}Q_+[-i]Q_+[i]
\end{align}
was introduced in \cite{AdamovicD/MilasA:2010} (see also \cite{AdamovicD/MilasA:2014}), where $Q_+[n]$ is the $n$-th mode of $Q_+(z)$. Note that the composition $Q_+[0]Q_+[0]=Q_+Q_+$ vanishes due to Proposition \ref{Felder complex2}.
From the relation $Q_+[0]Q_+[0]=0$, the above (\ref{deri-AM}) can be expressed as follows:
\begin{align}
\label{deri-AM2}
G=-{\rm Res}_{x_1}{\rm Res}_{x_2}\log\Bigl(1-\frac{x_2}{x_1}\Bigr)Q_+(x_1)Q_+(x_2).
\end{align}
From the right-hand side of this equation, the derivation (\ref{deri-AM}) can be regarded as the relation $Q_+[0]Q_+[0]=0$ twisted by $\log (1-x_2/x_1)$.
Using the result on the automorphism group in \cite{McRaeR/SopinV:2026} together with the result proved in the next subsection, one can see that $G|_{\mathcal{W}_{2,p_-}}$ and $\mathbb{G}^{[2]}_+|_{\mathcal{W}_{2,p_-}}$ coincide up to a scalar multiple. 
The equivalence of these two linear operators can also be seen from the fact that the operator product expansion of the fields $Q_+(z')$ and $\phi_0(z)$ is 
\begin{align*}
Q_+(z')\phi_0(z)=\alpha_+\log \Bigl(1-\frac{z}{z'}\Bigr)Q_+(z')+:Q_+(z')\phi_0(z):.
\end{align*}
We also expect that, on $\mathcal{W}_{p_+,p_-}$, $\mathbb{G}^{[p_\pm]}_\pm$ coincides, up to scalar multiples, with the following operators
\begin{align*}
\begin{aligned}
&\oint_{z'=0}{\rm d}z'\int_{ [\Delta^{(\alpha_\pm)}_{p_\pm-2}]}{\rm d}y_1\cdots {\rm d}y_{p_\pm-2}\oint_{z=0}{{\rm d}z}\\
&\ \cdot (z')^{p_\pm-2}\log\Bigl(\bigl(1-\frac{z}{z'}\bigr)\prod_{i=1}^{p_\pm-2}\bigl(1-\frac{z}{z'y_i}\bigr)\Bigr)
{Q}_\pm(z'){Q}_\pm(z'y_1)\cdots {Q}_\pm(z'y_{p_\pm-2}) {Q}_\pm(z).
\end{aligned}
\end{align*}
\end{Remark}

\begin{Remark}
\begin{enumerate}
\item From the discussion above Theorem~\ref{G+G-prop}, it is natural to expect that $G^{[r^\vee,r]}_{+,k}$ and $G^{[s^\vee,s]}_{-,k}$ are independent of $k$ also for $r,s\geq 2$. This expectation is further supported by properties such as those described in Theorems~\ref{G-hom} and~\ref{non-Gop} below.
\item In \cite[Subsection 4.5]{TW}, a derivation $E$, called a Frobenius homomorphism, is constructed as a certain limit of the integral
\begin{align*}
\begin{aligned}
&\int_{[\Delta^{\infty}_{p_+-1}]}{\rm d}z'_1\cdots {\rm d}z'_{p_+-1}\oint_{z=0}{\rm d}z\widetilde{Q}_+(z'_1)\cdots \widetilde{Q}_+(z'_{p_+-1})\widetilde{Q}_+(z),
\end{aligned}
\end{align*}
where $[\Delta^{\infty}_{p_+-1}]$ is a regularization of $\Delta^\infty_{p_+-r}=\{\infty >z'_1>z'_2>\cdots >z'_{p_+-1}>1\}$ by taking into account the multivaluedness of 
\begin{align*}
\mathfrak{G}_{p_+-1}\bigl(\bm{z}';\frac{\widetilde{\alpha}_+(\epsilon)^2}{2}\bigr):=\prod_{1\leq i\neq j\leq p_+-1}
(z'_i-z'_j)^{\frac{\widetilde{\alpha}_+(\epsilon)^2}{2}}
\prod_{i=1}^{p_+-1}(z'_i)^{\widetilde{\alpha}_+(\epsilon)^2}.
\end{align*}
At this point, we note that $[\Delta^{\infty}_{p_+-1}]$ does not strictly make sense as a locally finite twisted cycle arising from the multivaluedness of $\mathfrak{G}_{p_+-1}$.
In fact, $\mathfrak{G}_{p_+-1}$ takes a well-defined value on
\begin{align*}
\{(z'_1,z'_2,\dots,z'_{p_+-1})\in \mathbb{C}^{p_+-1}\ |\ z'_{p_+-1}=1,\ z'_i\neq z'_j,\ i\neq j,\ z'_i\neq 0\}
\end{align*}
and does not vanish there.
For example, in the case of $p_+=2$, the point $z'_1=1$ is a boundary of $\Delta^\infty_1$ and $\mathfrak{G}_{p_+-1}$ takes the constant value $1$ on this point.
Consequently, it becomes necessary to re-examine some properties of the Frobenius homomorphisms in \cite{TW} by using our operators $\mathbb{G}^{[p_\pm]}_\pm$.
\end{enumerate}
\label{non-well-tw}
\end{Remark}

\subsection{Properties of the operators $\mathbb{G}^{[p_\pm]}_\pm$}
It is known that the derivation (\ref{deri-AM2}), when restricted to ${\rm ker}Q_+$, is a Virasoro homomorphism \cite[Theorem 3.1]{AdamovicD/MilasA:2010}. This property likewise holds for ${G}^{[r^\vee,r]}_{+,k}$ and ${G}^{[s^\vee,s]}_{-,k}$:
\begin{thm}
\label{G-hom}
The restrictions ${G}^{[r^\vee,r]}_{+,k}|_{{\rm ker}Q^{[r]}_+\cap F_{r,k}}$ and ${G}^{[s^\vee,s]}_{-,k}|_{{\rm ker}Q^{[s]}_-\cap F_{k,s}}$ are Virasoro homomorphisms.
\end{thm}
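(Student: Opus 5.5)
The plan is to derive the Virasoro equivariance of $G^{[r^\vee,r]}_{+,k}$ from the exact Virasoro equivariance of the deformed screening operators $\widetilde{Q}^{[\bullet]}_+$ (Proposition~\ref{com-TK} applied at $\rho=\widetilde{\alpha}_0(\epsilon)$). Write $\widetilde{L}_n:=L^{(\widetilde{\alpha}_0)}_n$ and $L_n:=L^{(\alpha_0)}_n$. The only defect is that $N^{[r^\vee,r]}_{+,k}(\epsilon)$ also contains the shift operators ${\rm e}^{\bullet}$, and these fail to intertwine $L_n$ with $\widetilde{L}_n$. I will show that this failure is of order $\epsilon$ and is linear in the Heisenberg modes. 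After dividing by $\epsilon$, the resulting $O(1)$ error terms should vanish on ${\rm ker}Q^{[r]}_+$ because of the Felder relations (\ref{fel-dd}). The $s$-case is identical with $\widetilde{Q}_-$ in place of $\widetilde{Q}_+$.

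\emph{Step 1 (commutation of $L_n$ with the shifts).} From $T^{(\rho)}(z)=\frac12:a(z)^2:+\frac{\rho}{2}\partial a(z)$ we get $\widetilde{L}_n=L_n-\frac{(n+1)}{2}(\widetilde{\alpha}_0-\alpha_0)a_n$. Next I use (\ref{conjugate}), i.e. ${\rm e}^{-\delta}a_m{\rm e}^{\delta}=a_m+\delta\,\delta_{m,0}$. This gives ${\rm e}^{-\delta}L_n{\rm e}^{\delta}=L_n+\delta a_n+\frac{\delta^2}{2}\delta_{n,0}$ (for $\rho$ fixed). Every shift exponent appearing in (\ref{Q+Q}) lies in $\epsilon\,\mathcal{O}_{0;\epsilon}$, and $\widetilde{\alpha}_0-\alpha_0\in\epsilon\,\mathcal{O}_{0;\epsilon}$. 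Hence for each relevant Fock module there are $c_j(\epsilon),d_j(\epsilon)\in\mathcal{O}_{0;\epsilon}$ such that
\begin{align*}
\widetilde{L}_n\,{\rm e}^{\delta_j}={\rm e}^{\delta_j}\bigl(L_n+\epsilon\,(c_j(\epsilon)a_n+d_j(\epsilon)\delta_{n,0})\bigr),
\end{align*}
where $L_n$ and $\widetilde{L}_n$ are each taken in the appropriate Virasoro structure on source and target. The key point is that each correction is $\epsilon$ times an operator whose matrix coefficients are holomorphic.

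\emph{Step 2 (moving $L_n$ through $N$).} Let $u\in F_{r,k}\cap{\rm ker}Q^{[r]}_+$. I push $L_n$ from the right of $N^{[r^\vee,r]}_{+,k}(\epsilon)$ to the left, using Step~1 at each of the two shifts and the identity $[\widetilde{Q}^{[\bullet]}_+,\widetilde{L}_n]=0$. This should give
\begin{align*}
N(\epsilon)L_nu=\widetilde{L}_nN(\epsilon)u-\epsilon\,\widetilde{Q}^{[r^\vee]}_+{\rm e}^{\bullet}\widetilde{Q}^{[r]}_+{\rm e}^{\bullet}A_nu-\epsilon\,\widetilde{Q}^{[r^\vee]}_+{\rm e}^{\bullet}B_n\widetilde{Q}^{[r]}_+{\rm e}^{\bullet}u,
\end{align*}
where $A_n,B_n$ are $\mathcal{O}_{0;\epsilon}$-combinations of $a_n$ and constants. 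Then I divide by $\epsilon$ and take $\lim^F_{\epsilon\to0}$, using Proposition~\ref{prop:tw} to guarantee that all the $\widetilde{Q}$'s preserve $\mathcal{O}_{0;\epsilon}$-integrality. On the left this yields $G^{[r^\vee,r]}_{+,k}L_nu$. The first term on the right, after Step~1 is applied again on the target side, yields $L_n G^{[r^\vee,r]}_{+,k}u$ plus $\lim^F$ of $\epsilon\cdot\frac1\epsilon N(\epsilon)u$, which is $0$ because $\frac1\epsilon N$ is integral. Here I also use Remark~\ref{Rem:lim} to discard the output shift. The two remaining terms have limits $Q^{[r^\vee]}_+Q^{[r]}_+A_n(0)u$ and $Q^{[r^\vee]}_+B_n(0)Q^{[r]}_+u$, by Lemma~\ref{lem:tw0}. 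The first vanishes by (\ref{fel-dd}), and the second vanishes because $Q^{[r]}_+u=0$. Hence $G^{[r^\vee,r]}_{+,k}L_n=L_nG^{[r^\vee,r]}_{+,k}$ on ${\rm ker}Q^{[r]}_+\cap F_{r,k}$.

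\emph{Main obstacle.} I expect the main difficulty to be the bookkeeping in Step~2. One has to verify that every correction term really is $\epsilon$ times an operator that stays $\mathcal{O}_{0;\epsilon}$-integral after passing through $\widetilde{Q}^{[r]}_+$ and $\widetilde{Q}^{[r^\vee]}_+$, so that no hidden $1/\epsilon$ survives. A related point is that the insertion of $a_n$ between the screening operators must not spoil the integrality from Proposition~\ref{prop:tw}. This holds because $a_n$ maps $\widetilde{F}$ to $\widetilde{F}$ over $\mathcal{O}_{0;\epsilon}$. I would check this first on the lowest-weight vector and on the basis (\ref{a-lambda}), where the actions of $a_n$ and of the shifts are explicit.
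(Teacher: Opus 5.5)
Your proposal is correct and follows essentially the same route as the paper. Both compare ${\rm e}^{\bullet}L_n$ with $\widetilde{L}_n{\rm e}^{\bullet}$ up to $\epsilon\cdot\widetilde{F}$, commute $\widetilde{L}_n$ through the deformed screening operators, and kill the surviving $O(1)$ remainders using $Q^{[r^\vee]}_+\circ Q^{[r]}_+=0$ and $Q^{[r]}_+u=0$. The only difference is organizational: the paper first applies Lemma~\ref{lem:tw}, which is where the kernel hypothesis enters, to replace $\widetilde{Q}^{[r^\vee]}_+$ and the middle shift by the undeformed $Q^{[r^\vee]}_+$. As a result no middle-shift correction term arises, and the paper finishes with $[L_n,Q^{[r^\vee]}_+]=0$ at $\epsilon=0$.
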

\begin{proof}
We only prove the case of ${G}^{[r^\vee,r]}_{+,k}$. The case of ${G}^{[s^\vee,s]}_{-,k}$ can be proved in the same way.

We introduce the notation
$
{L}_n:=L^{({\alpha}_0)}_n
$
and
$
\widetilde{L}_n:=L^{(\widetilde{\alpha}_0(\epsilon))}_n,
$
for $n\in \mathbb{Z}$.
For $k\in \mathbb{Z}$, take an arbitrary vector $u\in F_{r,k}\cap {\rm ker}Q^{[r]}_+$.
From Lemma~\ref{lem:tw}, we have
\begin{align*}
\begin{aligned}
{G}^{[r^\vee,r]}_{+,k}L_nu
&={Q}^{[r^\vee]}_+\lim_{\epsilon\rightarrow 0}\hspace{-1mm}{}^F \frac{1}{\epsilon}\widetilde{Q}^{[r]}_+\circ{\rm e}^{\widetilde{\alpha}_{r,k}-\alpha_{r,k}}L_nu.
\end{aligned}
\end{align*}
Note that, for any $n\in \mathbb{Z}$, 
\begin{align*}
{\rm e}^{\widetilde{\alpha}_{r,k}-\alpha_{r,k}}L_nu-\widetilde{L}_n{\rm e}^{\widetilde{\alpha}_{r,k}-\alpha_{r,k}}u\in \epsilon \cdot \widetilde{F}_{r,k}.
\end{align*}
Then, by Lemma~\ref{lem:tw0} and the relation $Q^{[r^\vee]}_+\circ Q^{[r]}_+=0$, we have
\begin{align}
\begin{aligned}
{G}^{[r^\vee,r]}_{+,k}L_nu
&={Q}^{[r^\vee]}_+\lim_{\epsilon\rightarrow 0}\hspace{-1mm}{}^F \frac{1}{\epsilon}\widetilde{Q}^{[r]}_+\widetilde{L}_n\circ{\rm e}^{\widetilde{\alpha}_{r,k}-\alpha_{r,k}}u\\
&\qquad \qquad+{Q}^{[r^\vee]}_+\lim_{\epsilon\rightarrow 0}\hspace{-1mm}{}^F \frac{1}{\epsilon}\widetilde{Q}^{[r]}_+\bigl({\rm e}^{\widetilde{\alpha}_{r,k}-\alpha_{r,k}}L_nu-\widetilde{L}_n{\rm e}^{\widetilde{\alpha}_{r,k}-\alpha_{r,k}}\bigr)u\\
&={Q}^{[r^\vee]}_+\lim_{\epsilon\rightarrow 0}\hspace{-1mm}{}^F \frac{1}{\epsilon}\widetilde{Q}^{[r]}_+\widetilde{L}_n\circ{\rm e}^{\widetilde{\alpha}_{r,k}-\alpha_{r,k}}u\\
&\qquad \qquad+{Q}^{[r^\vee]}_+\circ {Q}^{[r]}_+\lim_{\epsilon\rightarrow 0}\hspace{-1mm}{}^F \frac{1}{\epsilon}\bigl({\rm e}^{\widetilde{\alpha}_{r,k}-\alpha_{r,k}}L_nu-\widetilde{L}_n{\rm e}^{\widetilde{\alpha}_{r,k}-\alpha_{r,k}}\bigr)u\\
&={Q}^{[r^\vee]}_+\lim_{\epsilon\rightarrow 0}\hspace{-1mm}{}^F \widetilde{L}_n\frac{1}{\epsilon}\widetilde{Q}^{[r]}_+\circ{\rm e}^{\widetilde{\alpha}_{r,k}-\alpha_{r,k}}u\\
&=L_n{Q}^{[r^\vee]}_+\lim_{\epsilon\rightarrow 0}\hspace{-1mm}{}^F \frac{1}{\epsilon}\widetilde{Q}^{[r]}_+\circ{\rm e}^{\widetilde{\alpha}_{r,k}-\alpha_{r,k}}u
\end{aligned}
\label{eq:imc-v}
\end{align}
where we used the commutativity $[\widetilde{L}_n,\widetilde{Q}^{[r]}_+]=0$ in the third equality and, for the fourth equality, we used the fact that the zero mode $a_0$ commutes with the limit (for the limit, see Definition~\ref{notlim}).
Therefore, from Lemma~\ref{lem:tw} and (\ref{eq:imc-v}), we obtain $[{G}^{[r^\vee,r]}_{+,k},L_n]=0$ on $F_{r,k}\cap {\rm ker}Q^{[r]}_+$. 
\end{proof}

\begin{thm}
\begin{enumerate}
\item Let $p_+\geq 2$, $p_+>r\geq 1$, $p_-\geq s\geq 1$, and $n\in \mathbb{Z}$. Then $G^{[r^\vee,r]}_{+,s+np_-}$ maps each singular vector in
\begin{align*}
\bigoplus_{l\in \mathbb{Z}_{\geq 0}}F_{r,s;n}[h+l]\cap {\rm ker}Q^{[r]}_+\subset K_{r,s;n;+},
\end{align*}
to a nonzero singular vector in $K_{r,s;n+2;+}$, where we denote
\begin{equation*}
h=
\begin{cases}
h_{r,s;n+2},& {\rm if}\ h_{r,s;n+2}-h_{r,s;n}\geq 0,\\
h_{r,s;n},&{\rm if}\ h_{r,s;n+2}-h_{r,s;n}<0
\end{cases}
.
\end{equation*}
 \item Let $p_->s\geq 1$, $p_+\geq r\geq 1$, and $n\in \mathbb{Z}$. Then $G^{[s^\vee,s]}_{-,r-np_-}$ maps each singular vector in
\begin{align*}
\bigoplus_{l\in \mathbb{Z}_{\geq 0}}F_{r,s;n}[h+l]\cap {\rm ker}Q^{[s]}_-\subset K_{r,s;n;-},
\end{align*}
to a nonzero singular vector in $K_{r,s;n-2;-}$, where we denote
\begin{equation*}
h=
\begin{cases}
h_{r,s;n-2},&{\rm if}\  h_{r,s;n-2}-h_{r,s;n}\geq 0,\\
h_{r,s;n},&{\rm if}\ h_{r,s;n-2}-h_{r,s;n}<0
\end{cases}
.
\end{equation*}
\end{enumerate}
\label{non-Gop}
\end{thm}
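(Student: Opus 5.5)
The plan is to split the statement into two parts. First, the image of a singular vector is again a singular vector of the correct weight lying in $K_{r,s;n+2;+}$. Second, this image is nonzero. I only treat part 1; part 2 is identical after exchanging the roles of $\pm$ and $p_\pm$, and using the expansion $\widetilde{\alpha}_-(\epsilon)=\alpha_-+\frac{p_+}{p_-}\epsilon+o(\epsilon)$.

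\emph{Singularity and target.} Fix a singular vector $u\in F_{r,s;n}[h+l]\cap{\rm ker}Q^{[r]}_+$ and set $k=s+np_-$. By Theorem~\ref{G-hom}, $G^{[r^\vee,r]}_{+,k}$ commutes with all $L_m$ on ${\rm ker}Q^{[r]}_+\cap F_{r,k}$. Hence $L_mG^{[r^\vee,r]}_{+,k}u=0$ for $m>0$, and the image has the same $L_0$-weight as $u$. By construction the image lies in $F_{r,k;2}=F_{r,s;n+2}$. By Lemma~\ref{lem:tw}, $G^{[r^\vee,r]}_{+,k}u=Q^{[r^\vee]}_+v$ with
\[
v=\lim_{\epsilon\rightarrow 0}\hspace{-1mm}{}^F\frac{1}{\epsilon}\widetilde{Q}^{[r]}_+{\rm e}^{\widetilde{\alpha}_{r,k}-\alpha_{r,k}}u .
\]
Since $Q^{[r]}_+\circ Q^{[r^\vee]}_+=0$ (Proposition~\ref{Felder complex2}), the image lies in ${\rm ker}Q^{[r]}_+\cap F_{r,s;n+2}=K_{r,s;n+2;+}$. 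The choice of $h$ (the larger of $h_{r,s;n}$ and $h_{r,s;n+2}$) guarantees, via Proposition~\ref{Felder complex}, that $K_{r,s;n+2;+}$ actually contains singular vectors in every weight where $K_{r,s;n;+}$ has one above $h$. Concretely, I will compare the two lists $L(h_{r,s^\vee;\bullet})$ and $L(h_{r,s;\bullet})$ in that proposition and check that the weights match with multiplicity one. So the target weight space is one-dimensional, spanned by a known singular vector $u'$.

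\emph{Nonvanishing (main obstacle).} It suffices to show that $v\notin{\rm ker}Q^{[r^\vee]}_+$. Equivalently, for the dual singular vector $\psi$ in $F^*$ of the appropriate weight, the pairing
\[
\langle\psi,G^{[r^\vee,r]}_{+,k}u\rangle=\lim_{\epsilon\to0}\frac{1}{\epsilon}\bigl\langle\widetilde{\psi},\widetilde{Q}^{[r^\vee]}_+{\rm e}^{\bullet}\widetilde{Q}^{[r]}_+{\rm e}^{\bullet}\widetilde{u}\bigr\rangle
\]
is nonzero. Here $\widetilde{u}$ and $\widetilde{\psi}$ are $\epsilon$-liftings of the singular vectors, given explicitly by Jack symmetric polynomials as in \cite{TW}. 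The steps are as follows.

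(i) Use Proposition~\ref{dual-scop} to move one screening operator onto $\psi$, and (\ref{Phi-cor-ex}) to express each factor as a Selberg-type integral $S_{r-1}[f]$ or $S_{r^\vee-1}[f]$ with $f$ a Jack polynomial. Theorem~\ref{sus-prop} guarantees that these are holomorphic in $\epsilon$ near $0$.

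(ii) Evaluate these integrals by Kadell/Kaneko-type Jack–Selberg formulas. This gives an explicit ratio of Gamma functions in $\widetilde{\alpha}_+(\epsilon)^2$.

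(iii) Count the zeros and poles of these Gamma factors at $\epsilon=0$. The total should be a zero of order exactly one, consistent with (\ref{N+N-}), so that the coefficient of $\epsilon$ is a nonzero product of Gamma values.

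I expect step (iii) to be the delicate part, because cancellations between Gamma poles and zeros must be tracked using the coprimality of $p_+$ and $p_-$. The condition $h_{r,s;n+2}\gtrless h_{r,s;n}$ enters exactly here: it decides which Jack partition occurs, and hence whether the zero is simple. If the explicit evaluation proves too heavy, the fallback is structural. One would argue that if $v\in{\rm ker}Q^{[r^\vee]}_+={\rm im}Q^{[r]}_+$ (by exactness off $n=0$, Proposition~\ref{Felder complex2}), then the $\epsilon$-family could be corrected to give a lift of $u$ annihilated to second order. The aim is to contradict the nonsplit socle structure of $F_{r,s;n}$ from Proposition~\ref{FockSocle}, but this route needs more care.
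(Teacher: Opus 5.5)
Your reduction matches the paper's. By Theorem~\ref{G-hom} the image of a singular vector is singular of the same weight. It lies in $K_{r,s;n+2;+}$ because $Q^{[r]}_+\circ Q^{[r^\vee]}_+=0$. By Lemma~\ref{lem:tw} everything reduces to showing $v=\lim_{\epsilon\rightarrow 0}\hspace{-1mm}{}^F\epsilon^{-1}\widetilde{Q}^{[r]}_+{\rm e}^{\bullet}u\notin\ker Q^{[r^\vee]}_+$.

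That nonvanishing is the entire content of the theorem, and you do not prove it. (\ref{N+N-}) only gives that $N^{[r^\vee,r]}_{+,k}(\epsilon)$ vanishes to order $\geq 1$; that the order is exactly one remains only an expectation in your step (iii). Steps (i)--(ii) also rest on facts you neither state precisely nor prove:
\begin{itemize}
\item that every singular vector in the range (including those deep in the socle, and the braided case) lifts to a single Jack state;
\item that $\widetilde{Q}^{[r]}_+$ and $\widetilde{Q}^{[r^\vee]}_+$ send such states to single Jack states.
\end{itemize}
Without these, moving one screening operator onto $\psi$ via Proposition~\ref{dual-scop} leaves a pairing of two full vectors, not a product of two Kadell-type integrals.

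Your test vector also fails as stated. If $\psi$ is singular in $F^*$, then $\langle\psi,L_{-m}w\rangle=\langle L_m\psi,w\rangle=0$ for $m>0$, so $\psi$ annihilates $\mathfrak{Vir}_-F$. Every socle singular vector of the target other than its lowest weight vector lies in $\mathfrak{Vir}_-F$, since negative modes reach it from a vector above it (Figures~\ref{51fig} and~\ref{51fig-2}). So your pairing vanishes identically for all but at most one $u$. You need a $\psi$ with $\langle\psi,u'\rangle\neq0$, and such a $\psi$ is not singular.

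The missing idea, which the paper uses instead of any integral evaluation, is exact $\epsilon$-order bookkeeping through the Jantzen filtration of \cite{IK}. A singular vector $u$ at depth $k$ satisfies $u\in IK(k]\setminus IK(k-1]$. That is, $\epsilon^k{\rm e}^{\bullet}u$ is an $\mathcal{O}_{0;\epsilon}\otimes\pi^{(\widetilde{\alpha}_0)}(U(\mathfrak{Vir}))$-descendant of the deformed top vector, but not $\epsilon$ times one. Three facts then combine:
\begin{itemize}
\item the deformed screening operator commutes with the deformed Virasoro action;
\item it sends the deformed top vector to a descendant of the target top vector that is nonzero at $\epsilon=0$;
\item by Lemma~\ref{lem-imageQ}, the Verma-to-Fock map is injective over $\mathcal{K}_{0;\epsilon}$, so no extra power of $\epsilon$ can arise from cancellation.
\end{itemize}
Together these force $v$ into $IK(k+1]\setminus IK(k]$ of the intermediate Fock module. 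Hence $v$ is a nonzero multiple of a cosingular vector $v_{k+1}$, modulo the submodule generated by the lower cosingular vectors. By the structure of the Felder complex (Proposition~\ref{Felder complex2}), its image under $Q^{[r^\vee]}_+$ is then nonzero.

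Your structural fallback points in this direction, but the socle series of Proposition~\ref{FockSocle} carries no information about $\epsilon$-orders. Ruling out a lift of $u$ whose image vanishes to second order needs exactly this Jantzen-filtration input.
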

In particular, from this theorem, we see that the linear operators $G^{[r^\vee,r]}_{+,k}$ and $G^{[s^\vee,s]}_{-,k}$ are nontrivial.
Before giving the proof, we introduce some notation and a lemma.
Let $\gamma(\epsilon) \in \mathcal{O}_{0;\epsilon}$.
We regard $\widetilde{F}^{(\widetilde{\alpha}_0)}_{\gamma(\epsilon)}$ as a $\mathcal{O}_{0;\epsilon}\otimes\pi^{(\widetilde{\alpha}_0)}_{\gamma(\epsilon)}(U(\mathfrak{Vir}))$-module in the natural way.
Let ${}_\mathcal{O}M(h^{(\widetilde{\alpha}_0)}_{\gamma(\epsilon)},c_{\widetilde{\alpha}_0})$ be the $\mathcal{O}_{0;\epsilon}\otimes U(\mathfrak{Vir})$-Verma module whose lowest $\mathcal{L}_0$-weight and central charge are given by $(h^{(\widetilde{\alpha}_0)}_{\gamma(\epsilon)},c_{\widetilde{\alpha}_0})$.
By the universal property of ${}_\mathcal{O}M(h^{(\widetilde{\alpha}_0)}_{\gamma(\epsilon)},c_{\widetilde{\alpha}_0})$, there exists the $\mathcal{O}_{0;\epsilon}\otimes U(\mathfrak{Vir})$-module map
\begin{align*}
\begin{aligned}
\widetilde{\varGamma}^{(\widetilde{\alpha}_0)}_{\gamma(\epsilon)}:{}_\mathcal{O}M(h^{(\widetilde{\alpha}_0)}_{\gamma(\epsilon)},c_{\widetilde{\alpha}_0})&\rightarrow \widetilde{F}^{(\widetilde{\alpha}_0)}_{\gamma(\epsilon)}\\
\ket{h^{(\widetilde{\alpha}_0)}_{\gamma(\epsilon)}}&\mapsto \ket{\gamma(\epsilon)}.
\end{aligned}
\end{align*}
Let $\mathcal{K}_{0;\epsilon}$ denote the quotient field of $\mathcal{O}_{0;\epsilon}$.
The following lemma follows from the classification results in \cite{FF,IK} (see also \cite[Proposition 4.2]{TW}).
\begin{lem}
The canonical $\mathcal{K}_{0;\epsilon}\otimes U(\mathfrak{Vir})$-module map 
\begin{align*}
\mathcal{K}_{0;\epsilon}\otimes \widetilde{\varGamma}^{(\widetilde{\alpha}_0)}_{\gamma(\epsilon)}: \mathcal{K}_{0;\epsilon}\otimes{}_\mathcal{O}M(h^{(\widetilde{\alpha}_0)}_{\gamma(\epsilon)},c_{\widetilde{\alpha}_0})\rightarrow \mathcal{K}_{0;\epsilon}\otimes\widetilde{F}^{(\widetilde{\alpha}_0)}_{\gamma(\epsilon)}
\end{align*}
is not an isomorphism if and only if $\gamma(\epsilon)=\widetilde{\alpha}_{r,s}(\epsilon)$ for some $r,s\in \mathbb{Z}_{\geq 1}$.
\label{lem-imageQ}
\end{lem}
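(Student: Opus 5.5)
The plan is to reduce the statement to a determinant computation, carried out level by level over the field $\mathcal{K}_{0;\epsilon}$.

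\textbf{Reduction to levels.} Both ${}_\mathcal{O}M(h^{(\widetilde{\alpha}_0)}_{\gamma(\epsilon)},c_{\widetilde{\alpha}_0})$ and $\widetilde{F}^{(\widetilde{\alpha}_0)}_{\gamma(\epsilon)}$ are graded by $\mathcal{L}_0$-weight $h^{(\widetilde{\alpha}_0)}_{\gamma(\epsilon)}+N$. At each level $N$ both are free $\mathcal{O}_{0;\epsilon}$-modules of rank $p(N)$, the number of partitions of $N$. A basis of the Verma module at level $N$ is given by the PBW monomials $\mathcal{L}_{-\lambda}$, and a basis of the Fock module by the vectors $a_{-\lambda}\ket{\gamma(\epsilon)}$ of (\ref{a-lambda}). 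The map $\widetilde{\varGamma}^{(\widetilde{\alpha}_0)}_{\gamma(\epsilon)}$ preserves the grading. Therefore, after tensoring with $\mathcal{K}_{0;\epsilon}$, it is an isomorphism if and only if, for every $N$, the $p(N)\times p(N)$ matrix $\Gamma_N(\gamma,\rho)$ expressing $L^{(\rho)}_{-\lambda}\ket{\gamma}$ in the basis $a_{-\mu}\ket{\gamma}$ has determinant that is a unit in $\mathcal{K}_{0;\epsilon}$. Here we specialize $\rho=\widetilde{\alpha}_0(\epsilon)$ and $\gamma=\gamma(\epsilon)$. Since $\mathcal{K}_{0;\epsilon}$ is a field, this just means $\det\Gamma_N(\gamma(\epsilon),\widetilde{\alpha}_0(\epsilon))$ is not the zero germ.

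\textbf{The determinant formula.} For formal $(\gamma,\rho)$ we use the Feigin--Fuchs determinant formula \cite{FF} (see also \cite{IK}, \cite[Proposition 4.2]{TW}):
\begin{align*}
\det\Gamma_N(\gamma,\rho)=C_N\prod_{\substack{r,s\geq 1\\ rs\leq N}}\bigl(\gamma-\alpha^{(\rho)}_{r,s}\bigr)^{p(N-rs)},\qquad C_N\in\mathbb{C}^\times,
\end{align*}
where $\alpha^{(\rho)}_{r,s}$ is as in (\ref{alpha-rs-rho}), with $\rho_\pm$ the roots of (\ref{second-order}). To justify it I would argue in three steps.
\begin{enumerate}
\item \emph{Polynomiality and degree.} $\det\Gamma_N$ is polynomial in $\gamma$ and in $\rho$ (equivalently $\rho_\pm$). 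A leading-term analysis, ordering partitions and keeping only the $\gamma a_{-n}$ part of $L^{(\rho)}_{-n}\ket{\gamma}$, shows that its total $\gamma$-degree is $\sum_{rs\leq N}p(N-rs)$ with constant nonzero leading coefficient.
\item \emph{Location of the zeros.} By the Kac determinant, the only possible zeros are $h^{(\rho)}_\gamma=h^{(\rho)}_{r,s}$, that is $\gamma\in\{\alpha^{(\rho)}_{r,s},\,\rho-\alpha^{(\rho)}_{r,s}\}$. The Feigin--Fuchs analysis of screening operators shows that singular vectors of $F_\gamma$ that are not in the image of the Verma module occur exactly at $\gamma=\alpha^{(\rho)}_{r,s}$ with $r,s\geq1$. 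There the screening image and the Jack-polynomial description of \cite{TW} produce a cosingular vector. At $\gamma=\rho-\alpha^{(\rho)}_{r,s}=\alpha^{(\rho)}_{-r,-s}$, on the other hand, the map is injective: by contragredience (\ref{a-opp}) the Fock module is dual to the previous one, and the arrow simply reverses.
\item \emph{Conclusion.} The degree count in step 1 then fixes the multiplicities $p(N-rs)$.
\end{enumerate}

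\textbf{Specialization.} Set $\rho=\widetilde{\alpha}_0(\epsilon)$, so that $\alpha^{(\rho)}_{r,s}=\widetilde{\alpha}_{r,s}(\epsilon)$, and $\gamma=\gamma(\epsilon)$. Each factor $\gamma(\epsilon)-\widetilde{\alpha}_{r,s}(\epsilon)$ is a germ in $\mathcal{O}_{0;\epsilon}$, and it is nonzero in $\mathcal{K}_{0;\epsilon}$ unless it vanishes identically.
\begin{itemize}
\item If $\gamma(\epsilon)\neq\widetilde{\alpha}_{r,s}(\epsilon)$ for all $r,s\geq1$, then every $\det\Gamma_N$ is invertible in $\mathcal{K}_{0;\epsilon}$, so the map is an isomorphism.
\item Conversely, if $\gamma(\epsilon)=\widetilde{\alpha}_{r,s}(\epsilon)$, then $\det\Gamma_{rs}$ vanishes identically, so the map fails to be an isomorphism at level $rs$.
\end{itemize}

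\textbf{Main obstacle.} The delicate step is pinning down that the zeros sit at $\gamma=\alpha^{(\rho)}_{r,s}$ and not at the reflected values $\rho-\alpha^{(\rho)}_{r,s}$, since $h^{(\rho)}_\gamma$ cannot distinguish the two. I would settle this with an explicit rank-one check at level $1$: $L_{-1}\ket{\gamma}=\gamma a_{-1}\ket{\gamma}$ vanishes exactly at $\gamma=0=\alpha^{(\rho)}_{1,1}$, consistent with the formula. For the general case I would combine this with the existence of the screening-produced vectors at $\alpha^{(\rho)}_{r,s}$ and a degree count to rule out zeros anywhere else.
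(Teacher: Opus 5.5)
Your proof is correct, but it takes a different route from the paper. The paper gives no computation: it deduces the lemma from the Feigin--Fuchs/Iohara--Koga classification of the Virasoro structure of Fock modules (cf.\ Proposition~4.2 of Tsuchiya--Wood). You instead work level by level over the field $\mathcal{K}_{0;\epsilon}$ and use the factorization $\det\Gamma_N=C_N\prod_{rs\le N}(\gamma-\alpha^{(\rho)}_{r,s})^{p(N-rs)}$.

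The formula is correct in the paper's normalization, including the orientation you were worried about. At level $2$ one has $L_{-1}^2\ket{\gamma}=\gamma(\gamma a_{-1}^2+a_{-2})\ket{\gamma}$ and $L_{-2}\ket{\gamma}=(\tfrac12a_{-1}^2+(\gamma+\tfrac{\rho}{2})a_{-2})\ket{\gamma}$. The determinant is $\gamma(\gamma^2+\tfrac{\rho}{2}\gamma-\tfrac12)=\gamma(\gamma-\alpha^{(\rho)}_{2,1})(\gamma-\alpha^{(\rho)}_{1,2})$, using $\rho_++\rho_-=\rho$ and $\rho_+\rho_-=-2$.

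What your route buys is that the passage to $\mathcal{K}_{0;\epsilon}$ is automatic. The determinant is polynomial in $(\gamma,\rho)$, so after substituting $\rho=\widetilde{\alpha}_0(\epsilon)$ and $\gamma=\gamma(\epsilon)$ it is a nonzero germ iff no factor $\gamma(\epsilon)-\widetilde{\alpha}_{r,s}(\epsilon)$ vanishes identically. The classification, by contrast, is stated for fixed complex parameters and needs a separate genericity-in-$\epsilon$ argument. In exchange, it carries the submodule and Jantzen-filtration information the paper uses later (Theorem~\ref{non-Gop}), which a determinant alone does not give.

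The weak spot is your sketch of why the roots sit at $\alpha^{(\rho)}_{r,s}$ rather than at $\rho-\alpha^{(\rho)}_{r,s}$. The contragredience remark proves nothing here. Dualizing a non-injective $M\to F_\gamma$ only gives a non-surjective map $F_{\rho-\gamma}\cong F^*_\gamma\to M^*$, and that is compatible with $M\to F_{\rho-\gamma}$ also having a kernel. Also, at generic $\rho$ and $\gamma=\alpha^{(\rho)}_{r,s}$, what actually happens is that the level-$rs$ Verma singular vector maps to zero and a cosingular vector appears; $F_\gamma$ acquires no extra singular vectors.

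What your step 3 really needs is a \emph{lower bound}: for generic $\rho$, $\det\Gamma_N$ vanishes to order at least $p(N-rs)$ at $\gamma=\alpha^{(\rho)}_{r,s}$. You can get it from $\mathcal{Q}^{(\rho);[r]}_+\colon F^{(\rho)}_{r,s}\to F^{(\rho)}_{-r,s}$.
\begin{itemize}
\item It kills $\ket{\alpha^{(\rho)}_{r,s}}$ for weight reasons, since $h^{(\rho)}_{-r,s}-h^{(\rho)}_{r,s}=rs$. Because it commutes with the Virasoro action, it therefore kills the whole image of the Verma module.
\item It is nonzero on level $rs$. By Proposition~\ref{dual-scop}, this is equivalent to the nonvanishing of $\mathcal{Q}^{(\rho);[r]}_+\ket{\alpha^{(\rho)}_{r,-s}}$.
\end{itemize}
Hence the map $M\to F_{\alpha^{(\rho)}_{r,s}}$ fails to be surjective at level $rs$, so its kernel is a nonzero proper submodule of $M(h^{(\rho)}_{r,s},c_\rho)$. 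For generic $\rho$ this kernel must be $M(h^{(\rho)}_{r,s}+rs,c_\rho)$, of dimension $p(N-rs)$ at level $N$.

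Combined with your (correct) degree count $\sum_{\lambda\vdash N}\ell(\lambda)=\sum_{rs\le N}p(N-rs)$, this forces the formula for generic $\rho$, hence identically, and rules out roots at the reflected points. Since you cite Feigin--Fuchs for the formula itself, this affects only your self-contained justification, not the validity of the proof.
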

In particular, from this lemma, the Virasoro modules
\begin{align*}
&\pi^{(\widetilde{\alpha}_0)}(U(\mathfrak{Vir}_-))\widetilde{Q}^{[r]}_+\ket{\widetilde{\alpha}_{r,-k}},
&\pi^{(\widetilde{\alpha}_0)}(U(\mathfrak{Vir}_-))\widetilde{Q}^{[s]}_-\ket{\widetilde{\alpha}_{-k,s}},\ \ r,s{\geq 1},\ k\in \mathbb{Z} 
\end{align*}
have no Virasoro null vectors, where we used the abbreviations $\pi^{(\rho)} =\pi^{(\rho)}_{\bullet}$.
For $u\in \widetilde{F}^{(\widetilde{\alpha}_0)}_{\gamma(\epsilon)}$, write
\begin{align*}
u=h_0(\epsilon)\ket{\gamma(\epsilon)}+\sum_{n\in \mathbb{Z}_{>0}}\sum_{\lambda\vdash n} h_{\lambda}(\epsilon)a_{-\lambda}\ket{\gamma(\epsilon)},
\end{align*}
where $h_0(\epsilon),h_\lambda(\epsilon)\in \mathcal{O}_{0;\epsilon}$.
We define $\partial^F_{\epsilon} u$ by
\begin{align*}
\partial^F_{\epsilon} u =h'_0(\epsilon)\ket{\gamma(\epsilon)}+\sum_{n\in \mathbb{Z}_{>0}}\sum_{\lambda\vdash n}h'_{\lambda}(\epsilon)a_{-\lambda}\ket{\gamma(\epsilon)}.
\end{align*}
Following \cite[Subsection 8.2.3]{IK}, we introduce the notation
\begin{align*}
IK(k-1\rbrack(F^{(\alpha_0)}_{\gamma(0)})&:={\rm ker}\Bigl[F^{(\alpha_0)}_{\gamma(0)}\twoheadrightarrow {\rm im}\Bigl(\bigl(\left.(\partial^F_{\epsilon})^k\circ \widetilde{\varGamma}^{(\widetilde{\alpha}_0)}_{\gamma(\epsilon)}\bigr)\right|_{\epsilon=0}\Bigr)\Bigr].
\end{align*}

\begin{proof}[Proof of Theorem~\ref{non-Gop}]
We prove the cases of $G^{[s^\vee,s]}_{-,r-np_+}$ $(n\in \mathbb{Z}, 1\leq r\leq p_+)$. The other cases can be proved in the same way.
First, we show the case $r=p_+$. In this case, the domain of $G^{[s^\vee,s]}_{-,(1-n)p_+}$ is the chain type $F_{p_+,s;n}$.
By Proposition \ref{Felder complex}, we have
\begin{align*}
F_{p_+,s;n}\cap {\rm ker}Q^{[s]}_-=X_{p_+,s;n}={\rm Soc}(F_{p_+,s;n}),
\end{align*}
and by Proposition \ref{FockSocle}, 
\begin{align*}
&{\rm Soc}_1(F_{p_+,s;n})={\rm Soc}(F_{p_+,s;n})=\bigoplus_{k\geq 0}L(h_{p_+,s^\vee;|n|+2k+1}),\\
&{\rm Soc}_2(F_{p_+,s;n})/{\rm Soc}_1(F_{p_+,s;n})=\bigoplus_{k\geq a}L(h_{p_+,s;|n|+2k})
\end{align*}
where $a=0$ if $n\geq 1$ and $a=1$ if $n<1$.
Let $n\geq 1$. 
Note that in this case, the vector $Q^{[s]}_-\ket{\alpha_{p_+,s;n}}$ gives a nonzero singular vector (see Figures \ref{51fig} and \ref{51fig-fel}).
For $k\geq 0$, let $u_{n,k}$ denote the singular vector of $F_{p_+,s;n}$ with $\mathcal{L}_0$-weight $h_{p_+,s^\vee;n+2k+1}$, which corresponds to the lowest weight vector of the Virasoro submodule $L(h_{p_+,s^\vee;n+2k+1})\subset {\rm Soc}(F_{p_+,s;n})$ (see Figure~\ref{51fig-lem-sing}).
Note that, from Proposition~\ref{prop:tw},
\begin{align}
\label{eq:04}
\widetilde{Q}^{[s]}_-{\rm e}^{\widetilde{\alpha}_{(1-n)p_+,s}-\alpha_{p_+,s;n}}u_{n,k}\in \epsilon\cdot \widetilde{F}_{(1-n)p_+,-s}.
\end{align}
From Proposition~\ref{Felder complex2} and Theorem~\ref{G-hom}, to prove that $G^{[s^\vee,s]}_{-,(1-n)p_+}u_{n,k}$ is a nonzero singular vector of $F_{p_+,s;n-2}$, it suffices to show the nonvanishingness of $G^{[s^\vee,s]}_{-,(1-n)p_+}u_{n,k}$.
By the Jantzen filtration of $F_{p_+,s;n}$ \cite[Sections 8.2-8.3]{IK}, 
\begin{align*}
u_{n,k}\in IK(k\rbrack(F_{p_+,s;n})\setminus IK(k-1\rbrack(F_{p_+,s;n}).
\end{align*}
From this, we have
\begin{align}
\label{ep-k-J}
\begin{aligned}
\epsilon^k \cdot {\rm e}^{\widetilde{\alpha}_{(1-n)p_+,s}-\alpha_{p_+,s;n}}u_{n,k}&\in \bigl(\mathcal{O}_{0;\epsilon}\otimes \pi^{(\widetilde{\alpha}_0)}(U(\mathfrak{Vir}))\bigr)\ket{\widetilde{\alpha}_{(1-n)p_+,s}(\epsilon)},\\
\epsilon^{k} \cdot {\rm e}^{\widetilde{\alpha}_{(1-n)p_+,s}-\alpha_{p_+,s;n}}u_{n,k}&\notin \bigl(\epsilon\cdot\mathcal{O}_{0;\epsilon}\otimes \pi^{(\widetilde{\alpha}_0)}(U(\mathfrak{Vir}))\bigr)\ket{\widetilde{\alpha}_{(1-n)p_+,s}(\epsilon)}.
\end{aligned}
\end{align}
In particular, by Lemma \ref{lem-imageQ}, the vector
\begin{align}
\label{ep-k-J-2}
\epsilon^k \cdot\widetilde{Q}^{[s]}_-{\rm e}^{\widetilde{\alpha}_{(1-n)p_+,s}-\alpha_{p_+,s;n}}u_{n,k}
\in \bigl(\mathcal{O}_{0;\epsilon}\otimes \pi^{(\widetilde{\alpha}_0)}(U(\mathfrak{Vir}))\bigr)\widetilde{Q}^{[s]}_-\ket{\widetilde{\alpha}_{(1-n)p_+,s}(\epsilon)}
\end{align}
is nonzero. Here, we used the commutativity of $\widetilde{Q}^{[s]}_-$ with the action of $\pi^{(\widetilde{\alpha}_0)}(U(\mathfrak{Vir}))$.
Note that ${Q}^{[s]}_-\ket{{\alpha}_{p_+,s;n}}\in \pi^{(\alpha_0)}(U(\mathfrak{Vir}))\ket{{\alpha}_{p_+,s^\vee;n-1}}$ (see Figure \ref{51fig-fel}).
Then, by Proposition~\ref{prop:tw} and the Jantzen filtration of $F_{p_+,s^\vee;n-1}$, we have
\begin{align}
\label{ep-k-J-4}
\begin{aligned}
\widetilde{Q}^{[s]}_-\ket{\widetilde{\alpha}_{(1-n)p_+,s}(\epsilon)}&\in \bigl(\mathcal{O}_{0;\epsilon}\otimes \pi^{(\widetilde{\alpha}_0)}(U(\mathfrak{Vir}))\bigr)\ket{\widetilde{\alpha}_{(1-n)p_+,-s}(\epsilon)},\\
\widetilde{Q}^{[s]}_-\ket{\widetilde{\alpha}_{(1-n)p_+,s}(\epsilon)}&\notin \bigl(\epsilon\cdot \mathcal{O}_{0;\epsilon}\otimes \pi^{(\widetilde{\alpha}_0)}(U(\mathfrak{Vir}))\bigr)\ket{\widetilde{\alpha}_{(1-n)p_+,-s}(\epsilon)}.\\
\end{aligned}
\end{align}
By (\ref{ep-k-J})-(\ref{ep-k-J-4}), we obtain
\begin{align}
\label{ep-k-J-3}
\begin{aligned}
\epsilon^k \cdot \widetilde{Q}^{[s]}_-{\rm e}^{\widetilde{\alpha}_{(1-n)p_+,s}-\alpha_{p_+,s;n}}u_{n,k}&\in \bigl(\mathcal{O}_{0;\epsilon}\otimes \pi^{(\widetilde{\alpha}_0)}(U(\mathfrak{Vir}))\bigr)\ket{\widetilde{\alpha}_{(1-n)p_+,-s}(\epsilon)},\\
\epsilon^{k} \cdot \widetilde{Q}^{[s]}_-{\rm e}^{\widetilde{\alpha}_{(1-n)p_+,s}-\alpha_{p_+,s;n}}u_{n,k}&\notin \bigl(\epsilon\cdot\mathcal{O}_{0;\epsilon}\otimes \pi^{(\widetilde{\alpha}_0)}(U(\mathfrak{Vir}))\bigr)\ket{\widetilde{\alpha}_{(1-n)p_+,-s}(\epsilon)}.
\end{aligned}
\end{align}
For $l\in \mathbb{Z}_{\geq 1}$, let $v_{n-1,l}$ denote a cosingular vector of $F_{p_+,s^\vee;n-1}$ with the $\mathcal{L}_0$-weight $h_{p_+,s^\vee;n+2l-1}$ (see Figure~\ref{51fig-lem-sing}), and let $v_{n-1,0}$ denote the lowest weight vector of $F_{p_+,s^\vee;n-1}$.
Here, by a cosingular vector we mean a nonzero vector that is annihilated by all positive modes of the Virasoro algebra in the quotient $F_{p_+,s^\vee;n-1}/{\rm Soc}$.
Then, by (\ref{eq:04}), (\ref{ep-k-J-3}), and the Jantzen filtration of $F_{p_+,s^\vee;n-1}$ \cite[Sections 8.2-8.3]{IK}, we have
\begin{align*}
\lim_{\epsilon\rightarrow 0}\hspace{-1mm}{}^F\epsilon^{-1}\widetilde{Q}^{[s]}_-{\rm e}^{\widetilde{\alpha}_{(1-n)p_+,s}-\alpha_{p_+,s;n}}u_{n,k}\in IK(k+1\rbrack(F_{p_+,s^\vee;n-1})\setminus IK(k\rbrack(F_{p_+,s^\vee;n-1}).
\end{align*}
From this, we have
\begin{align*}
\lim_{\epsilon\rightarrow 0}\hspace{-1mm}{}^F\epsilon^{-1}\widetilde{Q}^{[s]}_-{\rm e}^{\widetilde{\alpha}_{(1-n)p_+,s}-\alpha_{p_+,s;n}}u_{n,k}\in \mathbb{C}^\times v_{n-1,k+1}+\sum_{l=0}^{k}\pi^{(\alpha_0)}(U(\mathfrak{Vir}))v_{n-1,l}
\end{align*}
(for the definition of the above limit, see Definition~\ref{notlim}).
Thus by Proposition~\ref{Felder complex2} and Lemma~\ref{lem:tw}, $G^{[s^\vee,s]}_{-,(1-n)p_+}u_{n,k}$ does not vanish and hence, is a nonzero singular vector from Theorem~\ref{G-hom}. Thus the desired claim follows for $n\geq 1$.
The case of $n\leq 0$ can be proved in the same way by noting Proposition \ref{dual-scop}.

Next, we show the case $1\leq r<p_+$.
In this case, the domain of $G^{[s^\vee,s]}_{-,r-np_+}$ is the braided type $F_{r,s;n}$.
Let $w$ be a subsingular vector in ${\rm ker}Q^{[s]}_-\cap F_{r,s;n}$.
Here, by a subsingular vector we mean a nonzero vector that is annihilated by all positive modes of the Virasoro algebra in the quotient ${\rm Soc}_2(F_{r,s;n})/{\rm Soc}_1$.
Assume that the $\mathcal{L}_0$-weight of $w$ is greater than or equal to the lowest of $F_{r,s;n-2}$.
Let $w'$ be the subsingular vector in $F_{r,s;n-2}$ with the same $\mathcal{L}_0$-weight as $w$.
By Lemma~\ref{lem:tw}, $G^{[s^\vee,s]}_{-,r-np_+}w$ is contained in a $U(\mathfrak{Vir})$-module generated by some subsingular vectors of ${\rm ker}Q^{[s]}_-\cap F_{r,s;n-2}$.
Then, by an argument analogous to the case $r=p_+$, we see that the image $G^{[s^\vee,s]}_{-,r-np_+}w$ contains a nonzero component in $\mathbb{C}^\times w'$.
Together with Proposition~\ref{Felder complex} and Theorem~\ref{G-hom}, this shows that the assertion also holds for $1\leq r<p_+$.
\end{proof}

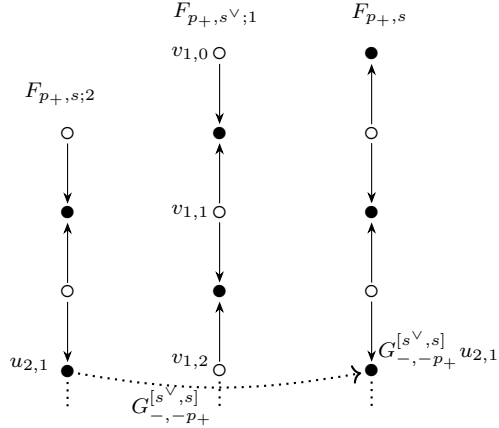
\begin{figure}[htbp]
  \centering
\begin{tikzpicture}[scale=1.05]
\node[inner sep=0.7pt] (a1) at (-0.5, -0.75) {$\bullet$};
\node[inner sep=0.7pt] (a2) at (-0.5, -0.75-1) {$\circ$};
\node[inner sep=0.7pt] (a3) at (-0.5, -0.75-2) {$\bullet$};
\node[inner sep=0.7pt] (a4) at (-0.5, -0.75-3) {$\circ$};
\node[inner sep=0.7pt] (a5) at (-0.5, -0.75-3-1) {$\bullet$};
\node[inner sep=0.7pt] (c) at (-0.5, -0.75-3.1-1) {};
\node[inner sep=0.7pt] (c1) at (-0.5, -0.75-3-0.5-1) {};

\draw[arrows = {-Stealth[scale=0.9]}] (a2) to (a1);
\draw[arrows = {-Stealth[scale=0.9]}] (a2) to (a3);
\draw[arrows = {-Stealth[scale=0.9]}] (a4) to (a3);
\draw[arrows = {-Stealth[scale=0.9]}] (a4) to (a5);
\draw[ dotted,thick] (c) to (c1);

\node[inner sep=0.7pt,font=\scriptsize] (t1) at (-0.5+0.1, -0.75+0.5) {${F_{p_+,s}}$};
\node[inner sep=0.7pt,font=\scriptsize] (t2) at (-0.95+0.1-0.6-1, -0.75+0.5) {${F_{p_+,s^\vee;1}}$};
\node[inner sep=0.7pt,font=\scriptsize] (t1) at (-0.5+0.1-2-2, -0.75+0.5-1) {${F_{p_+,s;2}}$};


\node[inner sep=0.7pt,shift={(-2,0)}] (ak1) at (-0.5, -0.75) {$\circ$};
\node[inner sep=0.7pt,shift={(-2,0)}] (ak2) at (-0.5, -0.75-1) {$\bullet$};
\node[inner sep=0.7pt,shift={(-2,0)}] (ak3) at (-0.5, -0.75-2) {$\circ$};
\node[inner sep=0.7pt,shift={(-2,0)}] (ak4) at (-0.5, -0.75-3) {$\bullet$};
\node[inner sep=0.7pt,shift={(-2,0)}] (ak5) at (-0.5, -0.75-3-1) {$\circ$};
\node[inner sep=0.7pt,shift={(-2,0)}] (ck) at (-0.5, -0.75-3.1-1) {};
\node[inner sep=0.7pt,shift={(-2,0)}] (ck1) at (-0.5, -0.75-3-0.5-1) {};

\draw[arrows = {-Stealth[scale=0.9]}] (ak1) to (ak2);
\draw[arrows = {-Stealth[scale=0.9]}] (ak3) to (ak2);
\draw[arrows = {-Stealth[scale=0.9]}] (ak3) to (ak4);
\draw[arrows = {-Stealth[scale=0.9]}] (ak5) to (ak4);
\draw[ dotted,thick] (ck) to (ck1);

\node[inner sep=0.7pt,shift={(-4,-1.055)}] (aak1) at (-0.5, -0.75) {$\circ$};
\node[inner sep=0.7pt,shift={(-4,-1.055)}] (aak2) at (-0.5, -0.75-1) {$\bullet$};
\node[inner sep=0.7pt,shift={(-4,-1.055)}] (aak3) at (-0.5, -0.75-2) {$\circ$};
\node[inner sep=0.7pt,shift={(-4,-1.055)}] (aak4) at (-0.5, -0.75-3) {$\bullet$};
\node[inner sep=0.7pt,shift={(-4,-1.055)}] (cck) at (-0.5, -0.75-3.1) {};
\node[inner sep=0.7pt,shift={(-4,-1.055)}] (cck1) at (-0.5, -0.75-3-0.5) {};

\draw[arrows = {-Stealth[scale=0.9]}] (aak1) to (aak2);
\draw[arrows = {-Stealth[scale=0.9]}] (aak3) to (aak2);
\draw[arrows = {-Stealth[scale=0.9]}] (aak3) to (aak4);
\draw[ dotted,thick] (cck) to (cck1);

\draw[->,dotted,thick] (aak4) to[bend right=10]  (a5);
\node[inner sep=0.7pt,font=\scriptsize] (t1) at (-0.5-2.5, -0.75-3-1-0.45) {$G^{[s^\vee,s]}_{-,-p_+}$};
\node[inner sep=0.7pt,font=\scriptsize,above left] (vs) at (-0.5-4, -0.75-3-1.055) {$u_{2,1}$};
\node[inner sep=0.7pt,font=\scriptsize,above right] (vs) at  (-0.5+0.05, -0.75-3-1) {$G^{[s^\vee,s]}_{-,-p_+}u_{2,1}$};
\node[inner sep=0.7pt,font=\scriptsize, left] (vs) at  (-0.5-2, -0.75) {$v_{1,0}$};
\node[inner sep=0.7pt,font=\scriptsize, left] (vs) at  (-0.5-2, -0.75-2) {$v_{1,1}$};
\node[inner sep=0.7pt,font=\scriptsize, above left] (vs) at  (-0.5-2, -0.75-4) {$v_{1,2}$};
\end{tikzpicture}
\caption{\label{51fig-lem-sing}The singular vector $u_{n,k}$ and the cosingular vectors $v_{n-1,l}$ in the case of $(n,k)=(2,1)$.}
\mbox{}\\
\end{figure}

Recall that a derivation on a vertex (super)algebra $V$ is an even linear map $D:V\rightarrow V$, such that $D(a_nb)=(Da)_nb+a_nDb$ for all $a,b\in V$ and $n\in \mathbb{Z}$ (cf.~\cite{AdamovicD/MilasA:2010}). In terms of vertex operators, this formula is equivalent to the following:
\begin{align*}
DY(a,w)b=Y(Da)b+Y(a,w)Db. 
\end{align*} 
\begin{thm}
\label{G-deri}
For any $v_1,v_2\in \mathcal{W}_{p_+,p_-}$, we have
\begin{align*}
\mathbb{G}^{[p_\pm]}_{\pm}Y(v_1,w)v_2=Y(\mathbb{G}^{[p_\pm]}_{\pm}v_1,w)v_2+Y(v_1,w)\mathbb{G}^{[p_\pm]}_{\pm}v_2.
\end{align*}
That is, $\mathbb{G}^{[p_+]}_{+}$ and $\mathbb{G}^{[p_-]}_{-}$ act on $\mathcal{W}_{p_+,p_-}$ as derivations.
\end{thm}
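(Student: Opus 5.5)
We work with $\mathbb{G}^{[p_+]}_+$; the case of $\mathbb{G}^{[p_-]}_-$ is the same with $Q_-$, $\widetilde{\alpha}_-$, and the extra factor $p_+/p_-$. The idea is to obtain the derivation property as the $\epsilon$-derivative of a commutation relation that holds exactly for $\epsilon\neq 0$. By Theorem~\ref{G+G-prop} and Lemma~\ref{lem:tw}, on $\mathcal{W}_{p_+,p_-}\subset{\rm ker}Q_+$ we may write
\begin{align*}
\mathbb{G}^{[p_+]}_+=\lim_{\epsilon\rightarrow 0}\hspace{-1mm}{}^F\frac{1}{\epsilon}N^{[p_+-1,1]}_{+,k}(\epsilon)
\end{align*}
componentwise. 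Write $v_1,v_2$ as sums of homogeneous components in $F_{1,1;2m}$. For each $v_i$ choose the $\epsilon$-lift $\tilde v_i={\rm e}^{\widetilde{\alpha}_{1,1;2m}-\alpha_{1,1;2m}}v_i$, which is obtained by shifting the Heisenberg weight along the deformed lattice. Then consider the deformed lattice vertex operator $\widetilde{Y}(\tilde v_1,w)$, built from (\ref{Y-ver}) with the deformed momenta.

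\textbf{Step 1 (exact intertwining for $\epsilon\neq0$).} For the single screening current, $[\widetilde{Q}_+,\widetilde{Y}(\tilde v_1,w)]=\widetilde{Y}(\widetilde{Q}_+\tilde v_1,w)$. This holds because $\widetilde{Q}_+$ is the residue of a field that is local with respect to $\widetilde{Y}(\tilde v_1,w)$. Locality requires integral exponents, so we must check that $\widetilde{\alpha}_+$ times the deformed lattice momentum remains an integer; we arrange this by choosing the lift of $F_{1,1;2m}$ so that the pairing with $\widetilde{\alpha}_+$ is constant in $\epsilon$. Call this identity (a).

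For $\widetilde{Q}^{[p_+-1]}_+$, the identity
\begin{align*}
\widetilde{Q}^{[p_+-1]}_+\widetilde{Y}(x,w)y=\widetilde{Y}(\widetilde{Q}^{[p_+-1]}_+x,w)y+\widetilde{Y}(x,w)\widetilde{Q}^{[p_+-1]}_+y+(\text{mixed terms})
\end{align*}
holds, where the mixed terms arise from distributing the $p_+-1$ integration points between $x$ and $y$. We deform the contour in each variable as in the proof of Proposition~\ref{prop:qr-qs}, using the generalized Stokes theorem in the $\bm{y}$-variables. The mixed terms come with coefficients of the form $\widetilde{Q}^{[j]}_+\tilde v_1$ with $1\le j\le p_+-2$ acting on pieces that vanish at $\epsilon=0$.

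\textbf{Step 2 (expand in $\epsilon$).} Apply $N(\epsilon)=\widetilde{Q}^{[p_+-1]}_+{\rm e}^{\bullet}\widetilde{Q}_+{\rm e}^{\bullet}$ to $\widetilde{Y}(\tilde v_1,w)\tilde v_2$ and use (a):
\begin{align*}
N\widetilde{Y}(\tilde v_1,w)\tilde v_2=\widetilde{Q}^{[p_+-1]}_+\bigl(\widetilde{Y}(\widetilde{Q}_+\tilde v_1,w)\tilde v_2+\widetilde{Y}(\tilde v_1,w)\widetilde{Q}_+\tilde v_2\bigr).
\end{align*}
Since $Q_+v_i=0$, both $\widetilde{Q}_+\tilde v_i\in\epsilon\cdot\widetilde{F}$. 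Dividing by $\epsilon$ and applying Step 1 to the outer operator, the leading terms are $\widetilde{Y}(\epsilon^{-1}N\tilde v_1,w)\tilde v_2+\widetilde{Y}(\tilde v_1,w)\epsilon^{-1}N\tilde v_2$. The cross terms are of the form $\widetilde{Y}(\epsilon^{-1}\widetilde{Q}_+\tilde v_1,w)\widetilde{Q}^{[p_+-1]}_+\tilde v_2$, and their $\epsilon\to0$ limit is $Y(\cdot,w)Q^{[p_+-1]}_+v_2$. This vanishes because $v_2\in\mathcal{W}_{p_+,p_-}\subset{\rm ker}Q^{[p_+-1]}_+$: for $p_+\ge2$, ${\rm ker}Q_+$ on $F_{1,1;2m}$ equals ${\rm ker}Q^{[p_+-1]}_+$-compatible data by the Felder complex, Proposition~\ref{Felder complex2}. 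The other cross term is handled symmetrically. The mixed terms from Step 1 likewise acquire an extra factor $\epsilon$ and die in the limit.

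\textbf{Step 3 (pass to the limit).} Apply $\lim{}^F$, using Lemma~\ref{lem:tw0}, Remark~\ref{Rem:lim}, and the fact that $\widetilde{Y}\to Y$ coefficientwise with $\mathcal{O}_{0;\epsilon}$ coefficients. The left-hand side becomes $\mathbb{G}^{[p_+]}_+Y(v_1,w)v_2$, since $Y(v_1,w)v_2\in\mathcal{W}_{p_+,p_-}$ and $\widetilde{Y}(\tilde v_1,w)\tilde v_2$ is an admissible lift of it. Independence of the lift follows from Remark~\ref{Rem:lim} and Theorem~\ref{G+G-prop}.

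\textbf{Main obstacle.} The hard part is Step 1 for the multiple screening $\widetilde{Q}^{[p_+-1]}_+$. One needs a rigorous justification that the twisted cycle $[\Delta^{(\widetilde{\alpha}_+)}_{p_+-2}]$ can be split near the insertion point $w$. One also needs to confirm that the mixed contributions are $O(\epsilon)$ uniformly in the modes, so that they drop out of $\lim{}^F$. We would first try to avoid the mixed terms entirely by restricting to the case where $\widetilde{Q}_+\tilde v_1=O(\epsilon)$, and by invoking Theorem~\ref{sus-prop} for holomorphy in $\epsilon$ of each piece.
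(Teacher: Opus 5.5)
There is a genuine gap, and it is the step you flag as your main obstacle: the second half of Step~1 and its use in Step~2.

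\textbf{Why the outer step fails.} A Leibniz rule for the deformed multiple screening $\widetilde{Q}^{[p_+-1]}_+$ relative to $\widetilde{Y}(x,w)$ is not available in the form you need. The state $x$ it must act on has momentum $\widetilde{\alpha}_{p_+-1,k}$. The exponent $\widetilde{\alpha}_+\widetilde{\alpha}_{p_+-1,k}=\frac{2-p_+}{2}\widetilde{\alpha}_+^2+k-1$ is not an integer for $p_+\geq 3$; already at $\epsilon=0$ it is $(2-p_+)p_-/p_+$ modulo $\mathbb{Z}$. So $\widetilde{Q}_+(z)$ is not local with $\widetilde{Y}(x,w)$. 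In addition, the twisted cycle $[\Delta^{(\widetilde{\alpha}_+)}_{p_+-2}]$ is anchored at the origin and cannot simply be split between $0$ and $w$.

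Your cross terms are also ill-defined. $\widetilde{Q}^{[p_+-1]}_+$ is defined on $\widetilde{F}_{p_+-1,k}$, not on $\tilde v_2\in\widetilde{F}_{1,1+2np_-}$; for $p_+\geq 3$ these momenta differ even at $\epsilon=0$. Hence ``$\mathcal{W}_{p_+,p_-}\subset{\rm ker}\,Q^{[p_+-1]}_+$'' is not a usable statement: in the Felder complex $Q^{[p_+-1]}_+$ acts on the target of $Q_+$, not on $\mathcal{W}_{p_+,p_-}$. The claim that the mixed terms carry an extra factor of $\epsilon$ is likewise unsupported. Only when the outer operator is itself a single screening, i.e.\ for $\mathbb{G}^{[2]}_{\pm}$, does your scheme go through essentially as written.

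\textbf{The missing idea.} You never need to handle $\widetilde{Q}^{[p_+-1]}_+$ before the limit.
\begin{enumerate}
\item Use only your identity (a) for the single screening $\widetilde{Q}_+$.
\item Absorb the $O(\epsilon)$ discrepancy between ${\rm e}^{\bullet}Y(v_1,w)v_2$ and $Y(\tilde v_1,w)\tilde v_2$ using $Q^{[p_+-1]}_+\circ Q_+=0$. This, not Remark~\ref{Rem:lim}, is what makes the choice of lift irrelevant.
\item Pass to the limit as in Lemma~\ref{lem:tw}. With $\widehat{Q}_+=\oint_{z=0}:Q_+(z)\phi_0(z):\frac{{\rm d}z}{2\pi i}$ this gives
\begin{align*}
\mathbb{G}^{[p_+]}_+Y(v_1,w)v_2=Q^{[p_+-1]}_+Y(\widehat{Q}_+v_1,w)v_2+Q^{[p_+-1]}_+Y(v_1,w)\widehat{Q}_+v_2 .
\end{align*}
\item Rewrite the first term by skew-symmetry as $Q^{[p_+-1]}_+e^{wL_{-1}}Y(v_2,-w)\widehat{Q}_+v_1$. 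Now the vertex operator is that of $v_2\in\mathcal{W}_{p_+,p_-}$, and $\widehat{Q}_+v_1$ sits at the origin where the cycle is anchored. This removes exactly the obstacle you identified.
\item Use the known fact that the undeformed $Q^{[p_+-1]}_+$ is a $\mathcal{W}_{p_+,p_-}$-module homomorphism, together with $[L_{-1},Q^{[p_+-1]}_+]=0$ (Proposition~\ref{com-TK}). This moves $Q^{[p_+-1]}_+$ past $e^{wL_{-1}}Y(v_2,-w)$ and past $Y(v_1,w)$.
\item A second application of skew-symmetry yields $Y(\mathbb{G}^{[p_+]}_+v_1,w)v_2+Y(v_1,w)\mathbb{G}^{[p_+]}_+v_2$.
\end{enumerate}
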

\begin{proof}
We prove the case of $\mathbb{G}^{[p_+]}_{+}$. The second case can be proved in the same way.
Let the $a_0$-weights of $v_1$ and $v_2$ be $\alpha_{1,1;2m}$ and $\alpha_{1,1;2n}$, respectively.
We denote
\begin{align*}
\widetilde{v}_1&={\rm e}^{\widetilde{\alpha}_{1,1+2mp_-}-{\alpha}_{1,1;2m}}v_1,&
\widetilde{v}_2&={\rm e}^{\widetilde{\alpha}_{1,1+2np_-}-{\alpha}_{1,1;2n}}v_2,\\
\bm{e}_1&={\rm e}^{\widetilde{\alpha}_{p_+-1,1+(2m+2n+1)p_-}-\widetilde{\alpha}_{-1,1+2(m+n)p_-}},&
\bm{e}_2&={\rm e}^{{\alpha}_{1,1;2(m+n+1)}-\widetilde{\alpha}_{1-p_+,1+(2m+2n+1)p_-}}.
\end{align*}
Note that, for any vector $\widetilde{v}^\vee_3\in \widetilde{F}^\vee_{1,1+2(m+n)p_-}$,
\begin{align}
\label{eq:O-v12}
( \widetilde{v}^\vee_3,{\rm e}^{\widetilde{\alpha}_{1,1+2(m+n)p_-}-{\alpha}_{1,1;2(m+n)}}Y(v_1,w)v_2)-( \widetilde{v}^\vee_3,Y(\widetilde{v}_1,w)\widetilde{v}_2)\in \epsilon \cdot\mathcal{O}_{0;\epsilon},
\end{align}
with $w$ fixed, where we define the inner product $(\ \cdot\ ,\ \cdot \ )_{\widetilde{F}_{1,1+2(m+n)p_-}}$ on $\widetilde{F}_{1,1+2(m+n)p_-}$ by naturally extending the inner product (\ref{inn-pro}) over $\mathbb{C}$.
Fix any $\psi\in {F}^*_{1,1+2(m+n)p_-}$.
Then, we have
\begin{align}
\label{eq:deri-deform}
\begin{aligned}
&\langle \psi,\mathbb{G}^{[p_+]}_{+}Y(v_1,w)v_2\rangle \\
&=\lim_{\epsilon\rightarrow 0}\epsilon^{-1}\langle \psi,\bm{e}_2\widetilde{Q}^{[p_+-1]}_+\bm{e}_1\widetilde{Q}_+{\rm e}^{\widetilde{\alpha}_{1,1+2(m+n)p_-}-{\alpha}_{1,1;2(m+n)}}Y(v_1,w)v_2\rangle\\
&=\lim_{\epsilon\rightarrow 0}\epsilon^{-1}\langle \psi,\bm{e}_2\widetilde{Q}^{[p_+-1]}_+\bm{e}_1\widetilde{Q}_+Y(\widetilde{v}_1,w)\widetilde{v}_2\rangle\\
&=\lim_{\epsilon\rightarrow 0}\epsilon^{-1}\Bigl(\langle \psi,\bm{e}_2\widetilde{Q}^{[p_+-1]}_+\bm{e}_1Y(\widetilde{Q}_+\widetilde{v}_1,w)\widetilde{v}_2\rangle+\langle \psi,\bm{e}_2\widetilde{Q}^{[p_+-1]}_+\bm{e}_1Y(\widetilde{v}_1,w)\widetilde{Q}_+\widetilde{v}_2\rangle\Bigr),
\end{aligned}
\end{align}
where in the first equality, we used Theorem~\ref{G+G-prop}, in the second equality, we used Proposition \ref{prop:tw}, (\ref{eq:O-v12}), and the relation $Q^{[r^\vee]}_+\circ Q^{[r]}_+ =0$, and in the last equality, we used the relation
\begin{align*}
\widetilde{Q}_+Y(\widetilde{v}_1,w)\widetilde{v}_2=Y(\widetilde{Q}_+\widetilde{v}_1,w)\widetilde{v}_2+Y(\widetilde{v}_1,w)\widetilde{Q}_+\widetilde{v}_2,
\end{align*}
which follows from the fact that $\widetilde{Q}_+(z)$ is local with $Y(\widetilde{v}_1,w)$ and $Y(\widetilde{v}_2,w)$.
We denote
\begin{align*}
\widehat{Q}_+:=\oint_{z=0}:{Q}_+(z)\phi_0(z):\frac{{\rm d}z}{2\pi i}.
\end{align*}
Then $\mathbb{G}^{[p_+]}_{+}=Q^{[p_+-1]}_+\circ\widehat{Q}_+$, and we have
\begin{align*}
\lim_{\epsilon\rightarrow 0}\hspace{-1mm}{}^F\epsilon^{-1}\widetilde{Q}_+\widetilde{v}_i=\widehat{Q}_+v_i,\qquad i=1,2.
\end{align*}
As in Lemma~\ref{lem:tw}, we can show that
\begin{align*}
&\lim_{\epsilon\rightarrow 0}\epsilon^{-1}\langle \psi,\bm{e}_2\widetilde{Q}^{[p_+-1]}_+\bm{e}_1Y(\widetilde{Q}_+\widetilde{v}_1,w)\widetilde{v}_2\rangle=\langle \psi,{Q}^{[p_+-1]}_+Y(\lim_{\epsilon\rightarrow 0}\hspace{-1mm}{}^F\epsilon^{-1}\widetilde{Q}_+\widetilde{v}_1,w){v}_2\rangle,\\
&\lim_{\epsilon\rightarrow 0}\epsilon^{-1}\langle \psi,\bm{e}_2\widetilde{Q}^{[p_+-1]}_+\bm{e}_1Y(\widetilde{v}_1,w)\widetilde{Q}_+\widetilde{v}_2\rangle=\langle \psi,{Q}^{[p_+-1]}_+Y({v}_1,w)\lim_{\epsilon\rightarrow 0}\hspace{-1mm}{}^F\epsilon^{-1}\widetilde{Q}_+\widetilde{v}_2\rangle.
\end{align*}
Thus, from (\ref{eq:deri-deform}), we have
\begin{align}
\begin{aligned}
\mathbb{G}^{[p_+]}_{+}Y(v_1,w)v_2
&={Q}^{[p_+-1]}_+Y(\widehat{Q}_+v_1,w){v}_2+{Q}^{[p_+-1]}_+Y({v}_1,w)\widehat{Q}_+v_2\\
&={Q}^{[p_+-1]}_+e^{wL_{-1}}Y(v_2,-w)\widehat{Q}_+v_1 +{Q}^{[p_+-1]}_+Y({v}_1,w)\widehat{Q}_+v_2\\
&=e^{wL_{-1}}Y(v_2,-w){Q}^{[p_+-1]}_+\widehat{Q}_+v_1 +Y({v}_1,w){Q}^{[p_+-1]}_+\widehat{Q}_+v_2\\
&=Y({Q}^{[p_+-1]}_+\widehat{Q}_+v_1,w){v}_2+Y({v}_1,w){Q}^{[p_+-1]}_+\widehat{Q}_+v_2,
\end{aligned}
\label{eq:g-v1v2}
\end{align}
where in the third equality, we used the fact that ${Q}^{[p_+-1]}_+$ is a $\mathcal{W}_{p_+,p_-}$-homomorphism \cite{NT,TW,Nak}. Hence, we obtain the desired identity.
\end{proof}

\begin{Remark}
It is clear that $:{Q}_+(z)\phi_0(z):$ and $Y(v,w)$ $(v\in \mathcal{W}_{p_+,p_-})$ are mutually local. 
Therefore, the argument preceding~(\ref{eq:g-v1v2}) in the proof of Theorem~\ref{G-deri} could in principle be omitted. 
On the other hand, it is natural to expect that the general operators $G^{[r^\vee,r]}_{+,k}$ and $G^{[s^\vee,s]}_{-,k}$ satisfy properties similar to those discussed in \cite[Theorem~4.17]{TW}. 
Since the first part of the proof of Theorem~\ref{G-deri} is likely to be important for establishing such properties, we include it for completeness.
\label{rem:g-hat}
\end{Remark}

\section{Application to the symmetry of $\mathcal{W}_{p_+,p_-}$}
\label{symWpq}
In this section, we show that the Lie algebra $\mathfrak{sl}_2(\mathbb{C})$ acts on $\mathcal{W}_{p_+,p_-}$ using the derivations constructed in the previous section.
\subsection{The $\mathfrak{sl}_2$-symmetry of $\mathcal{W}_{1,p}$}
\label{hidden-1}
Let $(p_+,p_-)=(1,p)$.
In what follows, we use the following notation
\begin{align*}
&
\mathbb{E}
=
Q_+,
&
\mathbb{F}
=
\mathbb{G}^{[p]}_-|_{\mathcal{W}_{1,p}}
.
\end{align*}
We use the same notation for the restrictions of $\mathbb{E}$ and $\mathbb{F}$ to $\mathcal{W}_{1,p}$.

It is known that $\mathcal{W}_{1,p}$ has automorphism group, and its structure was determined by \cite{AdamovicD/LinX/MilasA:2013}.
\begin{thm}[\cite{AdamovicD/LinX/MilasA:2013}]
\label{thm-ALM}
The full automorphism group of $\mathcal{W}_{1,p}$ is $PSL_2(\mathbb{C})$.
\end{thm}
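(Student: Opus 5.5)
The plan is to realize $PSL_2(\mathbb{C})$ inside ${\rm Aut}(\mathcal{W}_{1,p})$ by integrating an $\mathfrak{sl}_2$-action by derivations built from $\mathbb{E}$ and $\mathbb{F}$, and then to show that every automorphism arises this way by looking at its action on the generating triplet.

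\textbf{Construction of the $\mathfrak{sl}_2$-action.} By Theorem~\ref{G-deri}, $\mathbb{F}$ is a derivation of $\mathcal{W}_{1,p}$. The operator $\mathbb{E}=Q_+$ is a zero mode of a weight-one primary field that is local with $\mathcal{W}_{1,p}$, so it is a derivation as well. Both commute with the Virasoro action: for $\mathbb{F}$ this is Theorem~\ref{G-hom}, and for $\mathbb{E}$ it is by construction. Hence both preserve each multiplicity space of $\bigoplus_n(2n+1)L(\Lambda_n)$.

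Set $\mathbb{H}:=[\mathbb{E},\mathbb{F}]$, which is again a derivation. I would compute its action on the lowest weight vectors $w^{(n)}_i$ as follows:
\begin{enumerate}
\item $\mathbb{E}$ shifts the $a_0$-charge by $\alpha_+$, so $\mathbb{E}w^{(n)}_i\in\mathbb{C}w^{(n)}_{i+1}$.
\item $\mathbb{F}$ shifts the charge by the lattice step, so $\mathbb{F}w^{(n)}_i\in\mathbb{C}w^{(n)}_{i-1}$.
\item Theorem~\ref{non-Gop} gives $\mathbb{F}w^{(n)}_i\neq0$ for $i>-n$.
\end{enumerate}
Then $\mathbb{H}w^{(n)}_i=c_{n,i}w^{(n)}_i$ for some scalars $c_{n,i}$. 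Since $\mathbb{H}$ is a derivation, $c_{n,i}$ is additive along the relation \eqref{wpq-act} of Proposition~\ref{sl2action2}. It is therefore determined by its values on $W^\pm,W^0$, and the claim to prove is $c_{n,i}=\kappa\, i$ for a nonzero constant $\kappa$.

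After rescaling $\mathbb{F}$, the relations $[\mathbb{H},\mathbb{E}]=2\mathbb{E}$ and $[\mathbb{H},\mathbb{F}]=-2\mathbb{F}$ follow because $\mathbb{H}$ measures the charge index $i$. This gives an $\mathfrak{sl}_2$-action by derivations under which each multiplicity space of $L(\Lambda_n)$ is the irreducible $(2n+1)$-dimensional representation.

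\textbf{Integration and the kernel.} The action is locally finite, so it exponentiates to a homomorphism $SL_2(\mathbb{C})\to{\rm Aut}(\mathcal{W}_{1,p})$. Only odd-dimensional representations occur, so $-1$ acts trivially, and the kernel is exactly $\{\pm1\}$. This yields an injection $PSL_2(\mathbb{C})\hookrightarrow{\rm Aut}(\mathcal{W}_{1,p})$.

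\textbf{Surjectivity.} Let $g$ be any automorphism. It fixes the conformal vector, so it commutes with $\mathfrak{Vir}$ and preserves the 3-dimensional space $\mathbb{C}W^+\oplus\mathbb{C}W^0\oplus\mathbb{C}W^-$ of lowest weight vectors of weight $h_{3,1}$. By Proposition~\ref{genW}, $g$ is determined by its restriction to this triplet. The restriction must preserve two structures:
\begin{enumerate}
\item the invariant symmetric bilinear form, namely the top mode $W^a_{(2h_{3,1}-1)}W^b$, which is a multiple of $\mathbf{1}$;
\item the component of the OPE $W^a(z)W^b(w)$ landing in the triplet, or if that vanishes, the $W$-dependent component carrying the antisymmetric $\mathfrak{sl}_2$-structure.
\end{enumerate}
Preserving the nondegenerate symmetric form puts $g|_{\rm triplet}$ in $O(3)$. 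Preserving the antisymmetric $\mathfrak{so}_3$-type bracket, which is essentially determinant-sensitive, cuts this down to $SO(3,\mathbb{C})\cong PSL_2(\mathbb{C})$. That subgroup is exactly the image of the integrated $\mathfrak{sl}_2$-action, so after composing with an element of that image we get $g|_{\rm triplet}={\rm id}$, hence $g={\rm id}$.

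\textbf{Main obstacles.} I expect the hard step to be the nondegeneracy $\kappa\neq0$, i.e. that $\mathbb{H}W^0=0$ and $\mathbb{H}W^+\neq0$. I would first get this from Theorem~\ref{non-Gop} combined with an explicit lowest-weight computation of $\mathbb{E}\mathbb{F}W^+$ via the formula \eqref{wi-G-} with $\phi_0$. The second delicate point is exhibiting the antisymmetric invariant in the triplet OPE; I would take it from the known structure of the Zhu algebra $A(\mathcal{W}_{1,p})$.
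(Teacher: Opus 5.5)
Your overall architecture is the one the paper attributes to Adamovi\'{c}--Lin--Milas, and it is the same one the paper itself uses for $\mathcal{SW}(m)$. You embed $PSL_2(\mathbb{C})$ by integrating an $\mathfrak{sl}_2$-action by derivations, map ${\rm Aut}(\mathcal{W}_{1,p})$ to $PSL_2(\mathbb{C})$ through the antisymmetric structure on the triplet coming from the Zhu algebra, and kill the kernel by strong generation. The integration and Zhu-algebra steps are fine at the level of detail given; the latter is deferred to known results, exactly as in the paper.

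\textbf{The gap.} The gap is in producing the $\mathfrak{sl}_2$-triple: you never establish $\mathbb{H}W^0=0$, i.e.\ $c_{\mathbb{E},1}c_{\mathbb{F},2}=c_{\mathbb{E},2}c_{\mathbb{F},1}$ in the notation of the proof of Theorem~\ref{thm:mainW1p}.
\begin{enumerate}
\item The route you suggest, an explicit computation of $\mathbb{E}\mathbb{F}W^+$, only yields $\mathbb{H}W^+\neq0$. That is already immediate: $\mathbb{E}W^+=0$, $\mathbb{F}W^+=c_{\mathbb{F},1}W^0$ with $c_{\mathbb{F},1}\neq0$ by Theorem~\ref{non-Gop}, and $\mathbb{E}W^0$ is a nonzero multiple of $W^+$. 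It gives no information on $c_{1,0}$.
\item Additivity along \eqref{wpq-act} cannot supply $c_{1,0}$ either. A $W^\pm$-step sends $(n,i)$ to $(n+1,i\pm1)$ and so preserves the parity of $n+i$. Hence $c_{1,0}$ is never tied to $c_{1,\pm1}$.
\end{enumerate}
Until this is shown, you do not know that $\mathbb{H}$ is proportional to $a_0$. So the $\mathfrak{sl}_2$-relations, and with them the embedding of $PSL_2(\mathbb{C})$, are unproved.

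\textbf{How the paper closes it.} This is precisely \eqref{cru-ef-rel}: the paper shows $[\mathbb{E},\mathbb{F}]=0$ on $\mathcal{W}_{1,p}\cap F_{1,1}$ via the $\epsilon$-deformation.
\begin{enumerate}
\item Write $\mathbb{E}\mathbb{F}\phi=Q_+Q^{[p-1]}_-\lim_{\epsilon\rightarrow 0}\hspace{-1mm}{}^F\epsilon^{-1}\widetilde{Q}_-\phi$ (Lemma~\ref{lem:tw}).
\item Move $Q_+$ past $Q^{[p-1]}_-$ (Proposition~\ref{prop:qr-qs}) and into the limit (Lemma~\ref{lem:tw0}).
\item Commute $\widetilde{Q}_+$ with $\widetilde{Q}_-$, which is valid since $\widetilde{\alpha}_+\widetilde{\alpha}_-=-2$.
\item Replace $\widetilde{Q}_+\phi$ by ${\rm e}^{\widetilde{\alpha}_{-1,1}-\alpha_{-1,1}}Q_+\phi$. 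The $O(\epsilon)$ error is annihilated because $Q^{[p-1]}_-\circ Q_-=0$, and what remains is $\mathbb{F}\mathbb{E}\phi$.
\end{enumerate}

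\textbf{An elementary alternative within your framework.} A trace argument also works. $\mathbb{E}$ and $\mathbb{F}$ preserve the finite-dimensional spaces $M_n={\rm span}\{w^{(n)}_i\}_{i=-n}^{n}$ of singular vectors of weight $\Lambda_n$, so ${\rm tr}(\mathbb{H}|_{M_n})=0$.
\begin{enumerate}
\item For $n=1$ this gives $c_{1,1}+c_{1,0}+c_{1,-1}=0$.
\item For $n=2$, your additivity gives $c_{2,\pm2}=2c_{1,\pm1}$, $c_{2,\pm1}=c_{1,\pm1}+c_{1,0}$ and $c_{2,0}=c_{1,1}+c_{1,-1}$. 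The trace condition then reads $4(c_{1,1}+c_{1,-1})+2c_{1,0}=0$.
\end{enumerate}
Together these force $c_{1,0}=0$ and $c_{1,-1}=-c_{1,1}$, with $c_{1,1}=c_{\mathbb{E},2}c_{\mathbb{F},1}\neq0$. With either fix in place, the rest of your argument goes through.
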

The key points in the proof of this theorem are to show the existence of a homomorphism from \(  {\rm Aut}(\mathcal{W}_{1,p}) \) to \( PSL_2(\mathbb{C})\), and to show that \( PSL_2(\mathbb{C}) \) is a subgroup of the full automorphism group \( {\rm Aut}(\mathcal{W}_{1,p}) \).
The former essentially follows from the structure of the Zhu-algebra ${A}(\mathcal{W}_{1,p})$ \cite{AdamovicD/MilasA:2008}, while the latter follows from the fact that \( \mathfrak{sl}_2 \) acts on \( \mathcal{W}_{1,p} \) by derivations, a hidden symmetry due to \cite{AdamovicD/LinX/MilasA:2013}.
Although the next theorem follows immediately from Theorems~\ref{thm-ALM} and \ref{G-deri}, we include it here to show that it can also be derived by our method.
\begin{thm}
\label{thm:mainW1p}
Define $h\in {\rm End}_{\mathfrak{Vir}}(\mathcal{W}_{1,p})$ as follows:
\begin{align*}
h=-\frac{2a_0}{p\alpha_-}.
\end{align*}
Then there exists a nonzero constant $c_{\mathbb{E},\mathbb{F}}$ such that $\mathbb{E}$, $h$, and $\mathbb{F}$ satisfy the relation
\begin{align}
\label{eq:efh}
[\mathbb{E},\mathbb{F}]=c_{\mathbb{E},\mathbb{F}}h.
\end{align}
\end{thm}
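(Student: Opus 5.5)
Both $\mathbb{E}$ and $\mathbb{F}$ are derivations of $\mathcal{W}_{1,p}$ commuting with the Virasoro action. For $\mathbb{E}=Q_+$ this holds because $Q_+$ is the zero mode of a weight-one field local with $\mathcal{W}_{1,p}$; for $\mathbb{F}$ it follows from Theorems~\ref{G-hom} and \ref{G-deri}. Hence $[\mathbb{E},\mathbb{F}]$ is again a derivation commuting with all $L_n$. The same holds for $h$: since $a_0$ commutes with the Virasoro modes and acts by a constant on each $F_{1,1;2n}$, with these constants additive under $Y$, $h$ is a derivation. The plan is therefore to compare the two derivations $[\mathbb{E},\mathbb{F}]$ and $h$ on the strong generators of Proposition~\ref{genW}, namely $T$, $W^+$, $W^0$, $W^-$. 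A derivation is determined by its values on strong generators, so equality there gives \eqref{eq:efh}.

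First, both sides kill $T$ and $\ket{0}$. For $h$ this is because $a_0T=0$. For $[\mathbb{E},\mathbb{F}]$ it follows from weight and charge reasons: $\mathbb{F}$ shifts the $a_0$-charge by $2\alpha_{1,1;-2}$-type amounts, landing in $F_{1,1;-2}$, whose lowest weight exceeds $2$. Next comes the grading. The operator $\mathbb{E}$ maps $F_{1,1;2n}\to F_{1,1;2n-2}$ up to the appropriate relabelling; in the notation of $w^{(n)}_i$ with $p_+=1$ it sends $w^{(n)}_i\mapsto w^{(n)}_{i+1}$ up to scalar. By Theorem~\ref{non-Gop}, $\mathbb{F}$ sends each singular vector in $\ker Q_-\cap F_{1,1;2n}$ to a \emph{nonzero} singular vector two steps the other way, i.e.\ $w^{(n)}_i\mapsto c\, w^{(n)}_{i-1}$ with $c\neq0$. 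Since $h$ acts on $F_{1,1;2n}$ by $-2\alpha_{1,1;2n}/(p\alpha_-)$, which is linear in the charge index, it acts on the span $\{w^{(1)}_{-1},w^{(1)}_0,w^{(1)}_1\}=\mathrm{span}\{W^-,W^0,W^+\}$ diagonally with eigenvalues proportional to $-2,0,2$, after fixing the normalisation.

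The core computation is to show that $[\mathbb{E},\mathbb{F}]$ acts on $\mathrm{span}\{W^\pm,W^0\}$ as $c\,h$. By Virasoro-equivariance and charge grading, $[\mathbb{E},\mathbb{F}]$ preserves each one-dimensional space of lowest weight vectors of weight $h_{3,1}$ in a fixed Fock component, so it acts diagonally on $W^\pm,W^0$. It also kills $w^{(0)}_0=\ket{0}$. Write $\mathbb{E}w^{(n)}_i=e^{(n)}_i w^{(n)}_{i+1}$ and $\mathbb{F}w^{(n)}_i=f^{(n)}_i w^{(n)}_{i-1}$. Then the eigenvalue on $w^{(1)}_i$ is $e^{(1)}_{i-1}f^{(1)}_i-f^{(1)}_{i+1}e^{(1)}_i$, with boundary terms vanishing. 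The relation to prove is that these eigenvalues are $\propto i$. I would get this from the derivation property rather than by explicit computation. Consider the derivation $D=[\mathbb{E},\mathbb{F}]-c\,h$, with $c$ chosen so that $D W^+=0$. Applying $D$ to the OPE relations of Proposition~\ref{sl2action2}, which express $w^{(2)}_{i}$ in terms of $W^\pm[\cdot]w^{(1)}_j$ modulo Virasoro descendants, gives linear constraints among the three eigenvalues.

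The main obstacle is pinning down that the eigenvalue on $W^0$ is zero and the one on $W^-$ is $-$ that on $W^+$. I would first exploit an antisymmetry. The pairing of Proposition~\ref{dual-scop}, the contragredient symmetry $F_{1,1;2n}^*\simeq F_{1,1;-2n}$, and the Weyl-type reflection $\alpha\mapsto\alpha_0-\alpha$ intertwine $\mathbb{E}$ with $\mathbb{F}$ up to scalars, which should force the eigenvalues on $W^+$ and $W^-$ to be negatives of each other. For $W^0$ I would use the relation $W^+_{(k)}W^-\in\mathbb{C}^\times W^0$-component plus Virasoro terms. The derivation $D$ annihilates $W^+$ and $W^-$ once their eigenvalues have been matched. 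It is also additive on products, so $D$ annihilates the $W^0$ term appearing in that product, which gives $DW^0=0$. Finally, $c\neq0$ because $e^{(0)}_0 f^{(1)}_1\neq0$, by nonvanishing of $\mathbb{E}$ on $\ket{0}$-type singular vectors and Theorem~\ref{non-Gop}.
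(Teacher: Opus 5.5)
Your reduction is sound and is the same as the paper's. $[\mathbb{E},\mathbb{F}]-c\,h$ is a Virasoro-commuting derivation that kills $T$. On $U_1=\mathrm{span}\{W^-,W^0,W^+\}$ it is diagonal, with $[\mathbb{E},\mathbb{F}]$ having eigenvalues $\lambda_{-1},\lambda_0,\lambda_1$, and $h$ acts by $2i$ on $F_{1,1;2i}$.

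Note that $\mathbb{E}$ and $\mathbb{F}$ both preserve $U_1$, so $\lambda_{-1}+\lambda_0+\lambda_1=\mathrm{tr}_{U_1}[\mathbb{E},\mathbb{F}]=0$. Your two ``obstacles'' ($\lambda_0=0$ and $\lambda_{-1}=-\lambda_1$) are therefore one condition, and that condition is the whole content of the theorem. This is where the proposal has a gap.

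\emph{The antisymmetry step.} Proposition~\ref{dual-scop} says each $Q^{[r]}_\pm$ is self-adjoint for the Fock pairing; it does not relate $Q_+$ to $\mathbb{G}^{[p]}_-$. The contragredient of $F_{1,1;2n}=F_{n\alpha_+}$ is $F_{\alpha_0-n\alpha_+}$, not $F_{1,1;-2n}$, and it is not even a summand of $\mathcal{V}_{[1,p]}$. An operator exchanging the $\alpha_+$-screening with the $\alpha_-$-built $\mathbb{F}$ would essentially be the Weyl element of the $PSL_2$-action you are trying to produce.

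\emph{The $W^0$ step} is circular. It assumes $DW^-=0$, which by the trace identity already forces $DW^0=0$.

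\emph{Applying $D$ to Proposition~\ref{sl2action2}} gives no constraint by itself. It only shows that $[\mathbb{E},\mathbb{F}]$ acts on $w^{(2)}_{i\pm1}$ by $\lambda_{\pm1}+\lambda_i$, which is consistent for any values of the $\lambda$'s.

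\emph{Minor slips.} $Q_+$ maps $F_{1,1;2i}\to F_{1,1;2i+2}$. Also, $c\neq0$ should come from $\mathbb{E}W^0\neq0$ and $\mathbb{F}W^+\neq0$, not from $e^{(0)}_0$, since $\mathbb{E}\ket{0}=0$.

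\emph{How the paper closes the gap.} It proves analytically that $[\mathbb{E},\mathbb{F}]=0$ on all of $\mathcal{W}_{1,p}\cap F_{1,1}$:
\begin{enumerate}
\item write $\mathbb{F}=Q^{[p-1]}_-\circ\lim^{F}_{\epsilon\to0}\epsilon^{-1}\widetilde{Q}_-$ (Lemma~\ref{lem:tw});
\item move $Q_+$ past $Q^{[p-1]}_-$ (Proposition~\ref{prop:qr-qs});
\item use the deformed commutativity $[\widetilde{Q}_+,\widetilde{Q}_-]=0$ on $\widetilde{F}_{1,1}$;
\item absorb the $O(\epsilon)$ difference between $\widetilde{Q}_+\phi$ and ${\rm e}^{\bullet}Q_+\phi$ using $Q^{[p-1]}_-\circ Q_-=0$.
\end{enumerate}

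\emph{Algebraic repairs within your framework.} Either of the following supplies the missing input.
\begin{enumerate}
\item Take the trace identity on $U_2=\mathrm{span}\{w^{(2)}_j\}_{j=-2}^{2}$. Both $\mathbb{E}$ and $\mathbb{F}$ preserve $U_2$ by Virasoro-equivariance and multiplicity one of $L(\Lambda_2)$ in each charge sector. With the eigenvalues above, $0=\mathrm{tr}_{U_2}[\mathbb{E},\mathbb{F}]=4(\lambda_1+\lambda_{-1})+2\lambda_0=-2\lambda_0$.
\item Use the fact that a derivation commuting with $L_0,L_1$ is skew for the invariant bilinear form of the simple VOA $\mathcal{W}_{1,p}$. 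Since $(W^+,W^-)\neq0$ by nondegeneracy and the charge grading, this gives $\lambda_1+\lambda_{-1}=0$.
\end{enumerate}
As written, however, your proposal does not establish the key relation.
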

Note that, by definition, $\mathbb{E}$, $\mathbb{F}$, and $h$ satisfy $[h,\mathbb{E}]=2\mathbb{E}$ and $[h,\mathbb{F}]=-2\mathbb{F}$. Together with (\ref{eq:efh}), we see that $\mathbb{E}$, $h$, and $c^{-1}_{\mathbb{E},\mathbb{F}}\mathbb{F}$ form an $\mathfrak{sl}_2$-triple. 
In particular, the Lie algebra $\mathfrak{sl}_2(\mathbb{C})$ acts on $\mathcal{W}_{1,p}$ by derivations.
By integrating this $\mathfrak{sl}_2(\mathbb{C})$-action, one can recover the result of \cite{AdamovicD/LinX/MilasA:2013} that the full automorphism group ${\rm Aut}(\mathcal{W}_{1,p})$ contains $PSL_2(\mathbb{C})$ as a subgroup.
In what follows, we show that Theorem~\ref{thm:mainW1p} follows directly from the definitions of the derivations $\mathbb{E}$ and $\mathbb{F}$ and Proposition \ref{sl2action2}.
\begin{proof}[Proof of Theorem \ref{thm:mainW1p}]
Noting Theorem~\ref{non-Gop}, we define constants $c_{\mathbb{E},i}$ and $c_{\mathbb{F},i}$ $(i=1,2)$ by
\begin{align*}
\begin{aligned}
\mathbb{E}W^-&=c_{\mathbb{E},1}W^0,&\mathbb{E}W^0&=c_{\mathbb{E},2}W^+,\\
\mathbb{F}W^+&=c_{\mathbb{F},1}W^0,&\mathbb{F}W^0&=c_{\mathbb{F},2}W^-.
\end{aligned}
\end{align*}
Then we have
\begin{align*}
\begin{aligned}
\lbrack\mathbb{E},{\mathbb{F}}\rbrack W^-&=-\mathbb{F}\cdot \mathbb{E}W^-=-c_{\mathbb{F},2}c_{\mathbb{E},1}W^-,\\
\lbrack\mathbb{E},{\mathbb{F}}\rbrack W^+&=\mathbb{E}\cdot \mathbb{F}W^+=c_{\mathbb{F},1}c_{\mathbb{E},2}W^+,\\
\lbrack\mathbb{E},{\mathbb{F}}\rbrack W^0&=(c_{\mathbb{E},1}c_{\mathbb{F},2}-c_{\mathbb{E},2}c_{\mathbb{F},1})W^0.
\end{aligned}
\end{align*}
Here, assuming $c_{\mathbb{E},1}c_{\mathbb{F},2}=c_{\mathbb{E},2}c_{\mathbb{F},1}$, and setting $c_{\mathbb{E},\mathbb{F}}=\frac{1}{2}c_{\mathbb{F},2}c_{\mathbb{E},1}$, we see that $W^+$, $W^0$, and $W^-$ form the three-dimensional irreducible representation of $\mathfrak{sl}_2(\mathbb{C})=\mathbb{C}\mathbb{E}\oplus \mathbb{C}h\oplus \mathbb{C}c^{-1}_{\mathbb{E},\mathbb{F}}\mathbb{F}$.
Combining this assumption with Proposition~\ref{sl2action2} and Theorem~\ref{G-deri}, we see that $\{w^{(n)}_k\}_{k=-n}^n$ realizes the $(2n+1)$-dimensional irreducible representation of $\mathfrak{sl}_2(\mathbb{C})=\mathbb{C}\mathbb{E}\oplus \mathbb{C}h\oplus \mathbb{C}c^{-1}_{\mathbb{E},\mathbb{F}}\mathbb{F}$, and from Theorem~\ref{G-hom}, (\ref{eq:efh}) holds on $\mathcal{W}_{1,p}$.
Therefore, to prove (\ref{eq:efh}), it suffices to show that 
\begin{align}
\label{cru-ef-rel}
[\mathbb{E},\mathbb{F}]|_{\mathcal{W}_{1,p}\cap F_{1,1}}=0.
\end{align}
Fix $\phi\in F_{1,1}\cap \mathcal{W}_{1,p}$.
Note that $F_{1,1}\subset \widetilde{F}_{1,1}$, and that the screening operators $\widetilde{Q}_\pm$ and the compositions $\widetilde{Q}_\pm \widetilde{Q}_\mp$ are well-defined on $\widetilde{F}_{1,1}$.
By Lemma \ref{lem:tw} and Proposition~\ref{prop:qr-qs}, we have
\begin{align}
\label{eq:19191}
\begin{aligned}
\mathbb{E}\cdot \mathbb{F}\phi &=Q_+Q^{[p-1]}_-\lim_{\epsilon\rightarrow 0}\hspace{-1mm}{}^F\frac{1}{\epsilon}\widetilde{Q}_-\phi\\
&=Q^{[p-1]}_-Q_+\lim_{\epsilon\rightarrow 0}\hspace{-1mm}{}^F\frac{1}{\epsilon}\widetilde{Q}_-\phi.
\end{aligned}
\end{align}
Note that $\widetilde{\alpha}_+\widetilde{\alpha}_-=-2$ (see (\ref{notep1})). Then from Proposition~\ref{prop:qr-qs}, it follows that $[\widetilde{Q}_+,\widetilde{Q}_-]=0$ on $\widetilde{F}_{1,1}$.
Using this commutativity and Lemma~\ref{lem:tw0}, we have
\begin{align}
\label{eq:19192}
\begin{aligned}
Q^{[p-1]}_-Q_+\lim_{\epsilon\rightarrow 0}\hspace{-1mm}{}^F\frac{1}{\epsilon}\widetilde{Q}_-\phi
&=Q^{[p-1]}_-\lim_{\epsilon\rightarrow 0}\hspace{-1mm}{}^F\frac{1}{\epsilon}\widetilde{Q}_+\widetilde{Q}_-\phi\\
&=Q^{[p-1]}_-\lim_{\epsilon\rightarrow 0}\hspace{-1mm}{}^F\frac{1}{\epsilon}\widetilde{Q}_-\widetilde{Q}_+\phi.
\end{aligned}
\end{align}
Note that
\begin{align*}
\widetilde{Q}_+\phi-{\rm e}^{\widetilde{\alpha}_{-1,1}-\alpha_{-1,1}}Q_+\phi\in \epsilon\cdot \widetilde{F}_{-1,1}.
\end{align*}
Then, by using the relation $Q^{[p-1]}_-\circ Q_-=0$ and Lemmas~\ref{lem:tw0}-\ref{lem:tw}, we have
\begin{align}
\label{eq:19193}
\begin{aligned}
Q^{[p-1]}_-\lim_{\epsilon\rightarrow 0}\hspace{-1mm}{}^F\frac{1}{\epsilon}\widetilde{Q}_-\widetilde{Q}_+\phi
&=Q^{[p-1]}_-\lim_{\epsilon\rightarrow 0}\hspace{-1mm}{}^F\frac{1}{\epsilon}\widetilde{Q}_-{\rm e}^{\widetilde{\alpha}_{-1,1}-\alpha_{-1,1}}Q_+\phi\\
&\quad\quad +Q^{[p-1]}_-\lim_{\epsilon\rightarrow 0}\hspace{-1mm}{}^F\frac{1}{\epsilon}\widetilde{Q}_-\bigl(\widetilde{Q}_+\phi-{\rm e}^{\widetilde{\alpha}_{-1,1}-\alpha_{-1,1}}Q_+\phi\bigr)\\
&=\mathbb{F}\cdot \mathbb{E}\phi+Q^{[p-1]}_-\circ{Q}_-\lim_{\epsilon\rightarrow 0}\hspace{-1mm}{}^F\frac{1}{\epsilon}\bigl(\widetilde{Q}_+\phi-{\rm e}^{\widetilde{\alpha}_{-1,1}-\alpha_{-1,1}}Q_+\phi\bigr)\\
&=\mathbb{F}\cdot \mathbb{E}\phi.
\end{aligned}
\end{align}
Therefore, from (\ref{eq:19191})-(\ref{eq:19193}), we get (\ref{cru-ef-rel}).
\end{proof}

\subsection{The $\mathfrak{sl}_2$-symmetry of $\mathcal{W}_{p_+,p_-}$ $(p_+\geq2)$}
\label{hidden-2}
Throughout this subsection, we assume that $p_+\geq 2$.
Although these notation overlaps with those of the previous subsection, we introduce the following notation
\begin{align*}
&
\mathbb{E}
=
\left.\mathbb{G}^{[p_+]}_+\right|_{\mathcal{W}_{p_+,p_-}},
&
\mathbb{F}
=
\left.\mathbb{G}^{[p_-]}_-\right|_{\mathcal{W}_{p_+,p_-}}
.
\end{align*}
We use the same notation for the restrictions of $\mathbb{E}$ and $\mathbb{F}$ to $\mathcal{W}_{p_+,p_-}$.

As in $\mathcal{W}_{1,p}$, it is known that $\mathcal{W}_{p_+,p_-}$ has automorphism group, and its structure was determined by \cite{McRaeR/SopinV:2026} using the theory of commutative algebras in braided tensor categories.
\begin{thm}[\cite{McRaeR/SopinV:2026}]
\label{thm-MS}
The full automorphism group of $\mathcal{W}_{p_+,p_-}$ is $PSL_2(\mathbb{C})$.
\end{thm}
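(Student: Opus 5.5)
The plan is to establish two inclusions, mirroring the argument of \cite{AdamovicD/LinX/MilasA:2013} for $\mathcal{W}_{1,p}$: a group homomorphism $PSL_2(\mathbb{C})\hookrightarrow{\rm Aut}(\mathcal{W}_{p_+,p_-})$ built from our derivations $\mathbb{E},\mathbb{F}$, and an injective homomorphism ${\rm Aut}(\mathcal{W}_{p_+,p_-})\to PSL_2(\mathbb{C})$ built from the Zhu algebra.

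\textbf{Step 1: the $\mathfrak{sl}_2$-triple.} Let $h$ be the appropriate multiple of $a_0$ for which $[h,\mathbb{E}]=2\mathbb{E}$ and $[h,\mathbb{F}]=-2\mathbb{F}$. This is possible since $\mathbb{E}$ shifts $F_{1,1;2n}\to F_{1,1;2n+2}$ and $\mathbb{F}$ shifts in the opposite direction.

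\begin{enumerate}
\item Theorem~\ref{non-Gop} shows that $\mathbb{E}$ and $\mathbb{F}$ move $W^-\mapsto W^0\mapsto W^+$ and back by nonzero scalars.
\item As in the proof of Theorem~\ref{thm:mainW1p}, Proposition~\ref{sl2action2}, Theorem~\ref{G-deri} and Theorem~\ref{G-hom} reduce the relation $[\mathbb{E},\mathbb{F}]=c\,h$ on all of $\mathcal{W}_{p_+,p_-}$ to the vanishing $[\mathbb{E},\mathbb{F}]=0$ on $\mathcal{W}_{p_+,p_-}\cap F_{1,1}$.
\item To prove this vanishing, write $\mathbb{E}\mathbb{F}\phi$ via Lemma~\ref{lem:tw} as
\[
Q^{[p_+-1]}_+\lim{}^F\epsilon^{-1}\widetilde Q_+\, Q^{[p_--1]}_-\lim{}^F\epsilon^{-1}\widetilde Q_-\phi .
\]
Then use $[\widetilde Q^{[r]}_+,\widetilde Q^{[s]}_-]=0$ (Proposition~\ref{prop:qr-qs} at deformed $\rho$) to move all deformed operators into a single $\epsilon^{-2}$-limit.
\item Finally, compare with $\mathbb{F}\mathbb{E}\phi$, discarding correction terms through the Felder relations $Q^{[p_\pm-1]}_\pm Q_\pm=0$. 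The $\mathcal{O}_{0;\epsilon}$-integrality of Proposition~\ref{prop:tw}, refined via Theorem~\ref{sus-prop}, guarantees that these corrections are killed.
\end{enumerate}

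\textbf{Step 2: integrating to $PSL_2(\mathbb{C})$.} By Step 1, $\mathcal{W}_{p_+,p_-}=\mathcal{I}_{p_+,p_-}\oplus$(vacuum part) decomposes, as a $\mathfrak{sl}_2$-module, into the odd-dimensional irreducibles spanned by $\{w^{(n)}_i\}$. Each weight space is finite dimensional and $\mathfrak{sl}_2$-stable, so the action exponentiates. Since only odd-dimensional representations occur, $-1$ acts trivially and we obtain an action of $PSL_2(\mathbb{C})$ by automorphisms, because $\exp$ of a derivation is an automorphism. This action is faithful because the $3$-dimensional representation on ${\rm span}\{W^\pm,W^0\}$ is faithful for $PSL_2$.

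\textbf{Step 3: the reverse map.} Any automorphism $g$ fixes the conformal vector, so it preserves the weight space of weight $h_{4p_+-1,1}$. By Proposition~\ref{genW} and the Felder structure, this weight space modulo Virasoro descendants is exactly the triplet ${\rm span}\{W^\pm,W^0\}$. By strong generation (Proposition~\ref{genW}), $g$ is determined by its restriction $\bar g\in GL_3$ to this triplet. I would use the quadratic relations among the images of the $W^\delta$ in the Zhu algebra $A(\mathcal{W}_{p_+,p_-})$ (from \cite{AdamovicD/MilasA:2010,AdamovicD/MilasA:2011,TW}), whose quadratic part is governed by the $\mathfrak{sl}_2$-invariant form, to show that $\bar g$ preserves this form up to scalar. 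Normalizing with the invariant cubic relation should then force $\bar g\in SO_3(\mathbb{C})\cong PSL_2(\mathbb{C})$.

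I expect Step 3 to be the main obstacle, because it needs the precise Zhu-algebra relations for $p_+\ge2$, which are less explicit than in the case $p_+=1$. Excluding the determinant $-1$ and the scalar ambiguities will likely require tracking a cubic $W$-relation, or using that $PSL_2$ already acts transitively enough to reduce to a diagonal $\bar g$. The delicate $\epsilon$-bookkeeping in Step 1, namely controlling the double-pole order in $\epsilon$, is the secondary difficulty.
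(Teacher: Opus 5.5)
\textbf{Verdict: there is a genuine gap, in items 3--4 of your Step~1.} Your architecture (derivations $\Rightarrow$ an $\mathfrak{sl}_2$-triple $\Rightarrow$ $PSL_2(\mathbb{C})\subset{\rm Aut}$, plus a Zhu-algebra map back) is the one the paper uses to recover Theorem~\ref{thm-MS}, and Step~2 is fine.

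\textbf{Why the swap in Step~1 fails as stated.} Proposition~\ref{prop:qr-qs} at deformed $\rho$ gives $[\widetilde{Q}_+,\widetilde{Q}^{[s]}_-]=0$ only on $\widetilde{F}_{1,s}$, which lifts the $n=0$ module $F_{1,s;0}$. In $\mathbb{E}\mathbb{F}\phi$ you must move $\widetilde{Q}_+$ past $\widetilde{Q}^{[p_--1]}_-$ on $F_{1,p_--1;-1}=F_{1,-1}$, where $v_-$ lives. There the two operators need different $\epsilon$-liftings: $\widetilde{F}_{1+p_+,p_--1}$ for $\widetilde{Q}^{[p_--1]}_-$ and $\widetilde{F}_{1,-1}$ for $\widetilde{Q}_+$. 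So the composites carry shifts ${\rm e}^{O(\epsilon)}$ such as $\bm{e}_2$, or equivalently a twist $z^{\frac{p_+}{2}\widetilde{\alpha}_+^2-p_-}=z^{p_+\alpha_+\epsilon+O(\epsilon^2)}$ on $\widetilde{Q}_+(z)$.

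Because of these shifts, the two orderings agree only up to $\epsilon\,Q^{[p_+-1]}_+(T_1-T_2)+O(\epsilon^2)$. A Stokes computation gives $T_1-T_2=-\frac{2p_+}{\alpha_0}R_1$, where $R_1=\sum_i\int_{\Gamma^{(\alpha_-)}_s}x_i^{-1}Q_-(x_1)\cdots Y(\ket{\alpha_0},x_i)\cdots Q_-(x_s){\rm d}\bm{x}$. Since $\mathbb{E}\mathbb{F}\phi$ is $\epsilon^{-1}$ times this composite, the swap leaves the term $-\frac{2p_+}{\alpha_0}Q^{[p_+-1]}_+R_1\phi_{{\rm ker}Q_+}$ in the limit.

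Neither of your tools removes this term. $\mathcal{O}_{0;\epsilon}$-integrality only gives the $O(\epsilon)$ bound, and $Q^{[p_+-1]}_+\circ Q_+=0$ does not apply to $R_1$. So your claim that the corrections are killed is unjustified.

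\textbf{What is missing: Lemma~\ref{lem:sc-cruc}.} You need part~1 for $\mathbb{E}\mathbb{F}$ and part~2 for $\mathbb{F}\mathbb{E}$; the key content is that $Q^{[p_+-1]}_+R_1=0$ on $F_{1,s;-n}\cap{\rm ker}Q_+$. Its proof needs new input in two stages.
\begin{enumerate}
\item The field $P(z)$ satisfies $P(z)=\frac{\alpha_0}{\alpha_+s}[Q_+,R_{2,z}]+\frac1sR_1$, with a log-regularized $R_{2,z}$. This shows that $Q^{[p_+-1]}_+P(z)$ is independent of $z$ on ${\rm ker}Q_+$, equal to $s^{-1}Q^{[p_+-1]}_+R_1$.
\item A deformed $\widetilde{P}(z)$ is combined with the relation $[L_l,L_{-l}]=2lL_0+\cdots$. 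This forces the constant, which appears with the nonzero factor $C$, to vanish.
\end{enumerate}
This is the technical heart of the argument, not secondary bookkeeping. Once it is in place, your final use of $[\widetilde{Q}_+,\widetilde{Q}_-]=0$ and $[\widetilde{Q}^{[p_+-1]}_+,\widetilde{Q}^{[p_--1]}_-]=0$ goes through as in the paper. Matching the two $\epsilon^{-2}$-limits still requires the splitting $v_\pm=\phi_{{\rm ker}Q_\mp}+\psi_{{\rm im}Q_\pm}$ and the correction $\phi-\epsilon\psi^{\rm pre}_{{\rm im}Q_+}$.

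\textbf{On Step~3.} Your invariant-form-plus-cubic route is only sketched and is more roundabout than necessary. The way the paper uses the Zhu algebra is through the \emph{commutator} relations among the classes $[W^\delta]$; this is made explicit for $\mathcal{SW}(m)$ via Theorem~\ref{AM-ns} in the proof of Theorem~\ref{cor:sw(m)}. An automorphism fixes $[T]$, and hence the polynomial factor in those relations. It therefore acts on ${\rm span}\{[W^\pm],[W^0]\}$ by a Lie-algebra automorphism of $\mathfrak{sl}_2$, i.e.\ by an element of ${\rm Aut}(\mathfrak{sl}_2)\cong PSL_2(\mathbb{C})$. The scalar and determinant $-1$ ambiguities you worry about then disappear automatically, since $\lambda\,{\rm id}$ with $\lambda\neq 1$ and $-{\rm id}$ do not preserve a Lie bracket. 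Injectivity follows from strong generation (Proposition~\ref{genW}), as you say.
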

Although the next theorem follows immediately from Theorems~\ref{thm-MS} and~\ref{G-deri}, we include it here to show that it can also be derived by our method.
\begin{thm}
\label{thm:mainWpq}
Define $h\in {\rm End}_{\mathfrak{Vir}}(\mathcal{W}_{p_+,p_-})$ as follows:
\begin{align*}
h=-\frac{2a_0}{p_-\alpha_-}.
\end{align*}
Then there exists a nonzero constant $c_{\mathbb{E},\mathbb{F}}$ such that $\mathbb{E}$, $h$, and $\mathbb{F}$ satisfy the relation
\begin{align}
\label{eq:efh-pq}
[\mathbb{E},\mathbb{F}]=c_{\mathbb{E},\mathbb{F}}h.
\end{align}
\end{thm}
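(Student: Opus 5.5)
I will follow the same pattern as the proof of Theorem~\ref{thm:mainW1p}, with $\mathbb{E}=\mathbb{G}^{[p_+]}_+$ and $\mathbb{F}=\mathbb{G}^{[p_-]}_-$ restricted to $\mathcal{W}_{p_+,p_-}$. By Theorem~\ref{G-deri}, both are derivations of $\mathcal{W}_{p_+,p_-}$. By Theorem~\ref{G-hom}, both commute with the Virasoro action. The grading operator $h=-2a_0/(p_-\alpha_-)$ satisfies $[h,\mathbb{E}]=2\mathbb{E}$ and $[h,\mathbb{F}]=-2\mathbb{F}$, since $\mathbb{E}$ shifts $F_{1,1;2n}$ to $F_{1,1;2n+2}$ and $\mathbb{F}$ shifts it to $F_{1,1;2n-2}$.

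Using Theorem~\ref{non-Gop}, I define nonzero constants by
\[
\mathbb{E}w^{(1)}_{-1}=c_{\mathbb{E},1}w^{(1)}_0,\quad \mathbb{E}w^{(1)}_0=c_{\mathbb{E},2}w^{(1)}_1,\quad \mathbb{F}w^{(1)}_1=c_{\mathbb{F},1}w^{(1)}_0,\quad \mathbb{F}w^{(1)}_0=c_{\mathbb{F},2}w^{(1)}_{-1}.
\]
Since $\mathbb{E},\mathbb{F}$ are derivations, they preserve the ideal $\mathcal{I}_{p_+,p_-}$. Proposition~\ref{sl2action2} together with Theorem~\ref{G-hom} then shows that $[\mathbb{E},\mathbb{F}]-c\,h$ is determined by its values on the lowest weight vectors $w^{(n)}_i$. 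By exactly the same bookkeeping as in the $p_+=1$ case, everything reduces to showing $c_{\mathbb{E},1}c_{\mathbb{F},2}=c_{\mathbb{E},2}c_{\mathbb{F},1}$.

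This in turn reduces to the vanishing of $[\mathbb{E},\mathbb{F}]$ on the $a_0$-weight-zero part $\mathcal{W}_{p_+,p_-}\cap F_{1,1}$, which contains $w^{(1)}_0$ and the vacuum. On the quotient $L(0)$ both derivations vanish automatically, since they kill $\ket{0}$ and commute with the Virasoro action.

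The main step is the identity $[\mathbb{E},\mathbb{F}]\phi=0$ for $\phi\in \mathcal{W}_{p_+,p_-}\cap F_{1,1}$. I write
\[
\mathbb{E}\mathbb{F}\phi=Q^{[p_+-1]}_+\widehat{Q}_+Q^{[p_--1]}_-\lim_{\epsilon\to0}{}^F\epsilon^{-1}\widetilde{Q}_-\phi
\]
using Lemma~\ref{lem:tw}, and then move the undeformed $Q^{[p_--1]}_-$ to the left. For the commutation of $Q^{[p_--1]}_-$ with $Q^{[p_+-1]}_+$ I use Proposition~\ref{prop:qr-qs}. For the commutation with $\widehat{Q}_+=\lim^F\epsilon^{-1}\widetilde{Q}_+(\cdot)$, I pass to the deformed level: Proposition~\ref{prop:qr-qs} at $\rho=\widetilde{\alpha}_0(\epsilon)$ gives $[\widetilde{Q}^{[r]}_+,\widetilde{Q}^{[s]}_-]=0$.

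Concretely, I aim to express both $\mathbb{E}\mathbb{F}\phi$ and $\mathbb{F}\mathbb{E}\phi$ as
\[
\lim_{\epsilon\to0}{}^F\epsilon^{-2}\,\widetilde{Q}^{[p_+-1]}_+\widetilde{Q}^{[p_--1]}_-\widetilde{Q}_+\widetilde{Q}_-\phi,
\]
with the shift operators ${\rm e}^\bullet$ suitably inserted; these are harmless by Remark~\ref{Rem:lim}. The order of the deformed screenings is then irrelevant by commutativity. To justify replacing each undeformed factor by its deformation, I use the correction terms of the form ``deformed minus shifted undeformed $\in\epsilon\cdot\widetilde{F}$'', exactly as in (\ref{eq:19193}). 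These corrections are killed by $Q^{[p_\pm-1]}_\pm\circ Q_\pm=0$ together with Lemma~\ref{lem:tw0}.

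The obstacle I expect is the $\epsilon^{-2}$ order. I must check that $\widetilde{Q}^{[p_+-1]}_+\widetilde{Q}_+$ and $\widetilde{Q}^{[p_--1]}_-\widetilde{Q}_-$ each contribute exactly one factor of $\epsilon$ on the relevant spaces, so that the mixed cross terms are $\mathcal{O}_{0;\epsilon}$-integral. For this I would use (\ref{N+N-}) for each pair, together with the fact that the intermediate shifts lie in $\epsilon\cdot\mathcal{O}_{0;\epsilon}$. The analytic continuation of Theorem~\ref{sus-prop} is what makes $\widetilde{Q}^{[p_\pm-1]}_\pm$ well defined and holomorphic near $\epsilon=0$ on all the Fock modules involved.

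Once $[\mathbb{E},\mathbb{F}]$ vanishes on weight zero, the computation $[\mathbb{E},\mathbb{F}]w^{(1)}_0=(c_{\mathbb{E},1}c_{\mathbb{F},2}-c_{\mathbb{E},2}c_{\mathbb{F},1})w^{(1)}_0=0$ gives the needed equality. Setting $c_{\mathbb{E},\mathbb{F}}=\tfrac12 c_{\mathbb{E},1}c_{\mathbb{F},2}\neq0$ then yields (\ref{eq:efh-pq}) on all of $\mathcal{W}_{p_+,p_-}$.
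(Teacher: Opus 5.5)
Your reduction to $[\mathbb{E},\mathbb{F}]\phi=0$ for $\phi\in\mathcal{W}_{p_+,p_-}\cap F_{1,1}$ agrees with the paper, but the main step has a genuine gap.

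\textbf{The missing commutation lemma.} You move $Q^{[p_--1]}_-$ past $\widehat{Q}_+$ by appealing to $[\widetilde{Q}^{[r]}_+,\widetilde{Q}^{[s]}_-]=0$, and you declare the intermediate shifts ${\rm e}^{\bullet}$ harmless by Remark~\ref{Rem:lim}. That remark only concerns a shift applied to the final vector. A shift inserted in front of a deformed screening operator changes that operator at order $\epsilon$ (through $z^{\widetilde{\alpha}_+a_0}$). After dividing by $\epsilon$, this is an order-one effect.

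Here the mismatch cannot be avoided. The vector $v_-$ lies in $F_{1,-1}$, and $\alpha_{1,-1}=\alpha_{1+p_+,p_--1}$ holds only at $\epsilon=0$. The contour integral defining $\widetilde{Q}_+$ requires momenta of the form $\widetilde{\alpha}_{1,\bullet}$. By contrast, $\widetilde{Q}^{[p_--1]}_-$ requires momenta $\widetilde{\alpha}_{\bullet,p_--1}$ to match its twisted cycle. So the two orderings act on different deformed modules, and Proposition~\ref{prop:qr-qs} does not compare them. For $p_+=1$ this issue never arises, because $\mathbb{E}=Q_+$ is undeformed; that is why the bookkeeping of Theorem~\ref{thm:mainW1p} does not carry over.

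Step 1 of the proof of Lemma~\ref{lem:sc-cruc} computes the discrepancy: $T_1-T_2=\frac{-2np_+}{\alpha_0}R_1$, here with $n=1$ and $s=p_--1$, and $R_1$ is not zero in general. What holds is only $Q^{[p_+-1]}_+\circ R_1=0$ on $F_{1,s;-n}\cap{\rm ker}Q_+$. Proving this is the heart of the paper's argument:
\begin{enumerate}
\item The identity $P(z)=\frac{\alpha_0}{\alpha_+s}[Q_+,R_{2,z}]+\frac{1}{s}R_1$, with the $\log$-twisted operator $R_{2,z}$, shows that all nonzero modes of $Q^{[p_+-1]}_+P(z)$ vanish on ${\rm ker}Q_+$.
\item An $\epsilon$-deformed Virasoro commutator computation, using $C\neq 0$, then kills the zero mode.
\end{enumerate}
Your proposal has no substitute for this lemma.

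\textbf{Integrality of cross terms is not enough.} At order $\epsilon^{-2}$ the cross terms must vanish at $\epsilon=0$, not merely be $\mathcal{O}_{0;\epsilon}$-integral, and in general they do not vanish. The vectors $v_\pm=\lim_{\epsilon\rightarrow 0}\hspace{-1mm}{}^F\epsilon^{-1}\widetilde{Q}_\pm\phi$ need not lie in ${\rm ker}Q_\mp$; the paper writes $v_\pm=\phi_{{\rm ker}Q_\mp}+\psi_{{\rm im}Q_\pm}$.

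Consider the piece $\epsilon\,\psi_{{\rm im}Q_+}$ of $\widetilde{Q}_+\phi$. Feeding it into $\widetilde{Q}^{[p_--1]}_-\circ{\rm e}^{\bullet}\circ\widetilde{Q}_-$, which is itself of order $\epsilon$ by (\ref{N+N-}), gives an extra term at order $\epsilon^2$. So your uncorrected expression $\lim_{\epsilon\rightarrow 0}\hspace{-1mm}{}^F\epsilon^{-2}\widetilde{Q}^{[p_+-1]}_+\widetilde{Q}^{[p_--1]}_-\widetilde{Q}_+\widetilde{Q}_-\phi$ is not shown to equal $\mathbb{F}\mathbb{E}\phi$, let alone $\mathbb{E}\mathbb{F}\phi$.

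This is why the paper evaluates the fully deformed composite on $\phi-\epsilon\,\psi^{\rm pre}_{{\rm im}Q_+}$ in (\ref{vec-900}). It then controls the leftover term on the $\mathbb{E}\mathbb{F}$ side via (\ref{vec-im}) and Theorems~\ref{G-hom} and~\ref{non-Gop}. This is in addition to invoking Lemma~\ref{lem:sc-cruc} to reach (\ref{eq:ef-vec900}).
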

Note that $\mathbb{E}$, $\mathbb{F}$, and $h$ satisfy $[h,\mathbb{E}]=2\mathbb{E}$ and $[h,\mathbb{F}]=-2\mathbb{F}$. Together with (\ref{eq:efh-pq}), we see that $\mathbb{E}$, $h$, and $c^{-1}_{\mathbb{E},\mathbb{F}}\mathbb{F}$ form an $\mathfrak{sl}_2$-triple. 
In particular, the Lie algebra $\mathfrak{sl}_2(\mathbb{C})$ acts on $\mathcal{W}_{p_+,p_-}$ by derivations.
As in the case of \({\rm Aut}(\mathcal{W}_{1,p})\), we recover Theorem~\ref{thm-MS} using the above $\mathfrak{sl}_2$-derivation properties and the structure of the Zhu-algebra $A(\mathcal{W}_{p_+,p_-})$ determined in \cite{AdamovicD/MilasA:2010,AdamovicD/MilasA:2011,TW}.
Before the proof of Theorem~\ref{thm:mainWpq}, we introduce the following lemma.
\begin{lem}
\label{lem:sc-cruc}
Let $p_+>r\geq 1$, $p_->s\geq 1$, and $n\in \mathbb{Z}$. 
\begin{enumerate}
\item For any $\phi\in F_{1,s;-n}\cap {\rm ker}Q_+$, we have
\begin{align}
\label{d+-}
\begin{aligned}
&\widetilde{Q}^{[p_+-1]}_+\circ {\rm e}^{\widetilde{\alpha}_{p_+-1,-(n-1)p_--s}-\widetilde{\alpha}_{-1,-np_--s}}\\
&\qquad \qquad \qquad \circ \widetilde{Q}_+\circ {\rm e}^{\widetilde{\alpha}_{1,-np_--s}-\widetilde{\alpha}_{1+np_+,-s}}\circ \widetilde{Q}^{[s]}_-\circ {\rm e}^{\widetilde{\alpha}_{1+np_+,s}-\alpha_{1+np_+,s}} \phi\\
&- \widetilde{Q}^{[p_+-1]}_+\circ{\rm e}^{\widetilde{\alpha}_{p_+-1,-s-(n-1)p_-}-\widetilde{\alpha}_{-1+np_+,-s}}\\
&\qquad \qquad\qquad \circ\widetilde{Q}^{[s]}_-\circ {\rm e}^{\widetilde{\alpha}_{-1+np_+,s}-\widetilde{\alpha}_{-1,s-np_-}} \circ \widetilde{Q}_+\circ {\rm e}^{\widetilde{\alpha}_{1,s-np_-}-\alpha_{1,s-np_-}} \phi\\
&\in \epsilon^2\cdot \widetilde{F}_{1-p_+,-s-(n-1)p_-}.
\end{aligned}
\end{align}
\item  For any $\phi\in F_{r,1;-n}\cap {\rm ker}Q_-$, we have
\begin{align*}
\begin{aligned}
&\widetilde{Q}^{[p_--1]}_-\circ {\rm e}^{\widetilde{\alpha}_{-r+(n+1)p_+,p_--1}-\widetilde{\alpha}_{np_+-r,-1}}\\
&\qquad \qquad \qquad \circ\widetilde{Q}_-\circ {\rm e}^{\widetilde{\alpha}_{np_+-r,1}-\widetilde{\alpha}_{-r,1-np_-}}\circ \widetilde{Q}^{[r]}_+\circ {\rm e}^{\widetilde{\alpha}_{r,1-np_-}-\alpha_{r,1-np_-}} \phi\\
&- \widetilde{Q}^{[p_--1]}_-\circ {\rm e}^{\widetilde{\alpha}_{-r+(n+1)p_+,p_--1}-\widetilde{\alpha}_{-r,-np_--1}}\\
&\qquad \qquad\qquad \circ\widetilde{Q}^{[r]}_+\circ {\rm e}^{\widetilde{\alpha}_{r,-np_--1}-\widetilde{\alpha}_{r+np_+,-1}}\circ \widetilde{Q}_-\circ {\rm e}^{\widetilde{\alpha}_{r+np_+,1}-\alpha_{r+np_+,1}} \phi\\
&\in \epsilon^2\cdot \widetilde{F}_{-r+(n+1)p_+,1-p_-}.
\end{aligned}
\end{align*}
\end{enumerate}
\end{lem}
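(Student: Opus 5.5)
We treat part (1); part (2) follows by exchanging the roles of $\pm$. Write the difference in (\ref{d+-}) as $A-B$. The plan has two steps. First, use the commutativity of the deformed screening operators to rewrite $A-B$ as $\widetilde{Q}^{[p_+-1]}_+$ applied to an expression built only from $\widetilde{Q}_+$, $\widetilde{Q}^{[s]}_-$ and shift factors ${\rm e}^{\bullet}$. Second, count how many powers of $\epsilon$ each factor supplies.

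The first step is an $\epsilon$-deformed version of Proposition~\ref{prop:qr-qs}. Since $\widetilde{\alpha}_+\widetilde{\alpha}_-=-2$ holds identically in $\epsilon$, the proof of Proposition~\ref{prop:qr-qs} goes through verbatim with $\rho=\widetilde{\alpha}_0(\epsilon)$. Hence $\widetilde{Q}_+\circ\widetilde{Q}^{[s]}_-=\widetilde{Q}^{[s]}_-\circ\widetilde{Q}_+$ as maps between the deformed Fock modules on which both compositions are defined. The only care needed is bookkeeping of the shifts. The operators ${\rm e}^{\bullet}$ commute with all Heisenberg modes $a_n$, $n\neq0$, but they change the $a_0$-eigenvalue. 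So each composite in (\ref{d+-}) equals the "undeformed-shift" composite, namely $\widetilde{Q}^{[p_+-1]}_+\widetilde{Q}_+\widetilde{Q}^{[s]}_-$ (resp.\ $\widetilde{Q}^{[p_+-1]}_+\widetilde{Q}^{[s]}_-\widetilde{Q}_+$) acting on one common lift of $\phi$, up to corrections. Each correction is a difference ${\rm e}^{\epsilon g(\epsilon)}-1$ sandwiched between the operators. After commuting, $A-B$ becomes a sum of terms of the form
\begin{align*}
\widetilde{Q}^{[p_+-1]}_+\circ({\rm e}^{\epsilon g_1}-{\rm e}^{\epsilon g_2})\circ X\circ{\rm e}^{\bullet}\phi,
\end{align*}
where $X$ is $\widetilde{Q}_+\widetilde{Q}^{[s]}_-$, or a single screening operator, with shifts inserted. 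This is the same manipulation as in (\ref{eq:19191})--(\ref{eq:19193}).

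The second step counts $\epsilon$-orders. Because $\phi\in{\rm ker}Q_+$, Proposition~\ref{prop:tw} shows that $\widetilde{Q}_+$ applied to a lift of $\phi$ lies in $\epsilon\cdot\widetilde{F}$. The same holds after $\widetilde{Q}^{[s]}_-$ is commuted past it, using Lemma~\ref{lem:tw0} to control the $\epsilon=0$ value. Next, the composite $\widetilde{Q}^{[p_+-1]}_+\circ{\rm e}^{\bullet}\circ\widetilde{Q}_+$ lands in $\epsilon\cdot\widetilde{F}$ by the Felder relation $Q^{[p_+-1]}_+Q_+=0$ in the form (\ref{N+N-}). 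Combining the two gives one $\epsilon$ from $\ker Q_+$ and one from the Felder complex. Concretely, write $\widetilde{Q}_+{\rm e}^{\bullet}\phi=\epsilon\,\psi(\epsilon)$ with $\psi(0)=\widehat{Q}_+\phi$. Then $\widetilde{Q}^{[p_+-1]}_+$ applied to the $\epsilon$-coefficient yields $Q^{[p_+-1]}_+\widehat{Q}_+\phi+O(\epsilon)$, and this term appears identically in $A$ and $B$. It appears in $B$ after applying the deformed commutation and Lemma~\ref{lem:tw} to $\widetilde{Q}^{[s]}_-$. So the $O(\epsilon)$ parts of $A$ and $B$ cancel.

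The main obstacle is the first-order terms produced by the shift differences ${\rm e}^{\epsilon g_1}-{\rm e}^{\epsilon g_2}=\epsilon(g_1-g_2)\hat a+O(\epsilon^2)$. These carry the operator $\hat a$, and $\hat a$ does not commute with the screening currents: $[\hat a, Y(\ket{\beta},z)]$ involves $\beta\log z$-type terms. My first attempt will place every such difference immediately to the right of an operator that already produces a factor $\epsilon$. The candidates are $\widetilde{Q}_+$ on ${\rm ker}Q_+$ and the pair $\widetilde{Q}^{[p_+-1]}_+\circ\widetilde{Q}_+$ via Felder. Placed there, the term is $O(\epsilon^2)$. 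If some difference sits where only the kernel condition helps, I will instead apply Remark~\ref{Rem:lim} on the $\epsilon$-coefficient. The point is that the $\epsilon$-coefficient of $\epsilon(g_1-g_2)\hat a$ acts on vectors of fixed $a_0$-weight and can be absorbed into a relabeling of the Fock module. One more $\epsilon$ then comes from $Q^{[p_+-1]}_+Q_+=0$. Verifying that the $a_0$-weights in each of the two composites line up so that this cancellation is exact is the step I expect to be the delicate computation.
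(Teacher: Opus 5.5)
There is a genuine gap. The cancellation of the order-$\epsilon$ terms, which you assert, is precisely the content of the lemma, and your argument does not establish it.

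Arguing as in Lemma~\ref{lem:tw}, the two composites are $\epsilon\,Q^{[p_+-1]}_+\widehat{Q}_+Q^{[s]}_-\phi+O(\epsilon^2)$ and $\epsilon\,Q^{[p_+-1]}_+Q^{[s]}_-\widehat{Q}_+\phi+O(\epsilon^2)$, where $\widehat{Q}_+=\oint_{z=0}:Q_+(z)\phi_0(z):\frac{{\rm d}z}{2\pi i}$. So the lemma is equivalent to $Q^{[p_+-1]}_+[\widehat{Q}_+,Q^{[s]}_-]\phi=0$. Your claim that this term ``appears identically in $A$ and $B$'' assumes exactly that.

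The deformed Proposition~\ref{prop:qr-qs} does not supply it. For $n\neq0$ the two composites start from different lifts, ${\rm e}^{\widetilde{\alpha}_{1+np_+,s}-\alpha_{1+np_+,s}}\phi$ and ${\rm e}^{\widetilde{\alpha}_{1,s-np_-}-\alpha_{1,s-np_-}}\phi$. Moreover $\widetilde{Q}_+$ is not single-valued on $\widetilde{F}_{1+np_+,s}$, so there is no common module on which to commute.

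Moving the middle shift through $\widetilde{Q}_+(z)$ multiplies the current by $z^{\kappa}$, with $\kappa=\frac{np_+}{2}\widetilde{\alpha}_+^2-np_-=np_+\alpha_+\epsilon+O(\epsilon^2)$. Equivalently, your $\epsilon(g_1-g_2)\hat a$ fails to commute with $z^{\widetilde{\alpha}_+a_0}$ and inserts $\log z$. Such a term cannot be relabeled away, since Remark~\ref{Rem:lim} only removes an outermost shift. Deforming the $z$-contour past the points $x_i$ then picks up residues at $z=x_i$, giving
\[
A-B=\epsilon\cdot\frac{-2np_+}{\alpha_0}\,Q^{[p_+-1]}_+R_1\phi+O(\epsilon^2),\qquad R_1=\sum_{i=1}^s\int_{\Gamma^{(\alpha_-)}_s}\frac{1}{x_i}Q_-(x_1)\cdots Y(\ket{\alpha_0},x_i)\cdots Q_-(x_s){\rm d}\bm{x}.
\]
In this term $Q_+(z)$ has been absorbed into $Y(\ket{\alpha_0},x_i)$. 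Hence neither $\phi\in{\rm ker}Q_+$ nor $Q^{[p_+-1]}_+Q_+=0$ applies to it, and your placement strategy has no $\epsilon$-producing factor to attach it to. For $n=0$ all shifts agree and your argument does work, but Theorem~\ref{thm:mainWpq} uses the lemma with $n=1$, $s=p_--1$.

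The missing idea is an independent proof that $Q^{[p_+-1]}_+R_1=0$ on $F_{1,s;-n}\cap{\rm ker}Q_+$. The paper obtains it in three moves:
\begin{enumerate}
\item It introduces the log-weighted screening $R_{2,z}$ and the field $P(z)$, obtained from $Q^{[s]}_-(z)$ by replacing the first $Q_-$ with $Y(\ket{\alpha_0},z)$.
\item It proves $P(z)=\frac{\alpha_0}{\alpha_+s}[Q_+,R_{2,z}]+\frac{1}{s}R_1$. Hence on ${\rm ker}Q_+$ the composite $Q^{[p_+-1]}_+P(z)$ has only one nonzero mode, the zero mode $s^{-1}Q^{[p_+-1]}_+R_1$.
\item It deforms $P(z)$ to $\widetilde{P}(z)$ and uses $[L_l,L_{-l}]=2lL_0+\cdots$ to show this zero mode is annihilated. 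This hinges on the nonvanishing of the constant $C$ built from the $\epsilon$-derivatives of $h^{(\alpha_0)}_{\widetilde{\alpha}_0}$ and $h^{(\alpha_0)}_{\widetilde{\alpha}_-}-1$.
\end{enumerate}
Some argument of this kind is needed. Bookkeeping of the shift factors together with the two Felder-type sources of $\epsilon$ cannot produce it.
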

Note that the above compositions are well-defined by the definitions of the screening operators and the shifting elements ${\rm e}^{\bullet}$ (see (\ref{notep2}) and (\ref{conjugate2})).
The left-hand side of (\ref{d+-}) can be regarded as an $\epsilon$-deformation of $Q^{[p_+-1]}_+[Q_+,Q^{[s]}_-]\phi$.
The statement (\ref{d+-}) shows that, under the restriction to $F_{1,s;-n}\cap {\rm ker}Q_+$, the image of the $\epsilon$-deformation of $Q^{[p_+-1]}_+[Q_+,Q^{[s]}_-]$ vanishes to order $\epsilon^{2}$.
\begin{proof}[Proof of Lemma \ref{lem:sc-cruc}]
We only prove the first case. The second case can be proved in the same way.
Note that 
$Q_+\circ Q^{[s]}_-=0$ and $Q^{[s]}_-\circ Q_+ =0$ on ${\rm Soc}_2(F_{1,s;-n})$.
Then, noting Proposition~\ref{prop:tw}, we can define linear operators
\begin{align*}
\begin{aligned}
T_1:&=\lim_{\epsilon\rightarrow 0}\hspace{-1mm}{}^F\frac{1}{\epsilon}\widetilde{Q}_+\circ {\rm e}^{\widetilde{\alpha}_{1,-np_--s}-\widetilde{\alpha}_{1+np_+,-s}}\circ \widetilde{Q}^{[s]}_-\circ {\rm e}^{\widetilde{\alpha}_{1+np_+,s}-\alpha_{1+np_+,s}},\\
T_2:&=\lim_{\epsilon\rightarrow 0}\hspace{-1mm}{}^F\frac{1}{\epsilon}\widetilde{Q}^{[s]}_-\circ {\rm e}^{\widetilde{\alpha}_{-1+np_+,s}-\widetilde{\alpha}_{-1,s-np_-}} \circ \widetilde{Q}_+\circ {\rm e}^{\widetilde{\alpha}_{1,s-np_-}-\alpha_{1,s-np_-}}
\end{aligned}
\end{align*}
which are well-defined on ${\rm Soc}_2(F_{1,s;-n})$. 
By Proposition~\ref{prop:tw}, in order to prove (\ref{d+-}), it is suffices to show that 
\begin{align}
\label{final-t1t2}
Q^{[p_+-1]}_+\circ T_1|_{F_{1,s;-n}\cap {\rm ker}Q_+}=Q^{[p_+-1]}_+\circ T_2|_{F_{1,s;-n}\cap {\rm ker}Q_+}.
\end{align}
We divide the proof into four steps.

\begin{enumerate}

\item 
We introduce the notation 
\begin{equation}
\label{gamma-twist-def}
\Gamma^{(\rho_\pm)}_{n}
=
\begin{cases}
(2\pi i)^{-1}C_{0}\times [\Delta^{(\rho_\pm)}_{n-1}],&n\geq 2\\ 
(2\pi i)^{-1}C_{0},&n=1
\end{cases}
\end{equation}
where $C_0$ is the positively oriented circle centered at $0$ and $[\Delta^{(\rho_\pm)}_{n-1}]$ is the twisted cycle defined by (\ref{eq:twist-rho}). 
Let us show the equation
\begin{align}
\label{lem:sc-cruc-st}
\begin{aligned}
&T_1-T_2=\frac{-2np_+}{\alpha_++\alpha_-}\sum_{i=1}^s\int_{\Gamma^{(\alpha_-)}_{s}} \frac{1}{x_i}Q_-(x_1)\cdots Y(\ket{\alpha_0},x_i)\cdots Q_-(x_s){\rm d}\bm{x}
\end{aligned}
\end{align}
where ${\rm d}\bm{x}={\rm d}x_1\wedge \cdots \wedge{\rm d}x_s$, and we apply the product (\ref{gamma-twist-def}) by setting $x_1=z$, $x_2=zy_1,\dots,x_s=zy_{s-1}$ in accordance with (\ref{Tsuchiya-Kanie0}).
Let $\widetilde{T}_1$ and $\widetilde{T}_2$ denote the expressions before taking the limits of $T_1$ and $T_2$, respectively.
Using the operator product expansion
\begin{align*}
\widetilde{Q}_+(z)\widetilde{Q}_-(x)=\frac{1}{(z-x)^{2}}:\widetilde{Q}_+(z)\widetilde{Q}_-(x):
\end{align*}
(cf.~\cite[Proposition 2.1]{NT}), we have
\begin{align}
\label{lem:sc-cruc-2-0}
\begin{aligned}
\widetilde{T}_1
&=\frac{1}{\epsilon}\int _{C_z}\int_{\Gamma^{(\widetilde{\alpha}_-)}_{s}}\widetilde{Q}_+(z)\cdot {\rm e}^{\widetilde{\alpha}_{1,-np_--s}-\widetilde{\alpha}_{1+np_+,-s}}\\
&\qquad \qquad \qquad \cdot \widetilde{Q}_-(x_1)\cdots \widetilde{Q}_-(x_s)\cdot {\rm e}^{\widetilde{\alpha}_{1+np_+,s}-\alpha_{1+np_+,s}} {\rm d}\bm{x}\frac{{\rm d}z}{2\pi i}\\
&=\frac{1}{\epsilon}\int _{C_z}\int_{\Gamma^{(\widetilde{\alpha}_-)}_{s}}{\rm e}^{\widetilde{\alpha}_{1,-np_--s}-\widetilde{\alpha}_{1+np_+,-s}}\cdot \bigl(z^{\frac{np_+}{2}\widetilde{\alpha}^2_+-np_-}\widetilde{Q}_+(z) \bigr)\\
&\qquad \qquad \qquad \cdot \widetilde{Q}_-(x_1)\cdots \widetilde{Q}_-(x_s)\cdot {\rm e}^{\widetilde{\alpha}_{1+np_+,s}-\alpha_{1+np_+,s}} {\rm d}\bm{x}\frac{{\rm d}z}{2\pi i}\\
&=\frac{1}{\epsilon}\int_{\Gamma^{(\widetilde{\alpha}_-)}_{s}}\oint _{{z=0}}{\rm e}^{\widetilde{\alpha}_{1,-np_--s}-\widetilde{\alpha}_{1+np_+,-s}}\cdot \widetilde{Q}_-(x_1)\cdots \widetilde{Q}_-(x_s)\\
&\qquad \qquad \qquad \cdot \bigl(z^{\frac{np_+}{2}\widetilde{\alpha}^2_+-np_-}\widetilde{Q}_+(z) \bigr)\cdot {\rm e}^{\widetilde{\alpha}_{1+np_+,s}-\alpha_{1+np_+,s}} \frac{{\rm d}z}{2\pi i}{\rm d}\bm{x}\\
&\qquad +\frac{1}{\epsilon}\Psi_1(\epsilon)+\frac{1}{\epsilon}\Psi_2(\epsilon),
\end{aligned}
\end{align}
where $C_z$ denotes the positively oriented circle centered at $z=0$ with sufficiently large radius, and we denote
\begin{align*}
\begin{aligned}
\Psi_1(\epsilon)&:=\sum_{i=1}^s\int_{\Gamma^{(\widetilde{\alpha}_-)}_{s}}{\rm e}^{\widetilde{\alpha}_{1,-np_--s}-\widetilde{\alpha}_{1+np_+,-s}}\\
&\qquad \qquad\qquad \cdot (\partial_{x_i}x_i^{\frac{np_+}{2}\widetilde{\alpha}^2_+-np_-})\widetilde{Q}_-(x_1)\cdots Y(\ket{\widetilde{\alpha}_0},x_i)\cdots \widetilde{Q}_-(x_s)\\
&\qquad \qquad\qquad\qquad \qquad\cdot {\rm e}^{\widetilde{\alpha}_{1+np_+,s}-\alpha_{1+np_+,s}}{\rm d}\bm{x},\\
\Psi_2(\epsilon)&:=\frac{\widetilde{\alpha}_+}{\widetilde{\alpha}_0}\sum_{i=1}^s\int_{\Gamma^{(\widetilde{\alpha}_-)}_{s}}{\rm e}^{\widetilde{\alpha}_{1,-np_--s}-\widetilde{\alpha}_{1+np_+,-s}}\\
&\qquad \qquad\qquad \cdot x_i^{\frac{np_+}{2}\widetilde{\alpha}^2_+-np_-}\widetilde{Q}_-(x_1)\cdots \partial_{x_i}Y(\ket{\widetilde{\alpha}_0},x_i)\cdots \widetilde{Q}_-(x_s)\\
&\qquad \qquad\qquad \qquad \qquad \cdot {\rm e}^{\widetilde{\alpha}_{1+np_+,s}-\alpha_{1+np_+,s}}{\rm d}\bm{x}.
\end{aligned}
\end{align*}
Note that
\begin{align*} 
&z^{\frac{np_+}{2}\widetilde{\alpha}^2_+-np_-}\widetilde{Q}_+(z) {\rm e}^{\widetilde{\alpha}_{1+np_+,s}-\alpha_{1+np_+,s}}\\
&={\rm e}^{\widetilde{\alpha}_{-1+np_+,s}-\widetilde{\alpha}_{-1,s-np_-}}\widetilde{Q}_+(z) {\rm e}^{\widetilde{\alpha}_{1,s-np_-}-\alpha_{1,s-np_-}},
\end{align*}
and that ${\rm e}^{\widetilde{\alpha}_{1,-np_--s}-\widetilde{\alpha}_{1+np_+,-s}}={\rm e}^{\widetilde{\alpha}_{-1,-np_--s}-\widetilde{\alpha}_{-1+np_+,-s}}$.
Then, from (\ref{lem:sc-cruc-2-0}), we have
\begin{align}
\label{lem:sc-cruc-2}
\widetilde{T}_1-{\rm e}^{\widetilde{\alpha}_{-1,-np_--s}-\widetilde{\alpha}_{-1+np_+,-s}}\circ \widetilde{T}_2=\frac{1}{\epsilon}\Psi_1(\epsilon)+\frac{1}{\epsilon}\Psi_2(\epsilon).
\end{align}
Noting the expansion $\frac{np_+}{2}\widetilde{\alpha}^2_+-np_-=np_+\alpha_+\epsilon+o(\epsilon)$ and using Theorem~\ref{sus-prop}, we see that
$
{\rm im}\Psi_1(\epsilon)\subset \epsilon \cdot\widetilde{F}_{-1,-np_--s}.
$
Then, we have
\begin{align}
\label{lem:sc-cruc-22}
\left.\frac{1}{\epsilon}\Psi_1(\epsilon)\right|_{\epsilon=0}={np_+\alpha_+}\sum_{i=1}^s\int_{\Gamma^{(\alpha_-)}_{s}} \frac{1}{x_i}Q_-(x_1)\cdots Y(\ket{\alpha_0},x_i)\cdots Q_-(x_s){\rm d}\bm{x}.
\end{align}
The operator $\Psi_2(\epsilon)$ can be rewritten as follows 
\begin{align*}
\begin{aligned}
\Psi_2(\epsilon)&=\frac{\widetilde{\alpha}_+}{\widetilde{\alpha}_0}\sum_{i=1}^s\int_{\Gamma^{(\widetilde{\alpha}_-)}_{s}}{\rm e}^{\widetilde{\alpha}_{1,-np_--s}-\widetilde{\alpha}_{1+np_+,-s}}\\
&\quad \cdot x_i^{\frac{np_+}{2}\widetilde{\alpha}^2_+-np_-}{\rm d}_{\bm{x}}\biggl\{(-1)^{i+1}\cdot \widetilde{Q}_-(x_1) \cdots Y(\ket{\widetilde{\alpha}_0},x_i)\bigr)\cdots  \widetilde{Q}_-(x_s)\\
&\qquad \qquad\qquad \cdot  {\rm e}^{\widetilde{\alpha}_{1+np_+,s}-\alpha_{1+np_+,s}} {\rm d}x_1\wedge \cdots\wedge\overset{\vee}{{\rm d}x_i}\wedge \cdots \wedge {\rm d}x_{s}\biggr\}\\
&=-\frac{\widetilde{\alpha}_+}{\widetilde{\alpha}_0}\sum_{i=1}^s\int_{\Gamma^{(\widetilde{\alpha}_-)}_{s}}{\rm e}^{\widetilde{\alpha}_{1,-np_--s}-\widetilde{\alpha}_{1+np_+,-s}}\\
&\qquad \qquad\qquad \cdot (\partial_{x_i}x_i^{\frac{np_+}{2}\widetilde{\alpha}^2_+-np_-})\widetilde{Q}_-(x_1)\cdots Y(\ket{\widetilde{\alpha}_0},x_i)\cdots \widetilde{Q}_-(x_s)\\
&\qquad \qquad\qquad\qquad\qquad\cdot {\rm e}^{\widetilde{\alpha}_{1+np_+,s}-\alpha_{1+np_+,s}}{\rm d}\bm{x}\\
&=-\frac{\widetilde{\alpha}_+}{\widetilde{\alpha}_0}\Psi_1(\epsilon),
\end{aligned}
\end{align*}
where ${\rm d}_{\bm{x}}$ is the total derivative of $x_1,\dots, x_s$, and we used the generalized Stokes' theorem (cf.~\cite{AK}). 
Then, from (\ref{lem:sc-cruc-2}) and (\ref{lem:sc-cruc-22}), we get (\ref{lem:sc-cruc-st}).

\item 
Recall that the cycle $\Gamma^{(\alpha_-)}_{s}$ is defined as the product of the circle $C_0$ in the $z$-variable and the cycle $[\Delta^{(\alpha_-)}_{s-1}]$ in the $\bm{y}$-variables
 (see (\ref{gamma-twist-def})).
 In what follows, let $r_0$ denote the radius of $C_0$.
 We define the linear operators $R_1\in {\rm Hom}_{\mathbb{C}}(F_{1,s;-n},F_{-1,-s;-n})$, $R_{2,r_0}\in {\rm Hom}_{\mathbb{C}}(F_{1,s;-n},F_{1,-s;-n})$, and the field $P(z)\in {\rm Hom}_{\mathbb{C}}(F_{1,s;-n},F_{-1,-s;-n})[[z,z^{-1}]]$ as follows
\begin{align}
 R_1&:=\sum_{i=1}^s\int_{\Gamma^{(\alpha_-)}_s}\frac{1}{x_i} Q_-(x_1)\cdots Y(\ket{\alpha_0},x_i)\cdots Q_-(x_s){\rm d}\bm{x},\nonumber\\
 R_{2,r_0}&:=\left.\partial_{\epsilon}\int_{C_0\times [\Delta^{(\alpha_-;\epsilon)}_{s-1}]}x^\epsilon_1x^\epsilon_2\cdots x^\epsilon_s Q_-(x_1)Q_-(x_2) \cdots Q_-(x_s)\frac{{\rm d}\bm{x}}{2\pi i}\right|_{\epsilon=0},\label{r20-eqlog}\\
 P(z)&:=\int_{[\Delta^{(\alpha_-)}_{s-1}]} Y(\ket{\alpha_0},z)Q_-(zy_1)\cdots Q_-(zy_{s-1})z^{s-1}{\rm d}\bm{y}\nonumber,
\end{align}
where we choose the branch cut of $\log z$ (a factor in the integrand of $R_{2,r_0}$) along the negative real axis under $x_1=z,x_2=zy_1,\dots,x_s=zy_{s-1}$, and we use the nation
\begin{align}
\label{eq:sig-del}
[\Delta^{(\sigma;t)}_{s-1}]:=\Bigl\lbrack\Delta_{s-1}\Bigl((1-s)\frac{{\sigma}^2}{2}+{t},{\sigma}^2,\frac{{\sigma}^2}{2}\Bigr)\Bigr\rbrack.
\end{align}
We see that $[\Delta^{(\sigma;0)}_{s-1}]=[\Delta^{(\sigma)}_{s-1}]$ (for the notation $[\Delta^{(\sigma)}_{s-1}]$, see (\ref{eq:twist-rho})).
Note that the right-hand side of (\ref{r20-eqlog}) is well-defined. In fact, 
for sufficiently small $|\epsilon|\geq 0$, the parameters defining the twisted cycle $[\Delta^{(\alpha_-;\epsilon)}_{s-1}]$ satisfy
\begin{align*}
\Bigl((1-s)\frac{{\alpha}^2_-}{2}+{\epsilon},{\alpha}^2_-,\frac{{\alpha}^2_-}{2}\Bigr)\notin \widehat{\mathcal{A}}_{s-1}.
\end{align*}
Thus, from Theorem \ref{sus-prop}, we see that the correlation functions
\begin{align*}
&z^{s-1+s\epsilon}\int_{[\Delta^{(\alpha_-;\epsilon)}_{s-1}]}y^\epsilon_1\cdots y^\epsilon_{s-1}\langle u', Q_-(z)Q_-(zy_1) \cdots Q_-(zy_{s-1})u\rangle{{\rm d}\bm{y}},\\
&\qquad (u\in F_{1,s;-n}, u'\in F^*_{1,-s;-n})
\end{align*}
are holomorphic at $\epsilon=0$.
For simplicity, we write 
\begin{align*}
R_{2,r_0}=\int_{\Gamma^{(\alpha_-)}_{s}}\log( x_1x_2\cdots x_s) Q_-(x_1)Q_-(x_2) \cdots Q_-(x_s){\rm d}\bm{x},
\end{align*}
which does not affect the following argument.
From (\ref{lem:sc-cruc-st}), the operator $R_1$ agrees with $T_1-T_2$ up to a constant multiple. Therefore, to prove (\ref{final-t1t2}), it suffices to show 
\begin{align}
\label{final-t1t2t3}
Q^{[p_+-1]}_+\circ R_1|_{F_{1,s;-n}\cap {\rm ker}Q_+}=0.
\end{align}
Below we show that, on $F_{1,s;-n}\cap {\rm ker}Q_+$, the zero mode of $Q^{[p_+-1]}_+P(z)$ acts as a scalar multiple of the left-hand side of $F_{1,s;-n}\cap {\rm ker}Q_+$, while all the other modes act trivially.
As in \cite[Proposition 2.1]{NT},
by using the operator product expansion 
\begin{align*}
{Q}_+(z){Q}_-(x)=\frac{1}{(z-x)^{2}}:{Q}_+(z){Q}_-(x):
\end{align*}
and the generalized Stokes' theorem,
we have
\begin{align*}
\begin{aligned}
&[Q_+,R_{2,r_0}]\\
&=\sum_{j=1}^s\int_{\Gamma^{(\alpha_-)}_{s}} \log( x_1x_2\cdots x_s)Q_-(x_1)\cdots :\partial_{x_j}Q_+(x_j)Q_-(x_j):\cdots Q_-(x_s){\rm d}\bm{x}\\
&=\frac{\alpha_+}{\alpha_0}\sum_{j=1}^s\int_{\Gamma^{(\alpha_-)}_{s}}\log\Bigl(\prod_{i=1}^sx_i\Bigr) Q_-(x_1)\cdots \partial_{x_j}Y(\ket{\alpha_0},x_j)\cdots Q_-(x_s){\rm d}\bm{x}\\
&=\frac{\alpha_+}{\alpha_0}\sum_{j=1}^s\int_{\Gamma^{(\alpha_-)}_{s}}\log\Bigl(\prod_{i=1}^sx_i\Bigr)\\
&\qquad  \cdot {\rm d}_{\bm{x}}\biggl\{(-1)^{j+1}Q_-(x_1)\cdots Y(\ket{\alpha_0},x_j)\cdots Q_-(x_s) {\rm d}x_1\wedge\cdots\overset{\vee}{{\rm d}x_j}\cdots \wedge{\rm d}x_{s}\biggr\}\\
&=\frac{\alpha_+}{\alpha_0}\int_{\Gamma^{(\alpha_-)}_{s}}\biggl\{s\log z+\log\Bigl(\prod_{i=1}^{s-1}y_i\Bigr)\biggr\}\\
&\qquad \cdot\biggl\lbrack z^{s-1}\partial_{z}\Bigl(Y(\ket{\alpha_0},z)Q_-(zy_1)\cdots Q_-(zy_{s-1})\Bigr){\rm d}z{\rm d}y_1\wedge\cdots \wedge{\rm d}y_{s-1}\\
&\qquad  -\sum_{j=1}^{s-1} z^{s-2}(-1)^{j+1}y_j{\rm d}_{\bm{y}}\biggl\{Y(\ket{\alpha_0},z)Q_-(zy_1)\cdots Q_-(zy_j)\cdots Q_-(zy_{s-1})\\
&\qquad \qquad \qquad \qquad \qquad \cdot  {\rm d}z{\rm d}y_1\wedge\cdots\overset{\vee}{{\rm d}y_j}\cdots \wedge{\rm d}y_{s-1}\biggr\}\\
&\qquad  +\sum_{j=1}^{s-1} z^{s-2}{\rm d}_{\bm{y}}\biggl\{(-1)^{j+1}Q_-(z)Q_-(zy_1)\cdots Y(\ket{\alpha_0},zy_j)\cdots Q_-(zy_{s-1})\\
&\qquad \qquad \qquad \qquad \qquad \cdot  {\rm d}z{\rm d}y_1\wedge\cdots\overset{\vee}{{\rm d}y_j}\cdots \wedge{\rm d}y_{s-1}\biggr\}\biggr\rbrack\\
&=\frac{\alpha_+}{\alpha_0}\Bigl(-R_1+sP(r_0)\Bigr).
\end{aligned}
\end{align*}
Then we get
\begin{align}
\label{lem:eq-x-10}
P(z)=\frac{\alpha_0}{\alpha_+s}[Q_+,R_{2,z}]+\frac{1}{s}R_1
\end{align}
for $z\in \mathbb{R}_{>0}$. Applying $Q^{[p_+-1]}_+$ to (\ref{lem:eq-x-10}) and using $Q^{[p_+-1]}_+\circ Q_+=0$, we get 
\begin{align}
\label{lem:eq-x-cr}
Q^{[p_+-1]}_+P(z)|_{F_{1,s;-n}\cap {\rm ker}Q_+}=s^{-1}Q^{[p_+-1]}_+\circ R_1|_{F_{1,s;-n}\cap {\rm ker}Q_+}.
\end{align}
This implies that all non-zero modes of $Q^{[p_+-1]}_+P(z)$ act trivially on $F_{1,s;-n}\cap {\rm ker}Q_+$. 

\item 
Set
\begin{align*}
\kappa_1&:=\alpha_+-\widetilde{\alpha}_++s(\alpha_--\widetilde{\alpha}_-),\\
\kappa_2&:=\widetilde{\alpha}_-(\alpha_{1,s;-n}-\widetilde{\alpha}_{1+np_+,s}+\kappa_1),\\
\kappa_3&:=\widetilde{\alpha}_+(\widetilde{\alpha}_{1,s-np_-}-{\alpha}_{1,s;-n}-\kappa_1)-s\kappa_2.
\end{align*}
Note that $\kappa_1,\kappa_2,\kappa_3\in \epsilon\cdot \mathcal{O}_{0;\epsilon}$.
We introduce the operator
\begin{align*}
\widetilde{P}(z)&:=\int_{[\Delta^{(\widetilde{\alpha}_-;\kappa_2)}_{s-1}]} Y(\ket{\widetilde{\alpha}_0},z)\widetilde{Q}_-(zy_1)\cdots \widetilde{Q}_-(zy_{s-1})z^{s-1}{\rm d}\bm{y},
\end{align*}
where we used the notation (\ref{eq:sig-del}) as $(\sigma,t)=(\widetilde{\alpha}_-,\kappa_2)$.
We see that from (\ref{eq:opeVV}), $\widetilde{P}(z)$ is well-defined on ${\rm e}^{\kappa_1}F_{1,s;-n}$, and that $z^{\kappa_3}\widetilde{P}(z)$ is a single-valued operator on ${\rm e}^{\kappa_1}F_{1,s;-n}$.
Note that, due to the shift by ${\rm e}^{\kappa_1}$, the compositions
\begin{align*}
&z^{\kappa_3}Q^{[p_+-1]}_{+}\widetilde{P}(z),
&z^{\kappa_3}Q^{[p_+-1]}_{+}\widetilde{P}_l(z) 
\end{align*}
are well-defined on ${\rm e}^{\kappa_1}F_{1,s;-n}$.
In this step, we rewrite the linear operator (\ref{lem:eq-x-cr}) in terms of $\widetilde{P}(z)$.

In what follows, we consider $\widetilde{P}(z)$ on ${\rm e}^{\kappa_1}F_{1,s;-n}$.
Note that for sufficiently small $|\epsilon|\geq 0$, the parameters defining the twisted cycle $[\Delta^{(\widetilde{\alpha}_-,\kappa_2)}_{s-1}]$ satisfy
\begin{align*}
\Bigl((1-s)\frac{\widetilde{\alpha}^2_-}{2}+{\kappa_2},\widetilde{\alpha}^2_-,\frac{\widetilde{\alpha}^2_-}{2}\Bigr)\notin \widehat{\mathcal{A}}_{s-1}.
\end{align*}
Then, from Theorem~\ref{sus-prop}, we have
\begin{align*}
\begin{aligned}
&\widetilde{P}(z)\in z^{-\kappa_3}\cdot {\rm Hom}({\rm e}^{\kappa_1}F_{1,s;-n}, \mathcal{O}_{0;\epsilon}\otimes F_{p_+-1,-s;-n+1})[[ z^{\pm 1}]].
\end{aligned}
\end{align*}
Thus, by the definition of $\widetilde{P}(z)$, we obtain
\begin{align}
\label{eq:ptil}
\left.\oint_{z=0}z^{\kappa_3}z^{l}\widetilde{P}(z){\rm e}^{\kappa_1} u\frac{{\rm d}z}{2\pi i}\right|_{\epsilon=0}=\oint_{z=0}z^lP(z)u\frac{{\rm d}z}{2\pi i},
\ \ u\in F_{1,s;-n},\ l\in \mathbb{Z}.
\end{align}
We denote
\begin{align*}
K&:={\rm e}^{\kappa_1}(F_{1,s;-n}\cap {\rm ker}Q_+).
\end{align*}
Note that from (\ref{lem:eq-x-cr}) and (\ref{eq:ptil}),
\begin{align}
\label{eq:ptil-2}
{\rm im}\Bigl(Q^{[p_+-1]}_+\left.\oint_{z=0}z^{\kappa_3}z^{l-1}\widetilde{P}(z)\frac{{\rm d}z}{2\pi i}\right|_{K}\Bigr)\subset \epsilon\cdot \mathcal{O}_{0;\epsilon}\otimes F_{1,-s;-n+2},\ \ l\neq 0.
\end{align}
Consider the commutators
$
\lbrack \widetilde{P}(z), L^{(\alpha_0)}_l\rbrack\ (l\in \mathbb{Z})
$ on ${\rm e}^{\kappa_1}F_{1,s;-n}$,  where we regard ${\rm e}^{\kappa_1}F_{1,s;-n}$ as a $\pi^{(\alpha_0)}(\mathfrak{Vir})$-module.
Note that the conformal weights of $Y(\ket{\widetilde{\alpha}_0},x)$ and $\widetilde{Q}_-(x)$ with respect to $T^{(\alpha_0)}$ are $h^{({\alpha}_0)}_{\widetilde{\alpha}_0}$ and $h^{({\alpha}_0)}_{\widetilde{\alpha}_-}$, respectively.
Then, we have the relations
\begin{align*}
\lbrack L^{(\alpha_0)}_l, Y(\ket{\widetilde{\alpha}_0},x)\rbrack&=x^l\bigl(x\partial_{x}+h^{({\alpha}_0)}_{\widetilde{\alpha}_0}(l+1)\bigr)Y(\ket{\widetilde{\alpha}_0},x),\\
\lbrack L^{(\alpha_0)}_l, \widetilde{Q}_-(x)\rbrack&=x^l\bigl(x\partial_{x}+h^{({\alpha}_0)}_{\widetilde{\alpha}_-}(l+1)\bigr)\widetilde{Q}_-(x).
\end{align*}
Using the above relations and the generalized Stokes' theorem, we have
\begin{align*}
\begin{aligned}
&\lbrack L^{({\alpha}_0)}_l,\widetilde{P}(z)\rbrack\\
&=z^{s-1}\int_{[\Delta^{(\widetilde{\alpha}_-;\kappa_2)}_{s-1}]} z^{l}\bigl\{\bigl(z\partial_{z}+h^{({\alpha}_0)}_{\widetilde{\alpha}_0}(l+1)\bigr)Y(\ket{\widetilde{\alpha}_0},z)\bigr\}\\
&\qquad \qquad \qquad \qquad \qquad \qquad \qquad\cdot \widetilde{Q}_-(zy_1)\cdots \widetilde{Q}_-(zy_{s-1}){\rm d}\bm{y}\\
&+z^{s-1}\sum_{i=1}^{s-1}\int_{[\Delta^{(\widetilde{\alpha}_-;\kappa_2)}_{s-1}]} Y(\ket{\widetilde{\alpha}_0},z)\widetilde{Q}_-(zy_1)\\
&\qquad \cdots z^ly^l_i\bigl\{\bigl(y_i\partial_{y_i}+h^{({\alpha}_0)}_{\widetilde{\alpha}_-}(l+1)\bigr)\widetilde{Q}_-(zy_i)\bigr\}\cdots \widetilde{Q}_-(zy_{s-1}){\rm d}\bm{y}\\
&=z^{s-1}\int_{[\Delta^{(\widetilde{\alpha}_-;\kappa_2)}_{s-1}]} \bigl(z^{l+1}\partial_{z}Y(\ket{\widetilde{\alpha}_0},z)\bigr)\widetilde{Q}_-(zy_1)\cdots \widetilde{Q}_-(zy_{s-1}){\rm d}\bm{y}\\
&\qquad +h^{({\alpha}_0)}_{\widetilde{\alpha}_0}(l+1)z^{s+l-1}\int_{[\Delta^{(\widetilde{\alpha}_-;\kappa_2)}_{s-1}]} Y(\ket{\widetilde{\alpha}_0},z)\bigr)\widetilde{Q}_-(zy_1)\cdots \widetilde{Q}_-(zy_{s-1}){\rm d}\bm{y}\\
&\qquad+z^{s-1}\sum_{i=1}^{s-1}\int_{[\Delta^{(\widetilde{\alpha}_-;\kappa_2)}_{s-1}]} Y(\ket{\widetilde{\alpha}_0},z)\widetilde{Q}_-(zy_1)\\
&\qquad \qquad\cdots z^ly^l_i\bigl\{(h^{({\alpha}_0)}_{\widetilde{\alpha}_-}-1)(l+1)\widetilde{Q}_-(zy_i)\bigr\}\cdots \widetilde{Q}_-(zy_{s-1}){\rm d}\bm{y}.
\end{aligned}
\end{align*}
Using the relation
\begin{align*}
\partial_{z}\widetilde{Q}_{-}(zy_j)=\frac{y_j}{z}\partial_{y_j}\widetilde{Q}_{-}(zy_j).
\end{align*}
and the generalized Stokes' theorem, we have
\begin{align*}
\begin{aligned}
&z^{s-1}\int_{[\Delta^{(\widetilde{\alpha}_-;\kappa_2)}_{s-1}]} \bigl(z^{l+1}\partial_{z}Y(\ket{\widetilde{\alpha}_0},z)\bigr)\widetilde{Q}_-(zy_1)\cdots \widetilde{Q}_-(zy_{s-1}){\rm d}\bm{y}=z^{l+1}\partial_{z}\widetilde{P}(z).
\end{aligned}
\end{align*}
Then, we obtain
\begin{align}
\begin{aligned}
\lbrack L^{({\alpha}_0)}_l,\widetilde{P}(z)\rbrack=&z^{l+1}\partial_{z}\widetilde{P}(z)+(l+1)z^l\bigl\{h^{({\alpha}_0)}_{\widetilde{\alpha}_0} \widetilde{P}(z)+(h^{({\alpha}_0)}_{\widetilde{\alpha}_-}-1)\widetilde{P}_{l}(z)\bigr\},
\end{aligned}
\label{lem:eq-ss}
\end{align}
where we define the field $\widetilde{P}_l(z)$ by
\begin{align*}
z^{s-1}\sum_{i=1}^{s-1}\int_{[\Delta^{(\widetilde{\alpha}_-;\kappa_2)}_{s-1}]} y^l_iY(\ket{\widetilde{\alpha}_0},z)\widetilde{Q}_-(zy_1)\cdots \widetilde{Q}_-(zy_i)\cdots \widetilde{Q}_-(zy_{s-1}){\rm d}\bm{y}.
\end{align*}
Note that 
\begin{align}
\label{eq:ep-ex}
&h^{({\alpha}_0)}_{\widetilde{\alpha}_0}=\frac{1}{2}\alpha_0\Bigl(1+\frac{p_+}{p_-}\Bigr)\epsilon+o(\epsilon),
&h^{({\alpha}_0)}_{\widetilde{\alpha}_-}-1=\frac{1}{2}\alpha_-\Bigl(1+\frac{p_+}{p_-}\Bigr)\epsilon+o(\epsilon).
\end{align}
By definition, we have $\widetilde{P}_0(z)=(s-1)\widetilde{P}(z)$, and the $m$-th mode of $z^{\kappa_3}\widetilde{P}_l(z)$ coincides with that of $z^{\kappa_3}\widetilde{P}(z)$ up to a scalar in $\mathcal{O}_{0;\epsilon}$.
We set
\begin{align*}
\begin{aligned}
\mathcal{X}_l(\epsilon)&:=\frac{1}{\epsilon}Q^{[p_+-1]}_+\oint_{z=0}z^{\kappa_3}\biggl\lbrack L^{(\alpha_0)}_{l},\frac{\widetilde{P}(z)}{z}\biggr\rbrack\frac{{\rm d}z}{2\pi i},\\
\mathcal{Y}_l(\epsilon)&:=\frac{1}{\epsilon}Q^{[p_+-1]}_+\oint_{z=0}z^{\kappa_3}z^{l}\partial_{z}\widetilde{P}(z)\frac{{\rm d}z}{2\pi i},\\
\mathcal{Z}_l(\epsilon)&:=\mathcal{X}_l(\epsilon)-\mathcal{Y}_l(\epsilon)
\end{aligned}
\end{align*}
for $l\in \mathbb{Z}$.
By (\ref{lem:eq-x-cr}), (\ref{eq:ptil-2}), (\ref{lem:eq-ss}), and (\ref{eq:ep-ex}), we see that the images of $\mathcal{X}_l(\epsilon)|_{K}$ and $\mathcal{Y}_l(\epsilon)|_{K}$ are contained in $\mathcal{O}_{0;\epsilon}\otimes F_{1,-s;-n+2}$.
Then, we define the linear operators 
\begin{align*}
\mathcal{X}_l,\mathcal{Y}_l,\mathcal{Z}_l\in {\rm Hom}(F_{1,s;-n}\cap {\rm ker}Q_+,F_{1,-s;-n+2}),\qquad  l\in \mathbb{Z}
\end{align*}
as follows
\begin{align*}
\mathcal{X}_l:=\left.\left.\mathcal{X}_l(\epsilon)\right|_{K}\right|_{\epsilon=0},\qquad
\mathcal{Y}_l:=\left.\left.\mathcal{Y}_l(\epsilon)\right|_{K}\right|_{\epsilon=0},\qquad 
\mathcal{Z}_l:=\mathcal{X}_l-\mathcal{Y}_l.
\end{align*}
Using (\ref{lem:eq-x-cr}), (\ref{eq:ptil-2}), and (\ref{lem:eq-ss}), we have
\begin{align}
\label{deri-theta2}
\begin{aligned}
&\mathcal{Z}_l=
\begin{cases}
s^{-1}C \cdot Q^{[p_+-1]}_+\circ R_1|_{F_{1,s;-n}\cap {\rm ker}Q_+}&l=0\\
0&l\neq 0,
\end{cases}
\end{aligned}
\end{align}
where we denote
\begin{align*}
C=\left.\frac{d}{d{\epsilon}}\bigl\{h^{({\alpha}_0)}_{\widetilde{\alpha}_0}+(s-1)(h^{({\alpha}_0)}_{\widetilde{\alpha}_-}-1)\bigr\}\right|_{\epsilon=0}.
\end{align*}
We see that $C\neq 0$. In fact, from (\ref{eq:ep-ex}),
\begin{align*}
C
&=\frac{1}{2}\Bigl(1+\frac{p_+}{p_-}\Bigr)\alpha_-\bigl(-\frac{p_-}{p_+}+s\bigr).
\end{align*}
\item 
Consider $2l\mathcal{X}_0(\epsilon)|_{K}$ for $l\in \mathbb{Z}\setminus\{0\}$.
Using the commutation relation
\begin{align*}
\lbrack L^{(\alpha_0)}_l, L^{(\alpha_0)}_{-l}\rbrack=2lL^{(\alpha_0)}_0+\frac{c_{\alpha_0}}{12}(l^3-l),\qquad l\neq 0,
\end{align*}
we have
\begin{align*}
&2l\mathcal{X}_0(\epsilon)|_{K}\\
&=[\mathcal{X}_l(\epsilon),L^{(\alpha_0)}_{-l}]|_{K}+[L^{(\alpha_0)}_{l}, \mathcal{X}_{-l}(\epsilon)]|_{K}\\
&=[\mathcal{Y}_l(\epsilon),L^{(\alpha_0)}_{-l}]|_{K}+[L^{(\alpha_0)}_{l}, \mathcal{Y}_{-l}(\epsilon)]|_{K}+ [\mathcal{Z}_l(\epsilon),L^{(\alpha_0)}_{-l}]|_{K}+[L^{(\alpha_0)}_{l}, \mathcal{Z}_{-l}(\epsilon)]|_{K}.
\end{align*}
From the definition of $\mathcal{Z}_l(\epsilon)$, (\ref{eq:ptil-2}), and (\ref{deri-theta2}), we see that
\begin{align*}
[L^{(\alpha_0)}_{l}, \mathcal{Z}_{-l}(\epsilon)]|_{K}|_{\epsilon=0}=0,\qquad l\neq 0.
\end{align*}
Then, setting $\epsilon=0$ and using (\ref{deri-theta2}) for $2l\mathcal{X}_0(\epsilon)|_{K}$, we obtain
\begin{align}
\label{eq:y0z0}
2l(\mathcal{Y}_0+\mathcal{Z}_0)=[\mathcal{Y}_l(\epsilon),L^{(\alpha_0)}_{-l}]|_{K}|_{\epsilon=0}+[L^{(\alpha_0)}_{l}, \mathcal{Y}_{-l}(\epsilon)]|_{K}|_{\epsilon=0},\qquad l\neq 0.
\end{align}
Using (\ref{lem:eq-ss}), we have
\begin{align*}
&[L^{(\alpha_0)}_{l}, \mathcal{Y}_{-l}(\epsilon)]|_{K}\\
&=\frac{1}{\epsilon}Q^{[p_+-1]}_+\left.\oint_{z=0}z^{\kappa_3}z^{-l}\partial_{z}\bigl(z^{l+1}\partial_z\widetilde{P}(z)\bigr)\frac{{\rm d}z}{2\pi i}\right|_{K}\\
&+(l+1)\frac{1}{\epsilon}Q^{[p_+-1]}_+\left.\oint_{z=0}z^{\kappa_3}z^{-l}\partial_{z}\Bigl(z^{l}\bigl\{h^{({\alpha}_0)}_{\widetilde{\alpha}_0} \widetilde{P}(z)+(h^{({\alpha}_0)}_{\widetilde{\alpha}_-}-1)\widetilde{P}_{l}(z)\bigr\}\Bigr)\frac{{\rm d}z}{2\pi i}\right|_{K}
\end{align*}
From (\ref{eq:ptil-2}) and (\ref{eq:ep-ex}), we see that the image of the second operator is contained in $\epsilon\cdot \mathcal{O}_{0;\epsilon}\otimes F_{1,-s;-n+2}$. Then, we have
\begin{align*}
&[L^{(\alpha_0)}_{l}, \mathcal{Y}_{-l}(\epsilon)]|_{K}|_{\epsilon=0}\\
&=(l+1)\mathcal{Y}_{0}+\frac{1}{\epsilon}Q^{[p_+-1]}_+\left.\left.\oint_{z=0}z^{\kappa_3}z\partial^2_z\widetilde{P}(z)\frac{{\rm d}z}{2\pi i}\right|_{K}\right|_{\epsilon=0},\ \ l\neq 0.
\end{align*}
Using this relation, we obtain
\begin{align*}
2l\mathcal{Y}_0=[\mathcal{Y}_l(\epsilon),L^{(\alpha_0)}_{-l}]|_{K}|_{\epsilon=0}+[L^{(\alpha_0)}_{l}, \mathcal{Y}_{-l}(\epsilon)]|_{K}|_{\epsilon=0},\qquad l\neq 0.
\end{align*}
Then, from (\ref{eq:y0z0}), we obtain $\mathcal{Z}_0=0$. 
Therefore, since $C\neq 0$, we get (\ref{final-t1t2t3}) from (\ref{deri-theta2}).
\end{enumerate}
\end{proof}

\begin{proof}[Proof of Theorem \ref{thm:mainWpq}]
As in Theorem \ref{thm:mainW1p}, from  Proposition~\ref{sl2action2} and Theorems~\ref{G-hom} and~\ref{non-Gop}, it suffices to show that 
\begin{align}
\label{cru-ef-rel-pq}
[\mathbb{E},\mathbb{F}]|_{\mathcal{W}_{p_+,p_-}\cap F_{1,1}}=0.
\end{align}
Note that $F_{1,1}\subset \widetilde{F}_{1,1}$, and that the screening operators $\widetilde{Q}_\pm$ and the compositions $\widetilde{Q}_\pm \widetilde{Q}_\mp$ are well-defined on $\widetilde{F}_{1,1}$. 
Fix a Virasoro singular vector $\phi$ in $ F_{1,1}$.
Since $[\mathbb{E},\mathbb{F}]$ acts trivially on the singular vector of the $\mathcal{L}_0$-weight $(p_+-1)(p_--1)$, we may assume that the weight of $\phi$ is at least $h_{4p_+-1,1}$.
We set 
\begin{align}
v_\pm:=\lim_{\epsilon\rightarrow 0}\hspace{-1mm}{}^F\frac{1}{\epsilon}\widetilde{Q}_\pm\phi\in F_{\mp1,\pm 1}
\end{align}
In what follows, we use the notation
\begin{align*}
\bm{e}_1&={\rm e}^{\widetilde{\alpha}_{1+p_+,p_--1}-{\alpha}_{1,-1}}, &\bm{e}_2&={\rm e}^{\widetilde{\alpha}_{1,1-2p_-}-\widetilde{\alpha}_{1+p_+,1-p_-}}, &\bm{e}_3&={\rm e}^{\widetilde{\alpha}_{p_+-1,1-p_-}-\widetilde{\alpha}_{-1,1-2p_-}},\\
\bm{e}'_1&={\rm e}^{\widetilde{\alpha}_{p_+-1,1+p_+}-{\alpha}_{-1,1}}, &\bm{e}'_2&={\rm e}^{\widetilde{\alpha}_{1-2p_+,1}-\widetilde{\alpha}_{1-p_+,1+p_-}}, &\bm{e}'_3&={\rm e}^{\widetilde{\alpha}_{1-p_+,p_--1}-\widetilde{\alpha}_{1-2p_+,-1}}.
\end{align*}
Taking into account the argument for (\ref{eq:19193}), from Lemmas~\ref{lem:tw0}-\ref{lem:tw} and the relation $Q^{[p_+-1]}_+\circ Q_+=0$, we have
\begin{align}
\label{2026feq}
\begin{aligned}
\mathbb{E}\cdot \mathbb{F}\phi
&=\lim_{\epsilon\rightarrow 0}\hspace{-1mm}{}^F\frac{1}{\epsilon}\widetilde{Q}^{[p_+-1]}_+\circ \bm{e}_3\circ\widetilde{Q}_+\circ \bm{e}_2\circ {\rm e}^{\widetilde{\alpha}_{1+p_+,1-p_-}-\alpha_{p_++1,1-p_-}}\circ{Q}^{[p_--1]}_-v_-\\
&=\lim_{\epsilon\rightarrow 0}\hspace{-1mm}{}^F\frac{1}{\epsilon}\widetilde{Q}^{[p_+-1]}_+\circ \bm{e}_3\circ\widetilde{Q}_+\circ \bm{e}_2\circ \widetilde{Q}^{[p_--1]}_-\circ \bm{e}_1v_-\\
&\quad +\lim_{\epsilon\rightarrow 0}\hspace{-1mm}{}^F\frac{1}{\epsilon}\widetilde{Q}^{[p_+-1]}_+\circ \bm{e}_3\circ\widetilde{Q}_+\circ \bm{e}_2\\
&\qquad \qquad \cdot \bigl( {\rm e}^{\widetilde{\alpha}_{1+p_+,1-p_-}-\alpha_{p_++1,1-p_-}}\circ{Q}^{[p_--1]}_--\widetilde{Q}^{[p_--1]}_-\circ \bm{e}_1\bigr)v_-\\
&=\lim_{\epsilon\rightarrow 0}\hspace{-1mm}{}^F\frac{1}{\epsilon}\widetilde{Q}^{[p_+-1]}_+\circ \bm{e}_3\circ\widetilde{Q}_+\circ \bm{e}_2\circ \widetilde{Q}^{[p_--1]}_-\circ \bm{e}_1v_-\\
&\quad +{Q}^{[p_+-1]}_+\circ {Q}_+\lim_{\epsilon\rightarrow 0}\hspace{-1mm}{}^F\frac{1}{\epsilon}\bigl( {\rm e}^{\widetilde{\alpha}_{1+p_+,1-p_-}-\alpha_{p_++1,1-p_-}}\circ{Q}^{[p_--1]}_--\widetilde{Q}^{[p_--1]}_-\circ \bm{e}_1\bigr)v_-\\
&=\lim_{\epsilon\rightarrow 0}\hspace{-1mm}{}^F\frac{1}{\epsilon}\widetilde{Q}^{[p_+-1]}_+\circ \bm{e}_3\circ\widetilde{Q}_+\circ \bm{e}_2\circ \widetilde{Q}^{[p_--1]}_-\circ \bm{e}_1v_-.
\end{aligned}
\end{align}
In the same way, we have
\begin{align}
\label{2026feq-22}
\begin{aligned}
\mathbb{F}\cdot \mathbb{E}\phi
&=\lim_{\epsilon\rightarrow 0}\hspace{-1mm}{}^F\frac{1}{\epsilon}\widetilde{Q}^{[p_--1]}_-\circ \bm{e}'_3\circ\widetilde{Q}_-\circ \bm{e}'_2\circ \widetilde{Q}^{[p_+-1]}_+\circ \bm{e}'_1v_+.
\end{aligned}
\end{align}
By Proposition~\ref{Felder complex2}, Lemma~\ref{lem:tw}, and Theorem~\ref{non-Gop}, the limit $v_\pm$ can be written as the following form
\begin{align}
\label{eq:vpm}
v_\pm=\phi_{{\rm ker}Q_\mp}+\psi_{{\rm im}Q_\pm},
\end{align}
where $\phi_{{\rm ker}Q_\mp}$ is a subsingular vector in ${\rm ker}Q_\mp$ such that the image $Q^{[p_\pm-1]}_\pm\phi_{{\rm ker}Q_\mp}$ is a singular vector (see also Remark~\ref{rem-subsing}), and $\psi_{{\rm im}Q_\pm}$ is a vector contained in ${\rm im}Q_\pm$.
Since $\psi_{{\rm im}Q_\pm}\in {\rm ker}Q^{[p_\pm-1]}_\pm$ by Propositions~\ref{Felder complex2}, we have
\begin{align}
\label{eq:psi-lim}
\begin{aligned}
\widetilde{Q}^{[p_--1]}_-\circ \bm{e}_1\psi_{{\rm im}Q_-}&\in \epsilon\cdot\widetilde{F}_{1+p_+,1-p_-},\\
\widetilde{Q}^{[p_+-1]}_+\circ \bm{e}'_1\psi_{{\rm im}Q_+}&\in \epsilon\cdot\widetilde{F}_{1-p_+,1+p_-}.
\end{aligned}
\end{align}
Then, by using Lemma~\ref{lem:tw0}, (\ref{eq:vpm}), (\ref{eq:psi-lim}), and the relations $Q^{[p_\pm-1]}_\pm\circ Q_\pm=0$, the two limits (\ref{2026feq})-(\ref{2026feq-22}) can be written as follows
\begin{align}
\label{2026feq0}
\begin{aligned}
\mathbb{E}\cdot \mathbb{F}\phi
&=\lim_{\epsilon\rightarrow 0}\hspace{-1mm}{}^F\frac{1}{\epsilon}\widetilde{Q}^{[p_+-1]}_+\circ \bm{e}_3\circ\widetilde{Q}_+\circ \bm{e}_2\circ \widetilde{Q}^{[p_--1]}_-\circ \bm{e}_1\phi_{{\rm ker}Q_+},\\
\mathbb{F}\cdot \mathbb{E}\phi
&=\lim_{\epsilon\rightarrow 0}\hspace{-1mm}{}^F\frac{1}{\epsilon}\widetilde{Q}^{[p_--1]}_-\circ \bm{e}'_3\circ\widetilde{Q}_-\circ \bm{e}'_2\circ \widetilde{Q}^{[p_+-1]}_+\circ \bm{e}'_1\phi_{{\rm ker}Q_-}.
\end{aligned}
\end{align}
By applying Lemma \ref{lem:sc-cruc} to the expressions before taking the limit in (\ref{2026feq0}), we have
\begin{align}
\label{eq:cru-lim1}
\begin{aligned}
&\widetilde{Q}^{[p_+-1]}_+\circ \bm{e}_3\circ\widetilde{Q}_+\circ \bm{e}_2\circ \widetilde{Q}^{[p_--1]}_-\circ \bm{e}_1 \phi_{{\rm ker}Q_+}\\
&\qquad -\widetilde{Q}^{[p_+-1]}_+\widetilde{Q}^{[p_--1]}_-\circ {\rm e}^{\widetilde{\alpha}_{p_+-1,p_--1}-\widetilde{\alpha}_{-1,-1}}\circ \widetilde{Q}_+ {\rm e}^{\widetilde{\alpha}_{1,-1}-\alpha_{1,-1}}\phi_{{\rm ker}Q_+}\\
&\in \epsilon^2\cdot \widetilde{F}_{1-p_+,1-p_-}
\end{aligned}
\end{align}
and
\begin{align}
\label{eq:cru-lim2}
\begin{aligned}
&\widetilde{Q}^{[p_--1]}_-\circ \bm{e}'_3\circ\widetilde{Q}_-\circ \bm{e}'_2\circ \widetilde{Q}^{[p_+-1]}_+\circ \bm{e}'_1\phi_{{\rm ker}Q_-}\\
&\qquad -\widetilde{Q}^{[p_--1]}_-\widetilde{Q}^{[p_+-1]}_+\circ {\rm e}^{\widetilde{\alpha}_{p_+-1,p_--1}-\widetilde{\alpha}_{-1,-1}}\circ \widetilde{Q}_- {\rm e}^{\widetilde{\alpha}_{-1,1}-\alpha_{-1,1}}\phi_{{\rm ker}Q_-}\\
&\in \epsilon^2\cdot \widetilde{F}_{1-p_+,1-p_-}.
\end{aligned}
\end{align}
Note that 
\begin{align}
\label{eq:vec-2142}
\widetilde{Q}_\pm {\rm e}^{\widetilde{\alpha}_{\pm1,\mp1}-\alpha_{\pm1,\mp1}}\phi_{{\rm ker}Q_\pm}\in \epsilon\cdot \widetilde{F}_{\mp1,\mp 1}. 
\end{align}
Then, using Lemma~\ref{lem:tw0} and (\ref{2026feq0})-(\ref{eq:vec-2142}), we obtain
\begin{align}
\begin{aligned}
\mathbb{E}\cdot \mathbb{F}\phi&={Q}^{[p_+-1]}_+{Q}^{[p_--1]}_-\lim_{\epsilon\rightarrow 0}\hspace{-1mm}{}^F\frac{1}{\epsilon} \widetilde{Q}_+ {\rm e}^{\widetilde{\alpha}_{1,-1}-\alpha_{1,-1}}\phi_{{\rm ker}Q_+},\\
\mathbb{F}\cdot \mathbb{E}\phi&={Q}^{[p_--1]}_-{Q}^{[p_+-1]}_+\lim_{\epsilon\rightarrow 0}\hspace{-1mm}{}^F\frac{1}{\epsilon} \widetilde{Q}_- {\rm e}^{\widetilde{\alpha}_{-1,1}-\alpha_{-1,1}}\phi_{{\rm ker}Q_-}\\
&={Q}^{[p_+-1]}_+{Q}^{[p_--1]}_-\lim_{\epsilon\rightarrow 0}\hspace{-1mm}{}^F\frac{1}{\epsilon} \widetilde{Q}_- {\rm e}^{\widetilde{\alpha}_{-1,1}-\alpha_{-1,1}}\phi_{{\rm ker}Q_-},
\end{aligned}
\label{eq:ef-vec900}
\end{align}
where we used the commutativity $[{Q}^{[p_--1]}_-,{Q}^{[p_+-1]}_+]=0$ which follows from Proposition~\ref{prop:qr-qs}.

To show that $\mathbb{E}\cdot \mathbb{F}\phi$ and $\mathbb{F}\cdot \mathbb{E}\phi$ are equal, we consider the following vector contained in $\widetilde{F}_{1-p_+,1-p_-}$:
 \begin{align}
 \label{vec-900}
\begin{aligned}
\widetilde{Q}^{[p_+-1]}_+\widetilde{Q}^{[p_--1]}_-\circ {\rm e}^{\widetilde{\alpha}_{p_+-1,p_--1}-\widetilde{\alpha}_{-1,-1}}\circ  \widetilde{Q}_-\widetilde{Q}_+(\phi-\epsilon\cdot\psi^{{\rm pre}}_{{\rm im}Q_+}),
\end{aligned}
\end{align}
where we fixed a vector $\psi^{{\rm pre}}_{{\rm im}Q_+}\in F_{1,1}$ satisfying 
\begin{align*}
Q_+\psi^{{\rm pre}}_{{\rm im}Q_+}=\psi_{{\rm im}Q_+}.
\end{align*}
Note that, from Proposition~\ref{prop:tw} and the relation $Q^{[p_--1]}_-\circ Q_-=0$, the vector (\ref{vec-900}) is contained in $\epsilon^2\cdot\widetilde{F}_{1-p_+,1-p_-}$. 
By the definition of $\psi^{{\rm pre}}_{{\rm im}Q_+}$, we have
\begin{align*}
\widetilde{Q}_+\psi^{{\rm pre}}_{{\rm im}Q_+}-{\rm e}^{\widetilde{\alpha}_{-1,1}-{\alpha}_{-1,1}}\psi_{{\rm im}Q_+}\in \epsilon\cdot\widetilde{F}_{-1,1}.
\end{align*}
Then, from Proposition~\ref{prop:tw} and $Q^{[p_--1]}_-\circ Q_-=0$, we have
\begin{align}
\label{vec-000}
\begin{aligned}
&\widetilde{Q}^{[p_--1]}_-\circ {\rm e}^{\widetilde{\alpha}_{p_+-1,p_--1}-\widetilde{\alpha}_{-1,-1}}\circ  \widetilde{Q}_-\bigl(\widetilde{Q}_+\psi^{{\rm pre}}_{{\rm im}Q_+}-{\rm e}^{\widetilde{\alpha}_{-1,1}-{\alpha}_{-1,1}}\psi_{{\rm im}Q_+}\bigr)\\
&\qquad \in \epsilon^2\cdot\widetilde{F}_{p_+-1,1-p_-}
\end{aligned}
\end{align}
By the definition of $v_+$ and Proposition~\ref{prop:tw}, we have
\begin{align*}
\widetilde{Q}_+\phi-\epsilon\cdot {\rm e}^{\widetilde{\alpha}_{-1,1}-\alpha_{-1,1}}(\phi_{{\rm ker}Q_-}+\psi_{{\rm im}Q_+})\in \epsilon^2\cdot\widetilde{F}_{-1,1}.
\end{align*}
Then, from Proposition~\ref{prop:tw} and $Q^{[p_--1]}_-\circ Q_-=0$, we have
\begin{align}
\label{vec-00+}
\begin{aligned}
&\widetilde{Q}^{[p_--1]}_-\circ {\rm e}^{\widetilde{\alpha}_{p_+-1,p_--1}-\widetilde{\alpha}_{-1,-1}}\circ  \widetilde{Q}_-\bigl(\widetilde{Q}_+\phi-\epsilon\cdot {\rm e}^{\widetilde{\alpha}_{-1,1}-\alpha_{-1,1}}(\phi_{{\rm ker}Q_-}+\psi_{{\rm im}Q_+})\bigr)\\
&\qquad \in \epsilon^3\cdot\widetilde{F}_{p_+-1,1-p_-}.
\end{aligned}
\end{align}
Then, using (\ref{vec-000}), (\ref{vec-00+}), we have
\begin{align}
\label{vec-00+s}
\begin{aligned}
&\widetilde{Q}^{[p_--1]}_-\circ {\rm e}^{\widetilde{\alpha}_{p_+-1,p_--1}-\widetilde{\alpha}_{-1,-1}}\circ  \widetilde{Q}_-\bigl(\widetilde{Q}_+\phi-\epsilon\cdot\widetilde{Q}_+\psi^{{\rm pre}}_{{\rm im}Q_+}-\epsilon\cdot {\rm e}^{\widetilde{\alpha}_{-1,1}-\alpha_{-1,1}}\phi_{{\rm ker}Q_-}\bigr)\\
&\qquad \in \epsilon^3\cdot\widetilde{F}_{p_+-1,1-p_-}.
\end{aligned}
\end{align}
Thus, using (\ref{eq:ef-vec900}), (\ref{vec-00+s}), and Lemma~\ref{lem:tw0}, we obtain
\begin{align}
\label{eq:lim-X}
\lim_{\epsilon\rightarrow 0}\hspace{-1mm}{}^F\frac{1}{\epsilon^2}\widetilde{Q}^{[p_+-1]}_+\widetilde{Q}^{[p_--1]}_-\circ {\rm e}^{\widetilde{\alpha}_{p_+-1,p_--1}-\widetilde{\alpha}_{-1,-1}}\circ  \widetilde{Q}_-\widetilde{Q}_+(\phi-\epsilon\cdot\psi^{{\rm pre}}_{{\rm im}Q_+})=\mathbb{F}\cdot \mathbb{E}\phi.
\end{align}
It remains to show that the limit of the left-hand side is $\mathbb{E}\cdot\mathbb{F}\phi$.
Note that 
\begin{align*}
Q_-\psi_{{\rm im}Q_+}=Q_+\psi_{{\rm im}Q_-}.
\end{align*}
In fact, using $[\widetilde{Q}_+,\widetilde{Q}_-]=0$ and Lemma~\ref{lem:tw0} , we have
\begin{align*}
\begin{aligned}
Q_-\lim_{\epsilon\rightarrow 0}\hspace{-1mm}{}^F\frac{1}{\epsilon}\widetilde{Q}_+ \phi=\lim_{\epsilon\rightarrow 0}\hspace{-1mm}{}^F\frac{1}{\epsilon}\widetilde{Q}_-\widetilde{Q}_+ \phi=\lim_{\epsilon\rightarrow 0}\hspace{-1mm}{}^F\frac{1}{\epsilon}\widetilde{Q}_+\widetilde{Q}_- \phi=Q_+\lim_{\epsilon\rightarrow 0}\hspace{-1mm}{}^F\frac{1}{\epsilon}\widetilde{Q}_- \phi.
\end{aligned}
\end{align*}
Then, using $[{Q}_+,{Q}_-]=0$, we have 
\begin{align*}
Q_-(\psi^{{\rm pre}}_{{\rm im}Q_+})-\psi_{{\rm im}Q_-}\in {\rm ker}Q_+.
\end{align*}
Thus, since $\psi_{{\rm im}Q_-}\in {\rm ker}Q^{[p_--1]}_-$, we obtain
\begin{align}
\label{vec-im}
 Q_-(\psi^{{\rm pre}}_{{\rm im}Q_+})-\psi_{{\rm im}Q_-}\in {\rm ker}Q_+\cap {\rm ker}Q^{[p_--1]}_-.
\end{align}
By (\ref{vec-im}) and Proposition~\ref{prop:tw}, we have
\begin{align}
\label{eq:q-pre}
\begin{aligned}
&\widetilde{Q}_-\psi^{{\rm pre}}_{{\rm im}Q_+}-{\rm e}^{\widetilde{\alpha}_{1,-1}-{\alpha}_{1,-1}}\psi_{{\rm im}Q_-}\\
&\qquad \in {\rm e}^{\widetilde{\alpha}_{1,-1}-{\alpha}_{1,-1}}{\rm ker}Q_+\cap {\rm ker}Q^{[p_--1]}_-\cap F_{1,-1}+\epsilon\cdot\widetilde{F}_{1,-1}.
\end{aligned}
\end{align}
From Propositions~\ref{prop:tw} and Theorems~\ref{G-hom}-\ref{non-Gop}, we have
\begin{align*}
\begin{aligned}
&\widetilde{Q}^{[p_+-1]}_+\circ {\rm e}^{\widetilde{\alpha}_{p_+-1,p_--1}-\widetilde{\alpha}_{-1,-1}}\circ  \widetilde{Q}_+\bigl( {\rm e}^{\widetilde{\alpha}_{1,-1}-{\alpha}_{1,-1}}{\rm ker}Q_+\cap {\rm ker}Q^{[p_--1]}_-\cap F_{1,-1}\bigr)\\
&\qquad \in \epsilon\cdot {\rm e}^{\widetilde{\alpha}_{1-p_+,p_+-1}-{\alpha}_{1-p_+,p_+-1}} {\rm ker}Q^{[p_--1]}_-\cap F_{1-p_+,p_+-1}+\epsilon^2\cdot\widetilde{F}_{1-p_+,p_+-1}.
\end{aligned}
\end{align*}
Then, by using this relation, (\ref{eq:q-pre}), and $Q^{[p_+-1]}_+\circ Q_+=0$, we have
\begin{align}
\label{vec-000-}
\begin{aligned}
&\widetilde{Q}^{[p_+-1]}_+\circ {\rm e}^{\widetilde{\alpha}_{p_+-1,p_--1}-\widetilde{\alpha}_{-1,-1}}\circ  \widetilde{Q}_+\bigl(\widetilde{Q}_-\psi^{{\rm pre}}_{{\rm im}Q_+}-{\rm e}^{\widetilde{\alpha}_{1,-1}-{\alpha}_{1,-1}}\psi_{{\rm im}Q_-}\bigr)\\
& \in \epsilon\cdot {\rm e}^{\widetilde{\alpha}_{1-p_+,p_+-1}-{\alpha}_{1-p_+,p_+-1}} {\rm ker}Q^{[p_--1]}_-\cap F_{1-p_+,p_+-1}+\epsilon^2\cdot\widetilde{F}_{1-p_+,p_+-1}.
\end{aligned}
\end{align}
As in (\ref{vec-00+}), we can show that
\begin{align}
\label{vec-00-}
\begin{aligned}
&\widetilde{Q}^{[p_+-1]}_+\circ {\rm e}^{\widetilde{\alpha}_{p_+-1,p_--1}-\widetilde{\alpha}_{-1,-1}}\circ  \widetilde{Q}_+\bigl(\widetilde{Q}_-\phi-\epsilon\cdot {\rm e}^{\widetilde{\alpha}_{1,-1}-\alpha_{1,-1}}(\phi_{{\rm ker}Q_+}+\psi_{{\rm im}Q_-})\bigr)\\
&\qquad \in \epsilon^3\cdot\widetilde{F}_{1-p_+,p_+-1}.
\end{aligned}
\end{align}
Then, using (\ref{vec-000-})-(\ref{vec-00-}), we have
\begin{align}
\label{vec-00-s}
\begin{aligned}
&\widetilde{Q}^{[p_+-1]}_+\circ {\rm e}^{\widetilde{\alpha}_{p_+-1,p_--1}-\widetilde{\alpha}_{-1,-1}}\circ  \widetilde{Q}_+\bigl(\widetilde{Q}_-\phi-\epsilon\cdot\widetilde{Q}_-\psi^{{\rm pre}}_{{\rm im}Q_+}-\epsilon\cdot {\rm e}^{\widetilde{\alpha}_{1,-1}-\alpha_{1,-1}}\phi_{{\rm ker}Q_+}\bigr)\\
&\qquad \in \epsilon^2\cdot {\rm e}^{\widetilde{\alpha}_{1-p_+,p_+-1}-{\alpha}_{1-p_+,p_+-1}} {\rm ker}Q^{[p_--1]}_-\cap F_{1-p_+,p_+-1}+\epsilon^3\cdot\widetilde{F}_{1-p_+,p_--1}.
\end{aligned}
\end{align}
Thus, using Lemma~\ref{lem:tw0}, (\ref{eq:ef-vec900}), and (\ref{vec-00-s}), we obtain
\begin{align}
\label{eq:lim-X-2}
\lim_{\epsilon\rightarrow 0}\hspace{-1mm}{}^F\frac{1}{\epsilon^2}\widetilde{Q}^{[p_--1]}_-\widetilde{Q}^{[p_+-1]}_+\circ {\rm e}^{\widetilde{\alpha}_{p_+-1,p_--1}-\widetilde{\alpha}_{-1,-1}}\circ  \widetilde{Q}_+\widetilde{Q}_-(\phi-\epsilon\cdot\psi^{{\rm pre}}_{{\rm im}Q_+})=\mathbb{E}\cdot \mathbb{F}\phi.
\end{align}
Then, applying the commutativities $[\widetilde{Q}_+,\widetilde{Q}_-]=0$, $[\widetilde{Q}^{[p_+-1]}_+,\widetilde{Q}^{[p_--1]}_-]=0$ to (\ref{eq:lim-X-2}) and comparing (\ref{eq:lim-X}) with (\ref{eq:lim-X-2}), we obtain $[\mathbb{E},\mathbb{F}]\phi=0$. Therefore, from Theorem~\ref{G-hom}, we get (\ref{cru-ef-rel-pq}).
\end{proof}

\section{Application to $\mathcal{SW}(m)$} 
\label{tri-swm}
In this section, we study derivations acting on the triplet superalgebra $\mathcal{SW}(m)$ by the same method as in the previous section.

\subsection{Free field realization of the Neveu-Scwarz algebra}
\label{FreeNS}
The $N=1$ Neveu-Schwarz algebra is the Lie superalgebra
\begin{equation*}
\mathfrak{ns}=\bigoplus_{n\in\mathbb{Z}}\mathbb{C} \mathcal{L}_n\oplus \bigoplus_{r\in\frac{1}{2}+\mathbb{Z}}\mathbb{C} \mathcal{G}_r\oplus \bigoplus\mathbb{C} C
\end{equation*}
with the relations $(k,l\in\mathbb{Z},\ r,s\in\mathbb{Z}+\frac{1}{2})$:
\begin{align*}
&[\mathcal{L}_k,\mathcal{L}_l]=(k-l)\mathcal{L}_{k+l}+\delta_{k+l,0}\frac{k^3-k}{12}C,\\
&[\mathcal{L}_k,\mathcal{G}_r]=(\frac{1}{2}k-r)\mathcal{G}_{k+r},\\
&\{\mathcal{G}_r,\mathcal{G}_s\}=2\mathcal{L}_{r+s}+\frac{1}{3}(r^2-\frac{1}{4})\delta_{r+s,0}C,\\
&[\mathcal{L}_k,C]=0,\ \ [\mathcal{G}_r,C]=0,
\end{align*}
where $\{,\}$ is the anti-commutator.
We identify $C$ with a scalar multiple of the identity, $C=c\cdot {\rm id}$, when acting on modules and refer to the number $c\in\mathbb{C}$ as the central charge. 
In this subsection, we review the free field realization of the Neveu-Scwarz algebra in accordance with the papers \cite{BMRW,IK2}.

The Neveu-Schwarz fermion algebra $\mathfrak{f}$ is the Lie superalgebra
\begin{align*}
\mathfrak{f}=\bigoplus_{r\in\mathbb{Z}+\frac{1}{2}}\mathbb{C} b_r\oplus\mathbb{C}\bold{1}
\end{align*}
with anti-commutation relations
\(
\{b_r,b_s\}=\delta_{r+s,0},
\)
\(\{b_r,\bold{1}\}=0.
\)
The Neveu-Schwarz fermion algebra $\mathfrak{f}$ has the triangular decomposition
\begin{align*}
\mathfrak{f}^{\pm}=\bigoplus_{r>0}\mathbb{C} b_r,\ \ \ \mathfrak{f}^0=\mathbb{C}\bold{1}.
\end{align*}
Let $\mathbb{C}{\mid}{\rm NS}\rangle$ be the one dimensional representation of $\mathfrak{f}^{\geq}=\mathfrak{f}^+\oplus\mathfrak{f}^0$ defined by
\begin{align*}
\bold{1}{\mid}{\rm NS}\rangle={\mid}{\rm NS}\rangle,\ \ \ \mathfrak{f}^+{\mid}{\rm NS}\rangle=0.\\
\end{align*}
The left {\rm Neveu-Schwarz fermionic Fock module} ${F}^{\mathfrak{f}}$ is defined by
\begin{eqnarray*}
{F}^{\mathfrak{f}}:={\rm Ind}^{\mathfrak{f}}_{\mathfrak{f}^{\geq}}\mathbb{C}{\mid}{\rm NS}\rangle.
\end{eqnarray*}
The right Neveu-Schwarz fermionic Fock module $F^{\mathfrak{f}\vee}$ is defined by
\begin{align*}
F^{\mathfrak{f}\vee}={\rm Ind}_{\mathfrak{f}^\leq}^{\mathfrak{f}}\mathbb{C}\langle{\rm NS}{\mid},
\end{align*} 
where $\mathbb{C}\langle{\rm NS}{\mid}$ is the one dimensional $(\mathfrak{f}^-\oplus \mathfrak{f}^0)$-module defined by
\begin{align*}
&\langle{\rm NS}{\mid}b_n=0,\qquad n< 0,
&\bm{1}\langle{\rm NS}{\mid}=\langle{\rm NS}{\mid}. 
\end{align*}
We see that the two Fock modules $F^{\mathfrak{f}},F^{\mathfrak{f}\vee}$ are equipped with an inner product
\begin{align}
\label{dual-f}
(\ \cdot\ ,\ \cdot\ )_{F^{\mathfrak{f}}}:\ F^{\mathfrak{f}\vee}\times F^{\mathfrak{f}}\rightarrow \mathbb{C}
\end{align}
defined by 
\begin{align*}
&(\bra{{\rm NS}},\ket{{\rm NS}})_{F^{\mathfrak{f}}}=1,\\
&(\bra{{\rm NS}}u_1,u_2\ket{{\rm NS}})_{F^{\mathfrak{f}}}=(\bra{{\rm NS}}u_1u_2,\ket{{\rm NS}})_{F^{\mathfrak{f}}}=(\bra{{\rm NS}},u_1u_2\ket{{\rm NS}})_{F^{\mathfrak{f}}}
\end{align*}
for $u_1,u_2\in U(\mathfrak{f})$, where $U(\mathfrak{f})$ is the universal enveloping superalgebra of $\mathfrak{f}$.
In what follows, we use the notation $\bra{{\rm NS}}u_1u_2\ket{{\rm NS}}=(\bra{{\rm NS}},u_1u_2\ket{{\rm NS}})_{F^{\mathfrak{f}}}$.

Let $b(z)=\sum_{n\in\mathbb{Z}+\frac{1}{2}}b_nz^{-n-\frac{1}{2}}$. Then this field satisfies the operator product expansion
\begin{align}
\label{eq:opebb}
b(z)b(w)=\frac{1}{z-w}+\cdots.
\end{align}
We define the following energy-momentum tensor
\begin{align*}
T^{(\mathfrak{f})}(z)=\frac{1}{2}:\partial b(z)b(z):=\sum_{n\in\mathbb{Z}}L^{(\mathfrak{f})}_nz^{-n-2}.
\end{align*}
The modes $\{L^{(\mathfrak{f})}_n\}_{n\in\mathbb{Z}}$ generate the Virasoso algebra with the central charge fixed to $\frac{1}{2}$.
By the energy-momentum tensor $T^{(\mathfrak{f})}(z)$, the Neveu-Schwarz fermionic Fock module ${F}^{\mathfrak{f}}$ becomes a Virasoro module with
\begin{eqnarray*}
L^{(\mathfrak{f})}_0{\mid}{\rm NS}\rangle=0,\ \ \ \ \ C{\mid}{\rm NS}\rangle=\frac{1}{2}{\mid}{\rm NS}\rangle.\\
\end{eqnarray*}

For $\tau\in \mathbb{C}$, we introduce an even field and an odd field
\begin{align*}
\begin{aligned}
&T^{(\tau;\mathfrak{ns})}(z)=T^{(\tau)}(z)\otimes\bold{1}+\bold{1}\otimes T^{(\mathfrak{f})}(z),\\
&G^{(\tau;\mathfrak{ns})}(z)=a(z)\otimes b(z)+\tau\bold{1}\otimes \partial b(z)
\end{aligned}
\end{align*}
acting on the products $F_{\beta}\otimes F^{\mathfrak{f}}$, $\beta\in \mathbb{C}$.
We see that
$T^{(\tau;\mathfrak{ns})}(z)$ and $G^{(\tau;\mathfrak{ns})}(z)$ satisfy the operator product expansions
\begin{align*}
\begin{aligned}
&T^{(\tau;\mathfrak{ns})}(z)T^{(\tau;\mathfrak{ns})}(w)=\frac{c^{\mathfrak{ns}}_{\tau}/2}{(z-w)^4}+\frac{2T^{(\tau;\mathfrak{ns})}(w)}{(z-w)^3}+\frac{\partial T^{(\tau;\mathfrak{ns})}(w)}{z-w}+\cdots,\\
&T^{(\tau;\mathfrak{ns})}(z)G^{(\tau;\mathfrak{ns})}(w)=\frac{\frac{3}{2}G^{(\tau;\mathfrak{ns})}(w)}{(z-w)^2}+\frac{\partial G^{(\tau;\mathfrak{ns})}(w)}{z-w}+\cdots,\\
&G^{(\tau;\mathfrak{ns})}(z)G^{(\tau;\mathfrak{ns})}(w)=\frac{2c^{\mathfrak{ns}}_{\tau}/{3}}{(z-w)^3}+\frac{2T^{(\tau;\mathfrak{ns})}(w)}{z-w}+\cdots,
\end{aligned}
\end{align*}
where $c^{\mathfrak{ns}}_{\tau}:=\frac{3}{2}-3\tau^2$.  Then, for the Fourier mode expansions of fields
\begin{align*}
T^{(\tau;\mathfrak{ns})}(z)=\sum_{n\in\mathbb{Z}}L^{(\tau;\mathfrak{ns})}_nz^{-n-2},\ \ \ G^{(\tau;\mathfrak{ns})}(w)=\sum_{r\in\mathbb{Z}+\frac{1}{2}}G^{(\tau;\mathfrak{ns})}_rz^{-r-\frac{3}{2}},
\end{align*}
the modes $\{L^{(\tau;\mathfrak{ns})}_n\}$ and $\{G^{(\tau;\mathfrak{ns})}_r\}$ define the following commutation and anti-commutation relations
\begin{align*}
&[L^{(\tau;\mathfrak{ns})}_k,L^{(\tau;\mathfrak{ns})}_l]=(k-l)L^{(\tau)}_{k+l}+\delta_{k+l,0}\frac{k^3-k}{12}c^{\mathfrak{ns}}_{\tau},\\
&[L^{(\tau;\mathfrak{ns})}_k,G^{(\tau;\mathfrak{ns})}_r]=(\frac{1}{2}k-r)G^{(\tau;\mathfrak{ns})}_{k+r},\\
&\{G^{(\tau;\mathfrak{ns})}_r,G^{(\tau;\mathfrak{ns})}_s\}=2L^{(\tau;\mathfrak{ns})}_{r+s}+\frac{1}{3}(r^2-\frac{1}{4})\delta_{r+s,0}c^{\mathfrak{ns}}_{\tau}.
\end{align*}
Thus the modes of the fields $T^{(\tau;\mathfrak{ns})}(z)$ and $G^{(\tau;\mathfrak{ns})}(z)$ generate the Neveu-Schwarz algebra with the central charge fixed to $c^{\mathfrak{ns}}_{\tau}=\frac{3}{2}-3\tau^2$.
Then we have the following proposition.
\begin{prop}
Define the following two vectors in $F_0\otimes F^{\mathfrak{f}}$
\begin{align*}
&T^{(\tau;\mathfrak{ns})}=\frac{1}{2}(a^2_{-1}+\tau a_{-2}+b_{-\frac{1}{2}}b_{-\frac{3}{2}}){\mid}0\rangle,
&G^{(\tau;\mathfrak{ns})}=(a_{-1}b_{-\frac{1}{2}}+\tau b_{-\frac{3}{2}}){\mid}0\rangle.
\end{align*}
Then the product $F_0\otimes F^{\mathfrak{f}}$ carries the structure of an $N=1$ Neveu-Schwarz vertex operator superalgebra, with
\begin{align*}
\begin{aligned}
Y({\mid}0\rangle,z)&={\rm id},&Y(a_{-1}{\mid}0\rangle,z)&=a(z),&Y(b_{-\frac{1}{2}}{\mid}0\rangle,z)&=b(z),\\
Y(G^{(\tau;\mathfrak{ns})},z)&=G^{(\tau;\mathfrak{ns})}(z),&Y(T^{(\tau;\mathfrak{ns})},z)&=T^{(\tau;\mathfrak{ns})}(z).
\end{aligned}
\end{align*}
\end{prop}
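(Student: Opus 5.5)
The plan is to build the vertex operator superalgebra as a tensor product of two known ones, and then to verify the superconformal relations by a finite Wick-contraction computation. First, by the Definition of $\mathcal{F}_{\rho}$ with $\rho=\tau$, the space $F_0$ is a vertex operator algebra with conformal vector $T^{(\tau)}=\frac{1}{2}(a^2_{-1}+\tau a_{-2})\ket{0}$ and central charge $c_\tau=1-3\tau^2$. Second, the fermionic Fock module $F^{\mathfrak{f}}$ is the standard Neveu–Schwarz free fermion vertex operator superalgebra, with $Y(b_{-\frac{1}{2}}\ket{0},z)=b(z)$ and conformal vector corresponding to $T^{(\mathfrak{f})}(z)=\frac{1}{2}:\partial b(z)b(z):$ of central charge $\frac{1}{2}$. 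Here I would only check that the ordering $b_{-\frac{1}{2}}b_{-\frac{3}{2}}\ket{0}$ in the statement matches this field with the sign convention in use, by verifying $L^{(\mathfrak{f})}_0b_{-\frac{1}{2}}\ket{0}=\frac{1}{2}b_{-\frac{1}{2}}\ket{0}$. Third, the tensor product of two vertex operator superalgebras is again one, with vacuum $\ket{0}\otimes\ket{{\rm NS}}$ and conformal vector the sum of the two. This gives $T^{(\tau;\mathfrak{ns})}$, $Y(T^{(\tau;\mathfrak{ns})},z)=T^{(\tau;\mathfrak{ns})}(z)$, and central charge $1-3\tau^2+\frac{1}{2}=c^{\mathfrak{ns}}_{\tau}$.

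Next I identify the odd field. Since $a_n$ and $b_r$ act on different tensor factors and commute, the normally ordered product gives $Y(a_{-1}b_{-\frac{1}{2}}\ket{0},z)=a(z)\otimes b(z)$. The derivative property $Y(L_{-1}v,z)=\partial_zY(v,z)$, applied to $b_{-\frac{3}{2}}\ket{0}=L_{-1}b_{-\frac{1}{2}}\ket{0}$, gives $Y(b_{-\frac{3}{2}}\ket{0},z)=\partial b(z)$. Hence $Y(G^{(\tau;\mathfrak{ns})},z)=G^{(\tau;\mathfrak{ns})}(z)$. By the general theory (Borcherds identity, equivalently Dong's lemma plus Wick's theorem), the mode relations of the Neveu–Schwarz algebra are then equivalent to the three operator product expansions displayed before the proposition. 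What remains is to compute those expansions using $a(z)a(w)\sim(z-w)^{-2}$ and $b(z)b(w)\sim(z-w)^{-1}$.

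The $TT$ expansion follows from the tensor product, and the $TG$ expansion amounts to the statement that $G$ is primary of weight $\frac{3}{2}$. For the latter I would compute $L^{(\tau;\mathfrak{ns})}_nG^{(\tau;\mathfrak{ns})}$ for $n=1,2$ directly in the Fock space and show that both vanish. For $n=1$, the contribution $\tau\, a_1a_{-1}b_{-\frac{1}{2}}\ket{0}$ from the $\tau a_{-2}$ part of $T$ should cancel against the fermion term $\tau L^{(\mathfrak{f})}_1b_{-\frac{3}{2}}\ket{0}$. Checking these cancellations, with the signs coming from the conventions for $\partial a$ and $\partial b$, is the first place I expect sign care to be needed.

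The main computation is $G(z)G(w)$. The $\tau^2$ term gives $\tau^2\partial_z\partial_w(z-w)^{-1}=-2\tau^2(z-w)^{-3}$. The fully contracted $ab\cdot ab$ term gives $(z-w)^{-3}$, so the cubic pole is $1-2\tau^2=\frac{2}{3}c^{\mathfrak{ns}}_{\tau}$, as required. The single contractions of $ab\cdot ab$ give $\bigl(:a(w)a(w):+:\partial b(w)b(w):\bigr)(z-w)^{-1}$. The two cross terms contract $b$ with $\partial b$; after Taylor expanding $a(z)$ around $w$, their double poles cancel and they leave $\tau\,\partial a(w)(z-w)^{-1}$. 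Altogether the simple pole is $2T^{(\tau;\mathfrak{ns})}(w)$. The hardest step is making this single-pole bookkeeping exact, in particular obtaining the coefficient of $\partial a$ with the correct sign and checking that the fermionic part reproduces exactly $2T^{(\mathfrak{f})}$ with the chosen ordering. I would settle these by redoing the computation mode-wise, evaluating $G^{(\tau;\mathfrak{ns})}_{\frac{1}{2}}G^{(\tau;\mathfrak{ns})}$ and $G^{(\tau;\mathfrak{ns})}_{\frac{3}{2}}G^{(\tau;\mathfrak{ns})}$ on the vacuum.
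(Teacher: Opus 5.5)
Your route is the standard one, and it is essentially what the paper does: tensor $\mathcal{F}_\tau$ with the free fermion, identify $Y(G^{(\tau;\mathfrak{ns})},z)$, then compute the $TG$ and $GG$ expansions by Wick's theorem. The paper displays the three operator product expansions just before the proposition and gives no further argument. Your pole coefficients are all correct:
\begin{itemize}
\item the cubic pole is $1-2\tau^2=\frac{2}{3}c^{\mathfrak{ns}}_\tau$;
\item the double poles of the cross terms cancel, leaving $\tau\,\partial a(w)(z-w)^{-1}$;
\item the single contractions of $ab\cdot ab$ give $:aa:+:\partial b\,b:$.
\end{itemize}

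The step that does not go as you expect is the ordering check you deferred. It fails because the statement itself has a sign error. Take $u=b_{-\frac{1}{2}}\ket{0}$. Then $u_{(-1)}=b_{-\frac{1}{2}}$ and $b_{-\frac{3}{2}}\ket{0}=L_{-1}u$, so $b_{-\frac{1}{2}}b_{-\frac{3}{2}}\ket{0}$ has vertex operator
\[
:b(z)\partial b(z):\;=\;-:\partial b(z)b(z):.
\]
Concretely, the zero mode of $\frac{1}{2}:b\,\partial b:$ is $\frac{1}{2}\sum_r(r-\frac{1}{2}):b_rb_{-r}:$, and this sends $b_{-\frac{1}{2}}\ket{0}$ to $-\frac{1}{2}b_{-\frac{1}{2}}\ket{0}$. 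So the vector as written generates $T^{(\tau)}(z)-T^{(\mathfrak{f})}(z)$, not $T^{(\tau;\mathfrak{ns})}(z)$.

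The conformal vector must instead be
\[
\frac{1}{2}\bigl(a^2_{-1}+\tau a_{-2}+b_{-\frac{3}{2}}b_{-\frac{1}{2}}\bigr)\ket{0}.
\]
This is exactly $\frac{1}{2}G^{(\tau;\mathfrak{ns})}_{-\frac{1}{2}}G^{(\tau;\mathfrak{ns})}$, which matches the simple pole $2T^{(\tau;\mathfrak{ns})}(w)$ in your $GG$ computation. The field-level formula $T^{(\mathfrak{f})}(z)=\frac{1}{2}:\partial b(z)b(z):$ in the paper is the correct one. With this correction to the statement, your argument proves the proposition.

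One minor sign in your $L_1$ check: the $\tau$-part of $L^{(\tau)}_1$ is $-\tau a_1$. The bosonic term is therefore $-\tau b_{-\frac{1}{2}}\ket{0}$, which cancels $\tau L^{(\mathfrak{f})}_1b_{-\frac{3}{2}}\ket{0}=\tau b_{-\frac{1}{2}}\ket{0}$.
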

We denote by $\mathcal{F}^{\mathfrak{ns}}_{\tau}$ the above vertex operator superalgebra. Set
\begin{align*}
F^{\mathfrak{ns}}_{\beta}:={F}_{\beta}\otimes {F}^{\mathfrak{f}},\ \beta\in \mathbb{C}.
\end{align*}
From the above argument, this product admits the structure of a $\mathcal{F}^{\mathfrak{ns}}_{\tau}$-module.
We call this product $\mathfrak{ns}$-{\rm Fock module}.
When we regard $F^{\mathfrak{ns}}_{\beta}$ as a $\mathcal{F}^{\mathfrak{ns}}_{\tau}$-module, we write $F^{\mathfrak{ns}(\tau)}_{\beta}$.
The dual Fock space of $F^{\mathfrak{ns}}_{\beta}$ is defined by $F^{\mathfrak{ns}\vee}_{\beta}:=F^\vee_\beta\otimes F^{\mathfrak{f}}$. 
We see that the two Fock modules $F^{\mathfrak{ns}}_{\beta},F^{\mathfrak{ns}\vee}_{\beta}$ are equipped with an inner product
\begin{align*}
(\ \cdot\ ,\ \cdot\ )_{F^{\mathfrak{ns}}_{\beta}}:=(\ \cdot\ ,\ \cdot\ )_{F_{\beta}}(\ \cdot\ ,\ \cdot\ )_{F^{\mathfrak{f}}}
\end{align*}
(see (\ref{inn-pro}) and (\ref{dual-f})).

In what follows, we use the abbreviations
\begin{align*}
&{\mid}\beta\rangle={\mid}\beta\rangle\otimes {\mid}{\rm NS}\rangle,
&\langle\beta{\mid}=\langle\beta{\mid}\otimes\langle {\rm NS} {\mid}.
\end{align*}
In the following, we omit the tensor product in $a_n\otimes\bold{1}$ and $\bold{1}\otimes b_{r}$ and simply denote them as $a_n$ and $b_r$.
We also use the following shorthand notation
\begin{align*}
Y(\ket{\beta},z)=Y({\mid}\beta;{\rm B}\rangle,z)\otimes Y({\mid}{\rm NS}\rangle,z).
\end{align*}

\subsection{$\mathfrak{ns}$-screening operators}

Fix $\tau\in \mathbb{C}$. Let $\tau_\pm$ $(\Re \tau_+\geq \Re \tau_-)$ be the solutions of 
\begin{align}
\label{second-order-tau}
x^2-\tau x-1=0.
\end{align}
For $r,s\in\mathbb{Z}$, we set
\begin{align*}
&\beta^{(\tau)}_{r,s}:=\frac{1-r}{2}\tau_++\frac{1-s}{2}\tau_-,
&F^{\mathfrak{ns}(\tau)}_{r,s}:=F^{\mathfrak{ns}(\tau)}_{\beta^{(\tau)}_{r,s}}.
\end{align*}
Define two fields
\begin{align*}
&\mathcal{S}^{(\tau)}_+(z):=b(z)Y(\ket{\tau_+},z),
&\mathcal{S}^{(\tau)}_-(z):=b(z)Y(\ket{\tau_-},z).
\end{align*}
These fields satisfy 
the operator product expansions
\begin{align}
&T^{(\tau;\mathfrak{ns})}(z)\mathcal{S}^{(\tau)}_{\pm}(w)\sim \partial_{w}\frac{\mathcal{S}^{(\tau)}_{\pm}(w)}{z-w},
&G^{(\tau;\mathfrak{ns})}(z)\mathcal{S}^{(\tau)}_{\pm}(w)\sim \frac{1}{\tau_{\pm}}\partial_{w}\frac{Y(\ket{\tau_{\pm}},w)}{z-w}.
\label{sc}
\end{align}
By (\ref{sc}), the operators  
\begin{align*}
&\mathcal{S}^{(\tau)}_+:=\oint_{z=0}\mathcal{S}^{(\tau)}_+(z)\frac{{\rm d}z}{2\pi i}:F^{\mathfrak{ns}(\tau)}_{{1,2k+1}}\rightarrow F^{\mathfrak{ns}(\tau)}_{{-1,2k+1}}\ \ (k\in\mathbb{Z}),\\
&\mathcal{S}^{(\tau)}_-:=\oint_{z=0}\mathcal{S}^{(\tau)}_-(z)\frac{{\rm d}z}{2\pi i}:F^{\mathfrak{ns}(\tau)}_{{2k+1,1}}\rightarrow F^{\mathfrak{ns}(\tau)}_{{2k+1,-1}}\ \ (k\in\mathbb{Z})
\end{align*}
become commutative with the $\mathfrak{ns}$-action of $\mathcal{F}^{\mathfrak{ns}}_{\tau}\mathchar`-{\rm Mod}$.
These zero-modes of $\mathcal{S}^{(\tau)}_\pm(z)$ are called $\mathfrak{ns}$-screening operators. 
As in Proposition~\ref{prop:qr-qs}, we can show that the following lemma (cf. \cite{AdamovicD/MilasA:2009-2,NT}).
\begin{prop}
\label{prop:sscommu}
 On the Fock module $F^{\mathfrak{ns}(\tau)}_{1,1}$, we have
$
 \lbrack\mathcal{S}^{(\tau)}_+,\mathcal{S}^{(\tau)}_-\rbrack=0.
$
\end{prop}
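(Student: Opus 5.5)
We mimic the proof of Proposition~\ref{prop:qr-qs} in the case $r=s=1$. Since $\mathcal{S}^{(\tau)}_\pm$ are odd operators, the bracket $[\mathcal{S}^{(\tau)}_+,\mathcal{S}^{(\tau)}_-]$ is to be understood as the anticommutator $\mathcal{S}^{(\tau)}_+\mathcal{S}^{(\tau)}_-+\mathcal{S}^{(\tau)}_-\mathcal{S}^{(\tau)}_+$. The standard contour deformation turns it into the iterated residue
\begin{align*}
\oint_{w=0}\frac{{\rm d}w}{2\pi i}{\rm Res}_{z=w}\,\mathcal{S}^{(\tau)}_+(z)\mathcal{S}^{(\tau)}_-(w){\rm d}z .
\end{align*}
So it suffices to show that ${\rm Res}_{z=w}\mathcal{S}^{(\tau)}_+(z)\mathcal{S}^{(\tau)}_-(w)$ is a total $w$-derivative of a field that is single valued on $F^{\mathfrak{ns}(\tau)}_{1,1}$.

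\textbf{Step 1: the OPE.} From (\ref{eq:opeVV}), (\ref{eq:opebb}) and $\tau_+\tau_-=-1$ (by (\ref{second-order-tau})), we have
\begin{align*}
\mathcal{S}^{(\tau)}_+(z)\mathcal{S}^{(\tau)}_-(w)=(z-w)^{-1}\,{:}b(z)b(w){:}\,\cdot\frac{{:}Y(\ket{\tau_+},z)Y(\ket{\tau_-},w){:}}{z-w}+\frac{1}{(z-w)^2}{:}Y(\ket{\tau_+},z)Y(\ket{\tau_-},w){:}.
\end{align*}
Here the bosonic factor is $(z-w)^{\tau_+\tau_-}=(z-w)^{-1}$ and the fermionic contraction is $(z-w)^{-1}$. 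The first term has at most a simple pole, and ${:}b(z)b(w){:}$ vanishes at $z=w$. Its residue is therefore $0$ up to a term ${:}\partial b(w)b(w){:}\,Y(\ket{\tau_++\tau_-},w)$ coming from expanding ${:}b(z)b(w){:}=(z-w){:}\partial b(w) b(w){:}+\cdots$. We will expand this carefully.

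\textbf{Step 2: collect residues.} The second term gives ${:}\partial_wY(\ket{\tau_+},w)\,Y(\ket{\tau_-},w){:}=\tau_+{:}a(w)Y(\ket{\tau_++\tau_-},w){:}$. The first term contributes ${:}\partial b(w)b(w){:}\,Y(\ket{\tau_0},w)$ with $\tau_0=\tau_++\tau_-=\tau$, and this is not a derivative by itself. So the key check is whether the total residue can be written as $\partial_w$ of a combination of $Y(\ket{\tau},w)$ and ${:}b(w)\partial b(w){:}$-type fields. We compare with the analogue of (\ref{eq:op-nyt}), namely $\frac{\tau_+}{\tau}\partial_wY(\ket{\tau},w)$, by rewriting $\tau_+a(w)$ as $\frac{\tau_+}{\tau}\cdot\tau a(w)$ plus a correction. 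We expect the correction, together with the fermion bilinear, to cancel or to assemble into a derivative after normal ordering with the fermion. Indeed ${:}b\,\partial b{:}$ need not be a derivative, so an alternative is to carry the computation in the opposite order, with $\mathcal{S}_-$ first, and add; an odd bracket is symmetric, which may kill that term.

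\textbf{Step 3: single-valuedness and conclusion.} On $F^{\mathfrak{ns}(\tau)}_{1,1}$, the weight after applying both screenings is $\beta_{-1,-1}$. The field $Y(\ket{\tau},w)$ carries the factor $w^{\tau a_0}$, which is integral there: $\tau\beta^{(\tau)}_{1,1}=0$ on the initial vector, and the fermionic part contributes integer powers. Hence the $w$-integral over $C_0$ of a total derivative vanishes.

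The main obstacle is Step 2, i.e. handling the fermion bilinear arising from the simple-pole term so that the whole residue is exact. I would first compute explicitly using the $G$-OPE (\ref{sc}), which already shows that $\mathcal{S}_\pm$ are $G$-descendants of $Y(\ket{\tau_\pm},z)$. An alternative is to write $\mathcal{S}^{(\tau)}_\pm=\tau_\pm^{-1}\oint G$-type expressions and reduce to bosonic commutativity, as in \cite{AdamovicD/MilasA:2009-2,NT}.
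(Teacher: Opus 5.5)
Your strategy is the right one, and it is the one the paper uses (the paper simply refers back to the proof of Proposition~\ref{prop:qr-qs}). You write the anticommutator as $\oint_{w=0}{\rm Res}_{z=w}\mathcal{S}^{(\tau)}_+(z)\mathcal{S}^{(\tau)}_-(w)$, show that this residue is a total $w$-derivative, and use single-valuedness on $F^{\mathfrak{ns}(\tau)}_{1,1}$. As written, however, the proposal does not prove the statement. The decisive Step~2 is left as an expectation, and the obstacle you isolate there does not exist: it comes from an error in your Step~1 OPE, where you attach the factor $(z-w)^{-1}$ twice to the normal-ordered fermion term. The bosonic factor $(z-w)^{\tau_+\tau_-}=(z-w)^{-1}$ multiplies both pieces of $b(z)b(w)=(z-w)^{-1}+{:}b(z)b(w){:}$, so the correct expansion is
\begin{align*}
\mathcal{S}^{(\tau)}_+(z)\mathcal{S}^{(\tau)}_-(w)=\frac{{:}Y(\ket{\tau_+},z)Y(\ket{\tau_-},w){:}}{(z-w)^2}+\frac{{:}b(z)b(w){:}\,{:}Y(\ket{\tau_+},z)Y(\ket{\tau_-},w){:}}{z-w}.
\end{align*}
The residue at $z=w$ of the second term is ${:}b(w)b(w){:}\,Y(\ket{\tau},w)=0$ by fermionic antisymmetry, so no ${:}\partial b\,b{:}$ term ever arises. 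The first term gives $\tau_+{:}a(w)Y(\ket{\tau},w){:}$. Since $\partial_wY(\ket{\tau},w)=\tau{:}a(w)Y(\ket{\tau},w){:}$, this is exactly $\frac{\tau_+}{\tau}\partial_wY(\ket{\tau},w)$, with no correction term. It is the precise analogue of the term $\frac{\rho_+}{\rho}\partial_{w}\mathcal{Q}_0(w)$ produced by (\ref{eq:op-nyt}) in the proof of Proposition~\ref{prop:qr-qs}: here the fermionic contraction supplies the second power of $(z-w)^{-1}$ that $\rho_+\rho_-=-2$ supplies there. Your Step~3 then finishes the argument. On $F^{\mathfrak{ns}(\tau)}_{1,1}$ one has $a_0=\beta^{(\tau)}_{1,1}=0$, so $Y(\ket{\tau},w)$ involves only integral powers of $w$ and its derivative integrates to zero over $C_0$. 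If $\tau=0$, the residue is $a(w)$, whose zero mode $a_0$ also vanishes there.

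Your proposed fallback would not have helped had the ${:}\partial b\,b{:}$ term been real. The single iterated residue already equals the full anticommutator, and computing it in the opposite order yields the same quantity, so symmetrizing cannot cancel a non-exact contribution. With the corrected OPE, your argument is complete and coincides with the paper's.
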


Let $m\in \mathbb{Z}_{\geq 1}$ and $k\in \mathbb{Z}$. For $u\in F^{\mathfrak{ns}(\tau)}_{{2k,2m}}$ and $\psi^\vee\in F^{\mathfrak{ns}(\tau)\vee}_{{2k,-2m}}$, consider the correlation function
\begin{align}
\label{Phi-cor-tau}
\begin{aligned}
&\Phi^{\mathfrak{ns}}(\psi^\vee,u;\bm{z})=\bigl( \psi^\vee,\mathcal{S}^{(\tau)}_-(z_1)\mathcal{S}^{(\tau)}_-(z_2)\cdots \mathcal{S}^{(\tau)}_-(z_{2m})u\bigr)_{F^{\mathfrak{ns}(\tau)}_{{2k,-2m}}}.
\end{aligned}
\end{align}
From (\ref{eq:opeVV}) and (\ref{eq:opebb}), we see that
\begin{align}
\label{prod-boson-ns}
\begin{aligned}
\Phi^{\mathfrak{ns}}(\bra{\beta^{(\tau)}_{2k,-2m}},\ket{\beta^{(\tau)}_{2k,2m}};\bm{z})=&\prod_{i=1}^{2m}z^{(1-2m)\frac{\tau^2_-}{2}+\frac{2k-1}{2}}_i\prod_{1\leq i\neq j\leq 2m}(z_i-z_j)^{\frac{\tau^2_--1}{2}}\\
&\cdot \varDelta(z)\bra{{\rm NS}}b(z_1)b(z_2)\cdots b(z_{2m})\ket{{\rm NS}},
\end{aligned}
\end{align}
where $\varDelta(z):=\prod_{1\leq i< j\leq 2m}(z_i-z_j)$.
It is known that the factor 
\begin{align}
\label{eq:bb-cor}
\varDelta(z)\bra{{\rm NS}}b(z_1)b(z_2)\cdots b(z_{2m})\ket{{\rm NS}}
\end{align}
of (\ref{prod-boson-ns}) can be expressed by means of Jack symmetric polynomials \cite[Proposition 3.8]{BMRW}.
Thus, in a similar way to (\ref{Tsuchiya-Kanie0}), we can apply the theorems in Subsection~\ref{sub:sus} to (\ref{Phi-cor-tau}).
Then we can define the field
\begin{align*}
&\mathcal{S}^{(\tau);[2m]}_{-}(z)\in {\rm Hom}_{\mathbb{C}}(F^{\mathfrak{ns}(\tau)}_{{2k,2m}},F^{\mathfrak{ns}(\tau)}_{{2k,-2m}})[[z,z^{-1}]]
\end{align*}
as follows
\begin{equation}
\label{Tsuchiya-Kanie0-tau}
\begin{split}
&\mathcal{S}^{(\tau);[2m]}_{-}(z)=\int_{[\Delta^{(\tau_-)}_{2m-1;\frac{1}{2}}]}\mathcal{S}^{(\tau)}_-(z)\mathcal{S}^{(\tau)}_-(zy_{1})\mathcal{S}^{(\tau)}_-(zy_2)\cdots \mathcal{S}^{(\tau)}_-(zy_{2m-1})z^{2m-1}{\rm d}\bm{y},
\end{split}
\end{equation}
where ${\rm d}\bm{y}={\rm d}y_1\wedge \cdots\wedge {\rm d}y_{2m-1}$ and we set
\begin{align}
[\Delta^{(\tau_-)}_{2m-1;\frac{1}{2}}]
:=[\Delta_{2m-1}\bigl(\frac{(1-2m)\tau^2_-}{2}-\frac{1}{2},\tau^2_-,\frac{\tau^2_-}{2}-\frac{1}{2}\bigr)].
\label{eq:twist-rho-tau}
\end{align}
\begin{prop}[\cite{IK2}]
The zero mode
\begin{align*}
&\mathcal{S}^{(\tau);[2m]}_{-}:={\rm Res}_{z=0}\mathcal{S}^{(\tau);[2m]}_{-}(z){\rm d}z \in {\rm Hom}_{\mathbb{C}}(F^{\mathfrak{ns}(\tau)}_{{2k,2m}},F^{\mathfrak{ns}(\tau)}_{{2k,-2m}})
\end{align*}
commute with every $\mathfrak{ns}$-mode of $\mathcal{F}^{\mathfrak{ns}}_{\tau}\mathchar`-{\rm Mod}$. 
\label{com-IK}
\end{prop}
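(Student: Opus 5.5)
The plan is to reduce the statement to the standard criterion that a zero mode of a weight-one primary field commutes with the $\mathfrak{ns}$ action. Concretely, it suffices to show that for each generating field $T^{(\tau;\mathfrak{ns})}(x)$ and $G^{(\tau;\mathfrak{ns})}(x)$ the commutator of its modes with $\mathcal{S}^{(\tau);[2m]}_{-}$ vanishes. The $T$-part is handled as in Proposition~\ref{com-TK}. By (\ref{sc}), each $\mathcal{S}^{(\tau)}_-(x_i)$ is primary of weight one, so $[L^{(\tau;\mathfrak{ns})}_n,\prod_i\mathcal{S}^{(\tau)}_-(x_i)]$ is a sum of total derivatives $\sum_i\partial_{x_i}\bigl(x_i^{n+1}\cdots\bigr)$. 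After the change of variables $x_1=z$, $x_i=zy_{i-1}$, these split into a $z$-derivative, which dies under ${\rm Res}_{z=0}$, and $\bm{y}$-derivatives, which die under integration over the twisted cycle $[\Delta^{(\tau_-)}_{2m-1;\frac12}]$ by the generalized Stokes theorem (cf.~\cite{AK}). The cycle is well defined because its parameters avoid $\widehat{\mathcal{A}}_{2m-1}$ (Theorem~\ref{sus-prop}), and by \cite[Proposition 3.8]{BMRW} the fermionic factor (\ref{eq:bb-cor}) is a symmetric polynomial times the bosonic Selberg-type kernel.

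The $G$-part is new relative to the Virasoro case. By (\ref{sc}), $[G^{(\tau;\mathfrak{ns})}_r,\mathcal{S}^{(\tau)}_-(x)]=\tau_-^{-1}\partial_x\bigl(x^{r+\frac12}Y(\ket{\tau_-},x)\bigr)$, which is even. Hence the super-commutator with the product $\mathcal{S}^{(\tau)}_-(x_1)\cdots\mathcal{S}^{(\tau)}_-(x_{2m})$ equals
\[
\tau_-^{-1}\sum_{i}(-1)^{i-1}\,\mathcal{S}_-(x_1)\cdots\partial_{x_i}\bigl(x_i^{r+\frac12}Y(\ket{\tau_-},x_i)\bigr)\cdots\mathcal{S}_-(x_{2m}).
\]
This can be written as ${\rm d}_{\bm{x}}$ of the $(2m-1)$-form obtained by omitting ${\rm d}x_i$. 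The sign $(-1)^{i-1}$ from moving $G_r$ past the odd fermions matches the sign from the exterior derivative, exactly as in the proof of Proposition~\ref{prop:qr-qs}. So the integrand is exact, and Stokes over $C_0\times[\Delta^{(\tau_-)}_{2m-1;\frac12}]$ kills it, provided the form extended to the regularized cycle has the same monodromy as the original integrand.

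The main obstacle is this monodromy check. Replacing $b(x_i)$ by the identity changes the local exponents, since removing one fermion alters the fermionic correlator and the $(x_i-x_j)^{-1/2}$-type shift. So the form appearing in the total derivative is not a priori a section of the same rank-one local system as the one on which $[\Delta^{(\tau_-)}_{2m-1;\frac12}]$ is built. I would handle this by computing the correlator of that form with arbitrary $u\in F^{\mathfrak{ns}(\tau)}_{2k,2m}$ and $\psi^\vee\in F^{\mathfrak{ns}(\tau)\vee}_{2k,-2m}$, and verifying via Wick's theorem that it equals the kernel (\ref{prod-boson-ns}) times a single-valued rational function (a Pfaffian with one entry removed, times $\varDelta$). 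Then Stokes applies on the twisted cycle. If this is delicate, I would first work at generic $\tau$, where the cycle is a genuine regularization, and extend to all $\tau$ off $\widehat{\mathcal{A}}_{2m-1}$ by analytic continuation, again using Theorem~\ref{sus-prop}.
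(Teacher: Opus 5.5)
Your argument is correct, but there is nothing in the paper to compare it with line by line. The paper gives no proof of this proposition; it simply quotes \cite{IK2}, just as the Virasoro analogue, Proposition~\ref{com-TK}, is quoted from \cite{FF,IK,TK,TW}. What you propose is a self-contained replacement for that citation, via the standard Tsuchiya--Kanie mechanism: every super-commutator of the integrand with $L^{(\tau;\mathfrak{ns})}_n$ or $G^{(\tau;\mathfrak{ns})}_r$ is a total derivative, and Stokes' theorem kills it on $C_0\times[\Delta^{(\tau_-)}_{2m-1;\frac12}]$. This is the same mechanism the paper uses in Proposition~\ref{prop:qr-qs}. The citation buys brevity. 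Your route makes explicit why the statement survives at parameters where the paper only \emph{regards} the cycle as defined (Remark~\ref{del-sus2}). Incidentally, the $T$-part is redundant: $L^{(\tau;\mathfrak{ns})}_n=\frac12\{G^{(\tau;\mathfrak{ns})}_{1/2},G^{(\tau;\mathfrak{ns})}_{n-1/2}\}$, because the central term vanishes for $r=\frac12$.

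Two points need correcting.

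First, the monodromy obstacle you identify is not real. NS fermionic correlators are rational, since $b(x)$ has only integer powers of $x$. Deleting $b(x_i)$ therefore leaves the multivalued factor $\prod_j x_j^{(1-2m)\tau_-^2/2+(2k-1)/2}\prod_{j<k}(x_j-x_k)^{\tau_-^2}$ of (\ref{prod-boson-ns}) untouched. It only changes a rational factor, which after multiplication by $\varDelta$ is still a Laurent polynomial. The $-\frac12$ in the parameter $\gamma$ is pure bookkeeping, so your exact forms lie in the same twisted de~Rham complex.

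Second, the genuine subtlety is the one you treat as an optional fallback. Avoiding $\widehat{\mathcal{A}}_{2m-1}$ does not make $[\Delta^{(\tau_-)}_{2m-1;\frac12}]$ a genuine twisted cycle; that requires avoiding $\mathcal{A}_{2m-1}$. On $\mathcal{A}_{2m-1}\setminus\widehat{\mathcal{A}}_{2m-1}$ the only integrals available are those against \emph{symmetric} Laurent polynomials, obtained by continuation via Theorem~\ref{sus-prop}. Your individual total-derivative terms, in both the $T$-part and the $G$-part, are not symmetric, so Stokes cannot be applied to them termwise there. The generic-$\tau$ step is therefore necessary, not optional. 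One way to carry it out:
\begin{itemize}
\item Identify the Fock spaces for varying $\tau$ through the PBW basis.
\item Write $\langle\psi^\vee,[G^{(\tau;\mathfrak{ns})}_r,\mathcal{S}^{(\tau);[2m]}_-]u\rangle$ as a difference of matrix elements of $\mathcal{S}^{(\tau);[2m]}_-$ itself between $G_r$-translated vectors.
\item Each such matrix element is an integral of symmetric type, hence analytic in $\tau$ off $\widehat{\mathcal{A}}_{2m-1}$ by Theorem~\ref{sus-prop} and \cite[Proposition 3.8]{BMRW}.
\item The difference vanishes for generic $\tau$ by your Stokes argument, so it vanishes identically.
\end{itemize}
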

This zero mode is also called $\mathfrak{ns}$-screening operator. The nontriviality of $\mathcal{S}^{(\tau);[2m]}_{-}$ is ensured by the results in \cite{BMRW}.

\subsection{The triplet $W$-superalgebra $\mathcal{SW}(m)$}
Let $m\in \mathbb{Z}_{\geq 1}$. 
In \cite{AdamovicD/MilasA:2009-2}, the triplet $W$-superalgebra $\mathcal{SW}(m)$ is realized by using $\mathfrak{ns}$-Fock modules whose central charge is 
\begin{align*}
c^{\mathfrak{ns}}_{1,2m+1}:=\frac{15}{2}-3(2m+1+\frac{1}{2m+1}).
\end{align*}
Before introducing the construction of \(\mathcal{SW}(m)\), we introduce some notation.

In what follows, we assume $m\in \mathbb{Z}_{\geq 1}$. We set
\begin{align*}
\beta_{+}=\sqrt{2m+1},\qquad \beta_-=-\sqrt{\frac{1}{2m+1}},\qquad\beta_0=\beta_++\beta_-.
\end{align*}
Note that $\beta_\pm$ are the solutions of (\ref{second-order-tau}) as $\tau=\beta_0$, and that $c^{\mathfrak{ns}}_{\beta_0}=\frac{3}{2}-3\beta^2_0=c^{\mathfrak{ns}}_{1,2m+1}$.
For $r,s,n\in\mathbb{Z}$, we set
\begin{align*}
&\beta_{r,s;n}:=\beta^{(\beta_0)}_{r,s}+\frac{n}{2}\beta_+=\frac{1-r}{2}\beta_++\frac{1-s}{2}\beta_-+\frac{n}{2}\beta_+,
&\beta_{r,s}:=\beta_{r,s;0},
\end{align*}
and we use the shorthand notation $F^{\mathfrak{ns}}_{r,s;n}=F^{\mathfrak{ns}(\beta_0)}_{\beta_{r,s;n}}$ and $F^{\mathfrak{ns}}_{r,s}=F^{\mathfrak{ns}(\beta_0)}_{\beta_{r,s}}$.
For $r,s,n\in\mathbb{Z}$, we introduce the notation
\begin{equation*}
\begin{aligned}
&h^{\mathfrak{ns}}_{r,s}:=h^{(\beta_0)}_{\beta_{r,s}}=\frac{1}{8}(r^2-1)(2m+1)-\frac{1}{4}(rs-1)+\frac{1}{8}(s^2-1)\frac{1}{2m+1},\\
&h^{\mathfrak{ns}}_{r,s;n}:=h^{(\beta_0)}_{\beta_{r,s;n}}=h^{\mathfrak{ns}}_{r-n,s}=h^{\mathfrak{ns}}_{r,s+(2m+1)n}
\end{aligned}
\end{equation*}
and denote by $L^{\mathfrak{ns}}(h)$ the simple $\mathfrak{ns}$-module whose lowest weight and central charge are $h$ and $c^{\mathfrak{ns}}_{1,2m+1}$, respectively.
\begin{dfn}
Let $L=\mathbb{Z}\beta_+=\mathbb{Z}\sqrt{2m+1}$ be an integral lattice.
The {\rm lattice vertex operator superalgebra} $\mathcal{V}^{\mathfrak{ns}}_L$ is the quadruple
\begin{equation*}
\Bigl(\bigoplus_{\beta\in L}F^{\mathfrak{ns}(\beta_0)}_{\beta},{\mid}0\rangle,T^{(\beta_0;\mathfrak{ns})},G^{(\beta_0;\mathfrak{ns})},Y\Bigr)
\end{equation*}
where the fields corresponding to ${\mid}0\rangle$, $a_{-1}{\mid}0\rangle$, $b_{-\frac{1}{2}}{\mid}0\rangle$, $T^{(\beta_0;\mathfrak{ns})}$, $G^{(\beta_0;\mathfrak{ns})}$, and $Y$ are those of $\mathcal{F}^{\mathfrak{ns}}_{\beta_0}$.
\end{dfn}
Note that the two screening operators $\mathcal{S}^{(\beta_0)}_\pm$ act on $\mathcal{V}^{\mathfrak{ns}}_L$. 
Using the classification results in \cite{IK2}, we see that ${\rm ker}\ \mathcal{S}^{(\beta_0)}_-{\mid}_{\mathcal{V}^{\mathfrak{ns}}_{L}}$ satisfies the following decomposition as $\mathfrak{ns}$-modules
\begin{equation*}
\begin{split}
{\rm ker}\ \mathcal{S}^{(\beta_0)}_-{\mid}_{\mathcal{V}^{\mathfrak{ns}}_{L}}&\simeq \bigoplus_{n\in\mathbb{Z}_{\geq 0}}(2n+1)L^{\mathfrak{ns}}(h^{\mathfrak{ns}}_{1,1;-2n}).
\end{split}
\end{equation*}

\begin{prop}[\cite{AdamovicD/MilasA:2009-2}]
Let
$
\mathcal{SW}(m):={\rm ker}\ \mathcal{S}^{(\beta_0)}_-{\mid}_{\mathcal{V}^{\mathfrak{ns}}_{L}}.
$ 
Then $\mathcal{SW}(m)$ has the structure of an $N=1$ vertex operator superalgebra.
\end{prop}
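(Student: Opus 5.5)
The plan is to realize $\mathcal{SW}(m)$ as the kernel of a single screening operator inside the lattice VOSA and to use the standard fact that the kernel of the zero mode of a suitable odd field is a vertex subalgebra. Concretely, $\mathcal{V}^{\mathfrak{ns}}_L$ is a vertex operator superalgebra because $L=\mathbb{Z}\sqrt{2m+1}$ is integral; the parity is given by the sum of the lattice degree and the fermion number. The field $\mathcal{S}^{(\beta_0)}_-(z)=b(z)Y(\ket{\beta_-},z)$ is a twisted module field, since $\beta_-\notin L$. However, for $v\in F^{\mathfrak{ns}}_{n\beta_+}$ the monodromy of $\mathcal{S}^{(\beta_0)}_-(z)Y(v,w)$ around $z=w$ is governed by the exponent $\beta_-\cdot n\beta_+=-n$, which is an integer. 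Hence $\mathcal{S}^{(\beta_0)}_-(z)$ is mutually local, in the super sense, with every $Y(v,w)$, $v\in\mathcal{V}^{\mathfrak{ns}}_L$. The fermion factor $b(z)$ supplies exactly the sign that makes the screening operator odd.

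\textbf{Step 1.} Derive the super-commutator formula
\begin{align*}
[\mathcal{S}^{(\beta_0)}_-,Y(v,w)]=Y(\mathcal{S}^{(\beta_0)}_-v,w).
\end{align*}
This is obtained by deforming the contour $\oint_{|z|>|w|}-\oint_{|z|<|w|}$ to a small circle around $z=w$ and using the OPE. Note that $z^{\beta_-a_0}$ acting on $F^{\mathfrak{ns}}_{n\beta_+}$ gives an integer power of $z$, so the residue at $0$ is meaningful on each graded component. The same mechanism was used for the bosonic $Q_\pm$ in the proof of Theorem~\ref{G-deri}. Consequently, $\ker\mathcal{S}^{(\beta_0)}_-$ is closed under all $n$-th products.

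\textbf{Step 2.} Check that $\ket{0}$ and the superconformal vectors $T^{(\beta_0;\mathfrak{ns})}$ and $G^{(\beta_0;\mathfrak{ns})}$ lie in the kernel. The vacuum is annihilated because $\mathcal{S}^{(\beta_0)}_-(z)\ket{0}$ is regular at $z=0$. For the other two, the OPEs (\ref{sc}) show that $[\mathcal{S}^{(\beta_0)}_-,L_n]=0$ and $\{\mathcal{S}^{(\beta_0)}_-,G_r\}=0$, since the singular parts are total derivatives. Applying these to the vacuum gives the claim. It follows that the kernel is an $\mathfrak{ns}$-submodule containing $\omega$ and $\tau$, so it inherits the $N=1$ structure.

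\textbf{Step 3.} Verify the grading conditions. The $L_0$-eigenspaces must be finite dimensional, and the weights must lie in $\tfrac12\mathbb{Z}_{\geq0}$ with a one-dimensional weight-zero space. Both follow from the corresponding properties of $\mathcal{V}^{\mathfrak{ns}}_L$: the weights $h^{\mathfrak{ns}}_{1,1;-2n}\ge 0$ are taken from the stated decomposition, and weight $0$ occurs only for $n=0$ at the vacuum.

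The main obstacle is Step 1, and specifically the careful super-sign and monodromy bookkeeping needed to show that $\mathcal{S}^{(\beta_0)}_-$ acts as an odd derivation on all sectors $F^{\mathfrak{ns}}_{n\beta_+}$, including odd $n$, where the lattice vertex operators carry cocycle signs. I would first handle this by computing the OPE of $b(z)Y(\ket{\beta_-},z)$ with $b_{-1/2}^{\epsilon}Y(\ket{n\beta_+},w)$ explicitly via (\ref{eq:opeVV}) and (\ref{eq:opebb}), confirming that only integral powers of $(z-w)$ appear.
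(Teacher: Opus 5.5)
Your proposal is correct and follows the standard route for this statement, which the paper does not prove but quotes from Adamovi\'{c}--Milas \cite{AdamovicD/MilasA:2009-2}. Since $\beta_-\cdot L\subset\mathbb{Z}$, the zero mode $\mathcal{S}^{(\beta_0)}_-$ acts on $\mathcal{V}^{\mathfrak{ns}}_L$ as a (super)derivation with values in the module $\bigoplus_{\beta\in L}F^{\mathfrak{ns}}_{\beta+\beta_-}$, it (super)commutes with $\mathfrak{ns}$ by (\ref{sc}), and so its kernel is a graded vertex subalgebra containing $\ket{0}$, $T^{(\beta_0;\mathfrak{ns})}$ and $G^{(\beta_0;\mathfrak{ns})}$. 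Two minor points: $\mathcal{S}^{(\beta_0)}_-(z)$ is an untwisted intertwining-type field rather than a twisted-module field, precisely because of the integrality you note; and the sign bookkeeping you single out as the main obstacle is harmless, since closure of the kernel only needs $\mathcal{S}^{(\beta_0)}_-Y(v,w)u=Y(\mathcal{S}^{(\beta_0)}_-v,w)u\pm Y(v,w)\mathcal{S}^{(\beta_0)}_-u$ for some sign determined by the parities involved.
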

This vertex operator superalgebra is called $N=1$ $triplet$ $vertex$ $operator$ $superalgebra$ or $N=1$ $triplet$ $W\mathchar`-superalgebra$.

We define the following three vectors in $\mathcal{V}_L$
\begin{align*}
W^{-;\mathfrak{ns}}:={\mid}\beta_{1,1;-2}\rangle,\ \ \ W^{0;\mathfrak{ns}}:=\mathcal{S}^{(\beta_0)}_+W^{-;\mathfrak{ns}},\ \ \ W^{+;\mathfrak{ns}}:=(\mathcal{S}^{(\beta_0)}_+)^2W^{-;\mathfrak{ns}}.
\end{align*}
These vectors have the same $\mathcal{L}_0$-weight $h^{\mathfrak{ns}}_{3,1}=2m+\frac{1}{2}$.
We define the following three vectors 
\begin{align*}
\widehat{W}^{-;\mathfrak{ns}}:=b_{-\frac{1}{2}}{\mid}\beta_{1,1;-2}\rangle,\ \ \ \widehat{W}^{0;\mathfrak{ns}}:=\mathcal{S}^{(\beta_0)}_+\widehat{W}^{-;\mathfrak{ns}},\ \ \ \widehat{W}^{+;\mathfrak{ns}}:=(\mathcal{S}^{(\beta_0)}_+)^2\widehat{W}^{-;\mathfrak{ns}}.
\end{align*}
These vectors have the same $\mathcal{L}_0$-weight $2m+1$.
\begin{prop}[{\cite{AdamovicD/MilasA:2009-2}}]
\label{AM-gen}
The triplet super $W$-algebra $\mathcal{SW}(m)$ is generated by $Y(W^{\pm;\mathfrak{ns}},z),Y(W^{0;\mathfrak{ns}},z),G^{(\beta_0;\mathfrak{ns})}(z)$. Furthermore $\mathcal{SW}(m)$ is strongly generated by
\begin{align*}
G^{(\beta_0;\mathfrak{ns})}(z),\ T^{(\beta_0;\mathfrak{ns})}(z),\ Y(W^{\pm;\mathfrak{ns}},z),\ Y(W^{0;\mathfrak{ns}},z),\ Y(\widehat{W}^{\pm;\mathfrak{ns}},z),\ Y(\widehat{W}^{0;\mathfrak{ns}},z).
\end{align*}
\end{prop}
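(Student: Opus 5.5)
We would prove the second (strong generation) assertion first and then deduce the first from it. For the deduction, observe that $G^{(\beta_0;\mathfrak{ns})}_{-1/2}\ket{\beta}=\beta\,b_{-\frac12}\ket{\beta}$ for any $\beta$. Hence $\widehat{W}^{-;\mathfrak{ns}}=\beta_{1,1;-2}^{-1}G^{(\beta_0;\mathfrak{ns})}_{-1/2}W^{-;\mathfrak{ns}}$. Since $\mathcal{S}^{(\beta_0)}_+$ (super)commutes with the $\mathfrak{ns}$-action, the same identity gives $\widehat{W}^{0;\mathfrak{ns}}$ and $\widehat{W}^{+;\mathfrak{ns}}$ from $W^{0;\mathfrak{ns}}$ and $W^{+;\mathfrak{ns}}$. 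Finally $T^{(\beta_0;\mathfrak{ns})}=\frac12 G^{(\beta_0;\mathfrak{ns})}_{-1/2}G^{(\beta_0;\mathfrak{ns})}$. So every strong generator lies in the vertex subalgebra generated by $Y(W^{\pm;\mathfrak{ns}},z)$, $Y(W^{0;\mathfrak{ns}},z)$ and $G^{(\beta_0;\mathfrak{ns})}(z)$, which gives the first statement.

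For strong generation, the plan is to use the decomposition ${\rm ker}\,\mathcal{S}^{(\beta_0)}_-|_{\mathcal{V}^{\mathfrak{ns}}_L}\simeq\bigoplus_{n\geq0}(2n+1)L^{\mathfrak{ns}}(h^{\mathfrak{ns}}_{1,1;-2n})$ and proceed by induction on $n$, in analogy with Proposition~\ref{sl2action2}. Let $w^{(n)}_i=(\mathcal{S}^{(\beta_0)}_+)^{n+i}\ket{\beta_{1,1;-2n}}$ for $-n\le i\le n$ denote the $\mathfrak{ns}$-lowest weight vectors. The first key step is the lattice computation
\[
\ket{\beta_{1,1;-2}}_{(j_0)}\ket{\beta_{1,1;-2n}}=\ket{\beta_{1,1;-2n-2}},
\]
where $j_0$ is the largest index with a nonvanishing product, which is read off from the factor $(z-w)^{\beta\beta'}$ in (\ref{eq:opeVV}). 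This gives $w^{(n+1)}_{-n-1}$ as a single mode of $W^{-;\mathfrak{ns}}$ applied to $w^{(n)}_{-n}$. Applying powers of $\mathcal{S}^{(\beta_0)}_+$, which is a (super)derivation of $\mathcal{V}^{\mathfrak{ns}}_L$ because it is the zero mode of a weight-one primary field, and using $(\mathcal{S}^{(\beta_0)}_+)^3W^{-;\mathfrak{ns}}=0$, we would express each $w^{(n+1)}_{i}$ as a combination of $W^{\delta;\mathfrak{ns}}_{(j_0)}w^{(n)}_{i'}$. This is the super-analogue of (\ref{wpq-act}), with corrections coming from lower components.

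The second step converts these mode expressions into normally ordered expressions, which is where strong generation (rather than mere generation) is at stake. Following the $\mathcal{W}_{1,p}$ argument of \cite{AdamovicD/MilasA:2008}, we would work modulo $C_2(\mathcal{SW}(m))$, or equivalently with the standard PBW filtration. We would show that each $\mathfrak{ns}$-descendant of $w^{(n)}_i$, and each $w^{(n)}_i$ itself, is congruent to a normally ordered monomial in negative modes of $G,T,W^{\delta;\mathfrak{ns}},\widehat{W}^{\delta;\mathfrak{ns}}$, modulo terms of strictly lower filtration degree. The $\mathfrak{ns}$-descendants are handled by the negative modes of $G$ and $T$. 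The remaining issue is that $W^{\delta;\mathfrak{ns}}_{(j_0)}$ is not a negative mode. By a weight count using $h^{\mathfrak{ns}}_{1,1;-2n-2}-h^{\mathfrak{ns}}_{1,1;-2n}$, we would show that $w^{(n+1)}_i$ is a negative mode of $W^{\delta;\mathfrak{ns}}$ or $\widehat{W}^{\delta;\mathfrak{ns}}$ applied to $w^{(n)}_{i'}$, up to terms landing in $\bigoplus_{k\le n}L^{\mathfrak{ns}}(h^{\mathfrak{ns}}_{1,1;-2k})$. The lower terms are then covered by the induction hypothesis.

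The main obstacle is this second step. One must check that the leading coefficient is nonzero and that the fermionic partners $\widehat{W}$ are really needed precisely where the parity of $j_0$ is half-integral relative to the weight grading. We would first try to settle this by an explicit evaluation of the leading coefficient via the Jack-polynomial description of (\ref{eq:bb-cor}) from \cite{BMRW}, falling back on a direct $C_2$-cofiniteness style spanning argument as in \cite{AdamovicD/MilasA:2009-2} if that evaluation becomes unwieldy.
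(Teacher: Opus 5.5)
The paper gives no proof of this proposition; it is quoted from Adamovi\'c--Milas, so your argument has to stand on its own. Your derivation of the first statement from the second is correct: $G^{(\beta_0;\mathfrak{ns})}_{-1/2}\ket{\beta}=\beta\, b_{-\frac12}\ket{\beta}$, $\beta_{1,1;-2}=-\beta_+\neq 0$, $G_{-1/2}$ commutes with the even operator $\mathcal{S}^{(\beta_0)}_+$, and $T^{(\beta_0;\mathfrak{ns})}=\frac12 G_{-1/2}G^{(\beta_0;\mathfrak{ns})}$.

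For strong generation, however, you have put the difficulty in the wrong place. The obstacle you flag does not exist. Since $\beta_{1,1;-2}\beta_{1,1;-2n}=n(2m+1)$, you get $j_0=-(2m+1)n-1\leq -1$, so $W^{-;\mathfrak{ns}}_{(j_0)}$ is already a creation mode. This is just the weight count $h^{\mathfrak{ns}}_{1,1;-2n-2}-h^{\mathfrak{ns}}_{1,1;-2n}=(2m+1)(n+1)-\frac12$ that you propose to do. Moreover, $\mathcal{S}^{(\beta_0)}_+=(b_{-\frac12}\ket{\beta_+})_{(0)}$ satisfies $[\mathcal{S}^{(\beta_0)}_+,u_{(j)}]=(\mathcal{S}^{(\beta_0)}_+u)_{(j)}$ exactly. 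Hence the Leibniz expansion of $(\mathcal{S}^{(\beta_0)}_+)^{n+1+i}\bigl(W^{-;\mathfrak{ns}}_{(j_0)}w^{(n);\mathfrak{ns}}_{-n}\bigr)$ has no correction terms. By induction, every $w^{(n);\mathfrak{ns}}_i$ is a linear combination of monomials $W^{\delta_1;\mathfrak{ns}}_{(-k_1)}\cdots W^{\delta_n;\mathfrak{ns}}_{(-k_n)}\ket{0}$ with all $k_i\geq 1$. No $C_2$-filtration, leading-coefficient evaluation or Jack-polynomial input is needed, and the $\widehat{W}^{\delta;\mathfrak{ns}}$ play no role at this stage.

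The step that is actually missing is the one you dispose of with ``the $\mathfrak{ns}$-descendants are handled by the negative modes of $G$ and $T$''. That claim is false. Write $G=G^{(\beta_0;\mathfrak{ns})}$, $T=T^{(\beta_0;\mathfrak{ns})}$, and let $\mathfrak{ns}_-$ be spanned by $L_{-k}$ ($k\geq 1$) and $G_{-r}$ ($r\geq\frac12$). Then $U(\mathfrak{ns}_-)$ contains $L_{-1}=T_{(0)}$ and $G_{-1/2}=G_{(0)}$, which are zero modes, not negative modes.

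These must be pushed to the right as (super)derivations. The tools are $[u_{(0)},v_{(j)}]=(u_{(0)}v)_{(j)}$ and $(L_{-1}v)_{(-k)}=k\,v_{(-k-1)}$, together with the closure relations
$G_{-1/2}W^{\delta;\mathfrak{ns}}=\beta_{1,1;-2}\widehat{W}^{\delta;\mathfrak{ns}}$,
$G_{-1/2}\widehat{W}^{\delta;\mathfrak{ns}}=\beta_{1,1;-2}^{-1}L_{-1}W^{\delta;\mathfrak{ns}}$,
$G_{-1/2}G=2T$,
$G_{-1/2}T=\frac12 L_{-1}G$.
This is exactly where, and why, the $\widehat{W}^{\delta;\mathfrak{ns}}$ have to be added to the generating set. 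It has nothing to do with a parity of $j_0$, which is always an integer.

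With this step supplied, strong generation follows at once from $\mathcal{SW}(m)=\sum_{n\geq 0}\sum_{k=-n}^{n}U(\mathfrak{ns}_-)\,w^{(n);\mathfrak{ns}}_{k}$. Your first paragraph then gives the generation statement.
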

For $n\geq 0$ and $-n \leq k\leq n$, we define 
\begin{equation*}
w^{(n);\mathfrak{ns}}_{k}:=
(\mathcal{S}^{(\beta_0)}_+)^{n+k}\ket{\beta_{1,1;-2n}}.
\end{equation*}
We see that the set $\{w^{(n);\mathfrak{ns}}_{k}\}_{k=-n}^{n}$ gives a basis of the minimal conformal weight spaces of $(2n+1)L^{\mathfrak{ns}}(h^{\mathfrak{ns}}_{1,1;-2n})\subset \mathcal{SW}(m)$.
Note that $w^{(1);\mathfrak{ns}}_{\pm1}$ and $w^{(1);\mathfrak{ns}}_0$ agree with $W^{\pm;\mathfrak{ns}}$ and $W^{0;\mathfrak{ns}}$ respectively, up to scalar multiples.
We set
\begin{align*}
W^{\delta;\mathfrak{ns}}[n]&=\oint_{z=0}Y(W^{\delta;\mathfrak{ns}},z)z^{2m+\frac{1}{2}+n-1}\frac{{\rm d}z}{2\pi i},\qquad \delta=\pm,0,\ \ n\in \mathbb{Z}+\frac{1}{2},\\
\widehat{W}^{\delta;\mathfrak{ns}}[n]&=\oint_{z=0}Y(\widehat{W}^{\delta;\mathfrak{ns}},z)z^{2m+1+n-1}\frac{{\rm d}z}{2\pi i},\qquad \delta=\pm,0,\ \ n\in \mathbb{Z},
\end{align*}
which are the $n$-th mode of the fields $Y(W^{\delta;\mathfrak{ns}},z)$ and $Y(\widehat{W}^{\delta;\mathfrak{ns}},z)$.
It was shown in \cite{AdamovicD/MilasA:2009-2} that $\mathcal{SW}(m)$ is simple. The following proposition is also proved in \cite[Theorem 9.1]{AdamovicD/MilasA:2009-2}.
\begin{prop}[\cite{AdamovicD/MilasA:2009-2}]
The fields $Y(W^{\pm;\mathfrak{ns}},z)$, $Y(W^{0;\mathfrak{ns}},z)$, $Y(\widehat{W}^{\pm;\mathfrak{ns}},z)$ and $Y(\widehat{W}^{0;\mathfrak{ns}},z)$ act on the vectors $w^{(n);\mathfrak{ns}}_{k}$ as follows:
The vectors $w^{(n);\mathfrak{ns}}_{k}$ $(n\geq 0,-n\leq k\leq n)$ satisfy
\begin{align*}
w^{(n+1);\mathfrak{ns}}_{k\pm 1}&\in \mathbb{C}^\times W^{\pm;\mathfrak{ns}}[h^{\mathfrak{ns}}_{1,1;-2n}-h^{\mathfrak{ns}}_{1,1;-2n-2}]w^{(n);\mathfrak{ns}}_{k} +\sum_{l=0}^{n}U(\mathfrak{ns})w^{(l);\mathfrak{ns}}_{k\pm 1},
\end{align*}
where $w^{(n);\mathfrak{ns}}_{n+1}=w^{(n);\mathfrak{ns}}_{-n-1}=0$ and $U(\mathfrak{ns})$ is the universal enveloping algebra of $\mathfrak{ns}$. 
\label{sl2action2-s}
\end{prop}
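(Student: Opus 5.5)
The plan is to mimic the proof of Proposition~\ref{sl2action2} for $\mathcal{W}_{1,p}$, replacing Virasoro by $\mathfrak{ns}$ and $Q_+$ by $\mathcal{S}^{(\beta_0)}_+$. The basic structure is the decomposition ${\rm ker}\,\mathcal{S}^{(\beta_0)}_-|_{\mathcal{V}^{\mathfrak{ns}}_L}\simeq\bigoplus_n(2n+1)L^{\mathfrak{ns}}(h^{\mathfrak{ns}}_{1,1;-2n})$, coming from the Iohara--Koga classification \cite{IK2}. The vectors $w^{(n);\mathfrak{ns}}_k$ span the lowest-weight spaces of this decomposition.

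\textbf{Step 1: the extremal case.} First treat the vector $w^{(n+1);\mathfrak{ns}}_{-n-1}=\ket{\beta_{1,1;-2n-2}}$. Write $W^{-;\mathfrak{ns}}[h^{\mathfrak{ns}}_{1,1;-2n}-h^{\mathfrak{ns}}_{1,1;-2n-2}]\ket{\beta_{1,1;-2n}}$ using (\ref{eq:opeVV}):
\begin{align*}
Y(\ket{\beta_{1,1;-2}},z)\ket{\beta_{1,1;-2n}}=z^{\langle\beta_{1,1;-2},\beta_{1,1;-2n}\rangle}\bigl(\ket{\beta_{1,1;-2n-2}}+O(z)\bigr).
\end{align*}
The chosen mode picks out exactly the leading coefficient, because the $\mathcal{L}_0$-weight of the target equals the lowest weight of $F^{\mathfrak{ns}}_{1,1;-2n-2}$. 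That coefficient is nonzero, so we get $\mathbb{C}^\times w^{(n+1);\mathfrak{ns}}_{-n-1}$.

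\textbf{Step 2: general $k$.} Apply powers of $\mathcal{S}^{(\beta_0)}_+$. The key fact is that $\mathcal{S}^{(\beta_0)}_+$ is a derivation of $\mathcal{V}^{\mathfrak{ns}}_L$: it is the zero mode of a weight-one local (odd-twisted) field. It also commutes with $\mathcal{S}^{(\beta_0)}_-$ (Proposition~\ref{prop:sscommu}) and with $\mathfrak{ns}$. The Leibniz rule with appropriate parity signs gives
\begin{align*}
(\mathcal{S}^{(\beta_0)}_+)^{j}\bigl(W^{\pm;\mathfrak{ns}}[N]\,u\bigr)=\sum_{a+b=j}c_{a,b}\,\bigl((\mathcal{S}^{(\beta_0)}_+)^aW^{\pm;\mathfrak{ns}}\bigr)[N]\,(\mathcal{S}^{(\beta_0)}_+)^b u .
\end{align*}
Since $(\mathcal{S}^{(\beta_0)}_+)^3W^{-;\mathfrak{ns}}=0$, only the $W^{-},W^{0},W^{+}$ terms survive. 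This expresses $w^{(n+1);\mathfrak{ns}}_{k\pm1}$ as a combination of $W^{\delta;\mathfrak{ns}}[N]w^{(n);\mathfrak{ns}}_{k'}$, exactly as in the $sl_2$ Clebsch--Gordan bookkeeping used for Proposition~\ref{sl2action2}.

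\textbf{Step 3: the lower-order terms.} The remaining task is to show that each $W^{\delta;\mathfrak{ns}}[N]w^{(n);\mathfrak{ns}}_{k}$, with $N$ the weight gap to level $n+1$, lies in $\mathbb{C}w^{(n+1);\mathfrak{ns}}_{k+\delta}+\sum_{l\le n}U(\mathfrak{ns})w^{(l);\mathfrak{ns}}_{k+\delta}$. This follows from weight and $a_0$-charge considerations. The vector has the lowest weight $h^{\mathfrak{ns}}_{1,1;-2n-2}$ of level $n+1$, and it lies in ${\rm ker}\,\mathcal{S}^{(\beta_0)}_-$. Within the charge-$(k+\delta)$ part, the decomposition of that kernel means the weight space is spanned by $w^{(n+1);\mathfrak{ns}}_{k+\delta}$ together with descendants in the summands $L^{\mathfrak{ns}}(h^{\mathfrak{ns}}_{1,1;-2l})$ with $l\le n$. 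Nonvanishing of the coefficient of $w^{(n+1);\mathfrak{ns}}_{k\pm1}$ is obtained by inverting the triangular system from Step 2.

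\textbf{Main obstacle.} The hard part is proving that this coefficient is nonzero for all $k$, not merely for the extremal $k$. I would try to deduce it by an $sl_2$ argument. Combined with the dual screening constructed later via the $\epsilon$-deformation, $\mathcal{S}^{(\beta_0)}_+$ acts irreducibly on each multiplicity space $\mathbb{C}^{2n+1}$. A vanishing coefficient would then force the whole top $(2n+1)$-block to be generated by lower levels, contradicting the simplicity of $\mathcal{SW}(m)$ \cite{AdamovicD/MilasA:2009-2}. Failing that, I would compute the coefficient directly as a Selberg-type integral via Theorem~\ref{sus-prop} and check that it has no zeros.
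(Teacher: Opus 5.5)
The paper gives no proof of this statement; it is quoted from \cite{AdamovicD/MilasA:2009-2}. Your argument therefore has to stand on its own.

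\textbf{What is correct.} Steps 1--3 are sound as far as they go. With the normalization (\ref{Y-ver}) you get $W^{-;\mathfrak{ns}}[N]w^{(n);\mathfrak{ns}}_{-n}=w^{(n+1);\mathfrak{ns}}_{-n-1}$ exactly, where $N$ is the weight gap. $S:=\mathcal{S}^{(\beta_0)}_+$ is the zero mode of the even vector $b_{-\frac12}\ket{\beta_+}\in\mathcal{V}^{\mathfrak{ns}}_L$, so it is an even derivation and the coefficients are plain binomials. The charge/weight count gives $W^{\delta;\mathfrak{ns}}[N]w^{(n);\mathfrak{ns}}_k\equiv c_\delta(n,k)\,w^{(n+1);\mathfrak{ns}}_{k+\delta}$ modulo $M_{\le n}:=\sum_{l\le n}\sum_j U(\mathfrak{ns})w^{(l);\mathfrak{ns}}_j$.

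\textbf{The gap.} What is missing is the $\mathbb{C}^\times$, i.e.\ $c_\pm(n,k)\neq0$ for non-extremal $k$, which is the real content of the statement. Applying $S^j$ to the single identity of Step 1 gives only one relation per target $w^{(n+1);\mathfrak{ns}}_{-n-1+j}$:
\[
c_-(n,-n+j)+j\,c_0(n,-n+j-1)+\tbinom{j}{2}\,c_+(n,-n+j-2)=1,\qquad 0\le j\le 2n+2,
\]
with $c_\delta(n,k):=0$ for $|k|>n$. That is $2n+3$ equations in $3(2n+1)$ unknowns. It shows that some route reaches each target, not that the $W^{\pm;\mathfrak{ns}}$ route does, so there is no triangular system to invert.

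Your fallbacks do not close this either:
\begin{enumerate}
\item The $\mathfrak{sl}_2$ relation for $\mathbb{E}^{\mathfrak{ns}},\mathbb{F}^{\mathfrak{ns}}$ is derived in Proposition~\ref{thm:swmain} from the very statement you are proving, so using it is circular.
\item Irreducibility of the multiplicity spaces says nothing by itself about individual coefficients.
\item Simplicity gives no contradiction. In a simple VOSA every vector is reachable from lower levels anyway, and a vanishing $c_+(n,k)$ would not stop $w^{(n+1);\mathfrak{ns}}_{k+1}$ from being reached via $W^{0;\mathfrak{ns}}[N]w^{(n);\mathfrak{ns}}_{k+1}$ or $W^{-;\mathfrak{ns}}[N]w^{(n);\mathfrak{ns}}_{k+2}$.
\item The Selberg-integral route is not specified.
\end{enumerate}

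\textbf{The missing idea.} Apply the derivation \emph{once} to every $W^{\delta;\mathfrak{ns}}[N]w^{(n);\mathfrak{ns}}_k$, rather than $j$ times to the extremal one. This works because:
\begin{itemize}
\item $S$ commutes with $\mathfrak{ns}$ and maps $w^{(l);\mathfrak{ns}}_j\mapsto w^{(l);\mathfrak{ns}}_{j+1}$, so it preserves $M_{\le n}$.
\item $w^{(n+1);\mathfrak{ns}}_j\notin M_{\le n}$, by directness of the decomposition.
\end{itemize}
Hence you may compare coefficients in
\[
S\bigl(W^{\delta;\mathfrak{ns}}[N]w^{(n);\mathfrak{ns}}_k\bigr)=(SW^{\delta;\mathfrak{ns}})[N]w^{(n);\mathfrak{ns}}_k+W^{\delta;\mathfrak{ns}}[N]w^{(n);\mathfrak{ns}}_{k+1}.
\]
Using $SW^{-;\mathfrak{ns}}=W^{0;\mathfrak{ns}}$, $SW^{0;\mathfrak{ns}}=W^{+;\mathfrak{ns}}$ and $SW^{+;\mathfrak{ns}}=0$, this gives
\[
c_+(n,k)=c_+(n,k+1)\ (k\le n-1),\qquad c_0(n,k)=c_+(n,k)+c_0(n,k+1),\qquad c_-(n,k)=c_0(n,k)+c_-(n,k+1),
\]
with $c_\delta(n,n+1)=0$. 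Solving from the top gives
\[
c_+(n,k)=A,\qquad c_0(n,k)=(n-k+1)A,\qquad c_-(n,k)=\tbinom{n-k+2}{2}A.
\]
These are the top Clebsch--Gordan coefficients of $V_2\otimes V_{2n}\to V_{2n+2}$. Step 1 then gives $c_-(n,-n)=(n+1)(2n+1)A=1$, so $A\neq0$ and every $c_\pm(n,k)$ is nonzero. No dual screening, simplicity, or Selberg integral is needed.
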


Let $A(\mathcal{SW}(m))$ be the Zhu-algebra of $\mathcal{SW}(m)$ (for the definition of Zhu-algebras, see \cite{KW},\cite{Zh}).
In \cite[Theorem 11.2]{AdamovicD/MilasA:2009-2}, the structure of the Zhu-algebra $A(\mathcal{SW}(m))$ is determined. 
\begin{thm}[\cite{AdamovicD/MilasA:2009-2}]
\label{AM-ns}
The following holds for the Zhu-algebra $A(\mathcal{SW}(m))$.
\begin{enumerate}
\item The Zhu-algebra $A(\mathcal{SW}(m))$ is generated by $[\widehat{W}^{\pm ;\mathfrak{ns}}]$, $[\widehat{W}^{0 ;\mathfrak{ns}}]$ and $[T^{(\beta_0;\mathfrak{ns})}]$.
\item In the Zhu-algebra, we have
\begin{align*}
\begin{aligned}
&[\widehat{W}^{0 ;\mathfrak{ns}}]*[\widehat{W}^{\pm ;\mathfrak{ns}}]-[\widehat{W}^{\pm ;\mathfrak{ns}}]*[\widehat{W}^{0 ;\mathfrak{ns}}]=\pm2q([T^{(\beta_0;\mathfrak{ns})}])[\widehat{W}^{\pm ;\mathfrak{ns}}],\\
&[\widehat{W}^{+ ;\mathfrak{ns}}]*[\widehat{W}^{- ;\mathfrak{ns}}]-[\widehat{W}^{- ;\mathfrak{ns}}]*[\widehat{W}^{+ ;\mathfrak{ns}}]=-2q([T^{(\beta_0;\mathfrak{ns})}])[\widehat{W}^{0 ;\mathfrak{ns}}],
\end{aligned}
\end{align*}
where $q(x)$ is a certain polynomial
\item $q([T^{(\beta_0;\mathfrak{ns})}])$ is a unit in $A(\mathcal{SW}(m))$.
\end{enumerate}
In particular, $A(\mathcal{SW}(m))$ contains a Lie subalgebra isomorphic to $\mathfrak{sl}_2(\mathbb{C})$.
\end{thm}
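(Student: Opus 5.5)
The statement is quoted from \cite[Theorem~11.2]{AdamovicD/MilasA:2009-2}. If I had to reprove it, I would follow the standard Zhu-algebra strategy used for $\mathcal{W}_{1,p}$ in \cite{AdamovicD/MilasA:2008}, adapted to the super case \cite{KW}.

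\textbf{Step 1 (generation).} Start from the strong generating set of Proposition~\ref{AM-gen}. In the Zhu algebra of a vertex operator superalgebra, the classes of odd vectors lie in $O(\mathcal{SW}(m))$, so odd generators contribute nothing. This kills $G^{(\beta_0;\mathfrak{ns})}$ and $W^{\pm;\mathfrak{ns}}$, $W^{0;\mathfrak{ns}}$, which have half-integral weight $2m+\frac12$. The usual reduction formula $[a_{-n-1}b]\in \mathrm{span}\{[a]*[b],\ [a_{-j}b]\mid j\leq 1\}$ and induction on conformal weight then show that $A(\mathcal{SW}(m))$ is generated by $[T^{(\beta_0;\mathfrak{ns})}]$ and $[\widehat{W}^{\delta;\mathfrak{ns}}]$ for $\delta=\pm,0$. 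Moreover $[T^{(\beta_0;\mathfrak{ns})}]$ is central.

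\textbf{Step 2 (relations).} Use $[a]*[b]-[b]*[a]\equiv\sum_{j\geq0}\binom{\mathrm{wt}\,a-1}{j}[a_jb]$. The products $(\widehat{W}^{\delta;\mathfrak{ns}})_j\widehat{W}^{\delta';\mathfrak{ns}}$ land in weights below $4m+2$, hence in the subspace spanned by $U(\mathfrak{ns})$ applied to $\ket{0}$ and to $w^{(1);\mathfrak{ns}}_k$. By $\mathfrak{sl}_2$-covariance, generated by $\mathcal{S}^{(\beta_0)}_+$ together with the triplet labelling of Proposition~\ref{sl2action2-s}, the result must be proportional to the triplet partner $\widehat{W}^{\delta+\delta';\mathfrak{ns}}$ with a coefficient polynomial in $[T]$. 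I would compute this coefficient by evaluating both sides on top components of Fock modules $F^{\mathfrak{ns}}_{r,s;n}$, where the action of zero modes is explicit via the free-field realization. This gives the polynomial $q$.

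\textbf{Step 3 (unit).} This is the main obstacle. $A(\mathcal{SW}(m))$ is finite dimensional, since $\mathcal{SW}(m)$ is $C_2$-cofinite, so $[T]$ satisfies some polynomial $p([T])=0$. Hence $q([T])$ is a unit if and only if $q$ does not vanish at any eigenvalue of $[T]$ on the Zhu algebra. I would check this against the list of top weights $h^{\mathfrak{ns}}_{r,s}$ of modules, including the logarithmic generalized eigenvalues, verifying $q(h)\neq0$ at each.

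\textbf{Step 4 (the $\mathfrak{sl}_2$ subalgebra).} Put $E=q^{-1}[\widehat{W}^{+;\mathfrak{ns}}]$, $H=q^{-1}[\widehat{W}^{0;\mathfrak{ns}}]$ and $F=-\tfrac12 q^{-1}[\widehat{W}^{-;\mathfrak{ns}}]$, with $q=q([T])$. Centrality of $[T]$ and the relations of Step 2 give $[H,E]=2E$, $[H,F]=-2F$ and $[E,F]=H$. These elements are nonzero and linearly independent, as one sees on the top level of $\mathcal{SW}(m)$-modules where $\widehat{W}$ acts nontrivially, for instance the $2$-dimensional top levels. Hence they span a copy of $\mathfrak{sl}_2(\mathbb{C})$.
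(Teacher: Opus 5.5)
The paper gives no proof of this statement. It is quoted from \cite{AdamovicD/MilasA:2009-2} (Theorem~11.2) and used only as an input to Theorem~\ref{cor:sw(m)}, so your sketch has to stand on its own.

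\textbf{What works.} Steps~1, 2 (the structural part) and~4 follow the usual Adamovi\'{c}--Milas pattern and are fine in outline, but two points should be made explicit.
\begin{itemize}
\item In Step~1, knowing $[a]=0$ for odd $a$ does not by itself remove even composites such as $G^{(\beta_0;\mathfrak{ns})}_{-3/2}W^{-;\mathfrak{ns}}$. For that you need the odd-type elements $\mathrm{Res}_z\,Y(a,z)b\,(1+z)^{\mathrm{wt}\,a-\frac12}z^{-1-n}\in O(\mathcal{SW}(m))$, $n\geq 0$.
\item In Step~2, the charge grading alone still allows a pure $[T^{(\beta_0;\mathfrak{ns})}]$-term in the charge-zero commutator of $[\widehat{W}^{+;\mathfrak{ns}}]$ and $[\widehat{W}^{-;\mathfrak{ns}}]$. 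What excludes it, and forces one and the same $q$ in all three relations, is that $\mathcal{S}^{(\beta_0)}_+$ commutes with $\mathfrak{ns}$ and kills the conformal vector. It therefore induces a derivation of $A(\mathcal{SW}(m))$. Applying it once and twice to $[\widehat{W}^{0;\mathfrak{ns}}]*[\widehat{W}^{-;\mathfrak{ns}}]-[\widehat{W}^{-;\mathfrak{ns}}]*[\widehat{W}^{0;\mathfrak{ns}}]=-2q[\widehat{W}^{-;\mathfrak{ns}}]$ yields the other two relations with exactly the stated coefficients. This uses $\mathcal{S}^{(\beta_0)}_+\widehat{W}^{+;\mathfrak{ns}}=0$, which holds because charge $2\beta_+$ first occurs in $\mathcal{SW}(m)$ at weight $h^{\mathfrak{ns}}_{1,1;-4}=6m+2$.
\end{itemize}

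\textbf{The gap: Step~3 is circular.} Step~3 is where the content of the theorem lies. Since $[T^{(\beta_0;\mathfrak{ns})}]$ is central in the finite-dimensional algebra $A(\mathcal{SW}(m))$, its (generalized) eigenvalues are exactly the lowest weights of the irreducible $\mathcal{SW}(m)$-modules. Each generalized eigenspace is a nonzero two-sided direct summand and so has a simple module; logarithmic modules add no new values. So checking $q(h)\neq 0$ against the list of top weights presupposes the classification of irreducible $\mathcal{SW}(m)$-modules. That classification is normally deduced from exactly this Zhu-algebra structure: the $\mathfrak{sl}_2$ inside $A(\mathcal{SW}(m))$, and hence the invertibility of $q$, is what controls the top levels. $C_2$-cofiniteness only gives an annihilating polynomial whose roots are unknown.

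\textbf{What is missing.} You need an explicit relation $p([T^{(\beta_0;\mathfrak{ns})}])=0$ proved inside $A(\mathcal{SW}(m))$, followed by a proof that $p$ and $q$ have no common root. The natural ingredients are:
\begin{itemize}
\item $[\widehat{W}^{\pm;\mathfrak{ns}}]*[\widehat{W}^{\pm;\mathfrak{ns}}]=0$, for the weight reason above;
\item its images under the derivation $\mathcal{S}^{(\beta_0)}_+$, which give anticommutation relations and show $[\widehat{W}^{+;\mathfrak{ns}}]*[\widehat{W}^{-;\mathfrak{ns}}]+[\widehat{W}^{-;\mathfrak{ns}}]*[\widehat{W}^{+;\mathfrak{ns}}]=-2[\widehat{W}^{0;\mathfrak{ns}}]*[\widehat{W}^{0;\mathfrak{ns}}]$;
\item the charge-zero (singlet) relation expressing $[\widehat{W}^{0;\mathfrak{ns}}]*[\widehat{W}^{0;\mathfrak{ns}}]$ through $[T^{(\beta_0;\mathfrak{ns})}]$ and $[\widehat{W}^{0;\mathfrak{ns}}]$, which can be determined on Fock-module top components.
\end{itemize}

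\textbf{Related problem in Step~2.} Your method for finding $q$ does not work as stated. A single Fock module is not an $\mathcal{SW}(m)$-module. On the top levels of the lattice modules built from the $F^{\mathfrak{ns}}_{r,s;n}$, the zero modes of $\widehat{W}^{\pm;\mathfrak{ns}}$ vanish except in the few cases where two Fock summands share the lowest weight. So both sides of the charged relations are zero there and say nothing about $q$ at the remaining eigenvalues, which is precisely what Step~3 needs.

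Step~4 similarly leans on modules with two-dimensional top level. There, the charge grading of $A(\mathcal{SW}(m))$ together with $[\widehat{W}^{0;\mathfrak{ns}}]\neq 0$ already suffices.
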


\subsection{Derivations on $\mathcal{SW}(m)$} 
\label{subs:swm}
As in Proposition \ref{Felder complex2}, the following proposition holds.
\begin{prop}[\cite{IK2}]
\label{prop:IK-exact}
The $\mathfrak{ns}$-screening operators $\mathcal{S}^{(\beta_0)}_-$ and $\mathcal{S}^{(\beta_0);[2m]}_-$ define the complex
\begin{align*}
\cdots\xrightarrow{\mathcal{S}^{(\beta_0)}_-}F^{\mathfrak{ns}}_{1,2m;2k+1}\xrightarrow{\mathcal{S}^{(\beta_0);[2m]}_-}F^{\mathfrak{ns}}_{1,1;2k}\xrightarrow{\mathcal{S}^{(\beta_0)}_-}F^{\mathfrak{ns}}_{1,2m;2k-1}\xrightarrow{\mathcal{S}^{(\beta_0);[2m]}_-}\cdots
\end{align*}
and this complex is exact.
\end{prop}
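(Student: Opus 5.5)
The plan is to prove the two claims separately: first that consecutive maps compose to zero, then exactness by a character count on the Fock modules.

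For the complex property, consider the two compositions $\mathcal{S}^{(\beta_0);[2m]}_-\circ\mathcal{S}^{(\beta_0)}_-$ and $\mathcal{S}^{(\beta_0)}_-\circ\mathcal{S}^{(\beta_0);[2m]}_-$. Each is an $(2m+1)$-fold integral of products of $\mathcal{S}^{(\beta_0)}_-(z_i)$. The key parameter is the exponent $\frac{\beta_-^2-1}{2}$ appearing in (\ref{prod-boson-ns}); together with the fermionic factor (\ref{eq:bb-cor}), it gives a braiding phase $q=-e^{\pi i\beta_-^2}$ attached to exchanging two screening currents. Since $\beta_-^2=1/(2m+1)$, this phase is a primitive $2(2m+1)$-th root of unity up to sign.

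The composed integral is a regularized cycle of total size $2m+1$. Following Felder's argument and the treatment in \cite{IK2,BMRW}, I would deform this cycle into a multiple of the symmetrized cycle. The coefficient is a $q$-number $[2m+1]_q$, which vanishes at this root of unity. Analytic continuation in $\tau$, justified by Theorem~\ref{sus-prop} applied to the symmetric polynomial obtained through the Jack expansion of \cite[Proposition 3.8]{BMRW}, makes this deformation legitimate in the degenerate case.

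For exactness, I would use the socle and Jantzen structure of the $\mathfrak{ns}$-Fock modules $F^{\mathfrak{ns}}_{1,1;2k}$ and $F^{\mathfrak{ns}}_{1,2m;2k\pm1}$ classified in \cite{IK2}. These are chain-type modules, analogous to case 2 of Proposition~\ref{FockSocle}, and each has a two-step socle filtration.

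First I show that $\mathcal{S}^{(\beta_0)}_-$ and $\mathcal{S}^{(\beta_0);[2m]}_-$ are nonzero on the relevant singular and cosingular vectors. For $\mathcal{S}^{(\beta_0);[2m]}_-$ this nontriviality is supplied by \cite{BMRW}; for $\mathcal{S}^{(\beta_0)}_-$ it follows by a direct check on lowest weight vectors. Each map commutes with $\mathfrak{ns}$ by Proposition~\ref{com-IK}, so its image in the target is a nonzero submodule. That image must then be the full socle, because the socle is generated by singular vectors that appear with multiplicity one in each weight.

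This gives two identifications: the kernel of each map equals the socle of its source, and the image equals the socle of its target. Comparing them yields ${\rm ker}={\rm im}$ at every node, exactly as in the Virasoro case (\ref{eq:fel-comp}).

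I expect the main obstacle to be the vanishing of the composition. The fermionic Pfaffian factor modifies the exponents of the twisted cycle, via the shift $-\frac12$ in (\ref{eq:twist-rho-tau}). Verifying that the combined monodromy still produces the vanishing $q$-factor, rather than a nonzero one, requires careful bookkeeping of signs from the odd fields.
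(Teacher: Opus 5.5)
The paper gives no proof of its own here; it imports the statement from \cite{IK2}. So this is an assessment of your argument on its merits. Its exactness half has a genuine gap.

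\textbf{Exactness.} Your argument rests on $\mathcal{S}^{(\beta_0)}_-$ and $\mathcal{S}^{(\beta_0);[2m]}_-$ being nonzero on \emph{every} cosingular vector of the source. There is one such vector for each of the infinitely many summands of ${\rm Soc}_2/{\rm Soc}_1$. Your justification reaches at most the lowest weight vector. \cite{BMRW} shows that $\mathcal{S}^{[2m]}_-$ sends a lowest weight vector to a nonzero Jack-type singular vector, and your ``direct check'' for $\mathcal{S}^{(\beta_0)}_-$ is also at the lowest weight.

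For some modules this check is empty. On $F^{\mathfrak{ns}}_{1,1;-2n}$ with $n\geq 0$, the lowest weight vector $\ket{\beta_{1,1;-2n}}=w^{(n);\mathfrak{ns}}_{-n}$ lies in $\mathcal{SW}(m)={\rm ker}\,\mathcal{S}^{(\beta_0)}_-$. It is therefore singular, and all cosingular vectors sit at higher weight.

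The step ``the image is a nonzero submodule, hence the whole socle by multiplicity one'' does not close the gap. The composite $F\twoheadrightarrow F/{\rm Soc}(F)\twoheadrightarrow L^{\mathfrak{ns}}(h)\hookrightarrow F'$ is a nonzero $\mathfrak{ns}$-homomorphism. Its image is a single summand of ${\rm Soc}(F')$, and its kernel is strictly larger than ${\rm Soc}(F)$.

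You also need two facts from \cite{IK2} that you do not state. First, no composition factor of ${\rm Soc}(F)$ occurs in ${\rm Soc}(F')$; this gives ${\rm Soc}(F)\subset{\rm ker}$. Second, the lowest weights of $F/{\rm Soc}(F)$ are exactly those of ${\rm Soc}(F')$.

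Nonvanishing on the higher cosingular vectors is the real content of exactness. One way to get it is the argument in the proof of Theorem~\ref{non-Gop}:
\begin{enumerate}
\item Pass to the $\epsilon$-deformation, where the deformed Fock modules are generically Verma-like (the $\mathfrak{ns}$ analogue of Lemma~\ref{lem-imageQ}).
\item Track the $\epsilon$-order of each cosingular vector through the Jantzen filtrations of \cite{IK2}.
\item Use the contragredient pairing (the analogue of Proposition~\ref{dual-scop}) for the modules whose lowest weight vector is singular.
\end{enumerate}

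\textbf{The complex property.} This half also needs an input you do not supply. For generic $\epsilon$ you write the composite as $[2m+1]_q$ times a symmetrized $(2m+1)$-fold integral. That gives vanishing at $\epsilon=0$ only if the symmetrized integral stays finite there. Theorem~\ref{sus-prop} does not provide this. With the parameters of (\ref{eq:twist-rho-tau}) extended to $2m$ simplex variables, one gets $\beta+(2m-1)\gamma=1-m\in\mathbb{Z}$. So the point lies on $\widehat{\mathcal{A}}_{2m}$, on a hyperplane that first appears at length $2m+1$.

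A pole there could cancel the zero of $[2m+1]_q$. This is exactly what is at stake: the composite $N^{\mathfrak{ns}}_{-,k}(\epsilon)$ does vanish at $\epsilon=0$, but only to first order, since its $\epsilon$-coefficient is the nonzero $G^{\mathfrak{ns}}_{-,k}$.

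You need to prove finiteness separately. One option is Felder's compact nested contours, where holomorphy in $\epsilon$ is automatic; you would then have to relate them to the normalization of $\mathcal{S}^{(\beta_0);[2m]}_-$ used here. The fermionic sign bookkeeping you flagged as the main obstacle is routine by comparison.
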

Thus, in a similar way to Subsection \ref{cons-main0}, we can consider an $\epsilon$-lifting of the above complex.
To show this, let us introduce some notation.
We set
\begin{align}
\label{notep1-ns}
\widetilde{\beta}_+(\epsilon)=\beta_++\epsilon,\qquad
\widetilde{\beta}_-(\epsilon)=-\frac{1}{\widetilde{\beta}_+(\epsilon)},\qquad
\widetilde{\beta}_0(\epsilon)=\widetilde{\beta}_+(\epsilon)+\widetilde{\beta}_-(\epsilon),
\end{align}
for $\epsilon \in \mathbb{C}$.
Note that $\widetilde{\beta}_\pm(0)=\beta_\pm$ and $\widetilde{\beta}_0(0)=\beta_0$. The elements $\widetilde{\beta}_\pm(\epsilon)$ can be regarded as $\epsilon$-deformations of $\beta_\pm$ that preserve the relation $\beta_+\beta_-=-1$.
To simplify the notation, we will omit $(\epsilon)$ from each notation of (\ref{notep1-ns}).
We introduce the notation
\begin{align*}
\begin{aligned}
\widetilde{\beta}_{r,s}&:=\beta^{(\widetilde{\beta}_{0})}_{r,s},&\widetilde{F}^{\mathfrak{ns}}_{r,s}&:=\widetilde{F}^{(\widetilde{\beta}_0)}_{\widetilde{\beta}_{r,s}} \otimes{F}^{\mathfrak{f}},&\widetilde{\mathcal{S}}^{[2m]}_-&:=\mathcal{S}^{(\widetilde{\beta}_0);[2m]}_-,&\widetilde{\mathcal{S}}_\pm&:=\mathcal{S}^{(\widetilde{\beta}_0)}_\pm,
\end{aligned}
\end{align*}
and through the action of $T^{(\widetilde{\beta}_0;\mathfrak{ns})}$ and $G^{(\widetilde{\beta}_0;\mathfrak{ns})}$, we naturally regard $\widetilde{F}^{\mathfrak{ns}}_{r,s}$ as an $\mathfrak{ns}$-module with central charge $c^{\mathfrak{ns}}_{\widetilde{\beta}_0}$ (for the notation $\widetilde{F}^{(\widetilde{\beta}_0)}_{\widetilde{\beta}_{r,s}}$, see Subsection~\ref{cons-main0}).
We can regard $\widetilde{F}^{\mathfrak{ns}}_{r,s}$ as an $\epsilon$-deformation of $F^{\mathfrak{ns}}_{r,s}$.
As in (\ref{Q+Q}), we define a linear operator
\begin{align}
\label{Q+Q-ns}
\begin{aligned}
&N^{\mathfrak{ns}}_{-,k}(\epsilon):=\widetilde{\mathcal{S}}^{[2m]}_-\circ{\rm e}^{\widetilde{\beta}_{2k+2,2m}-\widetilde{\beta}_{2k+1,-1}}\circ\widetilde{\mathcal{S}}_-\circ{\rm e}^{\widetilde{\beta}_{2k+1,1}-\beta_{2k+1,1}}.
\end{aligned}
\end{align}
Note that the presence of the shift elements ${\rm e}^\bullet$ in (\ref{Q+Q-ns}) ensures that the actions of the deformed screening operators are well-defined.
By definition, we have
\begin{align}
\label{S-holom0}
\begin{aligned}
&\widetilde{\mathcal{S}}_-(\widetilde{F}^{\mathfrak{ns}}_{2k+1,1})\subset \widetilde{F}^{\mathfrak{ns}}_{2k+1,-1},\qquad 
\widetilde{\mathcal{S}}_+(\widetilde{F}^{\mathfrak{ns}}_{1,2k+1})\subset \widetilde{F}^{\mathfrak{ns}}_{-1,2k+1},\\
&\widetilde{\mathcal{S}}_-\circ {\rm e}^{\widetilde{\beta}_{2k+1,1}-{\beta}_{2k+1,1}}|_{{F}^{\mathfrak{ns}}_{2k+1,1}}|_{\epsilon=0}={\mathcal{S}}^{(\beta_0)}_-|_{{F}^{\mathfrak{ns}}_{2k+1,1}},\\
&\widetilde{\mathcal{S}}_+\circ {\rm e}^{\widetilde{\beta}_{1,2k+1}-{\beta}_{1,2k+1}}|_{{F}^{\mathfrak{ns}}_{1,2k+1}}|_{\epsilon=0}={\mathcal{S}}^{(\beta_0)}_+|_{{F}^{\mathfrak{ns}}_{1,2k+1}}
\end{aligned}
\end{align}
for $k\in \mathbb{Z}$.
Note that for sufficiently small $|\epsilon|\geq 0$, the parameters defining the twisted cycle $[\Delta^{(\widetilde{\beta}_-)}_{2m-1;\frac{1}{2}}]$ satisfy
\begin{align*}
\Bigl(\frac{(1-2m)\widetilde{\beta}^2_-}{2}-\frac{1}{2},\widetilde{\beta}^2_-,\frac{\widetilde{\beta}^2_-}{2}-\frac{1}{2}\Bigr)\notin \widehat{\mathcal{A}}_{2m-1}
\end{align*}
(for the definition of $[\Delta^{(\tau_-)}_{2m-1;\frac{1}{2}}]$, see (\ref{eq:twist-rho-tau})).
Then, as in Proposition \ref{prop:tw}, using Theorem \ref{sus-prop} and the fact that the factor (\ref{eq:bb-cor}) can be expressed in terms of Jack symmetric polynomials \cite[Proposition 3.8]{BMRW}, we obtain 
\begin{align}
\label{S-holom}
\begin{aligned}
&\widetilde{\mathcal{S}}^{[2m]}_-(\widetilde{F}^{\mathfrak{ns}}_{2k,2m})\subset \widetilde{F}^{\mathfrak{ns}}_{2k,-2m},\\
&\widetilde{\mathcal{S}}^{[2m]}_-\circ {\rm e}^{\widetilde{\beta}_{2k,2m}-{\beta}_{2k,2m}}|_{{F}^{\mathfrak{ns}}_{2k,2m}}|_{\epsilon=0}={\mathcal{S}}^{(\beta_0);[2m]}_-|_{{F}^{\mathfrak{ns}}_{2k,2m}}
\end{aligned}
\end{align}
for $k\in \mathbb{Z}$.
Then, by (\ref{S-holom0})-(\ref{S-holom}) and $\mathcal{S}^{(\beta_0);[2m]}_-\circ \mathcal{S}^{(\beta_0)}_-=0$, we get
\begin{align*}
\begin{aligned}
&{\rm e}^{\beta_{2k+3,1}-\widetilde{\beta}_{2k+2,-2m}}\circ N^{\mathfrak{ns}}_{-,k}(\epsilon)\in {\rm Hom}_{\mathbb{C}}( F^{\mathfrak{ns}}_{2k+1,1},\epsilon\cdot \mathcal{O}_{0;\epsilon}\otimes F^{\mathfrak{ns}}_{2k+1,1;-2}).
\end{aligned}
\end{align*}
Thus we can define the linear operator
\begin{align*}
G^{\mathfrak{ns}}_{-,k}:=\lim_{\epsilon\rightarrow 0}\hspace{-1mm}{}^F \frac{1}{\epsilon}N^{\mathfrak{ns}}_{-,k}(\epsilon)\in {\rm Hom}_{\mathbb{C}}( F^{\mathfrak{ns}}_{2k+1,1},F^{\mathfrak{ns}}_{2k+1,1;-2}),
\end{align*}
where the limit is naturally defined by replacing the bosonic Fock modules with $\mathfrak{ns}$-Fock modules in Definition~\ref{notlim}.
As in Lemma~\ref{lem:tw}, we can show that
\begin{align}
\label{G-iden-ns0}
\left.G^{\mathfrak{ns}}_{-,k}\right|_{{\rm ker}\mathcal{S}^{(\beta_0)}_-}=\left.\mathcal{S}^{(\beta_0);[2m]}_-\lim_{\epsilon\rightarrow 0}\hspace{-1mm}{}^F \frac{1}{\epsilon}\widetilde{\mathcal{S}}_-\circ{\rm e}^{\widetilde{\beta}_{2k+1,1}-\beta_{2k+1,1}}\right|_{F^{\mathfrak{ns}}_{2k+1,1}\cap {\rm ker}\mathcal{S}^{(\beta_0)}_-}.
\end{align}
Then, as in the discussion above Theorem~\ref{G+G-prop}, using (\ref{G-iden-ns0}), we obtain
\begin{align}
\label{G-iden-ns}
\begin{aligned}
\left.G^{\mathfrak{ns}}_{-,k}\right|_{{\rm ker}\mathcal{S}^{(\beta_0)}_-}&= \frac{1}{2m+1}\mathcal{S}^{(\beta_0);[2m]}_-\oint_{z=0} :\mathcal{S}^{(\beta_0)}_-(z)\phi_0(z):\frac{{\rm d}z}{2\pi i}.
\end{aligned}
\end{align}
Thus, we have the following proposition.
\begin{prop}
\label{G+G-prop-ns}
Define the linear operator $\mathbb{G}^{\mathfrak{ns}}_-\in {\rm End}_{\mathbb{C}}(\mathcal{V}^{\mathfrak{ns}}_L)$ by the right-hand side of (\ref{G-iden-ns}). Then, we have
\begin{align*}
&\left.\mathbb{G}^{\mathfrak{ns}}_-\right|_{F^{\mathfrak{ns}}_{2k+1,1}\cap \mathcal{SW}(m)}=\left.G^{\mathfrak{ns}}_{-,k}\right|_{\mathcal{SW}(m)}.
\end{align*}
\end{prop}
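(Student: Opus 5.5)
The plan follows the argument preceding Theorem~\ref{G+G-prop}, with $\widetilde{Q}_-$ replaced by $\widetilde{\mathcal{S}}_-$ and with the fermionic factor carried along passively. Fix $k$ and $u\in F^{\mathfrak{ns}}_{2k+1,1}\cap\mathcal{SW}(m)$, so that $\mathcal{S}^{(\beta_0)}_-u=0$. By (\ref{G-iden-ns0}) it suffices to compute
\begin{align*}
\lim_{\epsilon\rightarrow 0}\hspace{-1mm}{}^F\frac{1}{\epsilon}\widetilde{\mathcal{S}}_-\circ{\rm e}^{\widetilde{\beta}_{2k+1,1}-\beta_{2k+1,1}}u
\end{align*}
and then show that it equals $\frac{1}{2m+1}\oint_{z=0}:\mathcal{S}^{(\beta_0)}_-(z)\phi_0(z):\frac{{\rm d}z}{2\pi i}\,u$. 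Applying $\mathcal{S}^{(\beta_0);[2m]}_-$ to both sides then gives the claimed identity.

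\emph{Step 1: write out the deformed screening current.} Using (\ref{Y-ver}), write
\begin{align*}
\widetilde{\mathcal{S}}_-(z)=b(z)\,{\rm e}^{\widetilde{\beta}_-}z^{\widetilde{\beta}_-a_0}:\overline{Y}(\ket{\beta_-},z)\prod_{n\geq1}\exp\Bigl(\delta\frac{a_{-n}}{n}z^n\Bigr)\exp\Bigl(-\delta\frac{a_n}{n}z^{-n}\Bigr):,
\end{align*}
where $\delta=\widetilde{\beta}_--\beta_-$. From $\widetilde{\beta}_-=-1/(\beta_++\epsilon)$ we get $\delta=\epsilon/\beta_+^2+o(\epsilon)=\frac{\epsilon}{2m+1}+o(\epsilon)$.

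\emph{Step 2: the exponent of $z$ is integral.} On ${\rm e}^{\widetilde{\beta}_{2k+1,1}-\beta_{2k+1,1}}F^{\mathfrak{ns}}_{2k+1,1}$ the factor $z^{\widetilde{\beta}_-a_0}$ acts as $z^{\widetilde{\beta}_-\widetilde{\beta}_{2k+1,1}}$. Since $\widetilde{\beta}_+\widetilde{\beta}_-=-1$, we have $\widetilde{\beta}_-\widetilde{\beta}_{2k+1,1}=-k\widetilde{\beta}_+\widetilde{\beta}_-=k$, which is independent of $\epsilon$. The contour integral therefore has no monodromy issue, and the residue is well-defined for every $\epsilon$.

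\emph{Step 3: expand to first order in $\epsilon$.} Expanding the $\delta$-dependent exponentials gives
\begin{align*}
\widetilde{\mathcal{S}}_-{\rm e}^{\bullet}u={\rm e}^{\bullet}\bigl(\mathcal{S}^{(\beta_0)}_-u+\delta\oint:\mathcal{S}^{(\beta_0)}_-(z)\phi_0(z):\tfrac{{\rm d}z}{2\pi i}\,u\bigr)+O(\epsilon^2).
\end{align*}
The shift ${\rm e}^{\bullet}$ is harmless by Remark~\ref{Rem:lim}. The zeroth-order term vanishes because $u\in\ker\mathcal{S}^{(\beta_0)}_-$. Dividing by $\epsilon$ and taking the limit of Definition~\ref{notlim} (extended to the $\mathfrak{ns}$ setting) leaves the coefficient $\lim\delta/\epsilon=\frac{1}{2m+1}$ times $\oint:\mathcal{S}^{(\beta_0)}_-(z)\phi_0(z):\frac{{\rm d}z}{2\pi i}\,u$.

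\emph{Step 4: independence of $k$.} The resulting operator does not depend on $k$. It is defined on every $F^{\mathfrak{ns}}_{2k+1,1}$, hence on all of $\mathcal{V}^{\mathfrak{ns}}_L\cap\ker\mathcal{S}^{(\beta_0)}_-=\mathcal{SW}(m)$, and by (\ref{G-iden-ns0}) it agrees with $G^{\mathfrak{ns}}_{-,k}$ on $\mathcal{SW}(m)\cap F^{\mathfrak{ns}}_{2k+1,1}$.

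The main obstacle is justifying that the limit commutes with the residue and with the fermionic factor $b(z)$. The fermion is $\epsilon$-independent and only multiplies the bosonic expansion, so this should reduce to coefficientwise holomorphy of the mode expansion, as in the bosonic case. A secondary check is the exact value $\widetilde{\beta}_-'(0)=1/\beta_+^2=1/(2m+1)$, which is precisely what produces the prefactor in (\ref{G-iden-ns}).
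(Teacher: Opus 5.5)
Your proposal is correct and follows the paper's own argument: the paper likewise uses (\ref{G-iden-ns0}) to reduce the claim to the first-order expansion of $\widetilde{\mathcal{S}}_-(z)$ in $\widetilde{\beta}_--\beta_-$, exactly as in the derivation of (\ref{wi-1}) and (\ref{wi-G-}), and $\widetilde{\beta}_-'(0)=1/(2m+1)$ supplies the prefactor. The interchange of limit and residue that you flag as the main obstacle is immediate: $z^{\widetilde{\beta}_-a_0}$ acts by the $\epsilon$-independent integer power $z^{k}$, so each weight component of $\widetilde{\mathcal{S}}_-{\rm e}^{\bullet}u$ is a polynomial in $\widetilde{\beta}_--\beta_-$.
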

As in Theorem~\ref{G-hom}, using Proposition~\ref{prop:IK-exact}, (\ref{S-holom0}), and (\ref{S-holom}), we can show that $\mathbb{G}^{\mathfrak{ns}}_-|_{\mathcal{SW}(m)}$ commutes with the $U(\mathfrak{ns})$-action of $\mathcal{SW}(m)$.
Further, using the results for the Jantzen filtration of the Fock modules $F^{\mathfrak{ns}}_{r,s}$ determined in \cite{IK2} and arguing as in Theorem~\ref{non-Gop}, we see that $\mathbb{G}^{\mathfrak{ns}}_-$ acts by sending $w^{(n);\mathfrak{ns}}_{k}$ to a scalar multiple of $w^{(n);\mathfrak{ns}}_{k-1}$.
Summarizing the above, we obtain the following proposition.
\begin{prop}
\label{prop:nstransitive}
\begin{enumerate}
\item The restriction $\mathbb{G}^{\mathfrak{ns}}_-|_{\mathcal{SW}(m)}$ defines an $U(\mathfrak{ns})$ homomorphism.
\item For $n\in \mathbb{Z}_{\geq 1}$ and $k\geq -n+1$, we have
\begin{align*}
\mathbb{G}^{\mathfrak{ns}}_-w^{(n);\mathfrak{ns}}_{k}\in \mathbb{C}^\times w^{(n);\mathfrak{ns}}_{k-1},\quad\qquad 
\mathbb{G}^{\mathfrak{ns}}_-w^{(n);\mathfrak{ns}}_{-n}=0.
\end{align*}
\end{enumerate}
\end{prop}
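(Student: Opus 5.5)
The statement has two parts, and the plan is to transcribe the proofs of Theorem~\ref{G-hom} and Theorem~\ref{non-Gop} into the $\mathfrak{ns}$ setting, using Proposition~\ref{G+G-prop-ns} to replace $\mathbb{G}^{\mathfrak{ns}}_-$ by $G^{\mathfrak{ns}}_{-,k}$ on each summand $F^{\mathfrak{ns}}_{2k+1,1}\cap\mathcal{SW}(m)$.

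For part (1), I would first establish the analogue of Lemma~\ref{lem:tw}, namely (\ref{G-iden-ns0}). It follows from (\ref{S-holom0}), (\ref{S-holom}), the rule that the $\epsilon=0$ limit intertwines $\widetilde{\mathcal{S}}^{[2m]}_-$ with $\mathcal{S}^{(\beta_0);[2m]}_-$, and $\mathcal{S}^{(\beta_0);[2m]}_-\circ\mathcal{S}^{(\beta_0)}_-=0$ from Proposition~\ref{prop:IK-exact}. Then, for $u\in F^{\mathfrak{ns}}_{2k+1,1}\cap\ker\mathcal{S}^{(\beta_0)}_-$ and $X$ any of the modes $L_n$ or $G_r$ of $T^{(\beta_0;\mathfrak{ns})}$ and $G^{(\beta_0;\mathfrak{ns})}$, let $\widetilde X$ be the corresponding mode of $T^{(\widetilde\beta_0;\mathfrak{ns})}$ or $G^{(\widetilde\beta_0;\mathfrak{ns})}$. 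The difference ${\rm e}^{\bullet}Xu-\widetilde X{\rm e}^{\bullet}u$ lies in $\epsilon\cdot\widetilde F^{\mathfrak{ns}}_{2k+1,1}$. After applying $\frac1\epsilon\widetilde{\mathcal{S}}_-$ and the limit, this correction becomes $\mathcal{S}^{(\beta_0)}_-(\cdots)$, which $\mathcal{S}^{(\beta_0);[2m]}_-$ kills. The remaining term commutes past $\widetilde{\mathcal{S}}_-$, by (\ref{sc}) for the $G$-modes and the fact that total derivatives integrate to zero, and then past the limit, since the fermion part is untouched and $a_0$ commutes with the limit. This is exactly the chain (\ref{eq:imc-v}), with signs because $\widetilde{\mathcal{S}}_-$ and $G_r$ are both odd; their supercommutator still vanishes. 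Commutation with $\mathcal{S}^{(\beta_0);[2m]}_-$ is Proposition~\ref{com-IK}.

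For part (2), the weight shift $\beta_{1,1;-2n}+(n+k)\beta_+$ under $\mathbb{G}^{\mathfrak{ns}}_-$, combined with part (1), shows that $\mathbb{G}^{\mathfrak{ns}}_-w^{(n);\mathfrak{ns}}_k$ is an $\mathfrak{ns}$-singular vector of $\mathcal{SW}(m)$ in the Fock component containing $w^{(n);\mathfrak{ns}}_{k-1}$, of the same $\mathcal{L}_0$-weight $h^{\mathfrak{ns}}_{1,1;-2n}$. The decomposition $\bigoplus(2n+1)L^{\mathfrak{ns}}(h^{\mathfrak{ns}}_{1,1;-2n})$ makes that weight space one-dimensional within each Fock summand, so the image lies in $\mathbb{C}w^{(n);\mathfrak{ns}}_{k-1}$. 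For $k=-n$, the target component contains no singular vector of that weight (it would correspond to $w^{(n)}_{-n-1}=0$), so the image vanishes. This is consistent with the grading by $\mathbb{C}w^{(n)}_{\bullet}$.

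The main obstacle is nonvanishing for $k\ge -n+1$. I would follow the case $r=p_+$ of Theorem~\ref{non-Gop}. Since $w^{(n);\mathfrak{ns}}_k$ is a singular vector in the socle of a chain-type $\mathfrak{ns}$-Fock module, it lies in a Jantzen layer $IK(j\rbrack\setminus IK(j-1\rbrack$, using the Jantzen filtration of \cite{IK2}. Hence $\epsilon^j{\rm e}^{\bullet}w^{(n);\mathfrak{ns}}_k$ is generated over $\mathcal{O}_{0;\epsilon}\otimes U(\mathfrak{ns})$ by the deformed highest weight vector, and this is not so modulo $\epsilon$. Next I need the $\mathfrak{ns}$-version of Lemma~\ref{lem-imageQ}: generic $\mathfrak{ns}$-Fock modules over $\mathcal{K}_{0;\epsilon}$ are isomorphic to Verma modules away from the Kac table. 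Together with the nonvanishing of $\widetilde{\mathcal{S}}_-$ on the deformed highest weight vector modulo $\epsilon$, this gives that $\lim^F\epsilon^{-1}\widetilde{\mathcal{S}}_-{\rm e}^{\bullet}w^{(n);\mathfrak{ns}}_k$ lies exactly one Jantzen layer deeper, namely in $\mathbb{C}^\times v+\cdots$ with $v$ a cosingular vector not in $\ker\mathcal{S}^{(\beta_0);[2m]}_-$. By exactness in Proposition~\ref{prop:IK-exact}, $\mathcal{S}^{(\beta_0);[2m]}_-$ sends $v$ to a nonzero singular vector. The delicate point is matching the layer indices and handling the two fermion parities, where the $G$-descendants carry half-integer weights; I would verify this on $n=1$ first.
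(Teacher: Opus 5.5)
Your treatment of part (1), and your reduction of part (2) to a nonvanishing statement via the semisimple decomposition of $\mathcal{SW}(m)$, are fine and follow the paper, which transcribes Theorems~\ref{G-hom} and~\ref{non-Gop}. The gap is in the nonvanishing step: as you describe it, it covers only half of the required cases.

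The vector $w^{(n);\mathfrak{ns}}_k$ has $a_0$-eigenvalue $k\beta_+$. So it lies in $F^{\mathfrak{ns}}_{1,1;2k}=F^{\mathfrak{ns}}_{1-2k,1}$, which is deformed to $\widetilde{F}^{\mathfrak{ns}}_{1-2k,1}$ with top $\ket{k\widetilde{\beta}_+}$.

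For $k\geq 1$ the first index is negative. The deformed module is then of Verma type over $\mathcal{K}_{0;\epsilon}$, and $\widetilde{\mathcal{S}}_-\ket{k\widetilde{\beta}_+}\neq 0$. In this range your argument is exactly the paper's argument for $F_{p_+,s;n}$ with $n\geq 1$.

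The statement also requires $-n+1\leq k\leq 0$, in particular $k=0$ for every $n$ (e.g.\ $\mathbb{G}^{\mathfrak{ns}}_-W^{0;\mathfrak{ns}}\in\mathbb{C}^\times W^{-;\mathfrak{ns}}$). Your mechanism fails here:
\begin{enumerate}
\item For $k\leq 0$ the weight $k\widetilde{\beta}_+=\widetilde{\beta}_{1-2k,1}$ lies in the Kac table for every $\epsilon$. By the $\mathfrak{ns}$-analogue of Lemma~\ref{lem-imageQ}, the deformed Fock module is not generated by its top even over $\mathcal{K}_{0;\epsilon}$.
\item $\widetilde{\mathcal{S}}_-(z)\ket{k\widetilde{\beta}_+}$ is $z^{-k}$ times a power series in $z$, so $\widetilde{\mathcal{S}}_-\ket{k\widetilde{\beta}_+}=0$ identically. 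For $k=0$ the top is simply the vacuum.
\end{enumerate}
So both of your key inputs fail, and they must. If $\epsilon^j{\rm e}^{\bullet}w^{(n);\mathfrak{ns}}_k$ lay in $(\mathcal{O}_{0;\epsilon}\otimes U(\mathfrak{ns}))\ket{k\widetilde{\beta}_+}$, then $\widetilde{\mathcal{S}}_-$ would kill it and $\mathbb{G}^{\mathfrak{ns}}_-w^{(n);\mathfrak{ns}}_k$ would vanish.

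What is missing is the dual argument the paper uses for the case $n\leq 0$ in Theorem~\ref{non-Gop} (``by noting Proposition~\ref{dual-scop}''). Pass to the contragredient pairing, under which the screening operators are (up to sign) self-transpose. The dual module $(F^{\mathfrak{ns}}_{1-2k,1})^*\cong F^{\mathfrak{ns}}_{2k-1,-1}$ has both indices negative for $k\leq 0$, so it is of Verma type. The Jantzen-layer argument based on \cite{IK2} can then be run on the dual side.

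Without this, your proposal establishes $\mathbb{G}^{\mathfrak{ns}}_-w^{(n);\mathfrak{ns}}_k\neq 0$ only for $k\geq 1$. Checking $n=1$, as you intended, would have exposed the problem immediately at $k=0$.
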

In particular, by this proposition, $\mathbb{G}^{\mathfrak{ns}}_-$ is nontrivial.
The following lemma can be proved in the same way as Proposition~\ref{prop:qr-qs}.
\begin{lem}
 \label{lem:sscommu}
 We have
 $
 \lbrack\mathcal{S}^{(\beta_0);[2m]}_{-},\mathcal{S}^{(\beta_0)}_{+}\rbrack=0
 $ on $F^{\mathfrak{ns}}_{1,2m;2k+1}$ $(k\in \mathbb{Z})$.
 \end{lem}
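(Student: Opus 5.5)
The plan is to mimic the proof of Proposition~\ref{prop:qr-qs}, replacing the bosonic screening currents by the $\mathfrak{ns}$-currents. It suffices to show that the commutator
\begin{align*}
[\mathcal{S}^{(\beta_0)}_+,\mathcal{S}^{(\beta_0)}_-(w_1)\cdots\mathcal{S}^{(\beta_0)}_-(w_{2m})]\,{\rm d}\bm{w}
={\rm Res}_{z=w_1,\dots,w_{2m}}\mathcal{S}^{(\beta_0)}_+(z)\mathcal{S}^{(\beta_0)}_-(w_1)\cdots\mathcal{S}^{(\beta_0)}_-(w_{2m})\,{\rm d}z\,{\rm d}\bm{w}
\end{align*}
is a total derivative in $\bm{w}$. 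After the change of variables $w_1=z$, $w_{j+1}=zy_j$, this total derivative is then integrated over $C_0\times[\Delta^{(\beta_-)}_{2m-1;\frac12}]$. On $F^{\mathfrak{ns}}_{1,2m;2k+1}$ the resulting integral vanishes by the generalized Stokes' theorem (cf.~\cite{AK}). Total derivatives in the $\bm y$-directions die against the twisted cycle, and the $z$-derivative dies under the residue, because the integrand is single valued in $z$ on this Fock module.

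The first step is the OPE. Write $\mathcal{S}_\pm(z)=b(z)Y(\ket{\beta_\pm},z)$ and use $\beta_+\beta_-=-1$. The bosonic part gives $(z-w)^{-1}:Y(\ket{\beta_+},z)Y(\ket{\beta_-},w):$, and the fermionic part gives $b(z)b(w)=(z-w)^{-1}+:b(z)b(w):$. Hence
\begin{align*}
\mathcal{S}_+(z)\mathcal{S}_-(w)=\frac{:Y(\ket{\beta_+},z)Y(\ket{\beta_-},w):}{(z-w)^2}+\frac{:b(z)b(w)Y(\ket{\beta_+},z)Y(\ket{\beta_-},w):}{z-w}.
\end{align*}
Taking the residue at $z=w$, the second term gives $:b(w)b(w)Y(\ket{\beta_0},w):=0$. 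The first gives the derivative part $\frac{\beta_+}{\beta_0}\partial_w Y(\ket{\beta_0},w)$. This exact-derivative form is the $\mathfrak{ns}$-analogue of (\ref{eq:op-nyt}), and it is consistent with the second OPE in (\ref{sc}).

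The second step is the sign bookkeeping. The currents $\mathcal{S}_\pm$ are odd, so moving $\mathcal{S}_+(z)$ past $\mathcal{S}_-(w_1),\dots,\mathcal{S}_-(w_{i-1})$ introduces a factor $(-1)^{i-1}$. Moreover, $Y(\ket{\beta_0},w_i)$ is even while the remaining $2m-1$ currents are odd. I will check that these signs combine with the orientation signs $(-1)^{i+1}$ of ${\rm d}w_1\wedge\cdots\overset{\vee}{{\rm d}w_i}\cdots\wedge{\rm d}w_{2m}$. The aim is for each residue term to be exactly ${\rm d}_{\bm w}$ of a $(2m-1)$-form, so that the sum is ${\rm d}_{\bm w}$ of a single form, as in (\ref{vac-001}). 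The sign of $\mathcal{S}^{[2m]}_-$ relative to the graded commutator also needs to be fixed, but this does not affect the vanishing.

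The third step is the residue and boundary analysis, which I expect to be the main obstacle. As in (\ref{vac-01}), after substituting $w_j=zy_{j-1}$, differentiating $z^{2m-1}$ and the factors $\mathcal{S}_-(zy_j)$ produces additional terms. These must be removed using $\partial_z\mathcal{S}_-(zy_j)=\frac{y_j}{z}\partial_{y_j}\mathcal{S}_-(zy_j)$ and Stokes in $\bm y$, with the two $(2m-1)$ contributions cancelling as in the proof of Proposition~\ref{prop:qr-qs}.

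For Stokes to apply on the regularized cycle $[\Delta^{(\beta_-)}_{2m-1;\frac12}]$, the new integrand must have the same multivaluedness as the original one. This requires that replacing one $\mathcal{S}_-$ by $Y(\ket{\beta_0},\cdot)$ does not alter the monodromy exponents. The fermion factor (\ref{eq:bb-cor}) is expressed through Jack polynomials \cite[Proposition 3.8]{BMRW}, so the changed integrand is again of Selberg type with a symmetric Laurent polynomial factor (after symmetrization). Theorem~\ref{sus-prop} then gives the analytic continuation, and the Stokes argument can be carried out at generic deformed parameters and passed to $\epsilon=0$ by continuity, as in Subsection~\ref{cons-main0}. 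Finally, the total $z$-derivative term has zero residue, because on $F^{\mathfrak{ns}}_{1,2m;2k+1}$ the power of $z$ is integral. This yields the claimed commutativity.
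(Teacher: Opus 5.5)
Your proposal is correct and takes the paper's route. The paper proves this lemma only by saying it goes the same way as Proposition~\ref{prop:qr-qs}, and your argument is that adaptation: the $\mathfrak{ns}$ OPE gives residue $\frac{\beta_+}{\beta_0}\partial_w Y(\ket{\beta_0},w)$, the commutator is then exact in $\bm w$, and it integrates to zero over $C_0\times[\Delta^{(\beta_-)}_{2m-1;\frac12}]$ by Stokes. Since the $+$ side is a single current (the $r=1$ case of (\ref{vac-01})), each residue term is already separately exact, so your sign bookkeeping is harmless, and the $(2m-1)$-cancellation you describe is just ${\rm d}_{\bm w}$ written in the coordinates $(z,\bm y)$.
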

\begin{prop}
The operator $\mathbb{G}^{\mathfrak{ns}}_-$ acts on $\mathcal{SW}(m)$ by a derivation.
\end{prop}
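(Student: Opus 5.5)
The proof follows the proof of Theorem~\ref{G-deri}, in its super version. By Proposition~\ref{G+G-prop-ns} we may write
\[
\mathbb{G}^{\mathfrak{ns}}_-=\frac{1}{2m+1}\mathcal{S}^{(\beta_0);[2m]}_-\circ\widehat{\mathcal{S}}_-,\qquad \widehat{\mathcal{S}}_-:=\oint_{z=0}:\mathcal{S}^{(\beta_0)}_-(z)\phi_0(z):\frac{{\rm d}z}{2\pi i}.
\]
We also use that, on ${\rm ker}\,\mathcal{S}^{(\beta_0)}_-$, $\widehat{\mathcal{S}}_-$ is the limit $\lim^F_{\epsilon\to0}\epsilon^{-1}\widetilde{\mathcal{S}}_-\circ{\rm e}^{\bullet}$.

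Fix homogeneous $v_1,v_2\in\mathcal{SW}(m)$ lying in $F^{\mathfrak{ns}}_{2k_1+1,1}$ and $F^{\mathfrak{ns}}_{2k_2+1,1}$, and form their lifts $\widetilde v_i={\rm e}^{\widetilde\beta_{\bullet}-\beta_\bullet}v_i$. As in (\ref{eq:O-v12}), the lift of $Y(v_1,w)v_2$ agrees with $Y(\widetilde v_1,w)\widetilde v_2$ modulo $\epsilon\cdot\widetilde F^{\mathfrak{ns}}$. Pairing with an arbitrary $\psi$ in the contragredient, we write $\langle\psi,\mathbb{G}^{\mathfrak{ns}}_-Y(v_1,w)v_2\rangle$ as $\lim\epsilon^{-1}\langle\psi,\bm e\,\widetilde{\mathcal{S}}^{[2m]}_-\bm e'\widetilde{\mathcal{S}}_-Y(\widetilde v_1,w)\widetilde v_2\rangle$. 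The error term is killed by (\ref{S-holom0}), (\ref{S-holom}) and $\mathcal{S}^{(\beta_0);[2m]}_-\circ\mathcal{S}^{(\beta_0)}_-=0$. Next we use the super-commutator identity
\[
\widetilde{\mathcal{S}}_-Y(\widetilde v_1,w)\widetilde v_2=Y(\widetilde{\mathcal{S}}_-\widetilde v_1,w)\widetilde v_2+(-1)^{|v_1|}Y(\widetilde v_1,w)\widetilde{\mathcal{S}}_-\widetilde v_2,
\]
which comes from locality of the odd field $\widetilde{\mathcal{S}}_-(z)$ with $Y(\widetilde v_1,w)$. Since $\mathcal{S}^{(\beta_0)}_-v_i=0$, each term has the form $\epsilon\cdot(\text{holomorphic})$, so the limit passes inside as in Lemma~\ref{lem:tw}. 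The result is
\[
\mathbb{G}^{\mathfrak{ns}}_-Y(v_1,w)v_2=\tfrac{1}{2m+1}\mathcal{S}^{(\beta_0);[2m]}_-\Bigl(Y(\widehat{\mathcal{S}}_-v_1,w)v_2\pm Y(v_1,w)\widehat{\mathcal{S}}_-v_2\Bigr).
\]

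Next we move $\mathcal{S}^{(\beta_0);[2m]}_-$ through the vertex operators, using skew symmetry $Y(a,w)b=(-1)^{|a||b|}e^{wL_{-1}}Y(b,-w)a$ exactly as in (\ref{eq:g-v1v2}). For this we need that $\mathcal{S}^{(\beta_0);[2m]}_-$ (anti)commutes with $Y(v,w)$ for every $v\in\mathcal{SW}(m)$, with the appropriate sign. The plan is to prove this like the bosonic statement that $Q^{[p_+-1]}_+$ is a $\mathcal{W}_{p_+,p_-}$-homomorphism. Write the commutator of the multiple integral (\ref{Tsuchiya-Kanie0-tau}) with $Y(v,w)$ as a sum of residues of a single $\mathcal{S}^{(\beta_0)}_-(zy_j)$ at $w$, and use that $v\in{\rm ker}\,\mathcal{S}^{(\beta_0)}_-$: the integrand's OPE with $Y(v,w)$ then has a total-derivative singular part, which vanishes on the twisted cycle by the generalized Stokes theorem. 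Alternatively, one can reduce to the strong generators of Proposition~\ref{AM-gen}. For $G^{(\beta_0;\mathfrak{ns})}$ and $T^{(\beta_0;\mathfrak{ns})}$ the claim is Proposition~\ref{com-IK}. For $W^{\bullet;\mathfrak{ns}},\widehat W^{\bullet;\mathfrak{ns}}$ it follows from Lemma~\ref{lem:sscommu}: these vectors are $\mathcal{S}^{(\beta_0)}_+$-powers of $\ket{\beta_{1,1;-2}}$ and $b_{-1/2}\ket{\beta_{1,1;-2}}$, whose fields are local with the $\mathcal{S}_-$-currents with trivial monodromy.

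With that in hand, $\mathcal{S}^{(\beta_0);[2m]}_-$ moves across, and the right-hand side becomes $Y(\mathbb{G}^{\mathfrak{ns}}_-v_1,w)v_2+Y(v_1,w)\mathbb{G}^{\mathfrak{ns}}_-v_2$. The operators $\widehat{\mathcal{S}}_-$ and $\mathcal{S}^{[2m]}_-$ involve $2m+1$ fermions in total, an odd number, so there are intermediate signs. We must check that the overall sign in $\mathbb{G}^{\mathfrak{ns}}_-$ is even, so that it is an even derivation, and that the signs $(-1)^{|v_1|}$ from $\widehat{\mathcal{S}}_-$ and from $\mathcal{S}^{[2m]}_-$ cancel; this is bookkeeping.

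The main obstacle is the $\mathcal{SW}(m)$-linearity of $\mathcal{S}^{(\beta_0);[2m]}_-$ on the relevant Fock modules, with correct super-signs. I would first try the Stokes-theorem argument with the fermionic correlator (\ref{eq:bb-cor}), using the analytic continuation of Theorem~\ref{sus-prop} to justify the boundary terms. A cleaner shortcut is also available: by Remark~\ref{rem:g-hat}, $:\mathcal{S}^{(\beta_0)}_-(z)\phi_0(z):$ is local with $Y(v,w)$, so $\widehat{\mathcal{S}}_-$ satisfies the super-Leibniz rule directly, and only the linearity of $\mathcal{S}^{[2m]}_-$ remains to be shown.
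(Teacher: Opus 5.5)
Your argument is the paper's proof. You factor $\mathbb{G}^{\mathfrak{ns}}_-=\mathcal{S}^{(\beta_0);[2m]}_-\circ\widehat{\mathcal{S}}_-$, obtain the Leibniz-type identity by the $\epsilon$-argument of Theorem~\ref{G-deri} (the correction term is killed by $\mathcal{S}^{(\beta_0);[2m]}_-\circ\mathcal{S}^{(\beta_0)}_-=0$), and then move $\mathcal{S}^{(\beta_0);[2m]}_-$ through the vertex operators by skew symmetry; its $\mathcal{SW}(m)$-linearity comes, as in your second route, from Proposition~\ref{com-IK}, Lemma~\ref{lem:sscommu} and the generators of Proposition~\ref{AM-gen}.

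Drop the ``cleaner shortcut'', even though Remark~\ref{rem:g-hat} suggests it. The field $\phi_0(z)$ contracts with $Y(\ket{\beta},w)$ to $\beta\log(1-w/z)$ in one ordering and to $\beta\log(1-z/w)$ in the other, so $:\mathcal{S}^{(\beta_0)}_-(z)\phi_0(z):$ is not local with $Y(v,w)$, and $\widehat{\mathcal{S}}_-$ by itself satisfies the Leibniz rule only modulo ${\rm im}\,\mathcal{S}^{(\beta_0)}_-$. Already the bosonic analogue fails: in $\mathcal{W}_{1,p}$, $\oint:Q_-(z)\phi_0(z):\frac{{\rm d}z}{2\pi i}$ kills $W^-$, yet it sends $Y(W^-,w)W^-$ to a vector whose component along $\ket{2\alpha_{3,1}+\alpha_-}$ is $\frac{\alpha_+}{30}w^{2p+5}\neq 0$. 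So the $\epsilon$-step followed by $\mathcal{S}^{(\beta_0);[2m]}_-$ is genuinely needed, not optional.
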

\begin{proof}
Fix $v_1,v_2\in \mathcal{SW}(m)$.
We denote
\begin{align*}
\widehat{\mathcal{S}}_-:=\frac{1}{2m+1}\oint_{z=0} :\mathcal{S}^{(\beta_0)}_-(z)\phi_0(z):\frac{{\rm d}z}{2\pi i}.
\end{align*}
Then $\mathbb{G}^{\mathfrak{ns}}_-=\mathcal{S}^{(\beta_0);[2m]}_-\circ \widehat{\mathcal{S}}_-$.
In the same way as in the proof of Theorem~\ref{G-deri} (see also Remark~\ref{rem:g-hat}), we have
\begin{align}
\label{deri-ns2}
\begin{aligned}
\mathbb{G}^{\mathfrak{ns}}_-Y(v_1,w)v_2
&=(-1)^{|v_1||v_2|}\mathcal{S}^{(\beta_0);[2m]}_-e^{wL_{-1}}Y(v_2,-w)\widehat{\mathcal{S}}_-{v}_1\\
&\qquad +\mathcal{S}^{(\beta_0);[2m]}_-Y({v}_1,w)\widehat{\mathcal{S}}_-{v}_2.
\end{aligned}
\end{align}
Using the commutativity of Lemma~\ref{lem:sscommu} together with Propositions~\ref{com-IK} and~\ref{AM-gen}, we see that $\mathcal{S}^{(\beta_0);[2m]}_-$ is an $\mathcal{SW}(m)$-homomorphism.
Thus, by (\ref{deri-ns2}), we get
\begin{align*}
\mathbb{G}^{\mathfrak{ns}}_-Y(v_1,w)v_2=Y(\mathbb{G}^{\mathfrak{ns}}_-v_1,w)v_2+Y(v_1,w)\mathbb{G}^{\mathfrak{ns}}_-v_2.
\end{align*}
\end{proof}

\begin{prop}
Define $\mathbb{E}^{\mathfrak{ns}},\mathbb{F}^{\mathfrak{ns}},h\in {\rm End}_{\mathfrak{ns}}(\mathcal{SW}(m))$ as follows:
\begin{align*}
\mathbb{E}^{\mathfrak{ns}}:=\mathcal{S}^{(\beta_0)}_+,\qquad
\mathbb{F}^{\mathfrak{ns}}:=\left.\mathbb{G}^{\mathfrak{ns}}_-\right|_{\mathcal{SW}(m)},\qquad
h=-\frac{2a_0}{(2m+1)\beta_-}.
\end{align*}
Then there exists a nonzero constant $c^{\mathfrak{ns}}_{\mathbb{E},\mathbb{F}}$ such that $\mathbb{E}^{\mathfrak{ns}}$, $h$, and $\mathbb{F}^{\mathfrak{ns}}$ satisfy the relation
\begin{align}
\label{eq:efh-ns}
[\mathbb{E}^{\mathfrak{ns}},\mathbb{F}^{\mathfrak{ns}}]=c^{\mathfrak{ns}}_{\mathbb{E},\mathbb{F}}h.
\end{align}
In particular, the Lie algebra $\mathfrak{sl}_2(\mathbb{C})=\mathbb{C}\mathbb{E}^{\mathfrak{ns}}\oplus \mathbb{C}h\oplus \mathbb{C}(c^{\mathfrak{ns}}_{\mathbb{E},\mathbb{F}})^{-1}\mathbb{F}^{\mathfrak{ns}}$ acts on $\mathcal{SW}(m)$ by derivations.
\label{thm:swmain}
\end{prop}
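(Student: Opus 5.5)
We follow the proof of Theorem~\ref{thm:mainW1p}, in which $\mathbb{E}=Q_+$ was undeformed and $\mathbb{F}$ was a derivation arising as a limit. Here the roles are the same: $\mathbb{E}^{\mathfrak{ns}}=\mathcal{S}^{(\beta_0)}_+$ is an honest screening operator, and $\mathbb{F}^{\mathfrak{ns}}$ is the limiting derivation.

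\textbf{Step 1: algebraic reduction.} Since $\mathbb{E}^{\mathfrak{ns}}$ commutes with $\mathfrak{ns}$ and is a derivation (being a zero mode of a field local with $\mathcal{SW}(m)$), and $\mathbb{F}^{\mathfrak{ns}}$ is an $\mathfrak{ns}$-homomorphism and a derivation (Proposition~\ref{prop:nstransitive} and the preceding proposition), so is $[\mathbb{E}^{\mathfrak{ns}},\mathbb{F}^{\mathfrak{ns}}]$. The relations $[h,\mathbb{E}^{\mathfrak{ns}}]=2\mathbb{E}^{\mathfrak{ns}}$ and $[h,\mathbb{F}^{\mathfrak{ns}}]=-2\mathbb{F}^{\mathfrak{ns}}$ follow from $a_0$-weights: $\mathcal{S}_+$ shifts $a_0$ by $\beta_+$, the operator $\mathbb{G}^{\mathfrak{ns}}_-$ shifts it by $-(2m+1)\beta_-=\beta_+$ in the opposite direction, and $\beta_+\beta_-=-1$.

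Define constants through
\[
\mathbb{E}^{\mathfrak{ns}}w^{(1);\mathfrak{ns}}_{k}\in\mathbb{C}^\times w^{(1);\mathfrak{ns}}_{k+1}
\quad\text{and}\quad
\mathbb{F}^{\mathfrak{ns}}w^{(1);\mathfrak{ns}}_{k}\in\mathbb{C}^\times w^{(1);\mathfrak{ns}}_{k-1}.
\]
The first holds by the definition of the $w$'s, the second by Proposition~\ref{prop:nstransitive}(2). Exactly as in the proof of Theorem~\ref{thm:mainW1p}, it suffices to show that $[\mathbb{E}^{\mathfrak{ns}},\mathbb{F}^{\mathfrak{ns}}]$ vanishes on the $h$-weight-zero part, namely on $\mathcal{SW}(m)\cap F^{\mathfrak{ns}}_{1,1}$. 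Granting this, the three-dimensional span of $W^{\pm;\mathfrak{ns}},W^{0;\mathfrak{ns}}$ becomes an $\mathfrak{sl}_2$-module. Then, by Proposition~\ref{sl2action2-s} and the derivation property, each span $\{w^{(n);\mathfrak{ns}}_k\}_k$ is the $(2n+1)$-dimensional irreducible module, and $[\mathbb{E}^{\mathfrak{ns}},\mathbb{F}^{\mathfrak{ns}}]=c^{\mathfrak{ns}}_{\mathbb{E},\mathbb{F}}h$ holds on lowest weight vectors. Because both sides are $\mathfrak{ns}$-homomorphisms and $\mathcal{SW}(m)\simeq\bigoplus(2n+1)L^{\mathfrak{ns}}(h^{\mathfrak{ns}}_{1,1;-2n})$, the identity then holds on all of $\mathcal{SW}(m)$. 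The constant $c^{\mathfrak{ns}}_{\mathbb{E},\mathbb{F}}$ is nonzero because the constants from Proposition~\ref{prop:nstransitive} are nonzero.

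\textbf{Step 2: the weight-zero vanishing.} Fix $\phi\in\mathcal{SW}(m)\cap F^{\mathfrak{ns}}_{1,1}$. Using (\ref{G-iden-ns0}) and Lemma~\ref{lem:sscommu},
\[
\mathbb{E}^{\mathfrak{ns}}\mathbb{F}^{\mathfrak{ns}}\phi
=\mathcal{S}_+\mathcal{S}^{[2m]}_-\lim^F\epsilon^{-1}\widetilde{\mathcal{S}}_-\phi
=\mathcal{S}^{[2m]}_-\mathcal{S}_+\lim^F\epsilon^{-1}\widetilde{\mathcal{S}}_-\phi .
\]
An analogue of Lemma~\ref{lem:tw0} for $\mathcal{S}_+$, obtained from (\ref{S-holom0}), moves $\mathcal{S}_+$ inside the limit, giving $\mathcal{S}^{[2m]}_-\lim^F\epsilon^{-1}\widetilde{\mathcal{S}}_+\widetilde{\mathcal{S}}_-\phi$. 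The deformed parameters keep $\widetilde{\beta}_+\widetilde{\beta}_-=-1$, so Proposition~\ref{prop:sscommu}, applied at $\tau=\widetilde{\beta}_0$, gives $[\widetilde{\mathcal{S}}_+,\widetilde{\mathcal{S}}_-]=0$ on $\widetilde{F}^{\mathfrak{ns}}_{1,1}$. Any sign from the odd fermion $b$ has to be checked; the statement for the zero modes is the graded commutator, and on these degree-compatible vectors it is to be verified as a plain commutator. Commuting, we next write $\widetilde{\mathcal{S}}_+\phi=\mathrm{e}^{\bullet}\mathcal{S}_+\phi+\epsilon(\cdots)$. The correction term is killed in the limit by $\mathcal{S}^{[2m]}_-\circ\mathcal{S}_-=0$ (Proposition~\ref{prop:IK-exact}) together with the $\mathcal{O}_{0;\epsilon}$-integrality (\ref{S-holom}). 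What remains is $\mathbb{F}^{\mathfrak{ns}}\mathbb{E}^{\mathfrak{ns}}\phi$, which proves $[\mathbb{E}^{\mathfrak{ns}},\mathbb{F}^{\mathfrak{ns}}]\phi=0$.

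\textbf{Main obstacle.} The hard part is the careful bookkeeping in Step 2: the weight shifts $\mathrm{e}^{\bullet}$ needed to place $\widetilde{\mathcal{S}}_\pm$ on the correct deformed Fock modules, the commutation of $\mathcal{S}_+$ with the $\lim^F$ and with the factor $\epsilon^{-1}$, and the super-sign conventions. These are where the super case differs from the bosonic argument of Theorem~\ref{thm:mainW1p}.
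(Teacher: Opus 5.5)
Your proposal is correct and is essentially the paper's proof. The paper likewise reduces, via Proposition~\ref{sl2action2-s} and the argument of Theorem~\ref{thm:mainW1p}, to the vanishing of $[\mathbb{E}^{\mathfrak{ns}},\mathbb{F}^{\mathfrak{ns}}]$ on $\mathcal{SW}(m)\cap F^{\mathfrak{ns}}_{1,1}$, and then declares the computation ``almost identical'' to (\ref{eq:19191})--(\ref{eq:19193}) using $[\widetilde{\mathcal{S}}_+,\widetilde{\mathcal{S}}_-]=0$ and (\ref{S-holom0})--(\ref{G-iden-ns0}); this is exactly the chain you wrote out, with Lemma~\ref{lem:sscommu} at $k=-1$ playing the role of Proposition~\ref{prop:qr-qs}. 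Your sign caveat causes no trouble: Proposition~\ref{prop:sscommu} is stated as a plain commutator, and $\mathcal{S}^{(\beta_0)}_+$ is the zero mode of the field $Y(b_{-\frac{1}{2}}\ket{\beta_+},z)$, which has integer weight $1$ and is therefore even.
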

\begin{proof}
As in Theorem~\ref{thm:mainW1p}, by Proposition \ref{sl2action2-s}, it suffices to prove  
\begin{align}
\label{cru-ef-rel-ns}
[\mathbb{E}^{\mathfrak{ns}},\mathbb{F}^{\mathfrak{ns}}]|_{\mathcal{SW}(m)\cap F^{\mathfrak{ns}}_{1,1}}=0
\end{align}
to obtain (\ref{eq:efh-ns}).
Note that $F^{\mathfrak{ns}}_{1,1}\subset \widetilde{F}^{\mathfrak{ns}}_{1,1}$, and that the screening operators $\widetilde{\mathcal{S}}_\pm$ and the compositions $\widetilde{\mathcal{S}}_\pm\circ\widetilde{\mathcal{S}}_\mp$ are well-defined on $\widetilde{F}^{\mathfrak{ns}}_{1,1}$.
From Proposition~\ref{prop:sscommu}, we see that $[\widetilde{\mathcal{S}}_+,\widetilde{\mathcal{S}}_-]=0$ on $\widetilde{F}^{\mathfrak{ns}}_{1,1}$.
Thus, using Propositions~\ref{sl2action2-s} and \ref{prop:nstransitive}, together with (\ref{S-holom0})-(\ref{G-iden-ns0}), the proof is almost identical to that of Theorem~\ref{thm:mainW1p}, and we omit it.
\end{proof}
Note that $\mathcal{SW}(m)$ has the standard $\mathbb{Z}_2$-parity automorphism $\Pi : v \mapsto (-1)^{|v|}v$ $(v\in \mathcal{SW}(m))$, which fixes the even part and multiplies the odd part by $-1$, and that the even strong generators of $\mathcal{SW}(m)$ are $\widehat{W}^{\pm;\mathfrak{ns}}$, $\widehat{W}^{0;\mathfrak{ns}}$, and $T^{(\beta_0;\mathfrak{ns})}$.

\begin{thm}
\label{cor:sw(m)}
The full automorphism group of $\mathcal{SW}(m)$ is $PSL_2(\mathbb{C})\times \mathbb{Z}_2$.
\end{thm}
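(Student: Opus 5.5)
We will follow the strategy used by Adamovi\'c--Lin--Milas for $\mathcal{W}_{1,p}$. First we build a homomorphism ${\rm Aut}(\mathcal{SW}(m))\to PSL_2(\mathbb{C})\times\mathbb{Z}_2$ and show it is injective. Then we show that $PSL_2(\mathbb{C})\times\mathbb{Z}_2$ embeds into ${\rm Aut}(\mathcal{SW}(m))$, and finally that the two maps are mutually inverse.

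\emph{Embedding.} By Proposition~\ref{thm:swmain}, $\mathfrak{sl}_2(\mathbb{C})=\mathbb{C}\mathbb{E}^{\mathfrak{ns}}\oplus\mathbb{C}h\oplus\mathbb{C}\mathbb{F}^{\mathfrak{ns}}$ acts on $\mathcal{SW}(m)$ by derivations. These derivations commute with the $\mathfrak{ns}$-action: for $\mathbb{E}^{\mathfrak{ns}}$ and $h$ this is immediate, and for $\mathbb{F}^{\mathfrak{ns}}$ it is Proposition~\ref{prop:nstransitive}(1). Using Proposition~\ref{sl2action2-s}, the $\mathfrak{ns}$-singular vectors $\{w^{(n);\mathfrak{ns}}_k\}_{k=-n}^{n}$ span the $(2n+1)$-dimensional irreducible $\mathfrak{sl}_2$-module. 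Therefore
\begin{align*}
\mathcal{SW}(m)\simeq\bigoplus_{n\geq0}V_{2n}\otimes L^{\mathfrak{ns}}(h^{\mathfrak{ns}}_{1,1;-2n})
\end{align*}
as $\mathfrak{sl}_2\oplus\mathfrak{ns}$-modules. Each graded piece is finite-dimensional and all $\mathfrak{sl}_2$-weights are even, so $\mathfrak{sl}_2$ integrates to an action of $PSL_2(\mathbb{C})$ by exponentials of locally nilpotent derivations. These exponentials are VOSA automorphisms fixing $T$ and $G$. The action is faithful because $V_2$ occurs. The parity automorphism $\Pi$ commutes with this action, since all derivations are even, and $\Pi$ is not contained in $PSL_2(\mathbb{C})$, since the latter acts trivially on the odd vector $G^{(\beta_0;\mathfrak{ns})}$. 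Hence $PSL_2(\mathbb{C})\times\mathbb{Z}_2\hookrightarrow{\rm Aut}(\mathcal{SW}(m))$.

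\emph{Restriction map.} Let $g\in{\rm Aut}(\mathcal{SW}(m))$. By definition $g$ fixes $T^{(\beta_0;\mathfrak{ns})}$ and preserves the $L_0$-grading. The odd weight-$\tfrac32$ space is spanned by $G^{(\beta_0;\mathfrak{ns})}$: the weights $h^{\mathfrak{ns}}_{1,1;-2n}$ grow quadratically, so only the vacuum module contributes at that weight. The relation $G_{-1/2}G=2T$ then forces $gG=\pm G$. Composing with $\Pi$, we may assume $gG=G$, so that $g$ commutes with the $\mathfrak{ns}$-action. Consequently $g$ preserves the lowest-weight space $U:={\rm span}\{w^{(1);\mathfrak{ns}}_k\}$, and it also preserves $\widehat U:=G_{-1/2}U={\rm span}\{\widehat W^{\delta;\mathfrak{ns}}\}$. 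Passing to the Zhu algebra of Theorem~\ref{AM-ns}, $g$ induces an automorphism of $A(\mathcal{SW}(m))$ that fixes $[T]$ and hence the unit $q([T])$. By part~(2) of that theorem, $q([T])^{-1}[\widehat W^\delta]$ spans a copy of $\mathfrak{sl}_2$ under commutators, and $g$ acts on it as a Lie algebra automorphism. This gives an element of ${\rm Aut}(\mathfrak{sl}_2)=PSL_2(\mathbb{C})$. Alternatively one can avoid the Zhu algebra by requiring that $g|_U$ preserve the $\mathfrak{sl}_2$-invariant bracket obtained from a suitable mode product $U\times U\to U$.

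\emph{Injectivity and surjectivity.} Suppose $g$ fixes $G$ and acts trivially on $U$. Then $g$ fixes $W^{\pm,0;\mathfrak{ns}}$ and $G$, which generate $\mathcal{SW}(m)$ by Proposition~\ref{AM-gen}, so $g={\rm id}$. This gives injectivity. For surjectivity, observe that the restriction of the embedded $PSL_2(\mathbb{C})$ to $U\simeq V_2$ already realizes the whole adjoint group. Combining the two maps therefore yields ${\rm Aut}(\mathcal{SW}(m))\simeq PSL_2(\mathbb{C})\times\mathbb{Z}_2$.

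The main obstacle is the restriction step: one must verify that the map $g\mapsto g|_U$ is compatible with the Zhu-algebra bracket. Two points need care. First, we must check that $[\,\cdot\,]$ is injective on $\widehat U$ and intertwines the action of $g$. Second, we must handle the passage between the odd vectors $W^\delta$ and the even vectors $\widehat W^\delta$ via $G_{-1/2}$, using $gG=G$.
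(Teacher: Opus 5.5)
Your proposal is correct and follows the same route as the paper. You integrate the $\mathfrak{sl}_2$-action of Proposition~\ref{thm:swmain} to embed $PSL_2(\mathbb{C})$, use Theorem~\ref{AM-ns} to map ${\rm Aut}(\mathcal{SW}(m))$ onto ${\rm Aut}(\mathfrak{sl}_2)=PSL_2(\mathbb{C})$, and use the generators of Proposition~\ref{AM-gen} to identify the kernel with $\{1,\Pi\}$. The details the paper leaves implicit are supplied by your sign argument $gG=\pm G$ (from $G_{-1/2}G=2T$), your passage $U\to\widehat U$ via $G_{-1/2}$, and your check that $[\,\cdot\,]$ is injective on $\widehat U$.

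Drop the parenthetical alternative. No mode product maps $U\times U$ into $U$: $u_{(n)}v$ has integral weight $4m-n$, while $U$ sits in weight $2m+\tfrac12$. Keep the Zhu-algebra argument.
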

\begin{proof}
By integrating the $\mathfrak{sl}_2(\mathbb{C})$-action given in Proposition~\ref{thm:swmain}, we see that the full automorphism group ${\rm Aut}(\mathcal{SW}(m))$ contains $PSL_2(\mathbb{C})$ as a subgroup.
Therefore, by Theorem~\ref{AM-ns}, there exists a surjective homomorphism from ${\rm Aut}(\mathcal{SW}(m))$ to $PSL_2(\mathbb{C})$.
Then it follows from Proposition~\ref{AM-gen} that the kernel of the surjective homomorphism coincides with the parity automorphism. Hence, ${\rm Aut}(\mathcal{SW}(m))$ is isomorphic to $PSL_2(\mathbb{C})\times \mathbb{Z}_2$.
\end{proof}

\section*{Concluding remarks and further problems}
In this paper, we constructed the derivations $\mathbb{G}^{[p_\pm]}_{\pm }\in {\rm End}_{\mathfrak{Vir}}(\mathcal{W}_{p_+,p_-})$ based on the $\epsilon$-deformation method in \cite{TW} and showed that the $\mathfrak{sl}_2$-symmetry of $\mathcal{W}_{p_+,p_-}$ established in \cite{AdamovicD/LinX/MilasA:2013,McRaeR/SopinV:2026} can be naturally derived from our derivations. 
In addition, following the method used for $\mathbb{G}^{[p_\pm]}_{\pm }$, we constructed a derivation acting on $\mathcal{SW}(m)$ and
showed that the automorphism group of $\mathcal{SW}(m)$ is $PSL_2(\mathbb{C})\times \mathbb{Z}_2$.
Our method is mainly based on the structure of the Felder-complexes and the analyticity of the Selberg integrals. For this reason, we expect that our approach has a wide range of applicability.
For example, we expect that our construction can also be applied to the vertex operator superalgebras $\mathcal{SW}_{p,q}$ introduced in \cite{AdamovicD/MilasA:2009}, and may in particular be useful for studying the $C_2$-cofiniteness, the classification of simple $\mathcal{SW}_{p,q}$-modules, and the determination of the automorphism group.
As more advanced directions, the ADE classification of representations (cf.~\cite{AdamovicD/LinX/MilasA:2013,AdamovicD/LinX/MilasA:2014,AdamovicD/LinX/MilasA:2015,Lin}) and the theory of Borel actions (cf.~\cite{LS,Sug1,Sug2}) remain to be studied in the general cases of $\mathcal{W}_{p_+,p_-}$, $\mathcal{SW}_{p,q}$. It would be interesting to see whether our method can be applied to these problems.

From the character formulas in \cite{AdamovicD/MilasA:2008-2,AdamovicD/MilasA:2009-2} and the fusion structures (cf.~\cite{Nak2,TWFusion}), it is expected that $\mathcal{W}_{1,2m+1}$ and $\mathcal{SW}(m)$ are related at the level of tensor categories.
It is also known that \cite{McY} the tensor category structure of $\mathcal{W}_{1,p}$ obtained in \cite{TWFusion} can be rederived from the fact that ${\rm Aut}(\mathcal{W}_{1,p})\cong PSL_2(\mathbb{C})$ and the Virasoro tensor category structure at $c=c_{1,p}$ (see also \cite{CMY}). 
Combining our results on the automorphism group of $\mathcal{SW}(m)$ with the tensor-categorical results in \cite{CreutzigT/McRaeR/OroszHunzikerF/YangJ:2026}, we expect to better understand the relationship between $\mathcal{W}_{1,2m+1}$ and $\mathcal{SW}(m)$ and the tensor category structure of the $N=1$ super Virasoro algebra at $c^{\mathfrak{ns}}_{1,2m+1}$.

\section*{{Acknowledgement}}
We would like to thank Akihiro Tsuchiya and Simon Wood for valuable discussions on the ideas behind the construction of Frobenius homomorphisms.
We would also like to thank Hao Li, Robert McRae, and Jinwei Yang for valuable discussions.

\pdfbookmark[1]{References}{ref}

\vspace{10mm}
\ \ \ \ H.~Nakano, \textsc{Osaka City University Advanced Mathematical Institute}\par\nopagebreak
  \textit{E-mail address} : \texttt{hiromutakati@gmail.com}
  
\end{document}